\newcommand{\reals}{\mathbb{R}}
\newcommand{\E}{\mathbb{E}}
 \newcommand{\br}{\bm{r}}
\newcommand{\bV}{\bm{V}}
\newcommand{\blambda}{\bm{\lambda}}
\newcommand{\bLambda}{\bm{\Lambda}}
\newcommand{\Prob}{P}
\newcommand{\bpi}{\bm{\pi}}
\newcommand{\ba}{\bm{a}}
\newcommand{\calA}{\mathcal{A}}
\newcommand{\calS}{\mathcal{S}}
\newtheorem{theorem}{Theorem}
\newtheorem{proposition}{Proposition}
\newtheorem{definition}{Definition}
\newtheorem{lemma}{Lemma}
\newtheorem{corollary}{Corollary}
\newtheorem{remark}{Remark}
\newtheorem{assumption}{Assumption}
\title{Game-Theoretic Understandings of Multi-Agent Systems with Multiple Objectives
}
\author{
  Yue Wang \\
   Department of Electrical and Computer Engineering \\
   Department of Computer Science \\
   University of Central Florida \\
   Orlando, FL 32816 \\
}
\begin{document}
\maketitle

\begin{abstract}
In practical multi-agent systems, agents often have diverse objectives, which makes the system more complex, as each agent's performance across multiple criteria depends on the joint actions of all agents, creating intricate strategic trade-offs. To address this, we introduce the  Multi-Objective Markov Game (MOMG), a framework for multi-agent reinforcement learning with multiple objectives. We propose the Pareto-Nash Equilibrium (PNE) as the primary solution concept, where no agent can unilaterally improve one objective without sacrificing performance on another. We prove existence of PNE, and establish an equivalence between the PNE and the set of Nash Equilibria of MOMG's corresponding linearly scalarized games, enabling solutions of MOMG by transferring to a standard single-objective Markov game. However, we note that computing a PNE is theoretically and computationally challenging, thus we propose and study weaker but more tractable solution concepts. Building on these foundations, we develop online learning algorithm that identify a single solution to MOMGs. Furthermore, we propose a novel two-phase, preference-free algorithm that decouples exploration from planning. Our algorithm enables computation of a PNE for any given preference profile without collecting new samples, providing an efficient methodological characterization of the entire Pareto-Nash front.
\end{abstract}

\section{Introduction}
Multi-agent systems (MAS) \cite{weiss1999multiagent,shoham2008multiagent,wooldridge2009introduction,olfati2007consensus,dorri2018multi} are becoming integral to complex, real-world domains, from the management of robotic  warehouses \cite{lowe2017multi, matignon2012independent}, automated trading in financial markets \cite{wellman2006methods, mi2023taxai, huang2024multi, brusatin2024simulating}, and  network management \cite{zhang2018fully, abrol2024deep, shabestary2022adaptive}. Multi-Agent Reinforcement Learning (MARL) \cite{zhang2021multi, hernandez2019survey, yang2020deep, lowe2017multi, zhu2024survey, ning2024survey, wellman2025empirical}  has emerged as the principal paradigm for engineering intelligent behavior in such systems, achieving remarkable empirical and theoretical progress, e.g., \cite{zhang2021multi, hernandez2019survey, yang2020deep, lowe2017multi, zhu2024survey, ning2024survey, wellman2025empirical, huh2024multiagentreinforcementlearningcomprehensive, yu2021surprising, hu2021rethinking, hsu2025randomizedexplorationcooperativemultiagent,wai2018multi,doan2019finite,macua2014distributed,stankovi2016multi,zhang2018fully}. Yet, the predominant formulation in MARL research rests upon a simplifying assumption that is misaligned with modern autonomous systems: that the goal of each agent can be adequately captured by a single, scalar reward function. This single-reward paradigm represents a bottleneck, hindering the development of autonomous agents that must operate in complex socio-technical environments.

In most practical settings, agents are seldom driven by a single, monolithic objective. Instead, they may navigate a landscape of diverse and often conflicting goals \cite{eriksson2008adaptive,osika2024navigating, chapman2023bridging}. For instance, a financial trading algorithm must weigh the imperative of profit maximization against the critical need for risk mitigation \cite{addy2024machine,
olanrewaju2025artificial,
bhardwaj2024risk}. This multi-objective reality is a defining feature of modern autonomy, and while the field of Multi-Objective Reinforcement Learning (MORL) \cite{van2013scalarized,
van2014multi,
yang2019generalized,hayes2021practical,liu2014multiobjective} provides a robust framework for single-agent decision-making with such vectorial rewards, it assumes a stationary, non-strategic environment. This assumption breaks down in multi-agent settings.

The central challenge is not merely that each agent has multiple objectives, but that the optimal way to balance these objectives is inextricably linked to the actions of all other agents \cite{wong2023deep,albrecht2024multi}. An agent's set of achievable, optimal trade-offs—its Pareto front—is not a static object to be discovered; it is a dynamic outcome of strategic interaction. This strategic coupling problem is the core difficulty: an agent's optimal strategy for balancing its internal objectives is critically dependent on the strategies chosen by all other agents, who are simultaneously solving their own multi-objective trade-off problems. A naive approach where each agent independently solves its own MORL problem is fundamentally flawed. The agents' decision spaces are not independent but are inextricably linked, creating a complex web of strategic interdependencies.

This coupling necessitates a new class of models and solution concepts that formally unify the principles of game theory \cite{fudenberg1991game}, which governs strategic stability, with the principles of multi-objective optimization \cite{gunantara2018review,
deb2016multi,
konak2006multi,
caramia2020multi,
tamaki1996multi,
gagne2020multi}, which defines rational choice under vectorial outcomes. Standard Markov games \cite{littman1994markov} are built upon the restrictive assumption of scalar rewards. This forces practitioners to pre-specify a fixed utility function, collapsing the rich, multi-dimensional preference space of an agent into a single number before the analysis begins, failing to capture the true nature of decision-making under competing objectives. 

In this paper, we address this gap by introducing a comprehensive theoretical and algorithmic framework for multi-objective multi-agent learning. Our contributions are summarized as follows:

\textbf{A Formal Framework and Principled Solution Concept.} We introduce the Multi-Objective Markov Game (MOMG), a formal framework for modeling multi-agent reinforcement learning with vectorial rewards. We propose the Pareto-Nash Equilibrium (PNE) as its primary solution concept. A PNE is a policy profile where no agent can unilaterally improve one objective without sacrificing performance on another, thereby merging the strategic stability of a Nash Equilibrium with the rational choice criteria of Pareto optimality. We establish the foundational properties of this concept by proving that a PNE is guaranteed to exist in any finite MOMG.

\textbf{A Bridge to Tractability via Linear Scalarization}. We further establish a formal equivalence between the set of PNEs in an MOMG and the union of all Nash Equilibria of its corresponding linearly scalarized games. This result serves as the conceptual linchpin for our work, providing a constructive bridge from the novel, complex MOMG framework to the well-understood domain of standard single-objective Markov Games. This result transforms the research challenge from inventing entirely new multi-objective equilibrium-finding methods to leveraging the vast body of existing single-objective MG solvers.

\textbf{Provably Efficient and Computationally Aware Learning.} Building on this theoretical bridge, we develop practical online learning algorithms. Recognizing that computing Nash-type equilibria can be intractable , we analyze weaker but more computationally efficient solution concepts, including the Weak Pareto-Nash Equilibrium (WPNE) and the Pareto Correlated Equilibrium (PCE). We develop a decentralized online learning algorithm that provably converges to an PCE with a sample complexity of  $\tilde{\mathcal{O}}\left( {H^6 S \max_j A_j^2}{\epsilon^{-2}} \right)$,  demonstrating a practical path to equilibrium.

\textbf{A Preference-Free Paradigm for Full Front Characterization.} We further introduce a novel two-phase, preference-free methodology that decouples exploration from planning, representing a paradigm shift from identifying a single PNE to the Pareto-Nash front. 
Instead of recollecting samples and retraining from scratch for every new preference profile, our method invests in a single, preference-agnostic exploration phase to build a robust world model. This model can then be queried to compute the PNE for any given preference profile without collecting new samples. This enables applications where agent preferences may be unknown a priori or may change over time, allowing for the efficient characterization of the entire Pareto-Nash front.

\section{Background and Preliminaries}
\subsection{Finite-Horizon Multi-Objective Markov Decision Processes}
An MOMDP \cite{chatterjee2006markov,
wakuta1998solution,
wiering2007computing} is formally defined as a tuple $M = (\calS, \calA, P, \mathbf{r},H)$, where 
$\calS,\calA$ are the finite state and action spaces, and $P: \calS \times \calA \times [H]\to \Delta(\calS)$ is the transition kernel, where $P_h(s'|s, a)$ is the probability of transitioning from state $s$ to $s'$ after taking action $a$. The reward vector is $\mathbf{r}:\calS\times\calA\times [H]\to \mathbb{R}^M$, where $\mathbf{r}_h(s, a)$ is a $M$-dimensional vector that the agent received at step $h$ after taking action $a$ at state $s$. 

A policy is a sequence of decision rules $\pi = (\pi_1, \dots, \pi_{H})$, where each $\pi_t: \calS \to \Delta(\calA)$ specifies the action to take at step $h\in [H]$ under different states. The performance of  a policy $\pi$ at time step $h$, denoted $\bV^{\pi}_h(s)$, is a $m$-dimensional vector of the expected cumulative reward following $\pi$:
$$
\bV^{\pi}_h(s) = \mathbb{E}_{\pi,P} \left[ \sum_{k=h}^{H-1} \br_k(s_k, a_k) \mid s_h = s \right],
$$
where the expectation is taken w.r.t. the policy and the transition kernels. 
As there is no single policy optimizing $\bV$ for all objectives, MOMDP instead learns \textbf{Pareto optimal policies} \cite{wiering2007computing,
roijers2021following,
cai2023distributional,
pirotta2015multi,
liuefficient,
lu2023multi,qiu2024traversing}, defined as follows.  
\begin{definition}[Pareto optimal policy] Let $\bm{u} = (u_1, \dots, u_M)$ and $\bm{v} = (v_1, \dots, v_M)$ be two vectors in $\reals^M$. The vector $\bm{u}$ {Pareto dominates} the vector $\bm{v}$ if it holds that: 
$u_i \ge v_i$ for all components $i \in [M]$; And there exists some $j \in [M]$ such that $u_j > v_j$.
A policy $\pi^*$ is{ Pareto optimal} if no other policy $\pi$ Pareto dominates it, i.e., there is no $\pi$ such that $\bV^{\pi}_1(s_1)$ Pareto dominates $\bV^{\pi^*}_1(s_1)$.
\end{definition}

\subsection{Finite-Horizon Single-Objective Markov Games}

A Finite-Horizon Single-Objective Markov Game (SMG) \cite{shapley1953stochastic,leonardos2022exploration,fan2019theoretical,zhu2020online,
jin2022complexity,deng2023complexity,zhang2024gradient}, or Stochastic Game, models multi-agent interactions. Each agent seeks to maximize its own scalar reward. An $N$-player Markov Game is defined by the tuple $G = (\mathcal{N}, \calS, \{\calA_i\}_{i \in \mathcal{N}}, P, \{r_i\}_{i \in \mathcal{N}}, H)$. Here, $\mathcal{N} = \{1, 2, \dots, N\}$ is the finite set of agents, and $P_h: \calS \times \calA(\triangleq \times_{i}\calA_i)\to\Delta(\calS)$ is the transition function, dependent on a joint action $\ba = (a_1, \dots, a_N)$. Each agent $i$ then receives a scalar reward $r^i_h(s,\ba)\in[0,1]$ after all agents taking the joint action $\ba$ at state $s$ and step $h$. 

All agents' strategies is captured by a joint policy $\bpi:\calS\to\Delta(\calA)$, a profile of policies for all agents. The value function for agent $i$ under $\bpi$ at time $t$ is:
$$
V^{\bpi}_{i,h}(s) = \mathbb{E}_{\bpi,P} \left[ \sum_{k=h}^{H-1} r^i_k(s_k, \ba_k) \mid s_h = s \right].
$$

The elemental solution concept is the \textbf{Nash Equilibrium (NE)} \cite{shapley1953stochastic,littman1994markov}. A product policy $\bpi^*$ is an NE if no agent $i$ can improve its expected return by unilaterally changing its policy $\pi_{i,t}$, given that all other agents stick to $\bpi^*_{-i,t}$, i.e., for all $i \in \mathcal{N}$ and any alternative policy $\pi_i$:
$$
V^{(\pi^*_{i}, \bpi^*_{-i})}_{i,1}(s_1) \geq V^{(\pi_{i}, \bpi^*_{-i})}_{i,1}(s_1).
$$

\section{Multi-Objective Markov Games and Solutions}\label{sec:solutions}
In this section, we propose our formulation of  multi-objective multi-agent MAS, the multi-objective Markov games (MOMGs) as follows. 
 \begin{definition}[$N$-Player General-Sum MOMG]
A finite-horizon $N$-player MOMG is specified as $(\mathcal{N}, \mathcal{S}, \{\mathcal{A}_j\}_{j=1}^N, H, \Prob, \{\br_j\}_{j=1}^N)$ similarly to a Markov game, with each agent $j$ receiving a vectorial reward  $\br_{j,h}(s, \ba) \in [0,1]^M$ at step $h$ under the joint action $\bm{a}$ and state $s$.
\end{definition}
Similarly, the agents' strategy is measured by a joint policy $\bpi$. Given a joint policy $\bpi$, the vector-valued state-value function for player $j$ is $\bV_{j,h}^{\bpi}(s)=(V_{j,h}^{\bpi,r^1_j}(s),...,V_{j,h}^{\bpi,r^M_j}(s))$, where each entry is the standard value function w.r.t. the reward entry $r^\cdot_j$. 

Our formulation is an extension of MOMDP and SMG, yet neither of their solutions can solve our MOMG formulation, hence we then study the solutions to our MOMGs. We begin by defining the ideal solution concept, the Pareto-Nash Equilibrium, and establishing its fundamental properties. We then show that it poses challenges for algorithmic verification, which motivates our introduction of a  weaker notion, a computationally grounded solution that forms the basis for our practical algorithms.

We note that a similar formulation is also considered in \cite{yu2021provably,chang2014partially}, yet with different learning goals. 

\subsection{Solutions to Multi-Objective Markov Games}
We first propose the following Pareto-Nash Equilibrium notion.

\begin{definition}[Pareto-Nash Equilibrium (PNE)] 
A product policy $\bpi^* = (\pi_1^*, \dots, \pi_N^*)$ is a PNE if for any agent $k\in\mathcal{N}$, $\pi^*_k$ is not Pareto dominated by any policy $\pi_k$ while other agents sticking to $\bpi^*_{-k}$, i.e., there is no policy $\pi_k$ such that 
$\bV_{k,1}^{(\pi_k, \bpi_{-k}^*)}(s_1)$ Pareto dominates $\bV_{k,1}^{\bpi^*}(s_1)$. 
\end{definition}
 A PNE is a policy profile where no single agent can find such a trade-off-free improvement while others deploying the PNE policy. Note that PNE generalizes both NE in SMGs, and Pareto optimum in MOMDPs. Specifically, when $N=1$, a PNE reduces to a Pareto optimal policy; And when $M=1$, a PNE becomes a standard NE. PNE hence merges both strategic stability from game theory with Pareto optimality from multi-objective optimization. A deviation is only unambiguously profitable if it results in a Pareto improvement—improving at least one objective without worsening any other. Any other change involves a trade-off an agent may be unwilling to make.

We note that the notion of PNE is studied in multi-objective normal-form games \cite{lozovanu2005multiobjective,zhao1991equilibria}. However, our studies are not direct extensions. In normal-form games, the set of achievable payoffs for a player using mixed strategies is inherently a convex combination of the pure strategy payoffs. However, in a Markov Game, an agent's value vector is an expectation over entire trajectories of states and actions, which is not immediately obvious to form a convex set. 

We then show that a stochastic PNE alwasys exists.
\begin{theorem}[Existence of PNE]\label{thm:PNE exists}
For any finite MOMG, there always exists a stochastic PNE.
\end{theorem}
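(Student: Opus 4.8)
The plan is to reduce the existence of a PNE to the existence of a Nash Equilibrium in a suitably constructed single-objective Markov game, exploiting the scalarization bridge that the paper advertises. Concretely, fix any strictly positive preference vector $\blambda_j \in \mathrm{int}(\Delta^{M-1})$ for each agent $j$ (e.g. the uniform weights $\lambda_{j,i} = 1/M$), and define a scalarized SMG $G_\blambda$ in which agent $j$'s scalar reward is $r^{\blambda_j}_{j,h}(s,\ba) = \langle \blambda_j, \br_{j,h}(s,\ba)\rangle \in [0,1]$. This is a finite-horizon general-sum Markov game, for which existence of a (stochastic) Nash Equilibrium is classical (Fink / Takahashi, following Shapley). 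Let $\bpi^*$ be such an NE.

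Next I would verify that $\bpi^*$ is a PNE of the original MOMG. The key linearity observation is that the vector value function is linear in the reward, so for every agent $j$, every policy $\pi_j$, and the fixed opponents' profile $\bpi^*_{-j}$,
$$
V^{(\pi_j,\bpi^*_{-j})}_{j,1}(s_1;\, r^{\blambda_j}_j) \;=\; \bigl\langle \blambda_j,\; \bV^{(\pi_j,\bpi^*_{-j})}_{j,1}(s_1)\bigr\rangle.
$$
Suppose, for contradiction, that some $\pi_j$ Pareto-dominates $\pi^*_j$ against $\bpi^*_{-j}$, i.e. $\bV^{(\pi_j,\bpi^*_{-j})}_{j,1}(s_1)$ Pareto-dominates $\bV^{\bpi^*}_{j,1}(s_1)$: componentwise $\ge$ with at least one strict inequality. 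Because $\blambda_j$ has strictly positive entries, taking the inner product with $\blambda_j$ preserves the strict inequality, giving $V^{(\pi_j,\bpi^*_{-j})}_{j,1}(s_1; r^{\blambda_j}_j) > V^{\bpi^*}_{j,1}(s_1; r^{\blambda_j}_j)$, which contradicts the Nash property of $\bpi^*$ in $G_\blambda$. Hence no such dominating $\pi_j$ exists for any $j$, so $\bpi^*$ is a PNE. Since the NE we picked is stochastic (a product of mixed Markov policies), the resulting PNE is stochastic, completing the argument.

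The one place that needs care — and what I expect to be the main obstacle — is making sure the best-response / deviation set in the definition of PNE is exactly the deviation set for which the scalarized NE guarantee holds: the PNE definition quantifies over \emph{all} policies $\pi_k$ (in particular all history-dependent ones), so I must invoke the standard fact that in a finite-horizon MDP the optimal deviation against a fixed Markov opponent profile is attained by a Markov policy, so restricting to Markov deviations is without loss. Equivalently, one should state the scalarized-NE existence result for the same (Markov) policy class used in the MOMG definition and note that unilateral deviations against a fixed Markov profile induce an MDP whose optimum is Markov. A secondary point worth a sentence is that strict positivity of $\blambda_j$ is essential: with a boundary weight the implication "Pareto improvement $\Rightarrow$ scalarized improvement" can fail, so the uniform (or any interior) choice must be fixed up front. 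Everything else is bookkeeping: finiteness of $\mathcal{S}$, $\{\mathcal{A}_j\}$, and $H$ gives a finite scalarized game, and boundedness of rewards in $[0,1]^M$ keeps the scalarized rewards in $[0,1]$.
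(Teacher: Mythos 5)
Your proposal is correct and follows essentially the same route as the paper's own proof: fix strictly positive preference weights, take a Nash equilibrium of the resulting scalarized Markov game (existence via Shapley/Fink), and derive a contradiction from any Pareto-dominating deviation using positivity of the weights. Your remark about matching the deviation class (Markov vs.\ history-dependent policies) is a valid refinement that the paper's proof leaves implicit.
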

Note that even in general normal-form games or Markov games, NE are generally stochastic or mixed strategies \cite{fudenberg1991game,osborne1994course}, hence we mainly focus on stochastic policies in this paper. 


Given the existence of PNE, a natural question is how to compute a PNE, or even find all of the PNE. We then provide a methodological solution through the following result, which directly connects MOMGs and standard MGs.  Specifically, given a MOMG 
$G = (\mathcal{N}, \mathcal{S}, \{\mathcal{A}^k\}, H, \{M\}, P, \{\bm{r}^k\})$, and a preference vector $\bm{\Lambda} = \{\bm{\lambda}^1, \bm{\lambda}^2, \dots, \bm{\lambda}^N\}$, where each $\bm{\lambda}^k \in \Delta_M^o$ being a distribution over $[M]$ with positive entries, its corresponding linear scalarized SMG is $G_{\bm{\Lambda}}=(\mathcal{N}, \calS, \{\calA_i\}_{i \in \mathcal{N}}, P, \{r_i=\mathbf{(\lambda}^i)^\top\br_i\}_{i \in \mathcal{N}}, H)$, i.e., the agent receives a scalar reward as $r_i=\mathbf{(\lambda}^i)^\top\br_i$. 

We then show the following equivalence. 
\begin{theorem}[Equivalence between MOMGs and Linear Scalarized SMGs]\label{thm:PNE=NE}
The set of PNE of a MOMG is equivalent to the union of all NE of linear scalarizations with positive preferences: 
\begin{align}
    \textbf{PNE}(G)= \cup_{\bm{\Lambda}\in (\Delta_M^o)^N} \textbf{NE}(G_{\bm\Lambda}). 
\end{align}
\end{theorem}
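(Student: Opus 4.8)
The plan is to establish the set equality by proving the two inclusions separately. Throughout, fix a product policy $\bpi^*$ and an agent $k$, and note that once the other players are frozen at $\bpi_{-k}^*$, agent $k$ faces a genuine finite-horizon MDP: its transition kernel is $P_h(\cdot\mid s,\ba)$ averaged over $\ba_{-k}\sim\bpi_{-k}^*$, and its (vector) reward at $(s,a_k)$ is the corresponding average of $\br_{k,h}(s,(a_k,\ba_{-k}))$. Let $\mathcal{V}_k := \{\bV_{k,1}^{(\pi_k,\bpi_{-k}^*)}(s_1):\pi_k\}$ be the set of value vectors agent $k$ can realize against $\bpi_{-k}^*$. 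The conceptual crux flagged before the theorem --- that a value vector is an expectation over trajectories and so its reachable set is not obviously convex --- is handled by the occupancy-measure description of this induced MDP: the state--action occupancy measures $d^{\pi_k}$ range exactly over the polytope cut out by the (finitely many) Bellman flow equalities together with nonnegativity, and $\bV_{k,1}^{(\pi_k,\bpi_{-k}^*)}(s_1)$ is a fixed linear function of $d^{\pi_k}$; hence $\mathcal{V}_k$ is a linear image of a polytope, so it is itself a bounded polytope, in particular convex. I will also use the elementary linearity identity that the scalarized value of agent $k$ in $G_{\bm\Lambda}$ under $(\pi_k,\bpi_{-k}^*)$ equals $(\blambda^k)^\top\bV_{k,1}^{(\pi_k,\bpi_{-k}^*)}(s_1)$.

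For $\cup_{\bm\Lambda}\textbf{NE}(G_{\bm\Lambda})\subseteq\textbf{PNE}(G)$ (the easy direction): take $\bpi^*\in\textbf{NE}(G_{\bm\Lambda})$ with every $\blambda^k\in\Delta_M^o$ and suppose some agent $k$ had a deviation $\pi_k$ with $\bV_{k,1}^{(\pi_k,\bpi_{-k}^*)}(s_1)$ Pareto-dominating $\bV_{k,1}^{\bpi^*}(s_1)$. Since Pareto domination means coordinatewise $\ge$ with at least one strict inequality and every entry of $\blambda^k$ is strictly positive, $(\blambda^k)^\top\bV_{k,1}^{(\pi_k,\bpi_{-k}^*)}(s_1)>(\blambda^k)^\top\bV_{k,1}^{\bpi^*}(s_1)$, i.e.\ $\pi_k$ is a strictly profitable deviation in $G_{\bm\Lambda}$ --- contradicting the NE property. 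This is precisely where strict positivity of the preferences is needed; with a zero coordinate the implication fails.

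For $\textbf{PNE}(G)\subseteq\cup_{\bm\Lambda}\textbf{NE}(G_{\bm\Lambda})$ (the substantive direction): let $\bpi^*$ be a PNE and fix $k$. By definition of PNE, $v^*:=\bV_{k,1}^{\bpi^*}(s_1)$ is a Pareto-optimal point of $\mathcal{V}_k$. It suffices to produce $\blambda^k\in\Delta_M^o$ with $v^*\in\arg\max_{v\in\mathcal{V}_k}(\blambda^k)^\top v$, since then the linearity identity makes $\pi_k^*$ a best response to $\bpi_{-k}^*$ in $G_{\bm\Lambda}$; performing this for every $k$ and collecting $\bm\Lambda=(\blambda^1,\dots,\blambda^N)$ yields $\bpi^*\in\textbf{NE}(G_{\bm\Lambda})$. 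The existence of such a strictly positive supporting weight is the classical fact that a Pareto-optimal point of a polyhedron is properly efficient; I would include the short self-contained proof: the polyhedral cone $D:=\mathrm{cone}(\mathcal{V}_k-v^*)$ satisfies $D\cap\reals^M_{\ge 0}=\{\mathbf{0}\}$, because a nonzero point of this intersection would, after rescaling, be an element of $\mathcal{V}_k-v^*$ lying in $\reals^M_{\ge 0}\setminus\{\mathbf{0}\}$ and hence give a Pareto improvement on $v^*$; by conic duality for polyhedral cones, $D^*+\reals^M_{\ge 0}=\reals^M$, so writing $-\mathbf{1}=\mu+\nu$ with $\mu\in D^*$ and $\nu\in\reals^M_{\ge 0}$ forces $\mu\le-\mathbf{1}<\mathbf{0}$, and normalizing $\blambda^k:=-\mu/\|\mu\|_1\in\Delta_M^o$ gives $(\blambda^k)^\top(v-v^*)\le 0$ for all $v\in\mathcal{V}_k$, as required.

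The main obstacle is exactly this last step: convexity and closedness of $\mathcal{V}_k$ alone only yield a supporting hyperplane at $v^*$ with a nonnegative --- possibly boundary --- normal, which need not lie in the open simplex $\Delta_M^o$; it is the \emph{polyhedrality} of $\mathcal{V}_k$ (itself a consequence of the occupancy-measure polytope for the MDP induced by freezing $\bpi_{-k}^*$) that lets one close the gap and obtain a strictly positive scalarization. Pinning down that polyhedrality carefully --- and, relatedly, confirming that allowing history-dependent deviations does not enlarge $\mathcal{V}_k$ --- is the only genuinely technical part; the rest is the two short arguments above.
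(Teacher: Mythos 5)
Your proposal is correct and follows the same overall skeleton as the paper: the easy inclusion is proved identically (a Pareto-improving deviation scalarized by strictly positive weights yields a strictly profitable deviation, contradicting the NE property of $G_{\bm\Lambda}$), and for the hard inclusion both you and the paper freeze $\bpi_{-k}^*$, pass to the induced single-agent MDP, and use the occupancy-measure polytope to show that the achievable value set $\mathcal{V}_k$ is a polytope whose Pareto-optimal point $\bV_{k,1}^{\bpi^*}(s_1)$ must be supported by a strictly positive normal (the paper's \Cref{lemma:convex} and \Cref{prop:sup}). Where you genuinely diverge is in how that strictly positive supporting weight is extracted: the paper separates $\mathbb{V}^k(\bpi^{-k,*})$ from the open dominance cone at $\bm{v}^*$, obtains a nonnegative normal, and then argues informally that no component can vanish; you instead invoke the polyhedral structure directly, showing $\mathrm{cone}(\mathcal{V}_k-v^*)\cap\reals^M_{\ge 0}=\{\mathbf{0}\}$ and applying conic duality for polyhedral cones to produce a strictly negative $\mu$, i.e.\ the classical ``Pareto optimal points of polyhedra are properly efficient'' argument. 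Your route is the more careful one at exactly the delicate point: as you note, convexity alone does not give a strictly positive normal (a Pareto point of a disc can admit only a boundary normal), so polyhedrality is indispensable; the paper's \Cref{prop:sup} phrases its positivity argument purely in terms of convexity and is loose there, though its conclusion is valid because the set is in fact a polytope. Your remaining caveat about history-dependent deviations is benign, since both the theorem statement and the occupancy-measure argument are set in the class of (stationary) stochastic Markov policies, and history-dependent deviations induce occupancy measures in the same polytope.
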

The above results imply that finding the NE of linear scalarization games with any positive preference $\bm\lambda$ can specify a PNE for the multi-objective Markov game; More importantly, all of the PNE can be specified in this way, by solving the linear scalarized SMGs. Thus we are able to reduce the complicated MOMG to widely studied SMGs, and adapt effective solutions therein to find PNE. 



\subsection{Weak Pareto-Nash Equilibria}
In some cases, it is generally sufficient to find some approximated solutions. Hence we now consider the inverse problem of finding a PNE: given a policy, we want to measure the closeness of it against some PNE. Since every PNE is a NE of some linear scalarized game, we utilize the Nash Gap \cite{jin2022complexity,ma2023decentralized} defined in single-objective games with an adaptive preference distribution, to define a Pareto-Nash Gap as \begin{align}
    \text{PNG}(\bpi) := \max_{k \in [N]}   \sup_{\pi'_k} \inf_{\bm\lambda \in \Delta_M^o}\left\{ (
    \bm\lambda)^\top (\bm V_{k,1}^{(\pi'_k, \bpi_{-k})}(s_1) - \bm V_{k,1}^{\bpi}(s_1)) \right\}.
\end{align}
Notably, this definition generalizes the notion of Nash gap in single-agent Markov games, and the notion of Pareto gap in single-agent multi-objective RL \cite{qiu2024traversing,jiang2023stepwise,lu2019multi,turgay2018multi,drugan2013designing}. However, in the following result, we show that, having a zero PNG does not imply the policy being a PNE.
\begin{proposition}
If a policy profile $\bpi$ is a Pareto-Nash Equilibrium, then its PNG is zero. However, zero PNG does not imply $\bpi$ is a PNE. 
\end{proposition}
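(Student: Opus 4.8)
The plan is to prove the two implications separately, both reducing to a single observation that rewrites the inner infimum in the Pareto--Nash gap in closed form. First I would establish this reduction: for any fixed vector $\bm\delta\in\reals^M$ the map $\bm\lambda\mapsto\bm\lambda^\top\bm\delta$ is linear, and the open simplex $\Delta_M^o$ is dense in the closed simplex $\Delta_M=\mathrm{conv}\{\bm e_1,\dots,\bm e_M\}$, so
$$\inf_{\bm\lambda\in\Delta_M^o}\bm\lambda^\top\bm\delta=\min_{\bm\lambda\in\Delta_M}\bm\lambda^\top\bm\delta=\min_{i\in[M]}\delta_i,$$
the value being approached (though not attained) along a sequence $\bm\lambda^{(n)}\to\bm e_{i^\star}$ with $i^\star\in\arg\min_i\delta_i$. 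Abbreviating $\bm\delta_k(\pi'_k):=\bm V_{k,1}^{(\pi'_k,\bpi_{-k})}(s_1)-\bm V_{k,1}^{\bpi}(s_1)$, this turns the definition into $\text{PNG}(\bpi)=\max_{k\in[N]}\sup_{\pi'_k}\min_{i\in[M]}\big(\bm\delta_k(\pi'_k)\big)_i$.

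For the first claim I would fix an agent $k$. Taking $\pi'_k=\pi_k$ gives $\bm\delta_k(\pi_k)=\bm 0$, hence $\sup_{\pi'_k}\min_i\big(\bm\delta_k(\pi'_k)\big)_i\ge 0$. For the matching upper bound, since $\bpi$ is a PNE, for every $\pi'_k$ the vector $\bm V_{k,1}^{(\pi'_k,\bpi_{-k})}(s_1)$ does not Pareto--dominate $\bm V_{k,1}^{\bpi}(s_1)$; unpacking the definition of Pareto dominance, $\bm\delta_k(\pi'_k)$ must then be either $\bm 0$ or have at least one strictly negative coordinate, so $\min_i\big(\bm\delta_k(\pi'_k)\big)_i\le 0$ in every case. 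Thus $\sup_{\pi'_k}\min_i\big(\bm\delta_k(\pi'_k)\big)_i=0$ for each $k$, and taking the maximum over $k\in[N]$ gives $\text{PNG}(\bpi)=0$.

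For the converse I would display a minimal counterexample. Let $H=1$, $\calS=\{s_1\}$, $M=2$, and $N=2$ with player $2$ restricted to a single dummy action and $\calA_1=\{a,b\}$; put $\br_{1,1}(s_1,a)=(0,1)$ and $\br_{1,1}(s_1,b)=(1,1)$ (player $2$'s reward is immaterial). For $\bpi$ in which player $1$ plays $a$ deterministically we have $\bV_{1,1}^{\bpi}(s_1)=(0,1)$, while the deviation to $b$ attains $(1,1)$, which Pareto--dominates $(0,1)$; hence $\bpi$ is not a PNE. On the other hand every deviation $\pi'_1$ plays $b$ with some probability $p\in[0,1]$, so $\bm\delta_1(\pi'_1)=(p,0)$ with smallest coordinate $0$, and player $2$ has no nontrivial deviation; by the reduction above, $\text{PNG}(\bpi)=\max\{0,0\}=0$, so a zero Pareto--Nash gap does not force the PNE property.

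The only step requiring genuine care is the evaluation $\inf_{\bm\lambda\in\Delta_M^o}\bm\lambda^\top\bm\delta=\min_i\delta_i$: because the preference vectors are required to have strictly positive entries, the infimum is not attained and must be justified through the limiting sequence toward a vertex rather than by a direct minimization; once this is in hand the rest is definition-chasing. Conceptually, the asymmetry is exactly that the inner infimum over $\bm\lambda$ collapses the vectorial comparison onto a single, worst-off objective, so a vanishing gap only certifies that no unilateral deviation improves \emph{all} of an agent's objectives at once, whereas PNE additionally forbids deviations that strictly improve some objective while leaving the remaining ones exactly unchanged --- the weak Pareto improvement exploited in the counterexample.
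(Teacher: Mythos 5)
Your proposal is correct, and in one important respect it goes further than the paper. For the first claim your argument is essentially the paper's: the trivial deviation $\pi'_k=\pi_k$ gives the lower bound $0$, and non-dominance forces each difference vector to be either $\bm 0$ or to have a strictly negative coordinate, so the gap is nonpositive. Your extra step of evaluating the inner infimum in closed form, $\inf_{\bm\lambda\in\Delta_M^o}\bm\lambda^\top\bm\delta=\min_i\delta_i$ (via density of the open simplex in the closed one and linearity), is a clean packaging of what the paper only does implicitly by exhibiting suitable $\bm\lambda$'s; it also makes transparent that $\text{PNG}$ coincides with the later $\text{WPNG}$. For the second claim your route is genuinely different and, strictly speaking, stronger: the paper does not exhibit a counterexample, it only observes that the attempted converse proof breaks down because $\Delta_M^o$ is open and the infimum need not be attained, which explains the failure of the implication without formally refuting it. Your explicit one-step game with rewards $(0,1)$ and $(1,1)$, where the pure-$a$ profile has $\text{PNG}=0$ yet is weakly Pareto-improvable by switching to $b$, actually establishes that zero PNG does not imply PNE, and correctly isolates the mechanism (an improvement in one objective with the others unchanged, which the min over coordinates cannot detect). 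So: Part 1 matches the paper; Part 2 replaces the paper's informal "the proof cannot be established" remark with a concrete counterexample, which is the more rigorous way to prove the stated non-implication.
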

\begin{remark}
The result seems to be a contraction with \Cref{thm:PNE=NE}, which states that any linear scalarized game NE (with some $\bm\Lambda$) is a PNE, as the a zero-PNG seems to imply the policy is a NE for some $\bm\Lambda$ (which is the solution to $\inf_{\bm\lambda} (\cdot)$)-scalarized game. However, this is not true. The reason is that $\Delta_M^o$ is not a compact set (since it is open), hence the $\inf$ may not be obtained within it but at the boundary of it, i.e., $\Delta_M / \Delta_M^o$.
\end{remark}

This result implies that, PNG of a policy does not imply whether it is a PNE, thus there is no way to capture the `optimism' of it. To address this issue, we introduce a weaker solution as follows. 



\begin{definition}[Weakly Pareto-Nash Equilibrium (WPNE)] 
A vector $u \in \mathbb{R}^M$ \textbf{strictly Pareto dominates} a vector $v \in \mathbb{R}^M$ if $u_i > v_i$ for all $i \in [M]$.
A product policy $\bpi^*$ is a Weak Pareto-Nash Equilibrium (WPNE) if for any agent $k$, there does not exist any unilateral deviation policy $\pi'_k$ such that the resulting value vector $\bm V_{k,1}^{(\pi'_k, \bpi_{-k}^*)}(s_1)$ {strictly dominates} the original value vector $\bm V_{k,1}^{\bpi^*}(s_1)$.
\end{definition}
A WPNE is a product policy that no unilateral deviation can improve performance along \textbf{all} objectives. Hence, all PNE is also WPNE, and WPNE always exists.

We further derive the following theoretical characterizations of WPNE, which enable us to develop concrete learning algorithms. 
\begin{theorem}
(a). The set of WPNE is equivalent to the union of all NE of linear scalarizations with non-negative preferences: 
\begin{align}
    \textbf{WPNE}(G)= \cup_{\bm{\Lambda}\in (\Delta_M)^N} \textbf{NE}(G_{\bm\Lambda}). 
\end{align}

(b).  Define the Weak Pareto-Nash gap (WPNG) for a given policy $\pi = (\pi_1, \dots, \pi_N)$ as:
$$\text{WPNG}(\pi) := \max_{k \in [N]}  \sup_{\pi'_k} \inf_{\bm\lambda \in \Delta_M} \left\{ (\bm\lambda)^\top (V_{k,1}^{(\pi'_k, \pi_{-k})}(s_1) - V_{k,1}^{\pi}(s_1)) \right\},$$
where the supremum is taken over all possible deviating policies $\pi'_k$ for agent $k$.  Then, a policy profile $\pi$ is a Weak Pareto-Nash Equilibrium if and only if its Multi-Agent Weak PNG is zero:
$$\text{WPNG}(\pi) = 0 \iff \pi \text{ is a WPNE}.$$ 
\end{theorem}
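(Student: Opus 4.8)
### Proof Plan for the WPNE Characterization

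\textbf{Part (a): Set equivalence.} The plan is to mirror the proof structure of \Cref{thm:PNE=NE}, but with the open simplex $\Delta_M^o$ replaced by the closed simplex $\Delta_M$ throughout. For the inclusion $\cup_{\bm{\Lambda}} \textbf{NE}(G_{\bm\Lambda}) \subseteq \textbf{WPNE}(G)$, I would take $\bpi^*\in\textbf{NE}(G_{\bm\Lambda})$ for some $\bm\Lambda=\{\bm\lambda^k\}$ with each $\bm\lambda^k\in\Delta_M$, and argue by contradiction: if some agent $k$ had a deviation $\pi'_k$ with $\bm V_{k,1}^{(\pi'_k,\bpi^*_{-k})}(s_1)$ \emph{strictly} dominating $\bm V_{k,1}^{\bpi^*}(s_1)$, then since $\bm\lambda^k\ge 0$ and $\bm\lambda^k$ lies in the simplex (so it is nonzero), the strict componentwise inequality forces $(\bm\lambda^k)^\top \bm V_{k,1}^{(\pi'_k,\bpi^*_{-k})}(s_1) > (\bm\lambda^k)^\top \bm V_{k,1}^{\bpi^*}(s_1)$ — here it is essential that \emph{all} components strictly increase, since otherwise a zero weight on the improved coordinate could kill the gain. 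This contradicts the NE property in $G_{\bm\Lambda}$. For the reverse inclusion, given a WPNE $\bpi^*$, I would, for each agent $k$, consider the single-agent MOMDP faced by $k$ when the opponents are frozen at $\bpi^*_{-k}$; in this induced MOMDP, $\pi^*_k$ is weakly Pareto optimal (no policy strictly dominates it). I then invoke the classical scalarization theorem for multi-objective optimization (the analogue of the hyperplane-separation argument used for Pareto optima, but using a closed halfspace) to obtain a nonzero weight $\bm\lambda^k\in\Delta_M$ under which $\pi^*_k$ is optimal for the scalarized induced MOMDP — i.e., a best response in $G_{\bm\Lambda}$. Collecting these weights across $k$ gives $\bm\Lambda$ with $\bpi^*\in\textbf{NE}(G_{\bm\Lambda})$.

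\textbf{Part (b): Gap characterization.} The forward direction ($\bpi$ WPNE $\Rightarrow$ WPNG$(\bpi)=0$) follows from part (a): $\bpi$ is a NE of some $G_{\bm\Lambda}$, so for each $k$ and each deviation $\pi'_k$ we have $(\bm\lambda^k)^\top(V_{k,1}^{(\pi'_k,\bpi_{-k})}(s_1)-V_{k,1}^{\bpi}(s_1))\le 0$; since $\bm\lambda^k\in\Delta_M$ is a feasible point of the inner $\inf$, the inner infimum is $\le 0$, and the outer $\sup_{\pi'_k}$ combined with $\max_k$ keeps it $\le 0$ (and it is trivially $\ge 0$ by taking $\pi'_k=\pi_k$, which makes the bracket zero). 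The reverse direction is where the closedness of $\Delta_M$ does the real work, and is the crux of the whole theorem: suppose $\bpi$ is \emph{not} a WPNE, so some agent $k$ has a deviation $\pi'_k$ with $V_{k,1}^{(\pi'_k,\bpi_{-k})}(s_1)-V_{k,1}^{\bpi}(s_1)=:\bm\delta$ having all components strictly positive. Then for \emph{every} $\bm\lambda\in\Delta_M$ we have $(\bm\lambda)^\top\bm\delta \ge \min_i \delta_i > 0$, hence $\inf_{\bm\lambda\in\Delta_M}(\bm\lambda)^\top\bm\delta \ge \min_i\delta_i>0$, which forces WPNG$(\bpi)>0$. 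Conversely, if WPNG$(\bpi)>0$, some $k$ and some deviation $\pi'_k$ achieve $\inf_{\bm\lambda\in\Delta_M}(\bm\lambda)^\top\bm\delta>0$; since $\Delta_M$ is compact the infimum is attained and must be positive, and evaluating at each vertex $e_i$ of the simplex gives $\delta_i = e_i^\top\bm\delta>0$ for all $i$, i.e., $V_{k,1}^{(\pi'_k,\bpi_{-k})}(s_1)$ strictly dominates $V_{k,1}^{\bpi}(s_1)$, so $\bpi$ is not a WPNE.

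\textbf{Main obstacle.} The only genuinely delicate point is the reverse inclusion in part (a): establishing that a weakly Pareto optimal policy of the opponent-frozen induced MOMDP admits a supporting \emph{nonnegative} (simplex) weight vector. This requires knowing that the set of achievable value vectors $\{\bm V_{k,1}^{(\pi_k,\bpi^*_{-k})}(s_1) : \pi_k\}$ (or at least its "upper image" $\{\bm v : \bm v \le \bm V_{k,1}^{(\pi_k,\bpi^*_{-k})}(s_1)\text{ for some }\pi_k\}$) is convex, so that a separating-hyperplane argument applies — this is exactly the convexity subtlety flagged in the remark preceding \Cref{thm:PNE exists}, and presumably it was already resolved there (via mixing over policies / the occupancy-measure representation) and can be reused here. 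Everything else is elementary convexity of the simplex and compactness; I would be careful to state explicitly where "strict domination on all coordinates" versus "weak domination" is used, since that distinction is precisely what separates WPNE ($\Delta_M$, closed) from PNE ($\Delta_M^o$, open) and is the reason the clean "gap $=0$ iff equilibrium" characterization holds for WPNE but fails for PNE.
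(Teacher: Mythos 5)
Your proposal is correct and follows essentially the same route as the paper: contradiction via strict componentwise dominance for the inclusion $\cup_{\bm{\Lambda}}\textbf{NE}(G_{\bm\Lambda})\subseteq\textbf{WPNE}(G)$, convexity of the opponent-frozen achievable value set (via occupancy measures) plus a non-negative supporting hyperplane at a weakly Pareto optimal point for the reverse inclusion, and the closed-simplex vertex/compactness arguments for the gap characterization in (b). The only cosmetic difference is that you obtain the forward direction of (b) by invoking part (a), whereas the paper argues it directly by placing all scalarization weight on a non-positive coordinate of the improvement vector; both are valid, and the paper's version is merely slightly more self-contained.
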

These results hence imply the practical solvability of WPNE through solving the linear scalarized game. Namely, given an preference distribution $\bm\Lambda\in (\Delta_M)^N$, an $\epsilon$-NE of $G_{\bm\Lambda}$ is also approximately a WPNE (with an $\epsilon$-WPNG). This enhanced solvability is the major advantage of the notions of WPNE compared to PNE. Hence in the following sections, we will mianly aim to develop algorithms to find WPNE of the game. We also refer to the set of all WPNG as a Pareto-Nash front.





\subsection{Utility-Based Equilibria for Multi-Objective Markov Games}
Besides the Pareto-Nash Equilibrium we proposed, another class of solutions to multi-objective problem is based on utility functions. A utility function $u:\mathbb{R}^M \to \mathbb{R}$ measures some prior preference of an agent, transferring an $M$-dimensional vector to a scalar, and hence reduce the multi-objective problems to single-objective ones. It has been extensively studied in multi-objective MDPs \cite{lozovanu2005multiobjective} and multi-objective normal-form games \cite{rodriguez2024analytical,ropke2022nash,ruadulescu2020multi,borm1988pareto,lozovanu2005multiobjective,ropke2023bridging}. We generalize these studies and propose two utility-based solutions to MOMGs, as follows. 
\begin{definition}[ESR]
    A product policy $\bpi^* = (\pi_1^*, \dots, \pi_N^*)$ is an \textbf{Expected Scalarized Return Nash Equilibrium (ESR-NE)} if it is a NE for the SMG with Scalarized reward $\bar{r}_k=u_k(\bm{r}_k)$. Namely, for any agent $k$ and any policy $\pi_k$, it holds that
    \begin{align}
        V^{(\pi_k,\bpi^*_{-k})}_{\bar{r}_k,1}(s_1)\leq  V^{(\pi_k^*,\bpi^*_{-k})}_{\bar{r}_k,1}(s_1).
    \end{align}
\end{definition}
ESR can be viewed as an extension of the linear scalarization discussed above with non-linear utility functions. However, as we shall discuss later, it can result in a less consistent solution notion. 
\begin{definition}[SER]
        A product policy $\bpi^* = (\pi_1^*, \dots, \pi_N^*)$ is a \textbf{Scalarized Expected Return Nash Equilibrium (SER-NE)} if for any agent $k$ and any policy $\pi_k$, it holds that
    \begin{align}
        u_k(\bm{V}^{(\pi_k,\bpi^*_{-k})}_{k,1}(s_1))\leq u_k( \bm{V}^{(\pi_k^*,\bpi^*_{-k})}_{k,1}(s_1)).
    \end{align}
\end{definition}
SER assigns some preference to the value functions under different rewards, transferring  the MOMG to a single-objective normal-form game, i.e., a single-step decision game with a payoff function $u_k$. Note that when $u$ is linear, SER is equivalent to SER.

Despite these two notions are extensively studied in multi-objective normal-form game and multi-objective RL, we argue that they may be less suitable than our PNE notion for POMGs, through the following two aspects: existence and inconsistency. 
\begin{proposition}(Existence)
An ESR-NE always exists. An SER-NE may not always exist; If the utility functions are continuous and quasi-concave, then there always exists an SER-NE. 
\end{proposition}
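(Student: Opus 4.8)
The plan is to handle the three assertions by separate arguments.

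\textbf{Existence of an ESR-NE.} This follows directly from standard results on single-objective Markov games. By definition an ESR-NE is a Nash equilibrium of the SMG $G$ in which agent $k$'s scalar reward is $\bar r_k=u_k(\br_k)$. Since $\calS$, $\{\calA_j\}$ and $H$ are finite, $\br_k$ takes only finitely many values, so $\bar r_k$ is a well-defined bounded scalar reward for \emph{any} $u_k$ whatsoever; thus $G$ with these rewards is a genuine finite-horizon SMG. Every finite-horizon SMG admits a Markov-perfect Nash equilibrium: one inducts backward on $h$, solving at each $(h,s)$ the normal-form stage game with payoffs $\bar r_{k,h}(s,\ba)+\sum_{s'}P_h(s'\mid s,\ba)\bar V_{k,h+1}(s')$ via Nash's theorem and assembling the stage equilibria. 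No regularity on $u_k$ is needed here.

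\textbf{An SER-NE need not exist.} I would give a small counterexample realized as a one-shot MOMG: take $|\calS|=1$ and $H=2$ (so only the step-$1$ reward is accumulated), $N=2$, $M=2$, $\calA_1=\calA_2=\{1,2\}$, with agent $1$'s reward supported on the diagonal, $\br_1(1,1)=(1,0)$, $\br_1(2,2)=(0,1)$ and zero otherwise, and agent $2$'s reward $\br_2(1,2)=(1,0)$, $\br_2(2,1)=(0,1)$, zero otherwise. Let $u_2(x,y)=x+y$ (continuous, quasi-concave) and $u_1(x,y)=x^2+y^2$, which is continuous but \emph{not} quasi-concave (its superlevel sets are complements of disks). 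Writing $p,q$ for the probabilities that agents $1,2$ play action $1$, agent $1$'s SER objective $u_1\bigl(pq,(1-p)(1-q)\bigr)=q^2p^2+(1-q)^2(1-p)^2$ is convex in $p$, so agent $1$'s best response is always a vertex: $\{1\}$ if $q>\tfrac12$, $\{0\}$ if $q<\tfrac12$, $\{0,1\}$ if $q=\tfrac12$, and in particular never an interior point. Agent $2$'s objective $p(1-q)+(1-p)q$ gives the matching-pennies best response ($q=1$ if $p<\tfrac12$, $q=0$ if $p>\tfrac12$, arbitrary if $p=\tfrac12$). A short case check shows no profile is simultaneously best-responding, so no SER-NE exists; this also shows that dropping quasi-concavity of the utilities kills existence.

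\textbf{Existence of an SER-NE under continuous, quasi-concave utilities.} (When the $u_k$ are linear, an SER-NE is just a Nash equilibrium of a linear scalarization, so existence already follows from \Cref{thm:PNE=NE}.) The structural fact I would establish first is: for any fixed Markov profile $\bpi_{-k}$, the set $\{\bV^{(\pi_k,\bpi_{-k})}_{k,1}(s_1):\pi_k\}$ of value vectors attainable by agent $k$ is convex and compact --- it is the linear image of the occupancy-measure polytope of agent $k$'s induced MOMDP, equivalently the convex hull of the values of the finitely many deterministic Markov policies. The main obstacle is that one cannot apply Kakutani to the stationary-policy simplices directly: $\pi_k\mapsto u_k(\bV^{(\pi_k,\bpi_{-k})}_{k,1}(s_1))$ is generally \emph{not} quasi-concave in the stationary-policy parameters even when $u_k$ is, because the value vector is only multilinear (not affine) in those parameters, so the best-response set can be a curved, non-convex arc. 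The remedy is to lift the strategy space: let each agent's strategy be a distribution $\sigma_k$ over its finitely many deterministic Markov policies --- a nonempty compact convex simplex. Then $\bV_k(\sigma)=\sum_{\bpi}\bigl(\prod_j\sigma_j(\pi_j)\bigr)\bV^{\bpi}_{k,1}(s_1)$ is multilinear in $\sigma$, hence continuous, and \emph{affine} in $\sigma_k$ for fixed $\sigma_{-k}$; therefore $\sigma_k\mapsto u_k(\bV_k(\sigma))$ is continuous, and being a quasi-concave function precomposed with an affine map, it is quasi-concave in $\sigma_k$. The Debreu--Fan--Glicksberg theorem then gives a Nash equilibrium $\sigma^\ast$, which is an SER-NE in the class of mixture policies; since every stationary stochastic policy is itself such a mixture, $\sigma^\ast$ resists stationary deviations as well, and by the convexity fact above its attained value coincides with that of a stationary stochastic policy.

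\textbf{Where the difficulty lies.} The crux is precisely the non-affineness of the policy-to-value map: it forces the detour through distributions over deterministic policies (equivalently, occupancy measures) to recover the quasi-concavity needed for a fixed-point argument, and it requires being explicit about the policy class in which the equilibrium is expressed. The convexity lemma for the per-agent attainable value set --- a mild extension of the classical occupancy-measure description of finite MDPs --- is the technical workhorse for that step and for reconciling mixture equilibria with stationary policies, while the counterexample in the second part confirms that continuity and quasi-concavity are essentially the right hypotheses.
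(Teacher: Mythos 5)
Your treatment of the first two claims is sound. The ESR argument is essentially the paper's (the scalarized rewards define a finite single-objective Markov game, whose Nash equilibrium exists by backward induction and Nash's theorem at each stage), and your matching-pennies-style counterexample with $u_1(x,y)=x^2+y^2$ checks out as a correct, self-contained proof that an SER-NE can fail to exist; the paper only asserts this part and defers to prior work, so here you supply more than the paper does.

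For the third claim your route genuinely diverges from the paper's. The paper applies Glicksberg directly on the spaces of Markov stochastic policies, asserting that for fixed $\bpi_{-k}$ the map $\pi_k\mapsto\bV^{(\pi_k,\bpi_{-k})}_{k,1}(s_1)$ is affine so that quasi-concavity of $u_k$ passes to the payoff; its supporting \Cref{lemma:linear} in fact only gives multilinearity (linearity in each block $\pi_{k,h}(\cdot|s)$ separately), which is exactly the subtlety you flag, and your lift to mixtures over deterministic Markov policies is a legitimate way to restore an affine dependence on one's own strategy and make the Debreu--Fan--Glicksberg argument airtight in the lifted game. The gap is in your final conversion step. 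The convexity fact you invoke (\Cref{lemma:convex}) describes agent $k$'s attainable value set against fixed \emph{Markov} opponents; at your equilibrium $\sigma^*$ the opponents play mixtures that may correlate their own actions across time steps, so agent $k$ faces a non-Markovian environment and the lemma does not identify $\bV_k(\sigma^*)$ with the value of a stationary stochastic policy. Moreover, even if each agent's value could be matched unilaterally, replacing all agents' mixtures by their marginal Markov policies simultaneously changes the joint trajectory distribution (a correlated mixture over deterministic Markov policies is not outcome-equivalent to its behavioralization), so neither the attained values nor the equilibrium property are preserved. As written, you have proved existence of an SER equilibrium in the class of mixtures over deterministic Markov policies, robust to deviations within that class (hence to all Markov stochastic deviations); to obtain the proposition in the paper's intended sense---a product of stationary stochastic policies---you need an actual conversion argument, or you should restate the conclusion for the enlarged policy class.
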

As proved, SER-NE may not always exist expect with additional conditions on the utility functions. Such a result aligns with the previous studies of ESR in multi-objective normal-form games \cite{ropke2022nash,ropke2023bridging} or multi-objective RL \cite{agarwal2022multi,guidobene2025variance}. However, our PNE is guaranteed to exist, thus has a better applicability. 

On the other hand, ESR can be viewed as a strict extension of the linear sclarization and ESR-NE always exists. One potential hope is to consider ESR-NE  to gain some additional benefits compared to linear scalarization. However, we use the following result to justify that, considering ESR-NE with non-linear utility function can results in inconsistency with special cases of MOMG. 
\begin{proposition}(Inconsistency)
     There exists a single-agent multi-objective MDP and a concave utility function,  whose ESR-NE (i.e., an ESR-optimal policy) is not weakly Pareto optimal. 
\end{proposition}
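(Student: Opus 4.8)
The plan is to construct an explicit counterexample: a single-agent MOMDP (so $N=1$, with $M=2$ objectives) together with a concave utility $u:\reals^2\to\reals$ such that the policy maximizing $u(\bV_1^\pi(s_1))$ — the ESR-optimal / ESR-NE policy, which in the single-agent case coincides with an SER-optimal policy since there is no expectation-vs-scalarization distinction under deterministic rewards — lands in the interior of the achievable value set, hence is Pareto dominated and in particular not weakly Pareto optimal. The key realization is that a \emph{strictly} concave utility function can prefer a "balanced" interior point over the extreme points of the achievable-value region; if the Pareto front of the achievable set is non-convex (which is exactly the phenomenon that makes linear scalarization incomplete), a strictly concave $u$ can be maximized at a point that is strictly dominated.

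First I would specify the MOMDP. The simplest version uses a trivial one-step horizon, or $H=2$, a single initial state $s_1$, and a small action set, say $\calA=\{a_1,a_2,a_3\}$ (stochastic policies then trace out the convex hull of the three reward vectors). I would pick the three vector rewards so that the achievable value set is the triangle with vertices $(1,0)$, $(0,1)$, and $(t,t)$ for a small $t>0$, e.g. $t=1/4$. Then I choose a utility such as $u(x,y) = -\big((x-c)^2+(y-c)^2\big)$ for a suitable center $c$, or more transparently $u(x,y)=\min(x,y)$ smoothed, or simply a rotated concave quadratic, arranged so that its maximum over the triangle is attained at the vertex $(t,t)$. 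The point $(t,t)$ is strictly Pareto-dominated by $(1,0)$? No — I must be careful: $(1,0)$ does not dominate $(t,t)$ since the second coordinate is smaller. The correct construction: make the achievable set so that $(t,t)$ with small $t$ is dominated by some other achievable point, e.g. include a fourth action giving reward $(t',t')$ with $t'>t$, or shape the hull so that $(t,t)$ lies strictly below the segment joining $(1,0)$ and $(0,1)$ — then $(t,t)$ is dominated by the midpoint $(1/2,1/2)$ of that segment, which is achievable by mixing. So the achievable set should be the triangle $(1,0),(0,1),(t,t)$ with $t<1/2$; every point on the top edge $\{(\alpha,1-\alpha)\}$ Pareto-dominates $(t,t)$, and I design $u$ concave with argmax at $(t,t)$.

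The remaining steps are: (i) verify the achievable value set is exactly that triangle — immediate from the fact that with a single transient state and one decision, $\bV_1^\pi(s_1)=\sum_a \pi(a)\,\br(s_1,a)$, so the set of achievable value vectors is the convex hull of the per-action reward vectors; (ii) verify $u$ is concave — pick $u$ quadratic-concave or piecewise-linear-concave and check; (iii) verify $\arg\max_{v\in T} u(v) = (t,t)$ — a short calculus/geometry computation, choosing the "center" of the concave $u$ to sit near $(t,t)$ and outside the triangle in the direction away from the dominating edge, or just solving the KKT conditions on the triangle; (iv) conclude that this argmax point is not weakly Pareto optimal because it is strictly dominated by interior points of the top edge (indeed dominated coordinate-wise strictly by, say, $(1/2,1/2)$ when $t<1/2$), which are themselves achievable. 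Finally note that in the single-agent setting an ESR-NE is exactly an ESR-optimal policy, so this furnishes the claimed MOMDP and concave utility.

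The main obstacle is purely in the bookkeeping of the construction: I must simultaneously ensure the utility is concave, its unconstrained maximizer pulls the constrained maximizer onto the "bad" vertex $(t,t)$ rather than onto the Pareto edge, and that vertex is genuinely strictly dominated by another \emph{achievable} point. These three requirements slightly pull against each other — a concave $u$ "wants" to go to interior points, which is exactly what I exploit, but I need the triangle's geometry (the non-convex kink of the Pareto front at $(t,t)$) positioned so the maximizer is forced there. Choosing $t$ small (say $t=1/4$) and $u(x,y)=-(x-1/4)^2-(y-1/4)^2$ makes $(1/4,1/4)$ the global unconstrained max, hence trivially the constrained max over the triangle, and it is strictly dominated by $(1/2,1/2)\in T$; so the obstacle dissolves with the right explicit numbers, and the rest is a two-line verification.
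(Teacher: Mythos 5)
Your construction is essentially the paper's own argument: the paper likewise exhibits a one-state, one-step MOMDP with three actions --- a ``balanced'' reward $(3,3)$ plus two extremes $(1,10)$ and $(10,1)$ --- and a concave utility ($u=\min$) under which the ESR-optimal policy is the balanced pure action, which is strictly dominated by the $50/50$ mixture of the two extremes. Your triangle $(1,0),(0,1),(1/4,1/4)$ with $u(x,y)=-(x-1/4)^2-(y-1/4)^2$ is the same idea with different numbers.

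There is, however, one conceptual slip in your justification: the claim that in the single-agent case ESR coincides with SER ``since there is no expectation-vs-scalarization distinction under deterministic rewards'' is false. The distinction comes from the stochasticity of the \emph{policy}, not of the rewards. Under the paper's definition, ESR-optimality means maximizing $\E_{a\sim\pi}\left[u(\br(s_1,a))\right]$ (scalarize each realized reward, then take the expectation over the mixed policy), whereas your verification step (iii) maximizes $u$ over the achievable value set, i.e.\ maximizes $u(\bV^{\pi}_1(s_1))$, which is the SER criterion. For nonlinear $u$ and mixed policies these objectives differ, and the difference is not cosmetic: in the paper's example with $u=\min$, the SER-optimal policy is the $50/50$ mixture with value $(5.5,5.5)$, which \emph{is} Pareto optimal, so arguing via SER there would not establish the proposition at all. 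Your particular numbers survive the correction: the ESR objective is linear in $\pi$, so its maximizer is a pure action, and $u(1/4,1/4)=0 > -5/8 = u(1,0)=u(0,1)$, hence the ESR-optimal policy is the pure $(1/4,1/4)$ action, whose value vector is strictly dominated by the achievable $(1/2,1/2)$. So your counterexample is valid, but step (iii) should be the per-action comparison of scalarized rewards (ESR), not the argmax of $u$ over the triangle (SER).
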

This result implies that, although ESR-NE exists, it may not align with the nature of multi-objective learning. Using it in single-agent setting can result in solutions that is not (weakly) Pareto optimal. Same issue also exists for SER-NE. The utility functions are additionally assumed to be element-wise strictly monotonically increasing or concave for single-agent multi-objective RL, to ensure the solution to the Pareto optimal equation is Pareto optimal \cite{agarwal2022multi,guidobene2025variance}. Yet our PNE is consistent with all of MOMGs' special cases.

Given these understandings, we focus on designing efficient algorithms to find weakly PNE.





\section{Online Learning for Single WPNE Identification}\label{sec:linear}
In this section, we first design a provably efficient algorithm to find a single WPNE of a given MOMG. Based on \Cref{thm:PNE=NE}, the problem can be reduced to finding a NE of an SMG via linear scalarization. We hence present an algorithm, \textbf{Optimistic Nash Value Iteration for Multi-Objective Games (ONVI-MG)}, that solves this scalarized game for some pre-set preference distribution $\bm{\Lambda} = \{\bm{\lambda}^1, \dots, \bm{\lambda}^N\} \in (\Delta_M)^N$.

We mainly consider the online learning setting where all agents together take actions and observe the next state and a random realization of the vector reward $\bm{r}^t_k$ in each step $t$, such that $\E[\bm{r}^t_k]=\bm{r}_k$. The major goal, besides learning a WPNE, is also to maintain a low regret, defined as follows. 
\begin{definition}\label{def:nash regret}
The \textbf{Nash Regret} is the Nash Gaps of the sequence of policies, $\bm{\pi}^1, \dots, \bm{\pi}^T$, executed by the algorithm over $T$ episodes:
\[
    \text{Regret}(T) := \sum_{t=1}^T \max_k \left( \max_{\pi'^k} U^{k, (\pi'^k, \bm{\pi}^{-k,t})}(s_1) - U^{k, \bm{\pi}^t}(s_1) \right),
\]
where $U^{k,\bpi}(s_1)=\bm \lambda_k^\top \bm{V}^{\bpi}_{k,1}(s_1)$ is the player $k$'s linear scalarized value function. 
\end{definition}

This metric quantifies the total cumulative loss from not playing a Nash Equilibrium of the linearly scalarized reward in every episode. More importantly,  if the corresponding Nash Regret is small: $\max_k \left( \max_{\pi'^k} U^{k, (\pi'^k, \bm{\pi}^{-k})}(s_1) - U^{k, \bm{\pi}}(s_1) \right)\leq \epsilon$, then the WPNG will be also smaller than $\epsilon$:
\begin{align}
&\text{WPNG}(\bpi)=\max_k  \max_{\pi'^k} \inf_{\bm\lambda_k}\bm\lambda_k^\top(V_{k,1}^{(\pi'_k, \pi_{-k})}(s_1) - V_{k,1}^{\pi}(s_1))\nonumber\\
&\leq \max_k  \max_{\pi'^k} \bm\lambda_k^\top(V_{k,1}^{(\pi'_k, \pi_{-k})}(s_1) - V_{k,1}^{\pi}(s_1))\leq \epsilon,  
\end{align}
 implying an approximate WPNE. Hence a sub-linear Nash regret in \Cref{def:nash regret} ensures that the algorithm finds an $\epsilon$-WPNE, with sufficient samples. We then develop our online algorithm for it. 



\subsection{Optimistic Nash Value Iteration}\label{sec:nash}
The core principle is to develop the  optimism in the face of uncertainty framework for our multi-agent setting. Specifically, our algorithm is developed based on the optimistic Nash value iteration algorithm for SMGs \cite{liu2021sharp}. However, one central challenge is that in our cases, agents observe a random reward vector, hence we need to further tackle the uncertainty from it, and design a unified optimism for both reward and transition uncertainties. Toward this, our algorithm maintains empirical counts $N_h^t(s, \bm{a})$ for each state-joint-action triplet $(s, \bm{a})$ at step $h$ up to episode $t$. From these counts, we construct an empirical model consisting of the transition probabilities $\hat{P}_h^t$ and mean reward vectors $\hat{\bm{r}}_h^{k,t}$ for each agent. To encourage exploration to reduce both uncertainties, we define shared bonus terms inversely related to the number of times a joint action has been taken:
\begin{align*}
    \Psi_h^t(s, \bm{a}) := \sqrt{\frac{c_1 \log(SAHT/\delta)}{N_h^{t-1}(s, \bm{a}) \lor 1}}  (\text{Reward}), \quad 
    \Phi_h^t(s, \bm{a}) := H \sqrt{\frac{c_2 S \log(SAHT/\delta)}{N_h^{t-1}(s, \bm{a}) \lor 1}}  (\text{Transition}),
\end{align*}
where $c_1, c_2$ are suitable constants and $x \lor y = \max(x,y)$.

The algorithm then proceeds via backward induction in each episode to compute an optimistic joint policy. For each state $s$ and step $h$, agents compute and execute a NE for a one-shot `stage game' where the payoffs are optimistic Q-values:
\[
    Q_h^{k,t}(s, \bm{a}) := \min \left\{ H, (\bm{\lambda}^k)^\top \hat{\bm{r}}_h^{k,t}(s, \bm{a}) + \Psi_h^t(s, \bm{a}) + \sum_{s'} \hat{P}_h^t(s'|s,\bm{a}) U_{h+1}^{k,t}(s') + \Phi_h^t(s, \bm{a}) \right\},
\]
where $U_{h+1}^{k,t}(s')$ is the value function for agent $k$ at the next stage, computed from the NE of that stage's game. We defer our algorithm (\Cref{alg:onvi-mg}) to Appendix.

We then derive the theoretical guarantee on the regret of our algorithm. 
\begin{theorem}\label{thm:NASH-reg}
With probability at least $1-\delta$, the Total Nash Regret of the ONVI-MG algorithm after $T$ episodes is bounded by:
\[
    \text{Regret}(T) \le \mathcal{O}\left( H^2 S \sqrt{A T \log(SAHT/\delta)}\right).
\]
Namely, to find an $\epsilon$-WPNE, it requires a sample complexity of $\tilde{\mathcal{O}}\left( \frac{H^5S^2A}{\epsilon^2}\right)$.
\end{theorem}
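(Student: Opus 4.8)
The plan is to follow the optimism-in-the-face-of-uncertainty analysis for Nash value iteration in SMGs \cite{liu2021sharp}, adapting it to handle the additional reward-vector uncertainty via the scalarized rewards $r_i = (\bm\lambda^i)^\top \bm r_i$. The overall structure has three pieces: (i) establish a ``good event'' on which all empirical estimates concentrate; (ii) prove that, on this event, the computed $Q$-values are simultaneously optimistic and pessimistic sandwiches around the true best-response and Nash values; (iii) telescope the resulting per-episode gaps and sum via a pigeonhole/visitation argument to get the $\tilde{\mathcal O}(H^2 S \sqrt{AT})$ bound, then invert to obtain the sample complexity.

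First I would define the good event. By Azuma--Hoeffding applied to the i.i.d. reward realizations and by the standard empirical-transition concentration (Bernstein or Hoeffding over the $S$-dimensional simplex), with probability at least $1-\delta$ we have simultaneously for all $(s,\bm a, h, t)$ that $|(\bm\lambda^k)^\top(\hat{\bm r}_h^{k,t} - \bm r_h^k)(s,\bm a)| \le \Psi_h^t(s,\bm a)$ — here I use $\|\bm\lambda^k\|_1 = 1$ so the scalarized reward stays in $[0,1]$ and the per-component errors aggregate correctly — and $|\sum_{s'}(\hat P_h^t - P_h)(s'|s,\bm a) W(s')| \le \Phi_h^t(s,\bm a)$ for value functions $W$ bounded in $[0,H]$. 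This is where the bonus constants $c_1, c_2$ are fixed. The reward-uncertainty handling is the genuinely new ingredient relative to \cite{liu2021sharp}, but it is mild: since scalarization is linear and the weights are fixed, the scalarized reward is just a bounded random variable and a single Hoeffding bound suffices; the bonus $\Psi$ is lower-order ($H$ times smaller) than $\Phi$, so it will not affect the leading term.

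Next, on the good event I would show by backward induction on $h$ that for every agent $k$, state $s$, and episode $t$, the optimistic value $U_h^{k,t}(s)$ upper-bounds the value of agent $k$'s best response against $\bm\pi^{-k,t}$, while a symmetric pessimistic estimate $\underline U_h^{k,t}(s)$ lower-bounds the value of $\bm\pi^t$ itself; consequently the per-episode, per-agent Nash gap is bounded by $U_1^{k,t}(s_1) - \underline U_1^{k,t}(s_1)$, i.e. by the accumulated bonuses plus transition-estimation slack along the trajectory distribution induced by $\bm\pi^t$. The key lemma here is that computing a stage-game NE at each $(s,h)$ preserves the sandwich — this is exactly the step that uses the one-shot Nash solver and is inherited essentially verbatim from the SMG analysis. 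Then I would unroll the recursion to write $\sum_t (\text{gap}_t) \lesssim \sum_t \E_{\bm\pi^t}\big[\sum_h (\Psi_h^t + \Phi_h^t)(s_h,\bm a_h)\big] + (\text{martingale terms})$, bound the martingale terms by Azuma, and control $\sum_{t,h} \E[1/\sqrt{N_h^{t-1}(s_h,\bm a_h)\lor 1}]$ by the standard pigeonhole inequality $\sum_{t} 1/\sqrt{N^{t-1}} \lesssim \sqrt{(\text{number of episodes visiting that }(s,\bm a,h))}$ summed over the $SA H$ triples, giving $\sqrt{SAHT}$ after Cauchy--Schwarz. Multiplying by the $H\sqrt{S}$ factor from $\Phi$ and the extra $H$ from the horizon-length trajectory sum yields $\mathcal O(H^2 S\sqrt{AT\log(SAHT/\delta)})$. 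Finally, setting $\text{Regret}(T)/T \le \epsilon$ and solving for $T$ gives $T = \tilde{\mathcal O}(H^4 S^2 A/\epsilon^2)$ episodes, each of length $H$, for a total sample complexity $\tilde{\mathcal O}(H^5 S^2 A/\epsilon^2)$, and the ``$\epsilon$-WPNE'' conclusion follows from the displayed inequality $\text{WPNG}(\bpi)\le \epsilon$ established earlier in the section (infimum over $\bm\lambda$ is at most the value at the fixed $\bm\lambda_k$).

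The main obstacle I anticipate is not any single step but making the ``best-response optimism, self-play pessimism'' double induction fully rigorous in the general-sum, vector-reward setting while keeping the bonus bookkeeping tight enough that $\Psi$ genuinely stays lower-order; one must be careful that the stage-game NE is computed with respect to the \emph{scalarized optimistic} payoffs and that the pessimistic bound uses the \emph{same} executed policy $\bm\pi^t$, since in general-sum games a stage NE is not unique and the two estimates must be coupled through the identical equilibrium selection. A secondary subtlety is the dependence of $N_h^{t-1}$ on past randomness, which forces the clean-event and pigeonhole arguments to be interleaved carefully rather than applied in sequence; this is standard but easy to get wrong.
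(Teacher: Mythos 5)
Your proposal follows essentially the same route as the paper's proof: a concentration lemma validating the scalarized-reward and transition bonuses, backward-induction optimism through the stage-game Nash values, a per-episode Nash-gap bound in terms of accumulated bonuses along the executed policy, and a pigeonhole plus Cauchy--Schwarz argument giving $\mathcal{O}\left(H^2 S\sqrt{AT\log(SAHT/\delta)}\right)$, followed by the online-to-batch inversion to $\tilde{\mathcal{O}}\left(H^5S^2A/\epsilon^2\right)$ samples and the observation that a small scalarized Nash gap bounds the WPNG. The differences are cosmetic: you phrase the induction as an optimistic/pessimistic sandwich and explicitly invoke Azuma to pass from expected to realized bonuses, whereas the paper compares the optimistic value directly to the true value of the executed policy and leaves that martingale step implicit.
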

The result implies that, our algorithm can find a single WPNE in a sample efficient fashion, which hence presents the first concrete and provable solution to our MOMGs.

\subsection{Pareto Correlated Equilibrium}\label{Sec:v}
Despite our ONVI algorithm can identify a single WPNE, the sample complexity linearly depends on the joint action space size $A$, suffering from the multi-agency curse \cite{jin2021v}. It is hence of great interest to break such a curse and develop more efficient algorithms.

Toward this, in this section, we will first propose a further relaxation of the PNE, Pareto Correlated Equilibrium, and design algorithms to identify it with more efficient sample complexity. 



\begin{definition}[Pareto Correlated Equilibrium]
A joint policy $\pi$ is called a \textbf{Pareto Correlated Equilibrium} (PCE) if for any player $j$ and any stochastic modification $\phi^{(j)}$ \cite{osborne1994course}, it holds that the value vector for player $j$ under the modified policy, $\bm{V}^{\phi^{(j)} \circ \bpi}_{j,1}(s_1)$, does {not} Pareto dominate the value vector under the original policy, $\bm{V}_{j,1}^{\bpi}(s_1)$.
\end{definition}
 We further show that, PCE also enjoys a similar equivalence as PNE.
\begin{theorem}\label{thm:PCE-ce-equiv}
Any finite Multi-Objective Markov Game has a Pareto Correlated Equilibrium.  
Moreover, the set of PCEs is equivalent to the union of CEs of positive scalarization MGs:
\begin{align}
    \textbf{PCE}(G)= \cup_{\bm{\Lambda}\in (\Delta_M^o)^N} \textbf{CE}(G_{\bm\Lambda}). 
\end{align}
\end{theorem}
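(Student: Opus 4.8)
The plan is to mirror the two-part structure used for \Cref{thm:PNE exists} and \Cref{thm:PNE=NE}, adapting each argument from product (Nash) policies to correlated policies and from the Nash deviation set to the set of stochastic modifications $\phi^{(j)}$. I will prove the set equality by double inclusion, and then obtain existence for free as a corollary of the right-hand side being nonempty (every finite SMG, in particular every $G_{\bm\Lambda}$, admits a correlated equilibrium, e.g.\ by the usual backward-induction/linear-programming argument or by Hart--Mas-Colell-type dynamics).

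For the inclusion $\cup_{\bm\Lambda\in(\Delta_M^o)^N}\textbf{CE}(G_{\bm\Lambda})\subseteq\textbf{PCE}(G)$, fix $\bm\Lambda=\{\bm\lambda^1,\dots,\bm\lambda^N\}$ with every $\bm\lambda^j\in\Delta_M^o$ and let $\pi\in\textbf{CE}(G_{\bm\Lambda})$. The key observation is linearity of the value operator in the reward: for every player $j$ and every modification $\phi^{(j)}$,
\[
(\bm\lambda^j)^\top \bm V^{\phi^{(j)}\circ\bpi}_{j,1}(s_1) = V^{\phi^{(j)}\circ\bpi,\,(\bm\lambda^j)^\top\br_j}_{j,1}(s_1),
\]
so the correlated-equilibrium condition in $G_{\bm\Lambda}$ says exactly that $(\bm\lambda^j)^\top\bm V^{\phi^{(j)}\circ\bpi}_{j,1}(s_1)\le (\bm\lambda^j)^\top\bm V^{\bpi}_{j,1}(s_1)$ for all $\phi^{(j)}$. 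If some $\phi^{(j)}$ made $\bm V^{\phi^{(j)}\circ\bpi}_{j,1}(s_1)$ Pareto dominate $\bm V^{\bpi}_{j,1}(s_1)$, then since $\bm\lambda^j$ has strictly positive entries the inner product would strictly increase, a contradiction. Hence $\pi\in\textbf{PCE}(G)$. This direction is essentially the same linear-functional argument as in \Cref{thm:PNE=NE}, only with unilateral deviations replaced by modification maps.

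The reverse inclusion $\textbf{PCE}(G)\subseteq\cup_{\bm\Lambda}\textbf{CE}(G_{\bm\Lambda})$ is the part I expect to be the main obstacle, and I would handle it via a separating-hyperplane argument player by player. Fix $\pi\in\textbf{PCE}(G)$ and a player $j$. Consider the set of deviation gain vectors $D_j:=\{\bm V^{\phi^{(j)}\circ\bpi}_{j,1}(s_1)-\bm V^{\bpi}_{j,1}(s_1) : \phi^{(j)}\text{ a stochastic modification}\}\subseteq\reals^M$. The PCE property says $D_j$ contains no vector that Pareto dominates $0$, i.e.\ $D_j\cap(\reals_{\ge 0}^M\setminus\{0\})=\emptyset$. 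To invoke separation I need the convex hull (or at least a suitable closed convex relaxation) of $D_j$ to still avoid the strictly positive orthant; here the crucial structural fact — the analogue of the convexity remark in the paragraph before \Cref{thm:PNE exists} — is that the set of achievable value vectors $\{\bm V^{\phi^{(j)}\circ\bpi}_{j,1}(s_1):\phi^{(j)}\}$ is convex, since the modifications $\phi^{(j)}$ form a convex set and the occupancy measure, hence the value, depends affinely on the randomization within each $\phi^{(j)}$ when the other players' (correlated) play $\bpi_{-j}$ together with the recommendation distribution is held fixed. Granting this, $D_j$ is convex and disjoint from the open positive orthant, so there is a nonzero $\bm\lambda^j\in\reals_{\ge0}^M$ with $(\bm\lambda^j)^\top d\le 0$ for all $d\in D_j$; normalizing gives $\bm\lambda^j\in\Delta_M$. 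The difficulty is that separation only yields $\bm\lambda^j\in\Delta_M$, possibly on the boundary (some zero entries), whereas the theorem claims $\bm\Lambda\in(\Delta_M^o)^N$ — exactly the open-versus-closed subtlety flagged in the Remark after the Proposition. I would resolve this by arguing that for a \emph{Pareto} correlated equilibrium (as opposed to a weak one) the separating direction can be taken strictly positive: if the supporting hyperplane had a zero coordinate $i$, one could perturb along $e_i$ to find a modification producing a Pareto improvement, contradicting membership in $\textbf{PCE}(G)$ (this is where the distinction between Pareto and weak-Pareto dominance is used, paralleling how \Cref{thm:PNE=NE} gets $\Delta_M^o$ while the WPNE theorem only gets $\Delta_M$). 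Doing this for each $j$ assembles $\bm\Lambda\in(\Delta_M^o)^N$, and by construction $\pi$ is a correlated equilibrium of $G_{\bm\Lambda}$, completing the inclusion. Finally, existence of a PCE follows since each $G_{\bm\Lambda}$ is a finite SMG and therefore has a CE, which by the first inclusion lies in $\textbf{PCE}(G)$.
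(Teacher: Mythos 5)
Your proposal is correct and follows essentially the same route as the paper: the forward inclusion is the same strictly-positive linear-functional contradiction, and the reverse inclusion is the same separating-hyperplane argument on the convex set of deviation-gain vectors $\{\bm V^{\phi^{(j)}\circ\bpi}_{j,1}(s_1)-\bm V^{\bpi}_{j,1}(s_1)\}$, with the separating vector upgraded to strictly positive components exactly as the paper asserts (and you treat the convexity and open-simplex subtleties at least as carefully as the paper does). The only cosmetic difference is existence: you get it from nonemptiness of the scalarized-CE side, whereas the paper derives it from the fact that every PNE is a PCE together with \Cref{thm:PNE exists}; both are valid.
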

Thus a (weakly) PCE can similarly be identified through finding a CE of the linearly scalarized SMG. We thus propose our multi-objective V-learning (MO-V-Learning), which takes a fixed preference profile $\bm\Lambda = \{\bm\lambda^1, \dots, \bm\lambda^N\} \in (\Delta_M)^N$ as input and finds an  $\epsilon$-CE for  $G_{\bm \Lambda}$, which is also an $\epsilon$-WPCE for $G$. We defer our algorithm to Algorithm \ref{alg:mo-v-learning} in Appendix.




We then present the next theorem, showing that MO-V-Learning finds an $\epsilon$-WPCE for the given $\Lambda$ with a sample complexity that scales polynomially in $ \max_j A_j$ instead of $\prod_j A_j$, and hence breaks the curse of multi-agency in MOMGs.

\begin{theorem}
\label{thm:vlearning}
Run MO-V-Learning (see Algorithm \ref{alg:mo-v-learning} in Appendix) for $K$ episodes. Then, with probability at least $1-\delta$, the output policy $\hat{\pi}$ satisfies that 
\begin{align}
    \max_{j, \phi_j} \left\{\bm{\lambda}_j^\top (\bm V_{j,1}^{(\phi_j \circ \hat{\pi}_j) \odot \hat{\pi}_{-j}}(s_1) - \bm V_{j,1}^{\hat{\pi}}(s_1)) \right\} \le \tilde{\mathcal{O}}\left( \max_j A_j \sqrt{\frac{H^5 S}{K}} \right).
\end{align}
Namely, to find an $\epsilon$-WPCE, the sample complexity of MO-V-Learning is
$    \tilde{\mathcal{O}}\left( \frac{H^6 S \max_j A_j^2}{\epsilon^2} \right)$.
\end{theorem}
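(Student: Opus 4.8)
\textbf{Proof Plan for Theorem~\ref{thm:vlearning}.}
The plan is to leverage the equivalence from Theorem~\ref{thm:PCE-ce-equiv}: since a $\bm\Lambda$-weighted scalarization reduces $G$ to a standard single-objective SMG $G_{\bm\Lambda}$ in which each agent $j$ receives scalar reward $r_j = (\bm\lambda^j)^\top \br_j \in [0,1]$, it suffices to show that MO-V-Learning is an instantiation of decentralized V-learning for $G_{\bm\Lambda}$ that outputs an $\epsilon$-CE of that scalarized game, and then read off the multi-objective guarantee. Concretely, the left-hand side of the displayed bound is exactly the CE-gap (maximum swap-regret over stochastic modifications $\phi_j$) of $\hat\pi$ in $G_{\bm\Lambda}$, so the theorem is equivalent to the statement that MO-V-Learning achieves CE-gap $\tilde{\mathcal O}(\max_j A_j \sqrt{H^5 S / K})$.

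First I would set up the per-agent V-learning machinery: each agent $h$-indexed value estimate $V_h^j$ maintained via an optimistic update with a bonus $\beta_t \asymp \sqrt{H^2 \iota / t}$ (after $t$ visits to the current state), using a weighting sequence (e.g. the standard $\alpha_t$ incremental weights) that controls the bias-variance tradeoff, together with an adversarial-bandit subroutine (e.g. follow-the-regularized-leader / exponential weights with a CE-style swap-regret reduction) run independently at each $(s,h)$ to select the action distribution. The key structural point is that although the reward vector is random, the scalarized reward $(\bm\lambda^j)^\top \br_j^t$ is an unbiased $[0,1]$-bounded estimate of $r_j$, so the reward uncertainty is absorbed into the same Hoeffding-type bonus as the transition uncertainty with no change to the rates — this is where the extra factor of $H$ (turning $H^5$ into $H^6$ in the sample complexity relative to a setting with known rewards, matching ONVI-MG's structure) and the $\iota = \log(\cdot)$ factors enter.

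Next I would carry out the regret decomposition in the standard V-learning fashion: (i) show by induction on $h$ that the maintained $V_h^j$ upper-bounds the best-response value of agent $j$ against the (correlated) play of the others up to an error controlled by the accumulated bonuses and the per-state adversarial-bandit regrets; (ii) sum the weighted bonuses across all visited $(s,h)$ using $\sum_t \alpha_t^i \cdot (\text{stuff})$ telescoping identities and a pigeonhole/Cauchy--Schwarz argument over state visitations to get the $\sqrt{H^5 S / K}$ scaling; (iii) bound each per-state swap-regret by $\tilde{\mathcal O}(\sqrt{A_j T_s})$ where $T_s$ is that state's visitation count, and aggregate via $\sum_s \sqrt{A_j T_s} \le \sqrt{A_j S \sum_s T_s}$ to extract the $\max_j A_j$ and another $\sqrt{S}$ — note the single (not joint) action dependence is exactly what the decentralized subroutine buys us. Then the output $\hat\pi$ is the standard certified policy (the correlated mixture induced by sampling a random episode index and following the recorded per-state distributions), and the CE-gap bound transfers verbatim to the multi-objective swap-regret on the left-hand side. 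Setting the bound to $\epsilon$ and solving for $K$ gives $\tilde{\mathcal O}(H^6 S \max_j A_j^2 \epsilon^{-2})$.

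The main obstacle I expect is step (i)--(ii): making the optimism argument go through when rewards are vector-valued and only an unbiased sample is observed, i.e. verifying that the single shared bonus $\Psi + \Phi$ (or its V-learning analogue) simultaneously dominates \emph{both} the reward-estimation error of the scalarized reward \emph{and} the transition-estimation error with the claimed constants, and that the induction coupling the $N$ decentralized learners' value estimates to a common correlated equilibrium certificate does not lose a factor of $N$ or a $\prod_j A_j$. The delicate bookkeeping is that each agent's V-learner only sees its own action and the realized next state, so the ``effective adversary'' each agent faces is the others' correlated play; one must argue that the product of the per-agent certified policies, read as a correlated device over a common random index, yields a genuine $\epsilon$-CE — this is the place where the swap-regret reduction (mapping external-regret bandit subroutines to internal/swap regret) must be invoked carefully, and where the vector-reward generalization needs the positivity/structure of $\bm\lambda^j$ only insofar as it keeps the scalarized rewards in $[0,1]$.
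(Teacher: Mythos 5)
Your plan matches the paper's proof essentially step for step: reduce to the scalarized game $G_{\bm\Lambda}$, read the left-hand side as the CE gap of the certified output policy, run decentralized V-learning with a swap-regret adversarial-bandit subroutine at each $(s,h)$, prove optimism (and pessimism for the output policy) by backward induction with a single bonus absorbing the scalarized-reward noise, transition noise, and policy-sampling noise, and aggregate bonuses and per-state regrets via telescoping plus pigeonhole/Cauchy--Schwarz to get $\tilde{\mathcal{O}}(\max_j A_j\sqrt{H^5 S/K})$. The one detail to correct: with bandit feedback the per-state swap regret is $\tilde{\mathcal{O}}(A_j\sqrt{T_s})$ (linear in $A_j$, matching the assumed bound $\Xi_{sw}(B,t,\iota)=\mathcal{O}(B\sqrt{Ht\iota})$), not $\tilde{\mathcal{O}}(\sqrt{A_j T_s})$, and correspondingly the bonus must scale as $\beta_{j,t}\asymp A_j\sqrt{H^3\iota/t}$ rather than $\sqrt{H^2\iota/t}$ — this is precisely what yields the linear $\max_j A_j$ (and hence $A_j^2$ in the sample complexity) that you correctly state in the final bound.
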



\section{Two-Phase Preference-Free Algorithms for PN Front}\label{sec:twophase}
 The algorithms we developed provide a crucial first step towards solving MOMGs, offering provably efficient methods for identifying a single WPNE or WPCE. However, their direct application is predicated on a significant assumption: that a fixed, known preference profile $ \Lambda $ for all agents is provided as input. While effective in scenarios with pre-defined and static agent objectives, this preference-conditioned paradigm reveals a substantial practical and computational bottleneck when agent preferences are unknown or subject to change over time, or we aim to understand the full landscape of strategic trade-offs—the entire Pareto-Nash front. Characterizing the Pareto-Nash front by naively applying a preference-conditioned algorithm entails executing the entire online learning process repeatedly for a multitude of different preference profiles, which is computationally prohibitive and profoundly data-inefficient. 
 

A more principled methodology can be developed from the fundamental observation that the underlying dynamics of the MOMG are invariant to the agents' preferences. The costly process of learning these dynamics through environmental interaction can be decoupled from the less expensive, purely computational process of planning an equilibrium for a specific preference profile.

This motivates a shift in paradigm towards a two-phase framework, inspired by a similarly idea under the single-agent setting \cite{qiu2024traversing}. The first phase involves a significant, one-time investment in a comprehensive, preference-agnostic exploration strategy designed to build a high-fidelity empirical model of the game. The second phase leverages this reusable model to compute, on-demand, the PNE for any given preference profile without the need for any additional environmental samples. The objective thus shifts from learning a single PNE to learning an  accurate model of the entire MOMG. 

The major challenge of such an algorithm is to efficiently explore the environment and collect data, to ensure the learned model is accurate enough to learn all of the Pareto-Nash front (instead of a single one), while maintaining sample efficient. To achieve this, our algorithm temporarily recasts the general-sum MOMG into a fully cooperative, common-payoff game, by defining a single reward as as the maximum of all uncertainties (see Line 9 in \Cref{alg:combined}). The NE of the resulting cooperative game will naturally incentivize the agents to choose joint actions that visit state-action pairs where at least one component of the system model (a transition probability, or a reward for any agent's objective) is highly uncertain, thus ensuring comprehensive exploration. 
We further show that such a strategy utilizes the underlying multi-agent structure, and the collected data can ensure the accuracy of the learned policy from Phase-2. Our algorithm is presented in \Cref{alg:combined}. 


\begin{algorithm}[H]
\caption{Two-Phase Multi-Player Learning}
\label{alg:combined}
\begin{algorithmic}[1]
 \STATE \texttt{Phase 1: Preference-Free Exploration}
\STATE \textbf{Initialize:} Dataset $\mathcal{D} \leftarrow \emptyset$, counts $N_h(s,\bm{a}) \leftarrow 0$. Let $C_r, C_p$ be logging constants.
\FOR{$t=1, 2, \dots, T$}
    \FOR{$h=H, \dots, 1$}
        \STATE For all $(s,\bm{a})$, calculate bonuses based on counts $N_h^{t-1}(s,\bm{a})$:
        \STATE $\Psi_{j,i,h}^t(s,\bm{a}) \leftarrow \sqrt{\frac{C_r}{N_h^{t-1}(s,\bm{a}) \vee 1}} \wedge 1 \quad \forall j \in [N], \forall i \in [M]$
        \STATE $\Phi_h^t(s,\bm{a}) \leftarrow \sqrt{\frac{C_p S H^2}{N_h^{t-1}(s,\bm{a}) \vee 1}} \wedge H$
        \STATE Define a single, shared uncertainty reward for the exploration game:
        \STATE $\bar{r}_h^t(s,\bm{a}) \leftarrow \max \{ \Phi_h^t(s,\bm{a})/H, \{\Psi_{j,i,h}^t(s,\bm{a})\}_{j,i} \}$
        \STATE Construct optimistic Q-function for the exploration game:
        \STATE $\bar{Q}_h^t(s,\bm{a}) \leftarrow \{ \bar{r}_h^t(s,\bm{a}) + \sum_{s'} \hat{\Prob}_h^{t-1}(s'|s,\bm{a})\bar{V}_{h+1}^t(s') + \Phi_h^t(s,\bm{a}) \}_{[0, H-h+1]}$
        \STATE where $\bar{V}_{h+1}^t(s')$ is the value of a Nash Equilibrium of the game at step $h+1$.
        \STATE Compute exploration policies $\bar{\bpi}_h^t = (\bar{\pi}_{1,h}^t, \dots, \bar{\pi}_{N,h}^t)$ as an NE of the common-payoff game with Q-function $\bar{Q}_h^t(s,\cdot)$.
    \ENDFOR
    \STATE Execute $\bar{\bpi}^t$ for one episode, collecting trajectory $\{s_h^t, \bm{a}_h^t, \{\br_{j,h}^t\}_{j=1}^N\}_{h=1}^H$.
    \STATE Add trajectory to $\mathcal{D}$ and update counts $N_h(s,\bm{a})$.
\ENDFOR

\STATE \texttt{Phase 2: Planning with Preferences}
\STATE \textbf{Input:} Preference profile $\bLambda = (\blambda_1, \dots, \blambda_N)$
\STATE Estimate empirical model $(\{\hat{\br}_{j,h}\}_{j=1}^N, \hat{\Prob}_h)$ from the collected dataset $\mathcal{D}$.
\STATE For each player $j \in [N]$, compute scalarized reward: $\hat{r}_{j, \blambda_j, h}(s,\bm{a}) \leftarrow \sum_{i=1}^M \lambda_{j,i} \hat{r}_{j,i,h}(s,\bm{a})$.
\STATE Solve the estimated $N$-player game $(\mathcal{S}, \{\mathcal{A}_j\}, H, \hat{\Prob}, \{\hat{r}_{j, \blambda_j}\})$ to find an NE: $\hat{\bpi}_{\bLambda}^*$.
\STATE \textbf{Return:} Policies $\hat{\bpi}_{\bLambda}^*$.
\end{algorithmic}
\end{algorithm}

We then develop our theoretical results, showing the efficiency of our two-phase approach. 
\begin{theorem} With probability at least $1-\delta$, the policy  $\hat{\bpi}_{\bm{\Lambda}}^*$ returned by Algorithm \ref{alg:combined} for any input preference profile $\bLambda$ is an $\epsilon$-WPNE, as long as $T = \tilde{\mathcal{O}}\left(\frac{H^8 S^2N^2M^2  A}{\epsilon^2}\right)$.
\end{theorem}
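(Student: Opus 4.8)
The plan is to decompose the error of the returned policy $\hat{\bpi}_{\bm{\Lambda}}^*$ into a statistical estimation error (how far the empirical model is from the true model, on the state-action pairs that matter) and an exploration error (whether Phase 1 visited the relevant pairs enough times). The key structural fact, inherited from Theorem~\ref{thm:PNE=NE} and its WPNE analog, is that it suffices to show $\hat{\bpi}_{\bm{\Lambda}}^*$ has small Weak Pareto-Nash gap, and since the scalarization weights $\bm{\lambda}_j$ lie in the simplex, $\text{WPNG}(\hat{\bpi}_{\bm{\Lambda}}^*) \le \max_k \sup_{\pi'_k} \bm\lambda_k^\top(\bm V_{k,1}^{(\pi'_k,\hat{\bpi}_{-k})}(s_1) - \bm V_{k,1}^{\hat{\bpi}}(s_1))$, i.e. it suffices to bound the ordinary Nash gap of $\hat{\bpi}_{\bm{\Lambda}}^*$ in the \emph{true} scalarized game $G_{\bm{\Lambda}}$. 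Since $\hat{\bpi}_{\bm{\Lambda}}^*$ is an exact NE of the \emph{empirical} scalarized game, this Nash gap is controlled by a simulation-lemma argument: twice the worst-case value-function estimation error $\sup_{k,\bpi}\|V_{k,1}^{\bpi,G_{\bm\Lambda}}(s_1) - \hat V_{k,1}^{\bpi,G_{\bm\Lambda}}(s_1)\|_\infty$ over all policies $\bpi$ and players $k$.

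Next I would bound that uniform value-estimation error via a standard per-step decomposition: the error of a scalarized value function is bounded by the accumulated reward-estimation errors $\sum_h \sum_i \lambda_{j,i}|\hat r_{j,i,h} - r_{j,i,h}|$ plus the accumulated transition-estimation errors $\sum_h \|\hat P_h(\cdot|s,\bm a) - P_h(\cdot|s,\bm a)\|_1 \cdot H$, each weighted by the occupancy measure of the evaluated policy. By Hoeffding/Bernstein concentration and the definition of the bonuses $\Psi_{j,i,h}^t, \Phi_h^t$ in Lines 6--7, on the good event each of these per-$(s,\bm a,h)$ errors is, up to logarithmic factors, bounded by the corresponding bonus term evaluated at the final counts $N_h^T(s,\bm a)$; crucially the bonus in Line 9 is the \emph{max} over all $j,i$ of the reward bonuses and the (normalized) transition bonus, so a single quantity $\bar r_h^T(s,\bm a)$ simultaneously upper-bounds every component of the model error at that triple. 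Thus the uniform value error is, up to the $M$ and $N$ factors coming from summing the $M$ objectives and taking a union bound over the $N$ agents' reward vectors, bounded by $\tilde{\mathcal{O}}\big(H \cdot \E_{\bpi}\big[\sum_h \bar r_h^T(s_h,\bm a_h)\big]\big)$ for the worst evaluated $\bpi$ — which is at most $H$ times the optimal value of the cooperative exploration game at the \emph{final} counts.

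Then I would invoke the exploration guarantee for Phase 1. The cooperative common-payoff game with reward $\bar r_h^t$ is solved (approximately, via optimistic Nash value iteration, Lines 11--14) each episode, and because the reward is the shared uncertainty signal, a standard optimism-plus-regret argument for cooperative Markov games shows $\sum_{t=1}^T \max_{\bpi} \E_{\bpi}[\sum_h \bar r_h^t] \le \tilde{\mathcal{O}}(\text{poly}(H,S)\sqrt{A T})$ (this is where the single-agent two-phase analysis of \cite{qiu2024traversing} is adapted to the multi-agent setting — the cooperative structure means the NE value equals the optimal common payoff, so the usual UCB-VI regret bound applies with the $SA$ model-size). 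Since the bonuses are monotonically decreasing in the counts, $\bar r_h^T \le \bar r_h^t$ for all $t$, so the final-count exploration objective $\max_\bpi \E_\bpi[\sum_h \bar r_h^T]$ is at most $\frac1T \sum_t \max_\bpi \E_\bpi[\sum_h \bar r_h^t] \le \tilde{\mathcal{O}}(\text{poly}(H,S)\sqrt{A/T})$. Chaining everything: $\text{WPNG}(\hat{\bpi}_{\bm\Lambda}^*) \le \tilde{\mathcal{O}}(H M N \cdot \text{poly}(H,S)\sqrt{A/T})$; setting this $\le \epsilon$ and matching the claimed $H^8 S^2 N^2 M^2 A/\epsilon^2$ pins down the exponents (the $H$ powers come from one factor in the simulation lemma, one from the $\bar r/H$ normalization relating transition to reward bonuses, and the $H^2$ inside $\Phi$, etc.).

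The main obstacle I expect is making the exploration bound genuinely uniform over \emph{all} policies that Phase 2 could ever output — i.e. showing that the data collected under the exploration policies $\bar\bpi^t$ covers every state-action pair that any Phase-2 NE $\hat{\bpi}_{\bm\Lambda}^*$ (for any $\bm\Lambda$) assigns nonnegligible occupancy to. This is exactly the point of making the exploration reward the max of all uncertainties: it forces the cooperative optimal value to dominate the model error under any target policy, so that a bound on the exploration regret transfers to a bound on $\max_\bpi \E_\bpi[\sum_h \bar r_h^T(s_h,\bm a_h)]$ for the worst $\bpi$. Getting the quantifiers right here — the $\max_\bpi$ must be inside, and one must argue the NE value of the cooperative game upper-bounds this max, which holds because in a common-payoff game a Nash equilibrium is a global optimum of the shared return — is the delicate step; everything else is bookkeeping on concentration, the simulation lemma, and collecting the $H, S, N, M$ factors.
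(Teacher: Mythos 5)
Your proposal follows essentially the same route as the paper's own proof: reduce the WPNG of $\hat{\bpi}_{\bm{\Lambda}}^*$ to its Nash gap in the true scalarized game, bound that gap by a value-difference (simulation) lemma using that $\hat{\bpi}_{\bm{\Lambda}}^*$ is an exact NE of the empirical game, dominate all per-pair reward and transition bonuses by the max-uncertainty exploration reward $\bar{r}_h^t$ (picking up the $N$, $M$, and $H$ factors), control the cumulative exploration value by an optimism-plus-telescoping/pigeonhole regret bound of order $\tilde{\mathcal{O}}(\mathrm{poly}(H,S)\sqrt{AT})$, and invert in $T$. The one soft spot, which the paper's Lemma on optimism shares, is your claim that a Nash equilibrium of the common-payoff exploration game attains the optimal shared return --- false in general because coordination games admit suboptimal equilibria --- but this is repaired by having the Phase-1 planner select the welfare-maximizing equilibrium (e.g.\ the greedy joint action of the optimistic $\bar{Q}_h^t$, which is an NE of the common-payoff stage game), so your argument matches the paper's in substance.
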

\begin{remark}
    We note our sample complexity depends on the joint action space size $A$. However, we note that the goal of Phase-1 is to learn the model, instead of merely learning a PCE, thus it is expected that the model under all joint actions should be explored. We hence conjecture that there exists a efficiency-generalizbility trade off, and leave it as a future research problem that whether such complexity can be improved. 
\end{remark}
Our result hence implies that, our two-phase algorithm (with enough computational power) is able to recover the whole WPN front with a finite number of sample complexity. Our studies hence stand for the first concrete algorithm that learn the whole set of WPNE for MOMGs.

\section{Conclusion}
In this paper, we proposed multi-agent Markov Games as a fundamental framework for multi-agent RL with multiple and diverse objectives. We then studied the solvability of our MOMG framework, proposing the Pareto-Nash Equilibrium and its weaker variants as the primal solutions of MOMGs. We then developed an essential equivalence between our MOMGs and the single-objective Markov games. Based on it, we then proposed sample-efficient online algorithms to identify a single equilibrium of the MOMG, providing the first concrete and provably convergent algorithm for MOMGs. We further developed a two-phase algorithm that is able to recover the Pareto-Nash front through an one-time data collection. Our studies hence provided both theoretical foundations and algorithmic solutions to multi-agent multi-objective RL, enjoying a wide applicability for practical multi-agent systems with diverse or multi-modal rewards or objectives.


\bibliography{references}

\begin{thebibliography}{100}

\bibitem{weiss1999multiagent}
Gerhard Weiss.
\newblock {\em Multiagent systems: a modern approach to distributed artificial intelligence}.
\newblock MIT press, 1999.

\bibitem{shoham2008multiagent}
Yoav Shoham and Kevin Leyton-Brown.
\newblock {\em Multiagent systems: Algorithmic, game-theoretic, and logical foundations}.
\newblock Cambridge University Press, 2008.

\bibitem{wooldridge2009introduction}
Michael Wooldridge.
\newblock {\em An introduction to multiagent systems}.
\newblock John wiley \& sons, 2009.

\bibitem{olfati2007consensus}
Reza Olfati-Saber, J~Alex Fax, and Richard~M Murray.
\newblock Consensus and cooperation in networked multi-agent systems.
\newblock {\em Proceedings of the IEEE}, 95(1):215--233, 2007.

\bibitem{dorri2018multi}
Ali Dorri, Salil~S Kanhere, and Raja Jurdak.
\newblock Multi-agent systems: A survey.
\newblock {\em Ieee Access}, 6:28573--28593, 2018.

\bibitem{lowe2017multi}
Ryan Lowe, Yi~Wu, Aviv Tamar, Jean Harb, Pieter Abbeel, and Igor Mordatch.
\newblock Multi-agent actor-critic for mixed cooperative-competitive environments.
\newblock In {\em Proc. Advances in Neural Information Processing Systems (NIPS)}, pages 6379--6390, 2017.

\bibitem{matignon2012independent}
Laetitia Matignon, Guillaume~J Laurent, and Nadine Le~Fort-Piat.
\newblock Independent reinforcement learners in cooperative markov games: a survey regarding coordination problems.
\newblock {\em The Knowledge Engineering Review}, 27(1):1--31, 2012.

\bibitem{wellman2006methods}
Michael~P Wellman.
\newblock Methods for empirical game-theoretic analysis.
\newblock {\em Proc. Conference on Artificial Intelligence (AAAI)}, 20(2):1552--1556, 2006.

\bibitem{mi2023taxai}
Qirui Mi, Siyu Xia, Yan Song, Haifeng Zhang, Shenghao Zhu, and Jun Wang.
\newblock Taxai: A dynamic economic simulator and benchmark for multi-agent reinforcement learning.
\newblock {\em arXiv preprint arXiv:2309.16307}, 2023.

\bibitem{huang2024multi}
Yuling Huang, Chujin Zhou, Kai Cui, and Xiaoping Lu.
\newblock A multi-agent reinforcement learning framework for optimizing financial trading strategies based on timesnet.
\newblock {\em Expert Systems with Applications}, 237:121502, 2024.

\bibitem{brusatin2024simulating}
Simone Brusatin, Tommaso Padoan, Andrea Coletta, Domenico Delli~Gatti, and Aldo Glielmo.
\newblock Simulating the economic impact of rationality through reinforcement learning and agent-based modelling.
\newblock In {\em Proceedings of the 5th ACM International Conference on AI in Finance}, pages 159--167, 2024.

\bibitem{zhang2018fully}
Kaiqing Zhang, Zhuoran Yang, Han Liu, Tong Zhang, and Tamer Basar.
\newblock Fully decentralized multi-agent reinforcement learning with networked agents.
\newblock In {\em Proc. International Conference on Machine Learning (ICML)}, pages 5872--5881. PMLR, 2018.

\bibitem{abrol2024deep}
Akshita Abrol, Purnima~Murali Mohan, and Tram Truong-Huu.
\newblock A deep reinforcement learning approach for adaptive traffic routing in next-gen networks.
\newblock In {\em ICC 2024-IEEE International Conference on Communications}, pages 465--471. IEEE, 2024.

\bibitem{shabestary2022adaptive}
Soheil Mohamad~Alizadeh Shabestary and Baher Abdulhai.
\newblock Adaptive traffic signal control with deep reinforcement learning and high dimensional sensory inputs: Case study and comprehensive sensitivity analyses.
\newblock {\em IEEE Transactions on Intelligent Transportation Systems}, 23(11):20021--20035, 2022.

\bibitem{zhang2021multi}
Kaiqing Zhang, Zhuoran Yang, and Tamer Ba{\c{s}}ar.
\newblock Multi-agent reinforcement learning: A selective overview of theories and algorithms.
\newblock {\em Handbook of reinforcement learning and control}, pages 321--384, 2021.

\bibitem{hernandez2019survey}
Pablo Hernandez-Leal, Bilal Kartal, and Matthew~E Taylor.
\newblock A survey and critique of multiagent deep reinforcement learning.
\newblock {\em Autonomous Agents and Multi-Agent Systems}, 33(6):750--797, 2019.

\bibitem{yang2020deep}
Yaodong Yang, Jun Wang, Jianye Hao, Xiaotian Wang, Fangwei Xu, Bo~Xu, Zhaopeng Zheng, Yang Cheng, and Zhi Wang.
\newblock Deep multi-agent reinforcement learning: A survey.
\newblock {\em arXiv preprint arXiv:2004.01294}, 2020.

\bibitem{zhu2024survey}
Changxi Zhu, Mehdi Dastani, and Shihan Wang.
\newblock A survey of multi-agent deep reinforcement learning with communication.
\newblock {\em Autonomous Agents and Multi-Agent Systems}, 38(1):4, 2024.

\bibitem{ning2024survey}
Zepeng Ning and Lihua Xie.
\newblock A survey on multi-agent reinforcement learning and its application.
\newblock {\em Journal of Automation and Intelligence}, 3(2):73--91, 2024.

\bibitem{wellman2025empirical}
Michael~P Wellman, Karl Tuyls, and Amy Greenwald.
\newblock Empirical game theoretic analysis: A survey.
\newblock {\em Journal of Artificial Intelligence Research}, 82:1017--1076, 2025.

\bibitem{huh2024multiagentreinforcementlearningcomprehensive}
Dom Huh and Prasant Mohapatra.
\newblock Multi-agent reinforcement learning: A comprehensive survey.
\newblock {\em arXiv preprint arXiv:2312.10256}, 2024.

\bibitem{yu2021surprising}
Chengze Yu, Yuke Wen, Yaodong Yang, and Jun Wang.
\newblock The surprising effectiveness of ppo in cooperative multi-agent games.
\newblock {\em arXiv preprint arXiv:2103.01955}, 2021.

\bibitem{hu2021rethinking}
Jian Hu, Haibin Wu, Seth~Austin Harding, Siyang Jiang, and Shih{-}wei Liao.
\newblock {RIIT:} rethinking the importance of implementation tricks in multi-agent reinforcement learning.
\newblock {\em CoRR}, abs/2102.03479, 2021.

\bibitem{hsu2025randomizedexplorationcooperativemultiagent}
Hao-Lun Hsu, Weixin Wang, Miroslav Pajic, and Pan Xu.
\newblock Randomized exploration in cooperative multi-agent reinforcement learning.
\newblock {\em arXiv preprint arXiv:2404.10728}, 2025.

\bibitem{wai2018multi}
Hoi-To Wai, Zhuoran Yang, Zhaoran Wang, and Mingyi Hong.
\newblock Multi-agent reinforcement learning via double averaging primal-dual optimization.
\newblock In {\em Proc. Advances in Neural Information Processing Systems (NeurIPS)}, volume~31, 2018.

\bibitem{doan2019finite}
Thinh Doan, Siva Maguluri, and Justin Romberg.
\newblock Finite-time analysis of distributed {TD}(0) with linear function approximation on multi-agent reinforcement learning.
\newblock In {\em Proc. International Conference on Machine Learning (ICML)}, pages 1626--1635, 2019.

\bibitem{macua2014distributed}
Sergio~Valcarcel Macua, Jianshu Chen, Santiago Zazo, and Ali~H Sayed.
\newblock Distributed policy evaluation under multiple behavior strategies.
\newblock {\em IEEE Transactions on Automatic Control}, 60(5):1260--1274, 2014.

\bibitem{stankovi2016multi}
Miloš~S. Stanković and Srdjan~S. Stanković.
\newblock Multi-agent temporal-difference learning with linear function approximation: Weak convergence under time-varying network topologies.
\newblock In {\em 2016 American control conference (ACC)}, pages 167--172. IEEE, 2016.

\bibitem{eriksson2008adaptive}
E~Anders Eriksson and K~Matthias Weber.
\newblock Adaptive foresight: navigating the complex landscape of policy strategies.
\newblock {\em Technological Forecasting and Social Change}, 75(4):462--482, 2008.

\bibitem{osika2024navigating}
Zuzanna Osika, Jazmin Zatarain-Salazar, Frans~A Oliehoek, and Pradeep~K Murukannaiah.
\newblock Navigating trade-offs: Policy summarization for multi-objective reinforcement learning.
\newblock {\em arXiv preprint arXiv:2411.04784}, 2024.

\bibitem{chapman2023bridging}
Melissa Chapman, Lily Xu, Marcus Lapeyrolerie, and Carl Boettiger.
\newblock Bridging adaptive management and reinforcement learning for more robust decisions.
\newblock {\em Philosophical Transactions of the Royal Society B}, 378(1881):20220195, 2023.

\bibitem{addy2024machine}
Wilhelmina~Afua Addy, Adeola~Olusola Ajayi-Nifise, Binaebi~Gloria Bello, Sunday~Tubokirifuruar Tula, Olubusola Odeyemi, and Titilola Falaiye.
\newblock Machine learning in financial markets: A critical review of algorithmic trading and risk management.
\newblock {\em International Journal of Science and Research Archive}, 11(1):1853--1862, 2024.

\bibitem{olanrewaju2025artificial}
Ayobami~Gabriel Olanrewaju.
\newblock Artificial intelligence in financial markets: Optimizing risk management, portfolio allocation, and algorithmic trading.
\newblock {\em International Journal of Research Publication and Reviews}, 6:8855--8870, 2025.

\bibitem{bhardwaj2024risk}
Alok Bhardwaj, Onima Ranjan, Susmi Biswas, Lucky Gupta, Yerrolla Chanti, and Meenakshi Sharma.
\newblock Risk assessment and management in stock trading using artificial intelligence.
\newblock In {\em 2024 3rd International Conference on Sentiment Analysis and Deep Learning (ICSADL)}, pages 138--145. IEEE, 2024.

\bibitem{van2013scalarized}
Kristof Van~Moffaert, Madalina~M Drugan, and Ann Now{\'e}.
\newblock Scalarized multi-objective reinforcement learning: Novel design techniques.
\newblock In {\em 2013 IEEE symposium on adaptive dynamic programming and reinforcement learning (ADPRL)}, pages 191--199. IEEE, 2013.

\bibitem{van2014multi}
Kristof Van~Moffaert and Ann Now{\'e}.
\newblock Multi-objective reinforcement learning using sets of pareto dominating policies.
\newblock {\em The Journal of Machine Learning Research}, 15(1):3483--3512, 2014.

\bibitem{yang2019generalized}
Runzhe Yang, Xingyuan Sun, and Karthik Narasimhan.
\newblock A generalized algorithm for multi-objective reinforcement learning and policy adaptation.
\newblock {\em Advances in neural information processing systems}, 32, 2019.

\bibitem{hayes2021practical}
Conor~F Hayes, Roxana R{\u{a}}dulescu, Eugenio Bargiacchi, Johan K{\"a}llstr{\"o}m, Matthew Macfarlane, Mathieu Reymond, Timothy Verstraeten, Luisa~M Zintgraf, Richard Dazeley, Fredrik Heintz, et~al.
\newblock A practical guide to multi-objective reinforcement learning and planning.
\newblock {\em arXiv preprint arXiv:2103.09568}, 2021.

\bibitem{liu2014multiobjective}
Chunming Liu, Xin Xu, and Dewen Hu.
\newblock Multiobjective reinforcement learning: A comprehensive overview.
\newblock {\em IEEE Transactions on Systems, Man, and Cybernetics: Systems}, 45(3):385--398, 2014.

\bibitem{wong2023deep}
Annie Wong, Thomas B{\"a}ck, Anna~V Kononova, and Aske Plaat.
\newblock Deep multiagent reinforcement learning: Challenges and directions.
\newblock {\em Artificial Intelligence Review}, 56(6):5023--5056, 2023.

\bibitem{albrecht2024multi}
Stefano~V Albrecht, Filippos Christianos, and Lukas Sch{\"a}fer.
\newblock {\em Multi-agent reinforcement learning: Foundations and modern approaches}.
\newblock MIT Press, 2024.

\bibitem{fudenberg1991game}
Drew Fudenberg and Jean Tirole.
\newblock {\em Game theory}.
\newblock MIT press, 1991.

\bibitem{gunantara2018review}
Nyoman Gunantara.
\newblock A review of multi-objective optimization: Methods and its applications.
\newblock {\em Cogent Engineering}, 5(1):1502242, 2018.

\bibitem{deb2016multi}
Kalyanmoy Deb, Karthik Sindhya, and Jussi Hakanen.
\newblock Multi-objective optimization.
\newblock In {\em Decision sciences}, pages 161--200. CRC Press, 2016.

\bibitem{konak2006multi}
Abdullah Konak, David~W Coit, and Alice~E Smith.
\newblock Multi-objective optimization using genetic algorithms: A tutorial.
\newblock {\em Reliability engineering \& system safety}, 91(9):992--1007, 2006.

\bibitem{caramia2020multi}
Massimiliano Caramia, Paolo Dell’Olmo, Massimiliano Caramia, and Paolo Dell’Olmo.
\newblock Multi-objective optimization.
\newblock {\em Multi-objective Management in Freight Logistics: Increasing Capacity, Service Level, Sustainability, and Safety with Optimization Algorithms}, pages 21--51, 2020.

\bibitem{tamaki1996multi}
Hisashi Tamaki, Hajime Kita, and Shigenobu Kobayashi.
\newblock Multi-objective optimization by genetic algorithms: A review.
\newblock In {\em Proceedings of IEEE international conference on evolutionary computation}, pages 517--522. IEEE, 1996.

\bibitem{gagne2020multi}
Caroline Gagn{\'e}, Aymen Sioud, Marc Gravel, and Mathieu Fournier.
\newblock Multi-objective optimization.
\newblock {\em Heuristics for Optimization and Learning}, 906:183, 2020.

\bibitem{littman1994markov}
Michael~L Littman.
\newblock Markov games as a framework for multi-agent reinforcement learning.
\newblock In {\em Machine learning proceedings 1994}, pages 157--163. Elsevier, 1994.

\bibitem{chatterjee2006markov}
Krishnendu Chatterjee, Rupak Majumdar, and Thomas~A Henzinger.
\newblock Markov decision processes with multiple objectives.
\newblock In {\em Annual symposium on theoretical aspects of computer science}, pages 325--336. Springer, 2006.

\bibitem{wakuta1998solution}
K~Wakuta and K~Togawa.
\newblock Solution procedures for multi-objective markov decision processes.
\newblock {\em Optimization}, 43(1):29--46, 1998.

\bibitem{wiering2007computing}
Marco~A Wiering and Edwin~D De~Jong.
\newblock Computing optimal stationary policies for multi-objective markov decision processes.
\newblock In {\em 2007 IEEE international symposium on approximate dynamic programming and reinforcement learning}, pages 158--165. IEEE, 2007.

\bibitem{roijers2021following}
Diederik~M Roijers, Willem R{\"o}pke, Ann Now{\'e}, and Roxana R{\u{a}}dulescu.
\newblock On following pareto-optimal policies in multi-objective planning and reinforcement learning.
\newblock In {\em Proceedings of the multi-objective decision making (modem) workshop}, pages 1--1, 2021.

\bibitem{cai2023distributional}
Xin-Qiang Cai, Pushi Zhang, Li~Zhao, Jiang Bian, Masashi Sugiyama, and Ashley Llorens.
\newblock Distributional pareto-optimal multi-objective reinforcement learning.
\newblock {\em Advances in Neural Information Processing Systems}, 36:15593--15613, 2023.

\bibitem{pirotta2015multi}
Matteo Pirotta, Simone Parisi, and Marcello Restelli.
\newblock Multi-objective reinforcement learning with continuous pareto frontier approximation.
\newblock In {\em Proceedings of the AAAI conference on artificial intelligence}, volume~29, 2015.

\bibitem{liuefficient}
Ruohong Liu, Yuxin Pan, Linjie Xu, Lei Song, Pengcheng You, Yize Chen, and Jiang Bian.
\newblock Efficient discovery of pareto front for multi-objective reinforcement learning.
\newblock In {\em The Thirteenth International Conference on Learning Representations}.

\bibitem{lu2023multi}
Haoye Lu, Daniel Herman, and Yaoliang Yu.
\newblock Multi-objective reinforcement learning: Convexity, stationarity and pareto optimality.
\newblock In {\em The Eleventh International Conference on Learning Representations}, 2023.

\bibitem{qiu2024traversing}
Shuang Qiu, Dake Zhang, Rui Yang, Boxiang Lyu, and Tong Zhang.
\newblock Traversing pareto optimal policies: Provably efficient multi-objective reinforcement learning.
\newblock {\em arXiv preprint arXiv:2407.17466}, 2024.

\bibitem{shapley1953stochastic}
Lloyd~S Shapley.
\newblock Stochastic games.
\newblock {\em Proceedings of the national academy of sciences}, 39(10):1095--1100, 1953.

\bibitem{leonardos2022exploration}
Stefanos Leonardos and Georgios Piliouras.
\newblock Exploration-exploitation in multi-agent learning: Catastrophe theory meets game theory.
\newblock {\em Artificial Intelligence}, 304:103653, 2022.

\bibitem{fan2019theoretical}
Jianqing Fan, Zhaoran Wang, Yuchen Xie, and Zhuoran Yang.
\newblock A theoretical analysis of deep {Q}-learning.
\newblock {\em arXiv e-prints}, pages arXiv--1901, 2019.

\bibitem{zhu2020online}
Yuanheng Zhu and Dongbin Zhao.
\newblock Online minimax q network learning for two-player zero-sum markov games.
\newblock {\em IEEE Transactions on Neural Networks and Learning Systems}, 33(3):1228--1241, 2020.

\bibitem{jin2022complexity}
Yujia Jin, Vidya Muthukumar, and Aaron Sidford.
\newblock The complexity of infinite-horizon general-sum stochastic games.
\newblock {\em arXiv preprint arXiv:2204.04186}, 2022.

\bibitem{deng2023complexity}
Xiaotie Deng, Ningyuan Li, David Mguni, Jun Wang, and Yaodong Yang.
\newblock On the complexity of computing markov perfect equilibrium in general-sum stochastic games.
\newblock {\em National Science Review}, 10(1):nwac256, 2023.

\bibitem{zhang2024gradient}
Runyu Zhang, Zhaolin Ren, and Na~Li.
\newblock Gradient play in stochastic games: stationary points, convergence, and sample complexity.
\newblock {\em IEEE Transactions on Automatic Control}, 2024.

\bibitem{yu2021provably}
Tiancheng Yu, Yi~Tian, Jingzhao Zhang, and Suvrit Sra.
\newblock Provably efficient algorithms for multi-objective competitive {RL}.
\newblock In {\em International Conference on Machine Learning}, pages 12167--12176. PMLR, 2021.

\bibitem{chang2014partially}
Yanling Chang, Alan~L Erera, and Chelsea~C White~III.
\newblock Partially observed, multi-objective markov games.
\newblock {\em arXiv preprint arXiv:1404.4388}, 2014.

\bibitem{lozovanu2005multiobjective}
Dmitrii Lozovanu, Dumitru Solomon, and Alexander Zelikovsky.
\newblock Multiobjective games and determining pareto-nash equilibria.
\newblock {\em Buletinul Academiei de {\c{S}}tiin{\c{t}}e a Republicii Moldova. Matematica}, 49(3):115--122, 2005.

\bibitem{zhao1991equilibria}
Jingang Zhao.
\newblock The equilibria of a multiple objective game.
\newblock {\em International Journal of Game Theory}, 20(2):171--182, 1991.

\bibitem{osborne1994course}
Martin~J Osborne and Ariel Rubinstein.
\newblock {\em A course in game theory}.
\newblock MIT press, 1994.

\bibitem{ma2023decentralized}
Shaocong Ma, Ziyi Chen, Shaofeng Zou, and Yi~Zhou.
\newblock Decentralized robust v-learning for solving markov games with model uncertainty.
\newblock {\em Journal of Machine Learning Research}, 24(371):1--40, 2023.

\bibitem{jiang2023stepwise}
Lin Jiang, Xiaosheng Peng, Jin Zhou, and Yue Zhang.
\newblock Stepwise transfer learning and convolutional neural network based partial discharge pattern recognition method for generator stators.
\newblock In {\em 2023 International Conference on Power System Technology (PowerCon)}, pages 1--5. IEEE, 2023.

\bibitem{lu2019multi}
Shiyin Lu, Guanghui Wang, Yao Hu, and Lijun Zhang.
\newblock Multi-objective generalized linear bandits.
\newblock {\em arXiv preprint arXiv:1905.12879}, 2019.

\bibitem{turgay2018multi}
Eralp Turgay, Doruk Oner, and Cem Tekin.
\newblock Multi-objective contextual bandit problem with similarity information.
\newblock In {\em International Conference on Artificial Intelligence and Statistics}, pages 1673--1681. PMLR, 2018.

\bibitem{drugan2013designing}
Madalina~M Drugan and Ann Nowe.
\newblock Designing multi-objective multi-armed bandits algorithms: A study.
\newblock In {\em The 2013 international joint conference on neural networks (IJCNN)}, pages 1--8. IEEE, 2013.

\bibitem{rodriguez2024analytical}
Manel Rodr{\'\i}guez~Soto, Juan~A Rodriguez-Aguilar, and Maite Lopez-Sanchez.
\newblock An analytical study of utility functions in multi-objective reinforcement learning.
\newblock {\em Advances in Neural Information Processing Systems}, 37:77726--77747, 2024.

\bibitem{ropke2022nash}
Willem R{\"o}pke, Diederik~M Roijers, Ann Now{\'e}, and Roxana R{\u{a}}dulescu.
\newblock On nash equilibria in normal-form games with vectorial payoffs.
\newblock {\em Autonomous Agents and Multi-Agent Systems}, 36(2):53, 2022.

\bibitem{ruadulescu2020multi}
Roxana R{\u{a}}dulescu, Patrick Mannion, Diederik~M Roijers, and Ann Now{\'e}.
\newblock Multi-objective multi-agent decision making: a utility-based analysis and survey.
\newblock {\em Autonomous Agents and Multi-Agent Systems}, 34(1):10, 2020.

\bibitem{borm1988pareto}
PEM Borm, SH~Tijs, and JCM Van Den~Aarssen.
\newblock Pareto equilibria in multiobjective games.
\newblock {\em Methods of Operations Research}, 60:302--312, 1988.

\bibitem{ropke2023bridging}
Willem R{\"o}pke, Carla Groenland, Roxana R{\u{a}}dulescu, Ann Now{\'e}, and Diederik~M Roijers.
\newblock Bridging the gap between single and multi objective games.
\newblock {\em arXiv preprint arXiv:2301.05755}, 2023.

\bibitem{agarwal2022multi}
Mridul Agarwal, Vaneet Aggarwal, and Tian Lan.
\newblock Multi-objective reinforcement learning with non-linear scalarization.
\newblock In {\em Proceedings of the 21st International Conference on Autonomous Agents and Multiagent Systems}, pages 9--17, 2022.

\bibitem{guidobene2025variance}
Davide Guidobene, Lorenzo Benedetti, and Diego Arapovic.
\newblock Variance reduced policy gradient method for multi-objective reinforcement learning.
\newblock {\em arXiv preprint arXiv:2508.10608}, 2025.

\bibitem{liu2021sharp}
Qinghua Liu, Tiancheng Yu, Yu~Bai, and Chi Jin.
\newblock A sharp analysis of model-based reinforcement learning with self-play.
\newblock In {\em Proc. International Conference on Machine Learning (ICML)}, pages 7001--7010. PMLR, 2021.

\bibitem{jin2021v}
Chi Jin, Qinghua Liu, Yuanhao Wang, and Tiancheng Yu.
\newblock V-learning--a simple, efficient, decentralized algorithm for multiagent rl.
\newblock {\em arXiv preprint arXiv:2110.14555}, 2021.

\bibitem{busoniu2008comprehensive}
Lucian Busoniu, Robert Babuska, and Bart De~Schutter.
\newblock A comprehensive survey of multiagent reinforcement learning.
\newblock {\em IEEE Transactions on Systems, Man, and Cybernetics, Part C (Applications and Reviews)}, 38(2):156--172, 2008.

\bibitem{oroojlooy2023review}
Afshin Oroojlooy and Davood Hajinezhad.
\newblock A review of cooperative multi-agent deep reinforcement learning.
\newblock {\em Applied Intelligence}, 53(11):13677--13722, 2023.

\bibitem{littman1996generalized}
Michael~L Littman and Csaba Szepesv{\'a}ri.
\newblock A generalized reinforcement-learning model: Convergence and applications.
\newblock In {\em ICML}, volume~96, pages 310--318, 1996.

\bibitem{littman2001friend}
Michael~L Littman et~al.
\newblock Friend-or-foe q-learning in general-sum games.
\newblock In {\em ICML}, volume~1, pages 322--328, 2001.

\bibitem{fink1964equilibrium}
Arlington~M Fink.
\newblock Equilibrium in a stochastic $ n $-person game.
\newblock {\em Journal of science of the hiroshima university, series ai (mathematics)}, 28(1):89--93, 1964.

\bibitem{hu2003nash}
Junling Hu and Michael~P Wellman.
\newblock Nash q-learning for general-sum stochastic games.
\newblock {\em Journal of machine learning research}, 4(Nov):1039--1069, 2003.

\bibitem{daskalakis2013complexity}
Constantinos Daskalakis.
\newblock On the complexity of approximating a nash equilibrium.
\newblock {\em ACM Transactions on Algorithms (TALG)}, 9(3):1--35, 2013.

\bibitem{hansen2013strategy}
Thomas~Dueholm Hansen, Peter~Bro Miltersen, and Uri Zwick.
\newblock Strategy iteration is strongly polynomial for 2-player turn-based stochastic games with a constant discount factor.
\newblock {\em Journal of the ACM (JACM)}, 60(1):1--16, 2013.

\bibitem{song2021can}
Ziang Song, Song Mei, and Yu~Bai.
\newblock When can we learn general-sum markov games with a large number of players sample-efficiently?
\newblock {\em arXiv preprint arXiv:2110.04184}, 2021.

\bibitem{mao2023provably}
Weichao Mao and Tamer Ba{\c{s}}ar.
\newblock Provably efficient reinforcement learning in decentralized general-sum markov games.
\newblock {\em Dynamic Games and Applications}, 13(1):165--186, 2023.

\bibitem{bai2020provable}
Yu~Bai and Chi Jin.
\newblock Provable self-play algorithms for competitive reinforcement learning.
\newblock In {\em Proc. International Conference on Machine Learning (ICML)}, pages 551--560. PMLR, 2020.

\bibitem{xie2020learning}
Qiaomin Xie, Yudong Chen, Zhaoran Wang, and Zhuoran Yang.
\newblock Learning zero-sum simultaneous-move markov games using function approximation and correlated equilibrium.
\newblock In {\em Proc. Annual Conference on Learning Theory (CoLT)}, pages 3674--3682. PMLR, 2020.

\bibitem{chen2022decentralized}
Yan Chen and Tao Li.
\newblock Decentralized policy gradient for nash equilibria learning of general-sum stochastic games.
\newblock {\em arXiv preprint arXiv:2210.07651}, 2022.

\bibitem{cui2023breaking}
Qiwen Cui, Kaiqing Zhang, and Simon Du.
\newblock Breaking the curse of multiagents in a large state space: Rl in markov games with independent linear function approximation.
\newblock In {\em The Thirty Sixth Annual Conference on Learning Theory}, pages 2651--2652. PMLR, 2023.

\bibitem{feng2023improving}
Songtao Feng, Ming Yin, Yu-Xiang Wang, Jing Yang, and Yingbin Liang.
\newblock Improving sample efficiency of model-free algorithms for zero-sum markov games.
\newblock {\em arXiv preprint arXiv:2308.08858}, 2023.

\bibitem{li2024provable}
Na~Li, Yuchen Jiao, Hangguan Shan, and Shefeng Yan.
\newblock Provable memory efficient self-play algorithm for model-free reinforcement learning.
\newblock In {\em The Twelfth International Conference on Learning Representations}, 2024.

\bibitem{choo1983proper}
Eng~Ung Choo and Derek~R Atkins.
\newblock Proper efficiency in nonconvex multicriteria programming.
\newblock {\em Mathematics of Operations Research}, 8(3):467--470, 1983.

\bibitem{steuer1986multiple}
Ralph~E Steuer.
\newblock Multiple criteria optimization.
\newblock {\em Theory, computation, and application}, 1986.

\bibitem{geoffrion1968proper}
Arthur~M Geoffrion.
\newblock Proper efficiency and the theory of vector maximization.
\newblock {\em Journal of mathematical analysis and applications}, 22(3):618--630, 1968.

\bibitem{ehrgott2005multicriteria}
Matthias Ehrgott.
\newblock {\em Multicriteria optimization}, volume 491.
\newblock Springer Science \& Business Media, 2005.

\bibitem{bowman1976relationship}
V~Joseph Bowman~Jr.
\newblock On the relationship of the tchebycheff norm and the efficient frontier of multiple-criteria objectives.
\newblock In {\em Multiple Criteria Decision Making: Proceedings of a Conference Jouy-en-Josas, France May 21--23, 1975}, pages 76--86. Springer, 1976.

\bibitem{miettinen1999nonlinear}
Kaisa Miettinen.
\newblock {\em Nonlinear multiobjective optimization}, volume~12.
\newblock Springer Science \& Business Media, 1999.

\bibitem{giagkiozis2015methods}
Ioannis Giagkiozis and Peter~J Fleming.
\newblock Methods for multi-objective optimization: An analysis.
\newblock {\em Information Sciences}, 293:338--350, 2015.

\bibitem{riquelme2015performance}
Nery Riquelme, Christian Von~L{\"u}cken, and Benjamin Baran.
\newblock Performance metrics in multi-objective optimization.
\newblock In {\em 2015 Latin American computing conference (CLEI)}, pages 1--11. IEEE, 2015.

\bibitem{das1997closer}
Indraneel Das and John~E Dennis.
\newblock A closer look at drawbacks of minimizing weighted sums of objectives for pareto set generation in multicriteria optimization problems.
\newblock {\em Structural optimization}, 14:63--69, 1997.

\bibitem{liu2021profiling}
Xingchao Liu, Xin Tong, and Qiang Liu.
\newblock Profiling {Pareto} front with multi-objective stein variational gradient descent.
\newblock {\em Advances in Neural Information Processing Systems}, 34:14721--14733, 2021.

\bibitem{liu2021conflict}
Bo~Liu, Xingchao Liu, Xiaojie Jin, Peter Stone, and Qiang Liu.
\newblock Conflict-averse gradient descent for multi-task learning.
\newblock {\em Advances in Neural Information Processing Systems}, 34:18878--18890, 2021.

\bibitem{chen2023preference}
Wenqing Chen, Jidong Tian, Caoyun Fan, Yitian Li, Hao He, and Yaohui Jin.
\newblock Preference-controlled multi-objective reinforcement learning for conditional text generation.
\newblock In {\em Proceedings of the AAAI Conference on Artificial Intelligence}, volume~37, pages 12662--12672, 2023.

\bibitem{mahapatra2023multi}
Debabrata Mahapatra, Chaosheng Dong, Yetian Chen, and Michinari Momma.
\newblock Multi-label learning to rank through multi-objective optimization.
\newblock In {\em Proceedings of the 29th ACM SIGKDD Conference on Knowledge Discovery and Data Mining}, pages 4605--4616, 2023.

\bibitem{sener2018multi}
Ozan Sener and Vladlen Koltun.
\newblock Multi-task learning as multi-objective optimization.
\newblock {\em Advances in neural information processing systems}, 31, 2018.

\bibitem{klamroth2007constrained}
Kathrin Klamroth and Tind J{\o}rgen.
\newblock Constrained optimization using multiple objective programming.
\newblock {\em Journal of Global Optimization}, 37:325--355, 2007.

\bibitem{kasimbeyli2019comparison}
Refail Kasimbeyli, Zehra~Kamisli Ozturk, Nergiz Kasimbeyli, Gulcin~Dinc Yalcin, and Banu~Icmen Erdem.
\newblock Comparison of some scalarization methods in multiobjective optimization: comparison of scalarization methods.
\newblock {\em Bulletin of the Malaysian Mathematical Sciences Society}, 42:1875--1905, 2019.

\bibitem{fernando2022mitigating}
Heshan~Devaka Fernando, Han Shen, Miao Liu, Subhajit Chaudhury, Keerthiram Murugesan, and Tianyi Chen.
\newblock Mitigating gradient bias in multi-objective learning: A provably convergent approach.
\newblock In {\em The Eleventh International Conference on Learning Representations}, 2022.

\bibitem{hu2024revisiting}
Yuzheng Hu, Ruicheng Xian, Qilong Wu, Qiuling Fan, Lang Yin, and Han Zhao.
\newblock Revisiting scalarization in multi-task learning: A theoretical perspective.
\newblock {\em Advances in Neural Information Processing Systems}, 36, 2024.

\bibitem{chen2024three}
Lisha Chen, Heshan Fernando, Yiming Ying, and Tianyi Chen.
\newblock Three-way trade-off in multi-objective learning: Optimization, generalization and conflict-avoidance.
\newblock {\em Advances in Neural Information Processing Systems}, 36, 2024.

\bibitem{mahapatra2020multi}
Debabrata Mahapatra and Vaibhav Rajan.
\newblock Multi-task learning with user preferences: Gradient descent with controlled ascent in {Pareto} optimization.
\newblock In {\em International Conference on Machine Learning}, pages 6597--6607. PMLR, 2020.

\bibitem{xiao2024direction}
Peiyao Xiao, Hao Ban, and Kaiyi Ji.
\newblock Direction-oriented multi-objective learning: Simple and provable stochastic algorithms.
\newblock {\em Advances in Neural Information Processing Systems}, 36, 2024.

\bibitem{lin2024smooth}
Xi~Lin, Xiaoyuan Zhang, Zhiyuan Yang, Fei Liu, Zhenkun Wang, and Qingfu Zhang.
\newblock Smooth tchebycheff scalarization for multi-objective optimization.
\newblock {\em arXiv preprint arXiv:2402.19078}, 2024.

\bibitem{yahyaa2014scalarized}
Saba~Q Yahyaa, Madalina~M Drugan, and Bernard Manderick.
\newblock The scalarized multi-objective multi-armed bandit problem: An empirical study of its exploration vs. exploitation tradeoff.
\newblock In {\em 2014 International Joint Conference on Neural Networks (IJCNN)}, pages 2290--2297. IEEE, 2014.

\bibitem{tekin2018multi}
Cem Tekin and Eralp Tur{\u{g}}ay.
\newblock Multi-objective contextual multi-armed bandit with a dominant objective.
\newblock {\em IEEE Transactions on Signal Processing}, 66(14):3799--3813, 2018.

\bibitem{busa2017multi}
R{\'o}bert Busa-Fekete, Bal{\'a}zs Sz{\"o}r{\'e}nyi, Paul Weng, and Shie Mannor.
\newblock Multi-objective bandits: Optimizing the generalized gini index.
\newblock In {\em International Conference on Machine Learning}, pages 625--634. PMLR, 2017.

\bibitem{yahyaa2014annealing}
Saba~Q Yahyaa, Madalina~M Drugan, and Bernard Manderick.
\newblock Annealing-pareto multi-objective multi-armed bandit algorithm.
\newblock In {\em 2014 IEEE Symposium on Adaptive Dynamic Programming and Reinforcement Learning (ADPRL)}, pages 1--8. IEEE, 2014.

\bibitem{jiang2023multiobjective}
Jiyan Jiang, Wenpeng Zhang, Shiji Zhou, Lihong Gu, Xiaodong Zeng, and Wenwu Zhu.
\newblock Multi-objective online learning.
\newblock In {\em The Eleventh International Conference on Learning Representations}, 2023.

\bibitem{roijers2013survey}
Diederik~M Roijers, Peter Vamplew, Shimon Whiteson, and Richard Dazeley.
\newblock A survey of multi-objective sequential decision-making.
\newblock {\em Journal of Artificial Intelligence Research}, 48:67--113, 2013.

\bibitem{ehrgott2005multiobjective}
Matthias Ehrgott and Margaret~M Wiecek.
\newblock Multiobjective programming.
\newblock {\em Multiple criteria decision analysis: State of the art surveys}, 78:667--708, 2005.

\bibitem{puterman1990markov}
Martin~L Puterman.
\newblock Markov decision processes.
\newblock {\em Handbooks in operations research and management science}, 2:331--434, 1990.

\bibitem{van2013hypervolume}
Kristof Van~Moffaert, Madalina~M Drugan, and Ann Now{\'e}.
\newblock Hypervolume-based multi-objective reinforcement learning.
\newblock In {\em Evolutionary Multi-Criterion Optimization: 7th International Conference, EMO 2013, Sheffield, UK, March 19-22, 2013. Proceedings 7}, pages 352--366. Springer, 2013.

\bibitem{natarajan2005dynamic}
Sriraam Natarajan and Prasad Tadepalli.
\newblock Dynamic preferences in multi-criteria reinforcement learning.
\newblock In {\em Proceedings of the 22nd international conference on Machine learning}, pages 601--608, 2005.

\bibitem{wang2013hypervolume}
Weijia Wang and Michele Sebag.
\newblock Hypervolume indicator and dominance reward based multi-objective monte-carlo tree search.
\newblock {\em Machine learning}, 92:403--429, 2013.

\bibitem{barrett2008learning}
Leon Barrett and Srini Narayanan.
\newblock Learning all optimal policies with multiple criteria.
\newblock In {\em Proceedings of the 25th international conference on Machine learning}, pages 41--47, 2008.

\bibitem{xu2020prediction}
Jie Xu, Yunsheng Tian, Pingchuan Ma, Daniela Rus, Shinjiro Sueda, and Wojciech Matusik.
\newblock Prediction-guided multi-objective reinforcement learning for continuous robot control.
\newblock In {\em International conference on machine learning}, pages 10607--10616. PMLR, 2020.

\bibitem{hayes2022practical}
Conor~F Hayes, Roxana R{\u{a}}dulescu, Eugenio Bargiacchi, Johan K{\"a}llstr{\"o}m, Matthew Macfarlane, Mathieu Reymond, Timothy Verstraeten, Luisa~M Zintgraf, Richard Dazeley, Fredrik Heintz, et~al.
\newblock A practical guide to multi-objective reinforcement learning and planning.
\newblock {\em Autonomous Agents and Multi-Agent Systems}, 36(1):26, 2022.

\bibitem{chen2019meta}
Xi~Chen, Ali Ghadirzadeh, M{\aa}rten Bj{\"o}rkman, and Patric Jensfelt.
\newblock Meta-learning for multi-objective reinforcement learning.
\newblock In {\em 2019 IEEE/RSJ International Conference on Intelligent Robots and Systems (IROS)}, pages 977--983. IEEE, 2019.

\bibitem{wiering2014model}
Marco~A Wiering, Maikel Withagen, and M{\u{a}}d{\u{a}}lina~M Drugan.
\newblock Model-based multi-objective reinforcement learning.
\newblock In {\em 2014 IEEE symposium on adaptive dynamic programming and reinforcement learning (ADPRL)}, pages 1--6. IEEE, 2014.

\bibitem{zhu2023scaling}
Baiting Zhu, Meihua Dang, and Aditya Grover.
\newblock Scaling pareto-efficient decision making via offline multi-objective rl.
\newblock {\em arXiv preprint arXiv:2305.00567}, 2023.

\bibitem{wu2021offline}
Runzhe Wu, Yufeng Zhang, Zhuoran Yang, and Zhaoran Wang.
\newblock Offline constrained multi-objective reinforcement learning via pessimistic dual value iteration.
\newblock {\em Advances in Neural Information Processing Systems}, 34, 2021.

\bibitem{wu2021accommodating}
Jingfeng Wu, Vladimir Braverman, and Lin Yang.
\newblock Accommodating picky customers: Regret bound and exploration complexity for multi-objective reinforcement learning.
\newblock {\em Advances in Neural Information Processing Systems}, 34:13112--13124, 2021.

\bibitem{zhou2022anchor}
Ruida Zhou, Tao Liu, Dileep Kalathil, PR~Kumar, and Chao Tian.
\newblock Anchor-changing regularized natural policy gradient for multi-objective reinforcement learning.
\newblock {\em Advances in Neural Information Processing Systems}, 35:13584--13596, 2022.

\bibitem{li2020deep}
Kaiwen Li, Tao Zhang, and Rui Wang.
\newblock Deep reinforcement learning for multiobjective optimization.
\newblock {\em IEEE transactions on cybernetics}, 51(6):3103--3114, 2020.

\bibitem{lu2022multi}
Haoye Lu, Daniel Herman, and Yaoliang Yu.
\newblock Multi-objective reinforcement learning: Convexity, stationarity and pareto optimality.
\newblock In {\em The Eleventh International Conference on Learning Representations}, 2022.

\bibitem{blackwell1956analog}
David Blackwell.
\newblock An analog of the minimax theorem for vector payoffs.
\newblock 1956.

\bibitem{shapley1959equilibrium}
Lloyd~S Shapley and Fred~D Rigby.
\newblock Equilibrium points in games with vector payoffs.
\newblock {\em Naval Research Logistics Quarterly}, 6(1):57--61, 1959.

\bibitem{krieger2003pareto}
Thomas Krieger.
\newblock On pareto equilibria in vector-valued extensive form games.
\newblock {\em Mathematical Methods of Operations Research}, 58(3):449--458, 2003.

\bibitem{voorneveld1999axiomatizations}
Mark Voorneveld, Dries Vermeulen, and Peter Borm.
\newblock Axiomatizations of pareto equilibria in multicriteria games.
\newblock {\em Games and economic behavior}, 28(1):146--154, 1999.

\bibitem{park2019decision}
Jaeok Park.
\newblock Decision making and games with vector outcomes.
\newblock {\em The BE Journal of Theoretical Economics}, 20(1):20180170, 2019.

\bibitem{salonen1992axiomatic}
Hannu Salonen.
\newblock An axiomatic analysis of the nash equilibrium concept.
\newblock {\em Theory and decision}, 33(2):177--189, 1992.

\bibitem{zapata2019maxmin}
A~Zapata, AM~M{\'a}rmol, L~Monroy, and MA~Caraballo.
\newblock A maxmin approach for the equilibria of vector-valued games.
\newblock {\em Group Decision and Negotiation}, 28(2):415--432, 2019.

\bibitem{yu1998study}
J~Yu and GX-Z Yuan.
\newblock The study of pareto equilibria for multiobjective games by fixed point and ky fan minimax inequality methods.
\newblock {\em Computers \& Mathematics with Applications}, 35(9):17--24, 1998.

\bibitem{sridhar2012pareto}
Usha Sridhar and Sridhar Mandyam.
\newblock Pareto optimal allocation in multi-agent coalitional games with non-linear payoffs.
\newblock In {\em 2012 IEEE/ACM International Conference on Advances in Social Networks Analysis and Mining}, pages 1301--1308. IEEE, 2012.

\end{thebibliography}
\bibliographystyle{unsrt}

\newpage
\appendix 
\section{Related Works}\label{sec:related works}

\textbf{Markov Games.} Markov games (MGs), also known as stochastic games \cite{shapley1953stochastic}, provide the foundational framework for multi-agent reinforcement learning (MARL), particularly for learning equilibria. For comprehensive surveys on the topic, see \cite{busoniu2008comprehensive, oroojlooy2023review, zhang2021multi}. While early work in MARL focused on asymptotic convergence guarantees \cite{littman1996generalized, littman2001friend}, the modern focus has shifted to finite-sample analyses that establish non-asymptotic guarantees.

A central solution concept in MGs is the Nash equilibrium (NE). Its existence in general-sum MGs was shown by \cite{fink1964equilibrium}, with seminal algorithmic work by \cite{littman1994markov} and classical algorithms like Nash-Q \cite{hu2003nash} and FF-Q \cite{littman2001friend} laying the groundwork. However, computing an NE in general-sum multi-player games is computationally challenging—it is known to be PPAD-complete \cite{daskalakis2013complexity}, and no polynomial-time algorithms are known to exist \cite{jin2022complexity, deng2023complexity}. In sharp contrast, the two-player zero-sum setting is tractable, with \cite{hansen2013strategy} developing the first polynomial-time algorithm. To bypass the intractability in general-sum games, research has shifted to weaker but efficiently computable concepts like Correlated Equilibrium (CE) and Coarse Correlated Equilibrium (CCE), leading to algorithms with polynomial-time guarantees such as V-learning \cite{jin2021v, song2021can, mao2023provably} and Nash value iteration \cite{liu2021sharp}. Significant theoretical progress has also been made in the finite-sample analysis of two-player zero-sum MGs, spanning both model-based and model-free methods \cite{bai2020provable, xie2020learning, liu2021sharp, chen2022decentralized, cui2023breaking, feng2023improving, li2024provable}, advancing the theoretical understanding of equilibrium learning without robustness considerations.

\textbf{Multi-Objective Reinforcement Learning.} Multi-objective reinforcement learning (MORL) is built upon the foundations of multi-objective optimization, where various scalarization methods have been explored to find Pareto optimal solutions, e.g., \cite{choo1983proper, steuer1986multiple, geoffrion1968proper, ehrgott2005multicriteria, bowman1976relationship, miettinen1999nonlinear, caramia2020multi, gunantara2018review, deb2016multi, giagkiozis2015methods, riquelme2015performance, das1997closer, liu2021profiling, liu2021conflict, chen2023preference, mahapatra2023multi, sener2018multi, klamroth2007constrained, kasimbeyli2019comparison, fernando2022mitigating, hu2024revisiting, chen2024three, mahapatra2020multi, xiao2024direction, lin2024smooth}. These principles from multi-objective optimization were later extended to online learning settings, including online convex optimization and bandit problems~\cite{drugan2013designing, yahyaa2014scalarized, turgay2018multi, lu2019multi, tekin2018multi, busa2017multi, yahyaa2014annealing, jiang2023multiobjective}.

Following this, extensive studies on MORL have been developed to explore different scalarization methods in the context of sequential decision-making, e.g.,  \cite{roijers2013survey, ehrgott2005multiobjective, puterman1990markov, agarwal2022multi, van2013hypervolume, natarajan2005dynamic, wang2013hypervolume, barrett2008learning, pirotta2015multi, van2013scalarized, xu2020prediction, hayes2022practical, van2014multi, chen2019meta, yang2019generalized, wiering2014model, zhu2023scaling, wu2021offline, yu2021provably, wu2021accommodating, zhou2022anchor, li2020deep, lu2022multi, qiu2024traversing}. However, this body of work is predominantly focused on single-agent reinforcement learning. As we consider a multi-agent setting, the results of these studies cannot be directly extended.

\textbf{Normal-Form Games with Vectorial Payoffs.}
Game theory has developed extensive studies on normal-form games with vector or multi-criteria payoffs, where the notion of PNE is proposed and studied \cite{blackwell1956analog,
shapley1959equilibrium,
krieger2003pareto,
voorneveld1999axiomatizations,
park2019decision,
salonen1992axiomatic,
zapata2019maxmin,
yu1998study,lozovanu2005multiobjective,sridhar2012pareto}. However, all of these works are focusing on normal-form games, which do not extend to Markov games with sequential decisions. 

\section{Proofs for \Cref{sec:solutions}}
\subsection{PNE}
\begin{theorem}[Existence of PNE]
For any Multi-Objective Markov Game with a finite number of agents $N$, a finite state space $\mathcal{S}$, and a finite joint action space $\mathcal{A}$, there exists a  PNE in the space of stationary stochastic policies.
\end{theorem}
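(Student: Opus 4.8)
The plan is to deduce existence of a PNE from the classical existence of a stationary Nash equilibrium in a finite-horizon general-sum Markov game, combined with the easy inclusion underlying \Cref{thm:PNE=NE}. Concretely, I would fix an arbitrary strictly positive preference profile, solve the induced linearly scalarized SMG, and then argue that any Nash equilibrium of that scalarized game is automatically a PNE of the original MOMG. This route sidesteps the convexity issue the paper flags: in a Markov game an agent's achievable value vectors over trajectories need not form a convex set, so a direct fixed-point argument on a ``Pareto best response'' correspondence over the joint policy space is delicate, whereas scalarizing first restores convexity of stage-game best responses.

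The three steps I would carry out are as follows. First, fix the uniform profile $\bLambda_0 = (\blambda^1, \dots, \blambda^N)$ with $\lambda^k_i = 1/M$ for all $k \in \mathcal{N}$, $i \in [M]$, so each $\blambda^k \in \Delta_M^o$, and form the scalarized game $G_{\bLambda_0}$ with scalar rewards $r_k = (\blambda^k)^\top \br_k \in [0,1]$. Second, argue that $G_{\bLambda_0}$, being a finite-horizon finite general-sum Markov game, admits a stationary (Markov) stochastic Nash equilibrium $\bpi^*$; this is the classical result of \cite{fink1964equilibrium}, which I would recover by backward induction: at step $H$ and each state $s$ the stage game with payoffs $\{r_k(s,\cdot)\}_k$ has a mixed Nash equilibrium by Nash's theorem, and proceeding from $h=H$ down to $h=1$, the already-fixed continuation values $\{V^{\bpi^*}_{k,h+1}\}_k$ turn each pair $(h,s)$ into a finite stage game whose mixed equilibrium defines $\pi^*_{k,h}(\cdot\mid s)$. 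Third, verify $\bpi^*$ is a PNE of $G$: if some agent $k$ had a unilateral deviation $\pi_k$ with $\bV^{(\pi_k,\bpi^*_{-k})}_{k,1}(s_1)$ Pareto-dominating $\bV^{\bpi^*}_{k,1}(s_1)$, then, since every component $\lambda^k_i > 0$, we would get $(\blambda^k)^\top \bV^{(\pi_k,\bpi^*_{-k})}_{k,1}(s_1) > (\blambda^k)^\top \bV^{\bpi^*}_{k,1}(s_1)$, a profitable unilateral deviation in $G_{\bLambda_0}$, contradicting the Nash property of $\bpi^*$. Hence $\bpi^*$ is a stochastic PNE.

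The main obstacle is essentially confined to the second step, and it is a matter of care rather than depth: one must ensure a stationary equilibrium exists in the scalarized Markov game, which is standard once the objective is scalar because the stage-game/backward-induction construction yields convex-valued best responses at every $(h,s)$ and Nash's theorem (or Kakutani applied stepwise) closes the argument. I would finish with the remark that the construction goes through verbatim for any $\bLambda \in (\Delta_M^o)^N$, so in particular $\bigcup_{\bLambda \in (\Delta_M^o)^N} \textbf{NE}(G_{\bLambda}) \neq \emptyset$, which is consistent with the full set characterization of $\textbf{PNE}(G)$ in \Cref{thm:PNE=NE} and confirms that the Pareto-Nash front is nonempty.
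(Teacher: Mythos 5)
Your proposal is correct and follows essentially the same route as the paper's proof: scalarize the MOMG with a strictly positive preference profile, invoke existence of a Nash equilibrium in the resulting single-objective Markov game, and then use positivity of the weights to turn any hypothetical Pareto-dominating deviation into a strict scalar improvement, contradicting the Nash property. The only differences are cosmetic — you fix the uniform preference and spell out the backward-induction/stage-game construction of the equilibrium, whereas the paper leaves the preference arbitrary and cites the classical existence result directly.
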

\begin{proof}
We will derive the proof through scalarized approaches. 

Let the Multi-Objective Markov Gamebe $G = (\mathcal{N}, \mathcal{S}, \{\mathcal{A}^k\}, H, \{M\}, P, \{\bm{r}^k\})$, and given a preference vector $\bm{\Lambda} = \{\bm{\lambda}^1, \bm{\lambda}^2, \dots, \bm{\lambda}^N\}$, where each $\bm{\lambda}^k \in \Delta_M^o$, we then construct a corresponding single-objective Markov Game, denoted $G_{\bm{\Lambda}}$ as follows. 

For each agent $k$, define a scalar reward function $U^k: \mathcal{S} \times \mathcal{A} \to [0, 1]$ as the linear combination of its vector rewards:
    \[
        U^k(s, \bm{a}) \triangleq (\bm{\lambda}^k)^\top \bm{r}^k(s, \bm{a}) = \sum_{i=1}^M \lambda_i^k r_i^k(s, \bm{a}).
    \]
We then define Scalarized Value functions as the expected value functions for agent $k$ under a joint policy $\bm{\pi}$ as $U^{k, \bm{\pi}}$ as
    \[
        U^{k, \bm{\pi}}(s_1) \triangleq \mathbb{E}\left[\sum_{h=1}^H U^k(s_h, \bm{a}_h) \Big| s_1, \bm{\pi} \right] = (\bm{\lambda}^k)^\top \bm{V}_1^{k, \bm{\pi}}(s_1),
    \]
where the last equation is due to the linearity of expectation. 

Using these scalarized value functions, the resulting game $G_{\bm{\Lambda}} = (\mathcal{N}, \mathcal{S}, \{\mathcal{A}^k\}, H, P, \{U^k\})$ is a standard $N$-player, general-sum, finite Markov Game. In this game, each agent $k$ aims to find a policy $\pi^k$ that maximizes its individual scalar utility $U^{k, \bm{\pi}}$. It is shown in \cite{shapley1953stochastic} that $G_{\bm{\Lambda}}$ has a NE, which we denote as $\bm{\pi}^* = (\pi^{1,*}, \dots, \pi^{N,*})$. We then show that, $\bm{\pi}^*$ is a PNE.

Since $\bm{\pi}^*$ is a NE of $G_{\bm{\Lambda}}$, no agent $k$ can unilaterally deviate to another policy $\hat{\pi}^k$ to improve its scalar utility, i.e., 
    \[
        U^{k, (\pi^{k,*}, \bm{\pi}^{-k,*})}(s_1) \ge U^{k, (\hat{\pi}^k, \bm{\pi}^{-k,*})}(s_1).
    \]
    Substituting the definition of the scalar utility, this is:
    \begin{equation} \label{eq:ne_condition}
        (\bm{\lambda}^k)^\top \bm{V}_1^{k, (\pi^{k,*}, \bm{\pi}^{-k,*})}(s_1) \ge (\bm{\lambda}^k)^\top \bm{V}_1^{k, (\hat{\pi}^k, \bm{\pi}^{-k,*})}(s_1).
    \end{equation}

Assume that $\bm{\pi}^*$ is {not} a PNE of the original MOMG $G$, then there exists at least one agent $j$ who has a unilateral deviation policy $\hat{\pi}^j$ such that its new value vector Pareto dominates its value vector under $\bm{\pi}^*$:
    \begin{itemize}
        \item $V_{i,1}^{j, (\hat{\pi}^j, \bm{\pi}^{-j,*})}(s_1) \ge V_{i,1}^{j, (\pi^{j,*}, \bm{\pi}^{-j,*})}(s_1)$ for all objectives $i \in \{1, \dots, M\}$.
        \item $V_{i',1}^{j, (\hat{\pi}^j, \bm{\pi}^{-j,*})}(s_1) > V_{i',1}^{j, (\pi^{j,*}, \bm{\pi}^{-j,*})}(s_1)$ for at least one objective $i' \in \{1, \dots, M\}$.
    \end{itemize}

Multiply the inequalities above by the corresponding components $\lambda_i^j$ of agent $j$'s preference vector $\bm{\lambda}^j$ implies that 
        $\lambda_i^j V_{i,1}^{j, (\hat{\pi}^j, \bm{\pi}^{-j,*})}(s_1) \ge \lambda_i^j V_{i,1}^{j, (\pi^{j,*}, \bm{\pi}^{-j,*})}(s_1)$ for all $i$, and 
        $\lambda_{i'}^j V_{i',1}^{j, (\hat{\pi}^j, \bm{\pi}^{-j,*})}(s_1) > \lambda_{i'}^j V_{i',1}^{j, (\pi^{j,*}, \bm{\pi}^{-j,*})}(s_1)$,
due to the positive values of $\bm{\Lambda}$.

    Summing these inequalities over all objectives $i \in \{1, \dots, M\}$, the presence of at least one strict inequality means the total sum must be strictly greater:
    \[
        \sum_{i=1}^M \lambda_i^j V_{i,1}^{j, (\hat{\pi}^j, \bm{\pi}^{-j,*})}(s_1) > \sum_{i=1}^M \lambda_i^j V_{i,1}^{j, (\pi^{j,*}, \bm{\pi}^{-j,*})}(s_1), 
    \]
    i.e., 
    \[
        (\bm{\lambda}^j)^\top \bm{V}_1^{j, (\hat{\pi}^j, \bm{\pi}^{-j,*})}(s_1) > (\bm{\lambda}^j)^\top \bm{V}_1^{j, (\pi^{j,*}, \bm{\pi}^{-j,*})}(s_1).
    \]

This is contradict to \eqref{eq:ne_condition}, hence it completes the proof.
\end{proof}

\begin{theorem}[Linear Scalarization and PNE]
Let the policy space for all agents consist of stationary stochastic policies.
\begin{enumerate}
    \item Any NE of the linearly scalarized game with a set of preferences $\{\bm{\lambda}^k\}_{k \in \mathcal{N}}$, where each preference vector $\bm{\lambda}^k$ is in the relative interior of the probability simplex (i.e., $\bm{\lambda}^k \in \Delta_M^o$), is a PNE of the original Multi-Objective Markov Game.
    \item Conversely, for any PNE $\bm{\pi}^*$ of the Multi-Objective Markov Game, there exists a set of preferences $\{\bm{\lambda}^k\}_{k \in \mathcal{N}}$ with $\bm{\lambda}^k \in \Delta_M^o$ such that $\bm{\pi}^*$ is a NE of the corresponding linearly scalarized game.
\end{enumerate}
\end{theorem}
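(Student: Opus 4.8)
The plan is to prove the two inclusions separately, reusing the scalarization machinery already set up in the proof of \Cref{thm:PNE exists}. For Part~1 (every NE of a positive linear scalarization is a PNE), there is essentially nothing new to do: given $\bm\Lambda=\{\bm\lambda^k\}$ with each $\bm\lambda^k\in\Delta_M^o$ and a NE $\bm\pi^*$ of $G_{\bm\Lambda}$, suppose some agent $j$ had a deviation $\hat\pi^j$ whose value vector Pareto-dominates $\bm V^{\bm\pi^*}_{j,1}(s_1)$; multiplying the componentwise inequalities by the \emph{strictly positive} weights $\lambda^j_i$ and summing, the single strict component forces $(\bm\lambda^j)^\top \bm V^{(\hat\pi^j,\bm\pi^*_{-j})}_{j,1}(s_1) > (\bm\lambda^j)^\top \bm V^{\bm\pi^*}_{j,1}(s_1)$, contradicting that $\bm\pi^*$ is a NE of $G_{\bm\Lambda}$. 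I would simply point to this computation.

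For Part~2 (every PNE comes from some positive scalarization), fix a PNE $\bm\pi^*$. The key object is, for each agent $k$, the attainable-value set $\mathcal{V}_k := \{\bm V^{(\pi_k,\bm\pi^*_{-k})}_{k,1}(s_1)\,:\,\pi_k \text{ a policy}\}$, i.e.\ the set of value vectors agent $k$ can realize by unilateral deviation while the others play $\bm\pi^*_{-k}$. By the definition of PNE, the point $v^*_k := \bm V^{\bm\pi^*}_{k,1}(s_1)$ is Pareto-optimal (efficient) in $\mathcal{V}_k$. The proof then has three steps: (i) show $\mathcal{V}_k$ is a compact convex polytope; (ii) conclude that an efficient point of a polytope is supported by a \emph{strictly positive} weight vector, producing $\bm\lambda^k\in\Delta_M^o$ with $(\bm\lambda^k)^\top v \le (\bm\lambda^k)^\top v^*_k$ for all $v\in\mathcal{V}_k$; (iii) observe that this inequality is exactly agent $k$'s NE condition in $G_{\bm\Lambda}$ for $\bm\Lambda=\{\bm\lambda^k\}_k$, which, holding for every $k$, gives $\bm\pi^*\in\textbf{NE}(G_{\bm\Lambda})$.

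For step~(i) — which the paper itself flags as non-obvious, since a value vector is an expectation over whole trajectories — the device is to note that once $\bm\pi^*_{-k}$ is fixed, agent $k$ faces an ordinary (time-inhomogeneous, finite-horizon) MDP; the set of state--action occupancy measures attainable by its (Markovian) policies is a polytope cut out by the Bellman flow constraints, and each entry of $\bm V^{(\pi_k,\bm\pi^*_{-k})}_{k,1}(s_1)$ is a \emph{linear} functional of that occupancy measure. Hence $\mathcal{V}_k$ is the image of a polytope under a linear map, so again a compact convex polytope. For step~(ii) I would run a separation argument: efficiency of $v^*_k$ means the open convex set $v^*_k+\reals^M_{>0}$ is disjoint from the convex set $\mathcal{V}_k-\reals^M_{\ge 0}$, so a nonzero $\bm\lambda^k\ge 0$ separates them and supports $v^*_k$; to upgrade $\bm\lambda^k$ to be \emph{strictly} positive one uses polyhedrality — efficient points of a polyhedron are properly efficient in Geoffrion's sense (only finitely many edge directions, so trade-off ratios stay bounded), equivalently are maximizers of a strictly positive weighted sum. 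Normalizing gives $\bm\lambda^k\in\Delta_M^o$.

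The main obstacle is precisely this last point in Part~2: passing from a merely nonnegative supporting weight (which would only certify weak Pareto-optimality, i.e.\ a WPNE) to a strictly positive one, together with the supporting fact that $\mathcal{V}_k$ is a genuine polytope. This is the same subtlety the Remark after the Proposition warns about — over the open simplex $\Delta_M^o$ an infimum can escape to the boundary — and polyhedrality of $\mathcal{V}_k$ is exactly what rules that out here. Everything else is bookkeeping with the linearity of the scalarized value functions.
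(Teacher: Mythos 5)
Your proposal is correct and follows essentially the same route as the paper: Part~1 is the identical weighted-sum contradiction, and Part~2 mirrors the paper's argument of fixing $\bm{\pi}^*_{-k}$, showing the attainable value set is a convex polytope via occupancy measures of the induced effective MDP (the paper's \Cref{lemma:convex}), and then supporting the PNE value vector by a strictly positive hyperplane (the paper's \Cref{prop:sup}). The one place you diverge is in how strict positivity of the supporting weights is certified: the paper separates $\mathbb{V}^k$ from the strict-dominance cone and then argues informally that a zero component of the normal would contradict Pareto optimality, whereas you invoke polyhedrality directly (efficient points of a polytope are properly efficient in Geoffrion's sense, hence maximizers of a strictly positive weighted sum); this is a sharper justification of exactly the step the paper's proposition handles more loosely, and it correctly isolates why the result holds over the open simplex $\Delta_M^o$ rather than only its closure.
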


\begin{proof}
{First, we prove that a Nash Equilibrium of the scalarized game implies a Pareto-Nash Equilibrium (1).} We proceed by contradiction. Let $\bm{\pi}^*$ be a NE of the linearly scalarized game with preferences $\bm{\Lambda} = \{\bm{\lambda}^k\}_{k \in \mathcal{N}}$ where each $\bm{\lambda}^k \in \Delta_M^o$. The NE condition states that for any agent $k$ and any alternative policy $\hat{\pi}^k$, the scalar utility cannot be improved:
\[
    (\bm{\lambda}^k)^\top \bm{V}_1^{k, (\pi^{k,*}, \bm{\pi}^{-k,*})}(s_1) \ge (\bm{\lambda}^k)^\top \bm{V}_1^{k, (\hat{\pi}^k, \bm{\pi}^{-k,*})}(s_1).
\]
Now, assume that $\bm{\pi}^*$ is \textit{not} a PNE. By definition, this means there must exist an agent $j$ and a deviating policy $\hat{\pi}^j$ such that its value vector $\bm{V}_1^{j, (\hat{\pi}^j, \bm{\pi}^{-j,*})}(s_1)$ Pareto dominates $\bm{V}_1^{j, (\pi^{j,*}, \bm{\pi}^{-j,*})}(s_1)$. This dominance implies that $V_{i,1}^{j, (\hat{\pi}^j, \bm{\pi}^{-j,*})}(s_1) \ge V_{i,1}^{j, (\pi^{j,*}, \bm{\pi}^{-j,*})}(s_1)$ for all objectives $i$, with the inequality being strict for at least one objective $i'$.
Since agent $j$'s preference vector $\bm{\lambda}^j$ has strictly positive components ($\lambda_i^j > 0$ for all $i$), multiplying these component-wise inequalities by $\lambda_i^j$ and summing them results in a strict inequality for the total scalar utility:
\[
    \sum_{i=1}^M \lambda_i^j V_{i,1}^{j, (\hat{\pi}^j, \bm{\pi}^{-j,*})}(s_1) > \sum_{i=1}^M \lambda_i^j V_{i,1}^{j, (\pi^{j,*}, \bm{\pi}^{-j,*})}(s_1).
\]
This is equivalent to $(\bm{\lambda}^j)^\top \bm{V}_1^{j, (\hat{\pi}^j, \bm{\pi}^{-j,*})}(s_1) > (\bm{\lambda}^j)^\top \bm{V}_1^{j, (\pi^{j,*}, \bm{\pi}^{-j,*})}(s_1)$, which directly contradicts the initial NE condition. Therefore, the assumption must be false, and $\bm{\pi}^*$ must be a PNE.

{Next, we prove that a Pareto-Nash Equilibrium implies a Nash Equilibrium under some scalarization (2).} Let $\bm{\pi}^* = (\pi^{1,*}, \dots, \pi^{N,*})$ be a PNE. We must construct a set of preferences $\bm{\Lambda}$ for which $\bm{\pi}^*$ is a NE. The PNE condition implies that for any agent $k$, no unilateral deviation can lead to a Pareto-dominant outcome.
Consider the set of all achievable value vectors for an arbitrary agent $k$, given that other agents play their fixed policies $\bm{\pi}^{-k,*}$:
\[
    \mathbb{V}^k(\bm{\pi}^{-k,*}) := \{\bm{V}_1^{k, (\pi^k, \bm{\pi}^{-k,*})}(s_1) | \pi^k \in \Pi^k\}.
\]
By \Cref{lemma:convex}, this set $\mathbb{V}^k(\bm{\pi}^{-k,*})$ is a convex polytope. The PNE condition means that the value vector $\bm{V}_1^{k, \bm{\pi}^*}(s_1)$ lies on the Pareto front of this convex polytope.

From \Cref{prop:sup}, any point on the Pareto front of a convex set can be supported by a hyperplane with a normal vector that has strictly positive components. Thus, there exists a vector $\bm{\lambda}^k \in \Delta_M^o$ such that for all achievable value vectors $\bm{v} \in \mathbb{V}^k(\bm{\pi}^{-k,*})$, the following holds:
\[
    (\bm{\lambda}^k)^\top \bm{V}_1^{k, \bm{\pi}^*}(s_1) \ge (\bm{\lambda}^k)^\top \bm{v}.
\]
This is precisely the NE condition for agent $k$: given the other players' strategies, its policy $\pi^{k,*}$ maximizes its own scalar utility defined by $\bm{\lambda}^k$. Since this construction is possible for every agent $k \in \mathcal{N}$, we can find a set of preferences $\bm{\Lambda} = \{\bm{\lambda}^1, \dots, \bm{\lambda}^N\}$ for which the PNE $\bm{\pi}^*$ is a Nash Equilibrium.
\end{proof}

\begin{lemma} \label{lemma:convex}
For any agent $k$ in a Multi-Objective Markov Game, and for any fixed joint policy $\bm{\pi}^{-k}$ of the other agents, the set of achievable value vectors for agent $k$, denoted $\mathbb{V}^k(\bm{\pi}^{-k})$, is a convex polytope in $\reals^M$, assuming the policy space $\Pi^k$ is the set of stationary stochastic policies.
\end{lemma}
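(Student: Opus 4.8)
The plan is to fix agent $k$ and the other agents' joint policy $\bm{\pi}^{-k}$, and note that from agent $k$'s perspective the environment collapses to a finite-horizon MOMDP: the state space is $\mathcal{S}$ (augmented with the stage index $h$, as usual for finite horizon), the action space is $\mathcal{A}^k$, the transition kernel is $\tilde{P}_h(s'\mid s, a_k) = \sum_{\bm{a}_{-k}} \bm{\pi}^{-k}_h(\bm{a}_{-k}\mid s) P_h(s'\mid s, (a_k,\bm{a}_{-k}))$, and the vector reward is the corresponding $\bm{\pi}^{-k}$-average of $\bm{r}^k$. So it suffices to prove the statement for a single-agent finite-horizon MOMDP with stationary stochastic policies, i.e. that $\{\bm{V}^{\pi}_1(s_1) : \pi \in \Pi\}$ is a convex polytope in $\reals^M$.

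For that, I would pass to the \emph{occupancy-measure} (state-action visitation) representation. For a finite-horizon MDP, the set of achievable occupancy measures $d = \{d_h(s,a)\}_{h\in[H], s\in\mathcal{S}, a\in\mathcal{A}^k}$ realizable by some (history-dependent, hence in particular stationary stochastic) policy is exactly the solution set of a finite system of linear equalities and inequalities: $d_h(s,a)\ge 0$; $\sum_a d_h(s,a) = \sum_{s',a'} \tilde{P}_{h-1}(s\mid s',a') d_{h-1}(s',a')$ for $h\ge 2$; and $\sum_a d_1(s,a) = \mathbf{1}[s=s_1]$. This is the standard LP/dual-LP characterization of finite-horizon MDPs; it is a bounded polyhedron, i.e. a polytope, call it $\mathcal{D}$. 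Each component of the value vector is linear in $d$: $V^{\pi,r^i_k}_1(s_1) = \sum_{h,s,a} d_h(s,a)\, r^i_{k,h}(s,a)$. Hence the map $d \mapsto \bm{V}$ is a single linear map $L:\reals^{\dim} \to \reals^M$, and $\mathbb{V}^k(\bm{\pi}^{-k}) = L(\mathcal{D})$. A linear image of a polytope is a polytope (it is the convex hull of the images of the finitely many vertices of $\mathcal{D}$), which is exactly the claim.

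The one point requiring a little care — and the main obstacle — is the equivalence between the class of \emph{stationary stochastic} policies and the occupancy polytope $\mathcal{D}$. In the \emph{finite-horizon} setting a ``stationary'' policy should be read as a collection $(\pi_1,\dots,\pi_H)$ of per-step decision rules $\pi_h:\mathcal{S}\to\Delta(\mathcal{A}^k)$ (this is how the paper defines a policy), and one must check (i) every such policy induces an occupancy measure in $\mathcal{D}$, which is immediate by forward induction on $h$, and (ii) conversely every $d\in\mathcal{D}$ is induced by such a policy, via $\pi_h(a\mid s) := d_h(s,a)/\sum_{a'} d_h(s,a')$ on states with positive visitation (and arbitrary elsewhere) — again a standard verification by induction. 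Given (i)–(ii), $\mathbb{V}^k(\bm{\pi}^{-k}) = L(\mathcal{D})$ exactly, and boundedness of $\mathbb{V}^k$ is automatic since rewards lie in $[0,1]^M$ and the horizon is finite, so $\mathbb{V}^k \subseteq [0,H]^M$. I would present (i)–(ii) as a short lemma or cite the textbook LP formulation of finite-horizon MDPs, then close with the ``linear image of a polytope is a polytope'' one-liner.
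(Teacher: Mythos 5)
Your proposal is correct and follows essentially the same route as the paper's proof: reduce to an effective single-agent MOMDP by averaging over $\bm{\pi}^{-k}$, characterize the occupancy measures as a polytope defined by linear constraints, and observe that the value vectors are the image of this polytope under a linear map. The only difference is that you spell out the policy--occupancy equivalence explicitly, whereas the paper delegates it to a cited result.
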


\begin{proof}
The proof can be obtained similarly to Prop 4.1 of \cite{qiu2024traversing}. We derive the proof for completeness. 

Fix an agent $k$ and the stationary stochastic policies of all other agents, $\bm{\pi}^{-k}$. From agent $k$'s perspective, the other agents become part of a stationary environment. We can define an "effective" single-agent Multi-Objective MDP that agent $k$ is facing. This effective MDP has the same state space $\mathcal{S}$ and agent $k$'s action space $\mathcal{A}^k$. The effective transition probability for agent $k$ taking action $a^k \in \mathcal{A}^k$ in state $s \in \mathcal{S}$ is the expectation over the other agents' actions:
\[
    P'(s' | s, a^k) := \sum_{\bm{a}^{-k} \in \mathcal{A}^{-k}} \left( \prod_{j \neq k} \pi^j(a^j | s) \right) P(s' | s, (a^k, \bm{a}^{-k})).
\]
Similarly, the effective vector-valued reward for agent $k$ is:
\[
    \bm{r}'^k(s, a^k) := \sum_{\bm{a}^{-k} \in \mathcal{A}^{-k}} \left( \prod_{j \neq k} \pi^j(a^j | s) \right) \bm{r}^k(s, (a^k, \bm{a}^{-k})).
\]
Now, we consider the set of all valid state-action occupancy measures $\theta^k$ for agent $k$ in this effective MDP. An occupancy measure $\theta^k = (\theta_h^k(s, a^k))_{s \in \mathcal{S}, a^k \in \mathcal{A}^k, h \in [H]}$ is a point in $\reals^{S \times |\mathcal{A}^k| \times H}$. As shown in \cite{qiu2024traversing}, the set of all valid occupancy measures, which we denote $\Theta^k$, is defined by a set of linear equality and inequality constraints:
\begin{align*}
    \theta_h^k(s, a^k) &\ge 0, \quad \forall h, s, a^k, \\
    \sum_{s, a^k} \theta_1^k(s, a^k) &= 1, \\
    \sum_{a^k \in \mathcal{A}^k} \theta_{h+1}^k(s', a^k) &= \sum_{s \in \mathcal{S}} \sum_{a^k \in \mathcal{A}^k} \theta_h^k(s, a^k) P'(s' | s, a^k), \quad \forall h, s'.
\end{align*}
This set of linear constraints defines $\Theta^k$ as a convex polytope. The value vector for agent $k$ corresponding to a policy $\pi^k$ (and thus a specific occupancy measure $\theta^k \in \Theta^k$) is given by the linear transformation:
\[
    \bm{V}_1^{k, (\pi^k, \bm{\pi}^{-k})}(s_1) = \sum_{h=1}^H \sum_{s \in \mathcal{S}} \sum_{a^k \in \mathcal{A}^k} \theta_h^k(s, a^k) \bm{r}'^k_h(s, a^k).
\]
The set of all achievable value vectors, $\mathbb{V}^k(\bm{\pi}^{-k})$, is the image of the convex polytope $\Theta^k$ under this linear transformation. A fundamental property of convex geometry states that the linear image of a convex polytope is also a convex polytope. Therefore, $\mathbb{V}^k(\bm{\pi}^{-k})$ is a convex polytope.
\end{proof}

\begin{proposition}\label{prop:sup}
Let $\mathbb{V}^k(\bm{\pi}^{-k})$ be the convex polytope of achievable value vectors for agent $k$ against fixed opponent policies $\bm{\pi}^{-k}$. If a value vector $\bm{v}^* \in \mathbb{V}^k(\bm{\pi}^{-k})$ is on the Pareto front of this set, then there exists a preference vector $\bm{\lambda}^k \in \Delta_M^o$ such that $\bm{v}^*$ maximizes the linear scalarization $(\bm{\lambda}^k)^\top \bm{v}$ over all $\bm{v} \in \mathbb{V}^k(\bm{\pi}^{-k})$.
\end{proposition}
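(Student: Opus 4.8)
\textbf{Overall strategy.} The statement is a separating-hyperplane claim about a point on the Pareto front of a convex polytope, with the extra requirement that the supporting normal can be taken strictly positive (i.e.\ in $\Delta_M^o$ rather than merely in $\Delta_M$). The plan is to invoke a supporting hyperplane theorem to get \emph{some} nonzero nonnegative normal, and then to perturb/argue that it can be made strictly positive using the polytope structure together with the precise (strict-on-one-coordinate) definition of Pareto dominance. Throughout I will write $\mathbb{V} := \mathbb{V}^k(\bm\pi^{-k})$ and $\bm v^*$ for the Pareto-front point in question; by \Cref{lemma:convex}, $\mathbb{V}$ is a convex polytope in $\reals^M$, and the goal is $\bm\lambda^k \in \Delta_M^o$ with $(\bm\lambda^k)^\top \bm v^* \ge (\bm\lambda^k)^\top \bm v$ for all $\bm v \in \mathbb{V}$.

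\textbf{Key steps.} First I would pass to the ``upper'' comprehensive hull, or equivalently note that $\bm v^*$ is a maximal element of $\mathbb{V}$ with respect to the coordinatewise order, so that the set $\mathbb{V} - \reals^M_{\ge 0}$ (the downward closure, still convex) does not contain any point strictly above $\bm v^*$; in particular $\bm v^* + \reals^M_{>0}$ is disjoint from this convex set. Second, apply the separating/supporting hyperplane theorem to obtain a nonzero $\bm\lambda \in \reals^M$ with $\bm\lambda^\top \bm v \le \bm\lambda^\top \bm v^*$ for all $\bm v$ in the downward closure; separating against the downward cone forces $\bm\lambda \ge 0$ componentwise (if some $\lambda_i < 0$, moving along $-e_i$ into the downward closure drives the linear functional up without bound, a contradiction). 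Normalize so $\bm\lambda \in \Delta_M$. Third — the crucial upgrade to \emph{strict} positivity — suppose $\lambda_i = 0$ for some coordinate $i$. I would argue this contradicts Pareto-optimality of $\bm v^*$: since $\mathbb{V}$ is a polytope (hence compact and with finitely many vertices), consider maximizing $\bm v \mapsto v_i$ over the face $F = \{\bm v \in \mathbb{V} : \bm\lambda^\top \bm v = \bm\lambda^\top \bm v^*\}$ (a nonempty subpolytope containing $\bm v^*$); any maximizer $\bm v'$ of $v_i$ over $F$ satisfies $\bm\lambda^\top \bm v' = \bm\lambda^\top \bm v^*$ and $v'_i \ge v^*_i$, and if $v'_i > v^*_i$ then, because the remaining mass of $\bm\lambda$ is supported off coordinate $i$ and the $\bm\lambda$-values are equal, one can show $\bm v'$ (or a convex combination of $\bm v'$ and $\bm v^*$) weakly dominates $\bm v^*$ while being strictly larger in coordinate $i$ — contradicting that $\bm v^*$ is on the Pareto front; and if instead $v'_i = v^*_i$, then coordinate $i$ is constant $= v^*_i$ on all of $F$, so one can \emph{replace} $\bm\lambda$ by $\bm\lambda' = (1-\eta)\bm\lambda + \eta e_i$ for small $\eta > 0$: this still supports $\bm v^*$ over $\mathbb{V}$ (because outside $F$ the slack in $\bm\lambda^\top(\bm v^* - \bm v)$ is bounded below by a positive constant over the finitely many vertices not in $F$, absorbing the $O(\eta)$ perturbation, while on $F$ the $e_i$-term is constant), yet has strictly positive $i$-th coordinate. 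Iterating this coordinate-by-coordinate (finitely many coordinates) yields $\bm\lambda^k \in \Delta_M^o$ supporting $\bm v^*$, which is exactly the NE condition we want.

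\textbf{Main obstacle.} The only subtle point is the last step: a plain supporting hyperplane at a Pareto point may legitimately have zero coordinates (e.g.\ when the Pareto face is ``vertical''), so strict positivity is genuinely not automatic and must be extracted from the finiteness of the polytope. The care needed is in the perturbation argument — verifying that bumping $\bm\lambda$ toward the zeroed coordinates by a small amount keeps $\bm v^*$ a maximizer over the \emph{whole} polytope, which relies on the uniform positive slack at the finitely many vertices lying strictly below the supporting hyperplane, and on having first pushed any genuine improvement in those coordinates into a Pareto-domination contradiction. An alternative, cleaner route I would consider is a direct LP-duality / Motzkin transposition argument: Pareto-optimality of $\bm v^*$ means the system ``$\bm v \in \mathbb{V}$, $\bm v \ge \bm v^*$, $\bm v \ne \bm v^*$'' is infeasible, i.e.\ $\{\bm v \in \mathbb{V}: \bm v \ge \bm v^*\} = \{\bm v^*\}$; feeding this into a theorem of the alternative (or Tucker's complementary-slackness strengthening of Farkas) produces the strictly positive multiplier in one shot. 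I would present whichever of the two is shorter in the final writeup, but the polytope-perturbation version is the more elementary and self-contained.
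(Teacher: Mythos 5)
Your separation step and the nonnegativity upgrade are fine, and up to that point your argument is essentially the paper's (the paper separates the open strict-dominance cone $P(\bm{v}^*)$ from $\mathbb{V}^k(\bm{\pi}^{-k})$ and rules out negative components the same way you rule them out via the downward closure). The genuine gap is in Case A of your strict-positivity upgrade. From a point $\bm{v}'$ in the supporting face $F$ with $(\bm\lambda)^\top\bm{v}'=(\bm\lambda)^\top\bm{v}^*$ and $v'_i>v^*_i$ you cannot conclude that $\bm{v}'$, or any convex combination of $\bm{v}'$ with $\bm{v}^*$, weakly dominates $\bm{v}^*$: the coordinates carrying positive $\lambda$-weight can trade off against one another while keeping the scalar product fixed, so $\bm{v}'$ may be strictly worse than $\bm{v}^*$ in some coordinate and no Pareto-domination contradiction arises. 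Concretely, take $M=3$, $\mathbb{V}=\mathrm{conv}\{(0,0,0),(1,1,-1)\}$, $\bm{v}^*=(0,0,0)$ (Pareto optimal) and $\bm\lambda=(0,\tfrac12,\tfrac12)$ (a valid supporting vector, with $F=\mathbb{V}$). The maximizer of $v_1$ over $F$ is $(1,1,-1)$, so you are in Case A, yet nothing dominates $\bm{v}^*$; moreover no perturbation $(1-\eta)\bm\lambda+\eta e_1$ supports $\bm{v}^*$, so the fix cannot be local to this $\bm\lambda$ (a strictly positive supporting vector exists, e.g.\ $(0.2,0.2,0.6)$, but it is not reachable by your bump-the-zero-coordinate iteration). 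So the perturbation route, as written, fails exactly in the case you tried to dispose of by contradiction.

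Your fallback is the right repair and is the standard proof for polyhedra (Isermann / Arrow--Barankin--Blackwell): Pareto optimality of $\bm{v}^*$ says the system $\bm{v}\in\mathbb{V}$, $\bm{v}\ge\bm{v}^*$, $\mathbf{1}^\top(\bm{v}-\bm{v}^*)>0$ is infeasible, i.e.\ the LP $\max\{\mathbf{1}^\top(\bm{v}-\bm{v}^*):\bm{v}\in\mathbb{V},\ \bm{v}\ge\bm{v}^*\}$ has value $0$; LP duality (or Tucker's theorem of the alternative) then yields a multiplier $\bm{u}\ge 0$ on the constraint $\bm{v}\ge\bm{v}^*$ such that $\bm\lambda=\mathbf{1}+\bm{u}>0$ supports $\bm{v}^*$ over all of $\mathbb{V}$, and normalizing puts $\bm\lambda\in\Delta_M^o$. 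Note this is precisely where the polyhedrality from \Cref{lemma:convex} enters, and your insistence on it is well placed: the statement is false for general convex bodies (e.g.\ the point $(1,0)$ on a disk is Pareto optimal but only supported by $(1,0)$), so the paper's own one-line treatment of the zero-component case, which never invokes the polytope structure, is also incomplete — developing your LP-duality sketch would give a proof that is both different from and tighter than the paper's.
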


\begin{proof}
Let $\bm{v}^*$ be a point on the Pareto front of the convex set $\mathbb{V}^k(\bm{\pi}^{-k})$. By definition, no other point $\bm{v} \in \mathbb{V}^k(\bm{\pi}^{-k})$ Pareto dominates $\bm{v}^*$.

Consider the set $P(\bm{v}^*) = \{ \bm{v} \in \reals^M \mid v_i > v_i^* \text{ for all } i \in \{1, \dots, M\} \}$. This is the convex set of points that strictly Pareto dominate $\bm{v}^*$. The Pareto optimality of $\bm{v}^*$ implies that the interior of $\mathbb{V}^k(\bm{\pi}^{-k})$ and the set $P(\bm{v}^*)$ are disjoint. By the Separating Hyperplane Theorem, there exists a non-zero vector $\bm{\lambda}^k \in \reals^M$ and a scalar $c$ that define a hyperplane separating them, such that $(\bm{\lambda}^k)^\top \bm{v} \le c$ for all $\bm{v} \in \mathbb{V}^k(\bm{\pi}^{-k})$ and $(\bm{\lambda}^k)^\top \bm{u} \ge c$ for all $\bm{u}$ in the closure of $P(\bm{v}^*)$. Since $\bm{v}^*$ is on the boundary of both sets, we have $(\bm{\lambda}^k)^\top \bm{v}^* = c$. This shows the hyperplane supports the set $\mathbb{V}^k(\bm{\pi}^{-k})$ at point $\bm{v}^*$.

Now, we show that the components of $\bm{\lambda}^k$ are all strictly positive. First, all components must be non-negative ($\lambda_i^k \ge 0$). If any component $\lambda_j^k$ were negative, we could make the $j$-th component of a point $\bm{u} \in P(\bm{v}^*)$ arbitrarily large, causing $(\bm{\lambda}^k)^\top \bm{u} \to -\infty$, which violates the separating condition.
Furthermore, all components must be strictly positive ($\lambda_i^k > 0$). Assume for contradiction that some component $\lambda_j^k = 0$. Because $\bm{v}^*$ is on the Pareto front of a convex body, it cannot be dominated. However, if $\lambda_j^k = 0$, the supporting hyperplane $(\bm{\lambda}^k)^\top \bm{v} = c$ is parallel to the $v_j$ axis. This would allow for another point $\bm{v}' \in \mathbb{V}^k(\bm{\pi}^{-k})$ to exist with $v'_j > v^*_j$ and $v'_i \ge v^*_i$ for $i \neq j$ while still satisfying the hyperplane condition, contradicting that $\bm{v}^*$ is on the Pareto front. Therefore, all components of $\bm{\lambda}^k$ must be strictly positive.

Since all $\lambda_i^k > 0$, we can normalize the vector by dividing by its L1-norm to ensure its components sum to 1, placing it in the relative interior of the simplex, $\Delta_M^o$. The supporting hyperplane condition $(\bm{\lambda}^k)^\top \bm{v}^* \ge (\bm{\lambda}^k)^\top \bm{v}$ for all $\bm{v} \in \mathbb{V}^k(\bm{\pi}^{-k})$ is precisely the statement that $\bm{v}^*$ maximizes the linear scalarization.
\end{proof}





\begin{proposition}
A policy profile $\pi$ is a Pareto-Nash Equilibrium if and only if its Multi-Agent Strict PNG is zero.
$$\text{PNG}(\pi) = 0 \Leftarrow \pi \text{ is a PNE}$$
\end{proposition}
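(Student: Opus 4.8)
The plan is to prove only the implication indicated by the displayed arrow, namely that if $\bpi$ is a \PNE\ then $\text{PNG}(\bpi)=0$; the reverse implication is false, consistent with the remark following \Cref{thm:PNE=NE}. The argument splits into a lower bound $\text{PNG}(\bpi)\ge 0$ valid for \emph{every} policy profile, and an upper bound $\text{PNG}(\bpi)\le 0$ that invokes the equilibrium property.

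For the lower bound, note that the supremum over deviations $\pi'_k$ in the definition of $\text{PNG}$ includes the trivial choice $\pi'_k=\pi_k$. For that choice the value difference $\bV_{k,1}^{(\pi_k,\bpi_{-k})}(s_1)-\bV_{k,1}^{\bpi}(s_1)$ is the zero vector, so the inner $\inf_{\blambda\in\Delta_M^o}$ is $0$; hence $\sup_{\pi'_k}\inf_{\blambda}(\cdot)\ge 0$ for each $k$, and therefore $\text{PNG}(\bpi)\ge 0$. For the upper bound, fix an agent $k$ and an arbitrary deviation $\pi'_k$, and set $\bm{\Delta}:=\bV_{k,1}^{(\pi'_k,\bpi_{-k})}(s_1)-\bV_{k,1}^{\bpi}(s_1)\in\reals^M$. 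Since $\bpi$ is a \PNE, $\bV_{k,1}^{(\pi'_k,\bpi_{-k})}(s_1)$ does not Pareto dominate $\bV_{k,1}^{\bpi}(s_1)$; negating the definition of Pareto dominance, this says it is \emph{not} the case that $\Delta_i\ge 0$ for all $i\in[M]$ with $\Delta_j>0$ for some $j$, which is equivalent to the existence of a coordinate $i_0\in[M]$ with $\Delta_{i_0}\le 0$. The key observation is then that the inner infimum equals $\min_{i\in[M]}\Delta_i$: the map $\blambda\mapsto\blambda^\top\bm{\Delta}$ is linear and continuous, the open simplex $\Delta_M^o$ is dense in the compact simplex $\Delta_M$, so the infimum over $\Delta_M^o$ equals the infimum over $\Delta_M$, and a linear function on a simplex attains its minimum at a vertex $e_i$, where its value is $\Delta_i$. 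Hence $\inf_{\blambda\in\Delta_M^o}\blambda^\top\bm{\Delta}=\min_i\Delta_i\le\Delta_{i_0}\le 0$. As this holds for every deviation $\pi'_k$, we get $\sup_{\pi'_k}\inf_{\blambda}(\cdot)\le 0$ for each $k$, and taking $\max_k$ gives $\text{PNG}(\bpi)\le 0$. Combining the two bounds yields $\text{PNG}(\bpi)=0$.

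There is no genuine obstacle in this argument; the one point that merits explicit care is the passage from the open simplex $\Delta_M^o$ to its closure $\Delta_M$ when evaluating the infimum. This is exactly the gap responsible for the failure of the converse — the infimum is approached at the boundary vertices of $\Delta_M$ rather than attained at an interior point of $\Delta_M^o$ — so I would isolate the density/vertex fact as a one-line observation, after which the proof is immediate.
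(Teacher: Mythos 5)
Your proof is correct and takes essentially the same route as the paper: the lower bound via the trivial deviation $\pi'_k=\pi_k$, and the upper bound by noting that non-dominance forces some coordinate of the difference vector to be $\le 0$ and then pushing the preference weight toward that coordinate (the paper simply exhibits some $\blambda\in\Delta_M^o$ with $\blambda^\top\bm{\Delta}\le 0$, while you make this precise by computing $\inf_{\blambda\in\Delta_M^o}\blambda^\top\bm{\Delta}=\min_i\Delta_i$ via density of $\Delta_M^o$ in $\Delta_M$). One small wording fix: non-dominance is not \emph{equivalent} to the existence of a coordinate with $\Delta_{i_0}\le 0$ (e.g.\ $\bm{\Delta}=(0,1)$ has such a coordinate yet Pareto dominates the zero vector); only the forward implication holds, which is all your argument actually uses.
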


\begin{proof}
The proof consists of two parts, showing both directions of the equivalence.

\textbf{Part 1: If $\pi$ is a PNE, then $\text{PNG}(\pi) = 0$ $(\Leftarrow)$. }

By the definition of a PNE, for any agent $k$ and any of their unilateral deviation policies $\pi'_k$, the resulting value vector $V_{k,1}^{(\pi'_k, \pi_{-k})}(s_1)$ does {not} Pareto dominate $V_{k,1}^{\pi}(s_1)$. We then denote the difference vector for a deviation by $d = V_{k,1}^{(\pi'_k, \pi_{-k})}(s_1) - V_{k,1}^{\pi}(s_1)$. The no-dominance condition means that it is \textit{not} the case that ($d_i \ge 0$ for all objectives $i$ AND $d_j > 0$ for at least one objective $j$). This leaves two possibilities for the vector $d$:
    \begin{itemize}
        \item[(a)] There is at least one component $j$ for which $d_j < 0$.
        \item[(b)] All components are zero, $d = 0$. This occurs for the trivial deviation where $\pi'_k = \pi_k$.
    \end{itemize}

We now evaluate $\inf_{\lambda^k \in \Delta_M^o} (\lambda^k)^\top d$ for both cases.
    \begin{itemize}
        \item In Case (a), since there is a negative component $d_j < 0$ and $d$ is bounded (by $H$). Thus there exists $\lambda\in\Delta_M^o$ so that $\lambda d\leq 0$;
        
        \item In Case (b), the dot product is clearly zero.
    \end{itemize}
    In both possible cases, the infimum is non-positive.

  Since for any possible deviation $\pi'_k$, the inner infimum term is non-positive, the supremum over all such deviations must also be non-positive:
   $$\sup_{\pi'_k} \inf_{\lambda^k \in \Delta_M^o} (\lambda^k)^\top (V_{k,1}^{(\pi'_k, \pi_{-k})}(s_1) - V_{k,1}^{\pi}(s_1)) \le 0.$$

However, the $\sup=0$ can be attained at $\pi_k'=\pi_k$, hence it completes the first part.


\textbf{Part 2: If $\text{PNG}(\pi) = 0$, then $\pi$ is not necessarily a PNE $(\nRightarrow)$.}
  
  We assume that $\text{PNG}(\pi) = 0$, which implies that for every agent $k$:
   $$\sup_{\pi'_k} \inf_{\lambda^k \in \Delta_M^o} (\lambda^k)^\top (V_{k,1}^{(\pi'_k, \pi_{-k})}(s_1) - V_{k,1}^{\pi}(s_1)) \le 0.$$
    This means that for any possible unilateral deviation $\pi'_k$, the inner infimum must be non-positive:
   $$\inf_{\lambda^k \in \Delta_M^o} (\lambda^k)^\top (V_{k,1}^{(\pi'_k, \pi_{-k})}(s_1) - V_{k,1}^{\pi}(s_1)) \le 0.$$

 We additional assume, for the sake of contradiction, that $\pi$ is \textbf{not} a PNE. Then by definition, there must exist at least one agent $k$ and a deviating policy $\pi'_k$ such that $V_{k,1}^{(\pi'_k, \pi_{-k})}(s_1)$ Pareto dominates $V_{k,1}^{\pi}(s_1)$. Let $d = V_{k,1}^{(\pi'_k, \pi_{-k})}(s_1) - V_{k,1}^{\pi}(s_1)$. By the definition of Pareto dominance, we have $d_i \ge 0$ for all $i$ and $d_j > 0$ for at least one $j$.

 However, since $\Delta_M^o$ is an open set, the $\inf$ over it can be $0$, hence the proof cannot be established. 
\end{proof}

\subsection{WPNE}

\begin{theorem} 
The set of all Weak Pareto-Nash Equilibria in a Multi-Objective Markov Game $G$ is equivalent to the union of all Nash Equilibria of linear scalarizations with non-negative preferences:
$$ \textbf{WPNE}(G) = \bigcup_{\bm{\Lambda}\in (\Delta_M)^N} \textbf{NE}(G_{\bm\Lambda}).$$
\end{theorem}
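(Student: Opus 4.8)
The statement is a two-way set equality, so I would prove the two inclusions $\textbf{WPNE}(G) \subseteq \bigcup_{\bm\Lambda} \textbf{NE}(G_{\bm\Lambda})$ and $\bigcup_{\bm\Lambda} \textbf{NE}(G_{\bm\Lambda}) \subseteq \textbf{WPNE}(G)$ separately. The easy direction is the second one, and it mirrors part (1) of the PNE--linear-scalarization theorem almost verbatim, except the preference vectors now range over the (closed) simplex $\Delta_M$ rather than its relative interior $\Delta_M^o$. So first I would fix $\bm\Lambda \in (\Delta_M)^N$ and a NE $\bm\pi^*$ of $G_{\bm\Lambda}$, argue by contradiction: if $\bm\pi^*$ were not a WPNE, some agent $j$ has a deviation $\hat\pi^j$ with $\bm V_{j,1}^{(\hat\pi^j,\bm\pi^{-j,*})}(s_1)$ \emph{strictly} dominating $\bm V_{j,1}^{\bm\pi^*}(s_1)$, i.e.\ every component strictly increases. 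Since $\bm\lambda^j \in \Delta_M$ is a probability vector (nonnegative, summing to one, hence not identically zero), taking the inner product preserves the strict inequality: $(\bm\lambda^j)^\top \bm V_{j,1}^{(\hat\pi^j,\bm\pi^{-j,*})}(s_1) > (\bm\lambda^j)^\top \bm V_{j,1}^{\bm\pi^*}(s_1)$, contradicting the NE condition. Note this is exactly where strict (rather than weak) Pareto dominance is needed: with a merely weakly-positive $\bm\lambda^j$, a non-strict componentwise improvement could be ``invisible'' to the scalarization, but a strict one cannot.

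For the forward inclusion, let $\bm\pi^*$ be a WPNE. Fix an agent $k$ and, as in Lemma~\ref{lemma:convex}, consider the convex polytope $\mathbb V^k(\bm\pi^{-k,*}) \subseteq \reals^M$ of value vectors achievable by agent $k$ against the fixed opponents, and let $\bm v^* = \bm V_{k,1}^{\bm\pi^*}(s_1) \in \mathbb V^k(\bm\pi^{-k,*})$. The WPNE condition says no point of $\mathbb V^k(\bm\pi^{-k,*})$ strictly dominates $\bm v^*$, i.e.\ the open cone $\{\bm v : v_i > v_i^* \ \forall i\}$ is disjoint from $\mathbb V^k(\bm\pi^{-k,*})$. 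By the Separating Hyperplane Theorem there is a nonzero $\bm\lambda^k \in \reals^M$ with $(\bm\lambda^k)^\top \bm v \le (\bm\lambda^k)^\top \bm v^*$ for all $\bm v \in \mathbb V^k(\bm\pi^{-k,*})$ and $(\bm\lambda^k)^\top \bm u \ge (\bm\lambda^k)^\top \bm v^*$ on the closure of the dominating cone. Pushing a point of the cone to $+\infty$ in any coordinate forces each component $\lambda_i^k \ge 0$; and since $\bm\lambda^k \ne 0$ we may normalize its $\ell_1$-norm to obtain $\bm\lambda^k \in \Delta_M$. The first separating inequality is precisely the statement that $\pi^{k,*}$ maximizes the $\bm\lambda^k$-scalarized value of agent $k$ against $\bm\pi^{-k,*}$, i.e.\ the NE condition for agent $k$ in $G_{\bm\Lambda}$. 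Doing this for every $k$ assembles $\bm\Lambda = \{\bm\lambda^k\}_{k\in\mathcal N} \in (\Delta_M)^N$ with $\bm\pi^* \in \textbf{NE}(G_{\bm\Lambda})$.

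The main obstacle, and the place where this proof genuinely differs from the PNE case, is the weak-dominance / boundary subtlety: here we only get $\bm\lambda^k \in \Delta_M$ rather than $\Delta_M^o$, so some coordinates of $\bm\lambda^k$ may vanish. In the PNE proof (Proposition~\ref{prop:sup}) one rules this out using that $\bm v^*$ lies on the Pareto front of a convex body; but a WPNE value vector need only be \emph{weakly} Pareto optimal, so a supporting hyperplane parallel to some axis is genuinely possible and must be allowed. Consequently I would \emph{not} attempt to strengthen $\bm\lambda^k$ to have full support, and I would double-check that the separating-hyperplane argument only requires disjointness from the \emph{open} dominating cone (which weak Pareto optimality gives) — this is exactly what makes the closed simplex $\Delta_M$, rather than its interior, the right index set, and it is the structural reason the remark after Theorem~\ref{thm:PNE=NE} does not cause trouble for WPNE: with $\bm\lambda^k$ allowed on the boundary, the analogue of WPNG achieves its infimum, so the gap characterization in part (b) becomes an exact ``iff.''
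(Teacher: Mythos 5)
Your proposal is correct and follows essentially the same route as the paper: the scalarized-NE-to-WPNE direction via the same contradiction with strict dominance and a nonzero nonnegative $\bm\lambda^k$, and the WPNE-to-scalarized-NE direction via convexity of the achievable value set $\mathbb{V}^k(\bm\pi^{-k,*})$ and a supporting hyperplane with nonnegative normal at the weakly Pareto optimal point. The only cosmetic difference is that you prove the supporting-hyperplane step directly by separating $\mathbb{V}^k$ from the open strict-dominance cone, whereas the paper cites Prop.~4.2 of \cite{qiu2024traversing} for it; your observation that one must not (and cannot) upgrade $\bm\lambda^k$ to $\Delta_M^o$ is exactly the right distinction from the PNE case.
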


\begin{proof}
The proof requires showing inclusion in both directions.


We first show that any NE of a non-negatively scalarized game is a WPNE:  $\bigcup_{\bm{\Lambda}\in (\Delta_M)^N} \textbf{NE}(G_{\bm\Lambda}) \subseteq \textbf{WPNE}(G)$.

 Let $\pi^*$ be a Nash Equilibrium of a scalarized game $G_\Lambda$ for some preference profile $\Lambda = \{\lambda^1, \dots, \lambda^N\}$ where each $\lambda^k \in \Delta_M$.  By the definition of a Nash Equilibrium, for any agent $k$ and any unilateral deviation $\pi'_k$, agent $k$ cannot improve their scalarized payoff. That is:
   $$ (\lambda^k)^\top V_{k,1}^{\pi^*}(s_1) \ge (\lambda^k)^\top V_{k,1}^{(\pi'_k, \pi_{-k}^*)}(s_1).$$

We then assume that $\pi^*$ is \textbf{not} a WPNE.  By Definition 2, this means there must exist at least one agent $k$ and a deviation policy $\pi'_k$ that achieves a \textit{strictly dominating} value vector. This implies:
   $$ V_{k,i}^{(\pi'_k, \pi_{-k}^*)}(s_1) > V_{k,i}^{\pi^*}(s_1) \quad \text{for all objectives } i \in [M].$$

 Let $d = V_{k,1}^{(\pi'_k, \pi_{-k}^*)} - V_{k,1}^{\pi^*}$. From the step above, this vector $d$ has all components strictly positive ($d_i > 0$ for all $i$). Since the preference vector $\lambda^k \in \Delta_M$ is non-zero (it sums to 1) and all its components are non-negative ($\lambda^k_i \ge 0$). This means at least one component $\lambda^k_j$ must be strictly positive.  Consider the dot product $(\lambda^k)^\top d = \sum_{i=1}^M \lambda^k_i d_i$. Since all $d_i > 0$ and all $\lambda^k_i \ge 0$, and at least one $\lambda^k_j > 0$, the sum must be strictly positive:
   $$ (\lambda^k)^\top d = \sum_{i=1}^M \underbrace{\lambda^k_i}_{\ge 0} \underbrace{d_i}_{> 0} > 0.$$
    (The sum is strictly positive because the term $\lambda^k_j d_j$ is strictly positive, and all other terms are non-negative).

  This implies:
   $$ (\lambda^k)^\top (V_{k,1}^{(\pi'_k, \pi_{-k}^*)} - V_{k,1}^{\pi^*}) > 0 \implies (\lambda^k)^\top V_{k,1}^{(\pi'_k, \pi_{-k}^*)} > (\lambda^k)^\top V_{k,1}^{\pi^*},$$ which contradicts the NE condition.

We then show the other direction, that any WPNE is an NE for at least one non-negative scalarization: $\textbf{WPNE}(G) \subseteq \bigcup_{\bm{\Lambda}\in (\Delta_M)^N} \textbf{NE}(G_{\bm\Lambda})$. 

Consider $\pi^*$ to be a Weak Pareto-Nash Equilibrium, and we aim to construct a preference profile $\Lambda \in (\Delta_M)^N$ such that $\pi^*$ is a Nash Equilibrium for the scalarized game $G_\Lambda$.

We first consider any agent $k$ and fix the policies of all other agents to $\pi_{-k}^*$. Let $\mathbb{V}^k(\pi_{-k}^*) = \{V_{k,1}^{(\pi'_k, \pi_{-k}^*)}(s_1) \mid \pi'_k \in \Pi_k\}$ be the set of all value vectors agent $k$ can achieve by unilaterally deviating. As shown in \Cref{lemma:convex}, this set is a convex polytope. 
    
We then note that the fact that $\pi^*$ is a WPNE means that the value vector $v^* = V_{k,1}^{\pi^*}(s_1)$ is a {weakly Pareto optimal point} of this convex set $\mathbb{V}^k(\pi_{-k}^*)$, as  no other point $v' \in \mathbb{V}^k(\pi_{-k}^*)$ (achieved by some deviation $\pi'_k$) exists such that $v' > v^*$ (strictly dominates). Then from Prop. 4.2 in \cite{qiu2024traversing}, any weakly Pareto optimal point of  $\mathbb{V}^k$  can be supported by a non-trivial, non-negative hyperplane.  This implies that there exists a non-zero vector $\lambda^k \ge 0$ (which can be normalized such that $\lambda^k \in \Delta_M$) that defines a supporting hyperplane at $v^*$, such that:
   $$ (\lambda^k)^\top v^* \ge (\lambda^k)^\top v \quad \text{for all } v \in \mathbb{V}^k(\pi_{-k}^*).$$

Substituting the definitions of $v^*$ and $v$ back into this inequality, we get:
   $$ (\lambda^k)^\top V_{k,1}^{\pi^*}(s_1) \ge (\lambda^k)^\top V_{k,1}^{(\pi'_k, \pi_{-k}^*)}(s_1) \quad \text{for all deviations } \pi'_k.$$
  
    This implies that $\pi_k^*$ is a best response to $\pi_{-k}^*$ in the single-objective game scalarized by $\lambda^k$. 
    
  Since we can perform this construction for \textit{every} agent $k \in [N]$, we have found a set of preference vectors $\Lambda = \{\lambda^1, \dots, \lambda^N\}$, with each $\lambda^k \in \Delta_M$, such that $\pi^*$ is a Nash Equilibrium of the scalarized game $G_\Lambda$.  This shows that $\pi^* \in \textbf{NE}(G_{\bm\Lambda})$ for a constructed $\Lambda \in (\Delta_M)^N$. Therefore, $\textbf{WPNE}(G) \subseteq \bigcup_{\bm{\Lambda}\in (\Delta_M)^N} \textbf{NE}(G_{\bm\Lambda})$.
 
Combining Part 1 and Part 2, we proved the equivalence.
\end{proof}

\begin{theorem}
A policy profile $\pi$ is a Weak Pareto-Nash Equilibrium if and only if its Multi-Agent Weak PNG is zero.
$$\text{WPNG}(\pi) = 0 \iff \pi \text{ is a WPNE}.$$
\end{theorem}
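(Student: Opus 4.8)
The plan is to prove the biconditional by establishing each direction separately, relying heavily on the structural result that the achievable value set $\mathbb{V}^k(\pi_{-k})$ is a convex polytope (\Cref{lemma:convex}), together with the supporting-hyperplane characterization of weakly Pareto optimal points already invoked in the preceding theorem (Prop.\ 4.2 of \cite{qiu2024traversing}). The key observation making this work, in contrast to the strict PNG/PNE situation in the earlier proposition, is that the relevant simplex $\Delta_M$ here is \emph{compact}, so the infimum $\inf_{\bm\lambda\in\Delta_M}$ is actually attained, removing the boundary-of-simplex obstruction that broke the PNG equivalence.

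For the direction ``$\pi$ is a WPNE $\implies \text{WPNG}(\pi)=0$'': fix an agent $k$ and an arbitrary deviation $\pi'_k$, and set $d = V_{k,1}^{(\pi'_k,\pi_{-k})}(s_1) - V_{k,1}^{\pi}(s_1)$. Since $\pi$ is a WPNE, $d$ does not strictly dominate $0$, i.e.\ it is \emph{not} the case that $d_i>0$ for all $i$; hence some coordinate satisfies $d_j \le 0$. Putting all mass of $\bm\lambda$ on that coordinate gives $\bm\lambda^\top d = d_j \le 0$, so $\inf_{\bm\lambda\in\Delta_M}\bm\lambda^\top d \le 0$ for every $\pi'_k$, whence $\sup_{\pi'_k}\inf_{\bm\lambda}(\cdot)\le 0$. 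Taking the trivial deviation $\pi'_k=\pi_k$ yields $d=0$ and hence value exactly $0$, so the supremum is attained and equals $0$; maximizing over $k$ preserves this, giving $\text{WPNG}(\pi)=0$.

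For the converse ``$\text{WPNG}(\pi)=0 \implies \pi$ is a WPNE'': argue by contrapositive. Suppose $\pi$ is not a WPNE; then there is an agent $k$ and a deviation $\pi'_k$ with $d = V_{k,1}^{(\pi'_k,\pi_{-k})}(s_1) - V_{k,1}^{\pi}(s_1)$ strictly positive in every coordinate, $d_i>0$ for all $i\in[M]$. For any $\bm\lambda\in\Delta_M$ we have $\bm\lambda^\top d = \sum_i \lambda_i d_i \ge (\min_i d_i)\sum_i \lambda_i = \min_i d_i > 0$, so $\inf_{\bm\lambda\in\Delta_M}\bm\lambda^\top d \ge \min_i d_i > 0$. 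Therefore $\sup_{\pi'_k}\inf_{\bm\lambda}(\cdot) > 0$ for this $k$, and taking the max over agents gives $\text{WPNG}(\pi) > 0$, contradicting the hypothesis. This completes the equivalence.

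I do not expect a genuine obstacle here: unlike the earlier PNG proposition, both directions are elementary once one exploits compactness of $\Delta_M$ and the simple inequality $\bm\lambda^\top d \ge \min_i d_i$ for $\bm\lambda$ in the simplex. The only point requiring a word of care is the ``attainment'' remark in the first direction — one must note that $\sup_{\pi'_k}$ is attained at the trivial deviation so that the gap is exactly $0$ rather than merely $\le 0$ — and, if one wanted to present the forward direction constructively rather than by contrapositive, invoking \Cref{lemma:convex} and the non-negative supporting hyperplane of Prop.\ 4.2 of \cite{qiu2024traversing} at the weakly Pareto optimal point $V_{k,1}^{\pi}(s_1)$ to directly exhibit the minimizing $\bm\lambda$; but the contrapositive route above sidesteps even that.
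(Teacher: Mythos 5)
Your proof is correct and follows essentially the same route as the paper's: in one direction you place all preference mass on a non-positive coordinate of the improvement vector (plus the trivial deviation to pin the supremum at $0$), and in the other you use $\bm\lambda^\top d \ge \min_i d_i > 0$ with compactness of $\Delta_M$ to force a strictly positive gap. The extra remarks about \Cref{lemma:convex} and supporting hyperplanes are unnecessary, as you note, but harmless.
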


\begin{proof}
We prove this equivalence in two parts.

\textbf{Part 1: If $\pi$ is a WPNE, then $\text{WPNG}(\pi) = 0 \quad (\Leftarrow)$.}

 Let $\pi$ be a Weak Pareto-Nash Equilibrium (WPNE), then  for any agent $k$ and any unilateral deviation policy $\pi'_k$, it is \textbf{not} the case that the resulting value vector $V'_{k,1} = V_{k,1}^{(\pi'_k, \pi_{-k})}(s_1)$ strictly dominates the original value $V_{k,1}^\pi(s_1)$. Hence for any deviation $\pi'_k$, the difference vector $d = V'_{k,1} - V_{k,1}^\pi$ must have at least one non-positive component. That is, there must exist at least one objective $j \in [M]$ such that $d_j \le 0$. (If all components were strictly positive, $d_i > 0$ for all $i$, this would be strict dominance, which is ruled out by the WPNE definition).

  We now evaluate the term $\inf_{\lambda \in \Delta_M} \lambda^\top d$. Since the vector $d$ has at least one non-positive component $d_j \le 0$, we can choose a specific vector $\lambda^* \in \Delta_M$ that places all of its weight on that single component (i.e., set $\lambda^*_j = 1$ and $\lambda^*_i = 0$ for all $i \neq j$). This $\lambda^*$ is a valid point in the closed simplex $\Delta_M$.
    
    This specific choice of $\lambda^*$ yields the dot product:
   $$ (\lambda^*)^\top d = 1 \cdot d_j + \sum_{i \neq j} 0 \cdot d_i = d_j \le 0.$$
    Since we found a valid $\lambda \in \Delta_M$ that results in a non-positive dot product, the infimum over all possible $\lambda \in \Delta_M$ must also be non-positive.
   $$ \inf_{\lambda \in \Delta_M} \lambda^\top d \le 0.$$

  This inequality holds for \textit{every} possible deviation $\pi'_k$ for agent $k$. Therefore, the supremum over all deviations is also non-positive:
   $$\sup_{\pi'_k} \left( \inf_{\lambda \in \Delta_M} (\lambda)^\top (V'_{k,1} - V_{k,1}^\pi) \right) \le 0.$$

   This holds for all agents $k$. The WPNG is the maximum of these non-positive values, so $\text{WPNG}(\pi) \le 0$. Since the gap can never be negative (a player can always choose the trivial deviation $\pi'_k = \pi_k$, which yields $d=0$ and a gap of 0), we must have $\text{WPNG}(\pi) = 0$.

\textbf{Part 2: If $\text{WPNG}(\pi) = 0$, then $\pi$ is a WPNE $\quad (\Rightarrow)$.}
 
  We then prove this direction by contradiction.  Assume $\text{WPNG}(\pi) = 0$, but the policy $\pi$ is \textbf{not} a WPNE. If $\pi$ is not a WPNE, then there must exist at least one agent $k$ and a specific deviating policy $\pi'_k$ that achieves a \textit{strictly dominating} value vector. That is:
   $$ \exists k, \pi'_k \quad \text{s.t.} \quad V_{k,i}^{(\pi'_k, \pi_{-k})}(s_1) > V_{k,i}^{\pi}(s_1) \quad \text{for all } i \in [M].$$

 Let $d = V_{k,1}^{(\pi'_k, \pi_{-k})}(s_1) - V_{k,1}^{\pi}(s_1)$. From the step above, this vector $d$ has all components strictly positive: $d_i > 0$ for all $i \in [M]$.

    We now evaluate $\inf_{\lambda \in \Delta_M} \lambda^\top d$ for this specific vector $d$. The term $\lambda^\top d = \sum_i \lambda_i d_i$ is a convex combination of the components of $d$. Since every $d_i$ is strictly positive, any convex combination of them will also be strictly positive. The infimum (which is a minimum, as $\Delta_M$ is a compact set) is achieved by placing all weight on the smallest component of $d$:
   $$ \inf_{\lambda \in \Delta_M} \lambda^\top d = \min_{i \in [M]} \{d_i\}.$$
    Since all $d_i > 0$, their minimum must also be a strictly positive constant. Let this minimum be $\delta = \min_i \{d_i\} > 0$.

     We have found a specific agent $k$ and a specific deviation $\pi'_k$ that yields a gap value of $\delta > 0$. The supremum over *all* deviations for this agent $k$ must be at least this large:
   $$ \sup_{\pi^*_k} \left( \inf_{\lambda \in \Delta_M} (\lambda)^\top (V_{k,1}^{(\pi^*_k, \pi_{-k})} - V_{k,1}^\pi) \right) \ge \delta > 0.$$

      The WPNG is the maximum of this gap value over all agents. Since one agent's gap is strictly positive, the maximum must also be strictly positive:
   $$ \text{WPNG}(\pi)  \ge \delta > 0.$$

It hence completes the proof.
\end{proof}

\subsection{Utility-Based}
\begin{proposition}
    An ESR-NE always exists, and a SER-NE may not exist.  
\end{proposition}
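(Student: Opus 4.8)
<br>

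The plan is to treat the two halves separately, since they are of opposite character: existence of an ESR-NE is an immediate reduction to a known result, while the failure of SER-NE requires an explicit counterexample.

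For the ESR-NE, I would simply observe that the scalarized reward $\bar r_k = u_k(\br_k)$ is a bounded scalar reward function on $\calS \times \calA \times [H]$ (bounded because $u_k$ is continuous and $[0,1]^M$ is compact), so the tuple $(\mathcal{N}, \calS, \{\calA_j\}, H, \Prob, \{\bar r_k\})$ is a standard finite-horizon $N$-player general-sum Markov game. By the classical existence theorem for finite Markov games (Fink/Shapley, already invoked in the proof of \Cref{thm:PNE exists}), this game has a Nash equilibrium in stationary stochastic policies, and by the very definition of ESR-NE that equilibrium is an ESR-NE of the original MOMG. That is the whole argument; note that no quasi-concavity is needed, because the nonlinearity of $u_k$ is absorbed pointwise into the reward before any expectation or trajectory is taken.

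For the SER-NE I would construct a one-shot ($H=1$, single state) two-player MOMG — i.e. a vectorial normal-form game — with two actions per player, $M=2$ objectives, and the common, strictly convex, continuous but \emph{not} quasi-concave utility $u_1 = u_2 = u$ with $u(v) = v_1^2 + v_2^2$ (this choice is consistent with the positive existence result under quasi-concavity). Writing $p,q \in [0,1]$ for the probabilities the two players put on their first actions, I would pick the reward vectors in a ``matching-pennies'' pattern, e.g. $\br_1(1,1)=(1,0)$, $\br_1(1,2)=\br_1(2,1)=(0,0)$, $\br_1(2,2)=(0,1)$ and the analogous rotated assignment for $\br_2$, so that player $1$'s achievable value set for a fixed $q$ is a non-degenerate segment on which $u$ is maximized at its first endpoint iff $q>1/2$, and symmetrically player $2$'s best response to $p$ is its first action iff $p<1/2$. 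The key structural fact is that since $u$ is strictly convex and, by \Cref{lemma:convex}, the achievable value set for a fixed opponent strategy is a convex polytope, every best response is attained at a vertex — hence at a pure action — and uniquely so whenever that polytope is non-degenerate, which the reward choice guarantees for \emph{every} opponent strategy, including the thresholds $p=1/2$ and $q=1/2$. Thus any SER-NE would have to be a pure-strategy mutual best response, and a direct check of the four pure profiles together with the three cases $q^\ast>,<,=1/2$ shows that no such profile exists, so the game has no SER-NE.

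I expect the SER direction to be the main obstacle, and within it the delicate point is not producing a game with no \emph{pure} equilibrium (matching pennies already does that) but certifying that no \emph{mixed} SER-NE survives. This is exactly where the argument could quietly fail: if some achievable-value polytope degenerated to a point, or to a segment on which $u$ is affine, the corresponding player would be indifferent and a mixed equilibrium could reappear, just as ordinary scalarized matching pennies has one at $(1/2,1/2)$. The plan therefore hinges on choosing the reward vectors so that these polytopes remain non-degenerate uniformly in the opponent's strategy, upgrading ``convex utility $\Rightarrow$ pure best response'' to a statement with no exceptional opponent strategies; once that is secured, excluding mixed equilibria reduces to the finite case analysis sketched above.
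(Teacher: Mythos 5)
Your proposal is correct, and on the ESR half it coincides with the paper: the appendix proof of this proposition consists precisely of the observation that applying $u_k$ pointwise to the reward turns the MOMG into a standard finite single-objective Markov game, whose Nash equilibrium exists by the classical Fink/Shapley result (the same one invoked for \Cref{thm:PNE exists}), and that equilibrium is by definition an ESR-NE. Where you genuinely diverge is the SER half: the paper's appendix proof does not actually treat it at all (the non-existence claim is only supported in the main text by pointing to prior work on multi-objective normal-form games), whereas you supply a self-contained counterexample. I checked your construction: with $\br_1(1,1)=(1,0)$, $\br_1(2,2)=(0,1)$, zeros elsewhere, and the rotated assignment for player 2, one gets $\bm V_{1}=(pq,(1-p)(1-q))$ and $\bm V_{2}=((1-p)q,p(1-q))$, so under $u(v)=v_1^2+v_2^2$ each player's payoff is a strictly convex quadratic along the (always non-degenerate) achievable segment; hence no strictly mixed strategy is ever a best response, any SER-NE would have to be pure, and the matching-pennies best-response pattern (player 1 matches, player 2 anti-matches) rules out all four pure profiles. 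Two small remarks: your phrase ``uniquely so'' is not quite accurate at the thresholds $q=1/2$ and $p=1/2$, where both pure actions tie — but this is harmless, since the argument only needs that interior mixtures are strictly suboptimal, which strict convexity on a non-degenerate segment gives; and your utility is indeed not quasi-concave, so the example is consistent with the paper's positive existence result under continuity and quasi-concavity. In short, your route buys a concrete, verifiable witness for the ``may not exist'' clause that the paper itself leaves implicit, at the cost of the extra case analysis you correctly identify as the delicate step.
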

\begin{proof}
    The proof can be similarly obtained by noting that the ESR results in a single-object Markov game, whose NE always exists. 
\end{proof}

\begin{proposition}
    There exists a single-agent multi-objective MDP, whose ESR-NE is not Pareto optimal. 
\end{proposition}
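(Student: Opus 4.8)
The plan is to prove the statement by exhibiting an explicit counterexample: a single-agent ($N=1$) two-objective MDP together with a concave, coordinate-wise increasing utility, for which the unique ESR-optimal policy attains an expected-return vector that is strictly Pareto dominated by that of another policy.

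Concretely, I would take $\calS = \{s_1\}$ (a single state), action set $\calA = \{a_1, a_2, a_3\}$, $M = 2$ objectives, and horizon $H = 2$ so that exactly one reward vector $\br_1(s_1, a)$ is accrued (the transition kernel is immaterial since there is only one state). I set the deterministic reward vectors to $\br(a_1) = (1,0)$, $\br(a_2) = (0,1)$, and $\br(a_3) = (\tfrac{9}{20}, \tfrac{9}{20})$, all of which lie in $[0,1]^2$, and take the utility $u(v_1, v_2) = \sqrt{v_1} + \sqrt{v_2}$, which is concave and strictly increasing in each coordinate.

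The key structural observation is that the ESR objective $\E_{\pi}\!\left[\sum_h u(\br_h(s_h, a_h))\right]$ is linear in the state–action occupancy measure of $\pi$ — here it is simply $\sum_{a} \pi(a \mid s_1)\, u(\br(a))$ — so its maximum over all stationary stochastic policies is attained at a deterministic policy, namely one placing all mass on the action that maximizes $u(\br(a))$. Since $u(\br(a_1)) = u(\br(a_2)) = 1$ while $u(\br(a_3)) = 2\sqrt{9/20} \approx 1.342 > 1$, the unique ESR-optimal policy (i.e., the single-agent ESR-NE) plays $a_3$, and its expected-return vector is $\bV^{\pi}_{1}(s_1) = \br(a_3) = (\tfrac{9}{20}, \tfrac{9}{20})$. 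I would then exhibit the stationary stochastic policy $\pi'$ that plays $a_1$ and $a_2$ each with probability $\tfrac12$: its expected-return vector is $\bV^{\pi'}_{1}(s_1) = (\tfrac12, \tfrac12)$, and $\tfrac12 > \tfrac{9}{20}$ in both coordinates, so $\bV^{\pi'}_{1}(s_1)$ strictly Pareto dominates $\bV^{\pi}_{1}(s_1)$. Hence the ESR-optimal policy is not even weakly Pareto optimal, which proves the claim. I would close with a one-line remark on the mechanism: because ESR applies the nonlinear $u$ to per-step (hence per-realization) rewards, it cannot exploit the convexification that randomizing over whole trajectories affords to the expected-return vector — precisely the freedom that Pareto optimality in $\bV$-space uses — whereas for linear $u$ (where ESR coincides with SER) the pathology disappears.

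There is no deep obstacle here: the work is purely in calibrating the three reward vectors so that simultaneously (i) $a_3$ is the strict maximizer of $u \circ \br$ and (ii) some mixture of $a_1, a_2$ strictly dominates $\br(a_3)$, and in being careful with the horizon/summation convention so that exactly one reward is collected. The only genuinely conceptual step is the linearity-in-occupancy argument forcing the ESR-optimal policy to be deterministic, which is exactly what prevents it from reaching the dominating mixed point.
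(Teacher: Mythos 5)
Your proposal is correct and follows essentially the same route as the paper: an explicit single-state, three-action, two-objective counterexample with a concave utility, where the unique ESR-optimal policy is deterministic (by linearity of the ESR objective in the policy) and its value vector is strictly Pareto dominated by a $50/50$ mixture of the other two actions. The only differences are cosmetic — you use $u(v_1,v_2)=\sqrt{v_1}+\sqrt{v_2}$ with rewards calibrated to lie in $[0,1]^2$ (in fact more faithful to the paper's reward normalization), whereas the paper uses $u=\min\{r_1,r_2\}$ with rewards $(3,3),(1,10),(10,1)$.
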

\begin{proof}
    We construct a courter-example to show this. Consider a  Multi-Objective MDP with a single-state ($S=1$) and single-horizon ($H=1$). The agent must choose one of three actions, $\{a_1, a_2, a_3\}$. The action choice results in a 2-dimensional reward vector ($m=2$):
\begin{itemize}
    \item Action $a_1$ yields reward vector $r(a_1) = (3, 3)$.
    \item Action $a_2$ yields reward vector $r(a_2) = (1, 10)$.
    \item Action $a_3$ yields reward vector $r(a_3) = (10, 1)$.
\end{itemize}
We consider the utility function $u(\bm{r})=\min\{r_1,r_2 \}$, under which the ESR-NE (or the optimal policy) is $a_1$. However, $a_1$ is not Pareto optimal, as the policy $\pi=(0,0.5,0.5)$ Pareto dominates it. Hence the ESR-NE may not be Pareto optimal in this example, which completes the proof.
\end{proof}

\begin{proposition}
    A SER-NE exists when  continuous and quasi-concave utility functions are used.
\end{proposition}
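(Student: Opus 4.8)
The plan is to reduce the existence of an SER-NE to a known fixed-point / equilibrium-existence result for normal-form games by treating each agent's policy space together with its induced scalarized value as a continuous game. First I would fix, for each agent $k$, its policy space $\Pi_k$ to be the set of stationary stochastic policies; since each $\pi_{k,h}(\cdot\mid s)$ lives in the simplex $\Delta(\calA_k)$, $\Pi_k$ is a nonempty, compact, convex subset of a Euclidean space. Given a joint policy $\bpi$, define agent $k$'s payoff as $f_k(\bpi) := u_k(\bm{V}^{\bpi}_{k,1}(s_1))$. An SER-NE is exactly a pure-strategy Nash equilibrium of the normal-form game with strategy sets $\{\Pi_k\}$ and payoffs $\{f_k\}$, so by the Debreu--Glicksberg--Fan theorem it suffices to verify that (i) each $f_k$ is continuous in $\bpi$, and (ii) for each $k$ and each fixed $\bpi_{-k}$, the map $\pi_k \mapsto f_k(\pi_k,\bpi_{-k})$ is quasi-concave on $\Pi_k$.

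For continuity (i): the vector value $\bm{V}^{\bpi}_{k,1}(s_1)$ is a polynomial in the entries of $\bpi$ (finite horizon, finite state/action), hence continuous; composing with the assumed continuous $u_k$ gives continuity of $f_k$. For quasi-concavity (ii) I would invoke \Cref{lemma:convex}: with $\bpi_{-k}$ fixed, the achievable value set $\mathbb{V}^k(\bpi_{-k}) = \{\bm{V}^{(\pi_k,\bpi_{-k})}_{k,1}(s_1) : \pi_k \in \Pi_k\}$ is a convex polytope, and moreover the occupancy-measure parametrization shows the value is an \emph{affine} image of the (convex) occupancy polytope $\Theta_k$. Thus there is a surjective affine map $T:\Theta_k \to \mathbb{V}^k(\bpi_{-k})$. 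Since $u_k$ is quasi-concave on $\reals^M$, the composition $u_k \circ T$ is quasi-concave on $\Theta_k$ (an affine reparametrization preserves quasi-concavity: superlevel sets pull back to convex sets). Finally I would pass from the occupancy polytope back to the policy space: a stationary policy $\pi_k$ corresponds to an occupancy measure via a known (componentwise) bijection, but this map is not affine, so quasi-concavity does not transfer for free. To handle this cleanly, I would instead run the Debreu--Glicksberg--Fan argument \emph{in occupancy-measure coordinates} — i.e., take agent $k$'s strategy set to be $\Theta_k$ (convex, compact) rather than $\Pi_k$ — establish existence of an equilibrium occupancy profile, and then read off the corresponding stationary stochastic policy profile, which is the desired SER-NE.

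The main obstacle I anticipate is exactly this coordinate-change subtlety: the natural convex-geometric argument lives in occupancy space, but the solution concept (SER-NE) is stated in policy space, and the policy-to-occupancy map is nonlinear, so one must be careful that (a) the best-response correspondence is still well-behaved and (b) an equilibrium in occupancy coordinates genuinely induces a policy-space equilibrium — this works because for fixed $\bpi_{-k}$ the sets of achievable values under the two parametrizations coincide, so the unilateral-deviation conditions are literally the same set of inequalities. A secondary point to check is that $\Theta_k$, as defined by the flow constraints in \Cref{lemma:convex}, depends on $\bpi_{-k}$ only through the effective transition $P'(\cdot\mid s,a^k)$, which varies continuously with $\bpi_{-k}$; this continuity is what makes the joint best-response correspondence upper hemicontinuous with nonempty convex values, closing the fixed-point argument via Kakutani. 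I would also remark that strict monotonicity of $u_k$ is not needed here — only continuity and quasi-concavity — in contrast to the single-agent consistency results discussed earlier.
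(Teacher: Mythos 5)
Your setup mirrors the paper's: both treat the MOMG as a continuous game on the product of policy spaces with payoffs $f_k(\bpi)=u_k(\bm{V}^{\bpi}_{k,1}(s_1))$, verify compactness/convexity of $\Pi_k$ and continuity of $f_k$, and appeal to a Glicksberg/Kakutani-type theorem. The paper closes the remaining hypothesis directly in policy space, invoking \Cref{lemma:linear} to treat $\pi_k\mapsto \bm{V}^{(\pi_k,\bpi_{-k})}_{k,1}(s_1)$ as affine and hence $f_k$ as quasi-concave in $\pi_k$; you correctly observe that this map is only multilinear across time steps, and you try to restore convexity by passing to occupancy measures via \Cref{lemma:convex}. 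The difficulty is that your replacement argument does not close the fixed point.

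Concretely: once agent $k$'s strategy is its occupancy measure, the feasible set $\Theta_k$ and the effective rewards are determined by the \emph{opponents' policies} $\bpi_{-k}$, so the strategy sets are opponent-dependent and the Debreu--Glicksberg--Fan theorem (fixed strategy sets) no longer applies; you would need an abstract-economy/social-equilibrium theorem, which requires the constraint correspondence to be continuous (lower as well as upper hemicontinuous) and the payoff to be jointly continuous in the occupancy profile. Neither is established, and the second genuinely fails as stated: if the other agents are also represented by occupancy measures, recovering their policies via $\pi_{j,h}(a\mid s)=\theta_{j,h}(s,a)/\sum_{a'}\theta_{j,h}(s,a')$ is undefined or discontinuous wherever the state-visitation marginal vanishes, so "payoff as a function of the occupancy profile" is not well defined there and your claimed nonempty, convex-valued, u.h.c. joint best-response correspondence is not justified. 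If instead you keep the fixed point in policy space to avoid this, then agent $k$'s best-response set is the image of a convex set of occupancy measures under the nonlinear occupancy-to-policy map, and convexity (needed for Kakutani) is exactly what you lose---the very issue you set out to evade. Your remark that the unilateral-deviation inequalities coincide under the two parametrizations shows the two formulations of SER-NE agree, but it does not supply a single domain on which compactness, convex-valued best responses, and the required continuity hold simultaneously; as written, the existence argument is incomplete at its central step.
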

\begin{proof}
A SER-NE of the original MOMG is a standard Nash Equilibrium of an associated scalarized game, $G'$, with strategy spaces  $\{\Pi_k\}_{k \in \mathcal{N}}, \Pi_k= \prod_{t=1}^{H} \prod_{s \in \mathcal{S}} \Delta(\mathcal{A}_k)$ and payoff functions  $\{U_k\}_{k \in \mathcal{N}}, U_k(\bpi) = u_k\left(\bm{V}^{\bpi}_{k,1}(s_1)\right)$
    where $\bpi = (\pi_1, \dots, \pi_N) \in \prod_k \Pi_k$.

We utilize the Glicksberg's theorem \cite{fudenberg1991game,ropke2022nash} to show that a Nash Equilibrium exists in $G'$, which is also a SER-NE. To apply Glicksberg's theorem, we need to verify three conditions as follows. 
 
We first verify that the strategy spaces $\Pi_k$ are non-empty, compact, and convex. First note that the set $\Delta(\mathcal{A}_k)$ (the probability simplex) is a convex set, and the total strategy space $\Pi_k$ is a finite product of these convex sets, hence  also convex; Moreover, as the simplex $\Delta(\mathcal{A}_k)$ is a closed and bounded subset of a finite-dimensional Euclidean space ($\mathbb{R}^{|\mathcal{A}_k|}$), so it is compact by the Heine-Borel theorem. The total strategy space $\Pi_k$ is a finite product of these compact sets. By Tychonoff's theorem, the product of compact sets is compact.

The payoff function is a composition: $U_k(\bpi) = u_k \circ \bm{V}^{\bpi}_{k,1}$. As in \Cref{lemma:linear},  the map $\bpi \mapsto \bm{V}^{\bpi}_{k,1}(s_1)$ is a {multi-linear function} of all policy parameters $\{\pi_{i,t}(a_i|s)\}_{i,t,s,a_i}$, which is also continuous; And since the utility function $u_k: \mathbb{R}^{d_k} \to \mathbb{R}$ is {continuous}, their composition is continuous.

To verify the payoff functions $U_k(\pi_k, \bpi_{-k})$ are quasi-concave in $\pi_k$, we first utilize \Cref{lemma:linear} to show that, for a fixed $\bpi_{-k}$, the map $f(\pi_k) = \bm{V}^{(\pi_k, \bpi_{-k})}_{k,1}(s_1)$ is {linear} (or more precisely, affine) with respect to $\pi_k$.     Moreover, since the utility function $u_k$ is quasi-{concave}, the payoff $U_k(\pi_k, \bpi_{-k}) = u_k(f(\pi_k))$ is the composition of a quasi-{concave} function $u_k$ with a {linear} function $f$, which is also quasi-{concave}.

Hence all conditions are met, and Glicksberg's theorem implies the existence of a NE of $G'$, which is also a SER-NE.
\end{proof}

\begin{lemma}\label{lemma:linear} For a fixed initial state $s_1$ and fixed policies $\bpi_{-k}$ for all other agents, the map from agent $k$'s policy $\pi_k$ to their expected return vector $\bm{V}^{(\pi_k, \bpi_{-k})}_{k,1}(s_1)$ is a \textbf{multi-linear function} of the policy parameters $\{\pi_{k,t}(a_k|s) \mid t \in [H], s \in \mathcal{S}, a_k \in \mathcal{A}_k\}$. Specifically, for any time $t$ and state $s$, the map is \textbf{linear} in the probability vector $\pi_{k,t}(\cdot|s)$.
\end{lemma}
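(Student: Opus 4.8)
\textbf{Proof plan for Lemma~\ref{lemma:linear}.}
The plan is to expand the expected return vector as an explicit sum over trajectories and track the dependence on each policy parameter $\pi_{k,t}(a_k|s)$. Fixing $\bpi_{-k}$, a length-$H$ trajectory $\tau = (s_1,\ba_1,\dots,s_H,\ba_H)$ has probability under $(\pi_k,\bpi_{-k})$ equal to a product of three kinds of factors: the (fixed) transition probabilities $P_h(s_{h+1}|s_h,\ba_h)$, the (fixed) opponent action probabilities $\prod_{j\ne k}\pi_{j,h}(a_{j,h}|s_h)$, and agent $k$'s own action probabilities $\prod_{h=1}^H \pi_{k,h}(a_{k,h}|s_h)$. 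Writing $\bm V^{(\pi_k,\bpi_{-k})}_{k,1}(s_1) = \sum_\tau \Prob_{\bpi_{-k}}(\tau;\pi_k)\big(\sum_{h=1}^H \br_{k,h}(s_h,\ba_h)\big)$, where $\Prob_{\bpi_{-k}}(\tau;\pi_k) = \big(\text{fixed factors}\big)\cdot\prod_{h=1}^H \pi_{k,h}(a_{k,h}|s_h)$, we see that the only place $\pi_k$ enters is through the single product $\prod_{h=1}^H \pi_{k,h}(a_{k,h}|s_h)$.

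The key observation is then purely algebraic: this product is multi-affine (in fact multi-linear, i.e.\ affine and homogeneous of degree one) in the collection of vectors $\{\pi_{k,h}(\cdot|s)\}_{h,s}$, since each parameter $\pi_{k,t}(a_k|s)$ appears with degree at most one in each trajectory term. To make the ``linear in $\pi_{k,t}(\cdot|s)$'' claim precise, I would fix all parameters except the vector $\pi_{k,t}(\cdot|s_0)$ for one chosen pair $(t,s_0)$, and split the trajectory sum into trajectories with $s_t = s_0$ and those with $s_t\ne s_0$. For the latter group the summand does not involve $\pi_{k,t}(\cdot|s_0)$ at all (constant term); for the former, each trajectory contributes a term proportional to exactly one coordinate $\pi_{k,t}(a_{k,t}|s_0)$. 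Collecting, $\bm V^{(\pi_k,\bpi_{-k})}_{k,1}(s_1) = \bm c_{0} + \sum_{a_k\in\mathcal A_k} \pi_{k,t}(a_k|s_0)\,\bm c_{a_k}$ for vectors $\bm c_0,\bm c_{a_k}\in\reals^M$ that do not depend on $\pi_{k,t}(\cdot|s_0)$, which is exactly affine (hence linear in the sense of the lemma, as $\pi_{k,t}(\cdot|s_0)$ ranges over the simplex) in that probability vector.

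An alternative, cleaner route I would consider is to invoke the occupancy-measure representation already established in the proof of \Cref{lemma:convex}: from agent $k$'s viewpoint with $\bpi_{-k}$ fixed, the induced effective single-agent MOMDP has value vector linear in the occupancy measure $\theta^k$, so it suffices to note that $\theta^k$ is a multi-linear function of $\{\pi_{k,h}(\cdot|s)\}_{h,s}$ --- each $\theta^k_h(s,a_k)$ is a sum over length-$h$ partial trajectories of a product of $h$ policy factors, hence again degree at most one in each parameter, and affine in each individual $\pi_{k,t}(\cdot|s)$ by the same split as above. Composing a linear map with a multi-linear map preserves multi-linearity, giving the claim. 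I do not expect a genuine obstacle here; the only mild subtlety is being careful that ``multi-linear'' is meant in the affine/degree-one-per-block sense rather than the strict multilinear-algebra sense, and that the per-block statement is the affine restriction to each simplex factor --- so the bookkeeping of the trajectory expansion is the main (routine) work.
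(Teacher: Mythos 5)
Your proposal is correct, but it takes a genuinely different route from the paper. The paper proves the lemma by backward induction on the Bellman equations: at each step $t$ it shows $\bm{V}_{k,t}^{\bpi}(s)=\sum_{a_k}\pi_{k,t}(a_k|s)\,\overline{\bm{Q}}_k(s,a_k\mid\cdot)$, where the averaged Q-term $\overline{\bm{Q}}_k$ does not depend on $\pi_{k,t}$, and then propagates multi-linearity in the later-stage parameters through the induction. You instead expand $\bm{V}^{(\pi_k,\bpi_{-k})}_{k,1}(s_1)$ directly as a path sum, observe that each trajectory probability contains each block $\pi_{k,t}(\cdot|s)$ with degree at most one, and obtain the per-block affine form by splitting trajectories according to whether $s_t=s_0$; your alternative via the occupancy-measure representation of \Cref{lemma:convex} is also sound, since $\theta^k$ is itself degree-at-most-one in each block and the value is linear in $\theta^k$. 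Both arguments are valid; the paper's induction yields the explicit state-wise coefficient vectors $\overline{\bm{Q}}_k$ (convenient for the downstream quasi-concavity composition argument), while your trajectory expansion is more elementary and makes the degree-one-per-block structure immediately transparent. You correctly flag the only real subtlety — that the per-block statement is affine rather than homogeneous linear, with the constant absorbable on the simplex since $\sum_{a_k}\pi_{k,t}(a_k|s_0)=1$ — which is the same sense in which the paper's statement holds, so there is no gap.
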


\begin{proof}
We proceed by backward induction on the time step $t$, from $t=H$ down to $t=1$.

Utilizing the  standard Bellman equations for a finite-horizon game implies that 
\begin{align}
\bm{V}_{k,t}^{\bpi}(s) &= \sum_{\bm{a} \in \mathcal{A}} \bpi_t(\bm{a}|s) \bm{Q}_{k,t}^{\bpi}(s, \bm{a}) \label{eq:val} \\
\bm{Q}_{k,t}^{\bpi}(s, \bm{a}) &= \bm{r}_{k,t}(s, \bm{a}) + \gamma \sum_{s' \in \mathcal{S}} P_t(s'|s, \bm{a}) \bm{V}_{k,t+1}^{\bpi}(s') \label{eq:qval}
\end{align}
with the terminal condition $\bm{V}_{k,H+1}^{\bpi}(s) = \mathbf{0}$ for all $s$.

We first consider the {base case: $t=H$.}
For the final time step $H$, the Q-value is just the immediate reward, as $\bm{V}_{k,H+1}^{\bpi} = \mathbf{0}$:
$$\bm{Q}_{k,H}^{\bpi}(s, \bm{a}) = \bm{r}_{k,H}(s, \bm{a}).$$
The value function $\bm{V}_{k,H}^{\bpi}(s)$ is:
$$\bm{V}_{k,H}^{\bpi}(s) = \sum_{\bm{a} \in \mathcal{A}} \bpi_H(\bm{a}|s) \bm{r}_{k,H}(s, \bm{a}).$$
We substitute $\bpi_H(\bm{a}|s) = \pi_{k,H}(a_k|s) \cdot \bpi_{-k,H}(\bm{a}_{-k}|s)$, where $\bpi_{-k,H}(\bm{a}_{-k}|s) = \prod_{j \neq k} \pi_{j,H}(a_j|s)$, and we have that 
$$\bm{V}_{k,H}^{\bpi}(s) = \sum_{\bm{a} = (a_k, \bm{a}_{-k})} \pi_{k,H}(a_k|s) \bpi_{-k,H}(\bm{a}_{-k}|s) \bm{r}_{k,H}(s, (a_k, \bm{a}_{-k})).$$
We can group the terms by agent $k$'s action $a_k$:
$$\bm{V}_{k,H}^{\bpi}(s) = \sum_{a_k \in \mathcal{A}_k} \pi_{k,H}(a_k|s) \left[ \sum_{\bm{a}_{-k} \in \mathcal{A}_{-k}} \bpi_{-k,H}(\bm{a}_{-k}|s) \bm{r}_{k,H}(s, (a_k, \bm{a}_{-k})) \right].$$
We denote the bracketed term as $\overline{\bm{r}}_{k,H}(s, a_k | \bpi_{-k,H})$:
$$\overline{\bm{r}}_{k,H}(s, a_k | \bpi_{-k,H}) \triangleq \sum_{\bm{a}_{-k} \in \mathcal{A}_{-k}} \bpi_{-k,H}(\bm{a}_{-k}|s) \bm{r}_{k,H}(s, (a_k, \bm{a}_{-k})).$$
Then since $\bpi_{-k,H}$ is \textbf{fixed}, this term $\overline{\bm{r}}_{k,H}(s, a_k | \bpi_{-k,H})$ is a constant vector for any given $(s, a_k)$.
Substituting this back, we get:
$$\bm{V}_{k,H}^{(\pi_k, \bpi_{-k})}(s) = \sum_{a_k \in \mathcal{A}_k} \pi_{k,H}(a_k|s) \cdot \overline{\bm{r}}_{k,H}(s, a_k | \bpi_{-k,H}).$$
This is a linear combination of the constant vectors $\overline{\bm{r}}_{k,H}$ with coefficients $\pi_{k,H}(a_k|s)$. Therefore, $\bm{V}_{k,H}^{\bpi}(s)$ is a \textbf{linear function} of the policy probabilities $\pi_{k,H}(\cdot|s)$.

Assume that $\bm{V}_{k,t+1}^{(\pi_k, \bpi_{-k})}(s')$ is multi-linear in the policy parameters $\{\pi_{k,\tau}(\cdot|\cdot)\}_{\tau=t+1}^{H}$. 
From  \eqref{eq:qval}, the Q-value at time $t$ is:
$$\bm{Q}_{k,t}^{\bpi}(s, \bm{a}) = \bm{r}_{k,t}(s, \bm{a}) + \gamma \sum_{s' \in \mathcal{S}} P_t(s'|s, \bm{a}) \bm{V}_{k,t+1}^{\bpi}(s').$$
By our assumption, $\bm{V}_{k,t+1}^{\bpi}(s')$ depends on $\pi_k$ \textit{only} through parameters from time $t+1$ to $H$. Critically, $\bm{Q}_{k,t}^{\bpi}(s, \bm{a})$ {does not depend on $\pi_{k,t}$}.

Now we look at the value function $\bm{V}_{k,t}^{\bpi}(s)$ from   \eqref{eq:val}:
$$\bm{V}_{k,t}^{\bpi}(s) = \sum_{\bm{a} \in \mathcal{A}} \bpi_t(\bm{a}|s) \bm{Q}_{k,t}^{\bpi}(s, \bm{a}).$$
We expand $\bpi_t(\bm{a}|s)$ and group by $a_k$, just as in the base case:
$$\bm{V}_{k,t}^{\bpi}(s) = \sum_{a_k \in \mathcal{A}_k} \pi_{k,t}(a_k|s) \left[ \sum_{\bm{a}_{-k} \in \mathcal{A}_{-k}} \bpi_{-k,t}(\bm{a}_{-k}|s) \bm{Q}_{k,t}^{\bpi}(s, (a_k, \bm{a}_{-k})) \right].$$
Let's define the bracketed term as $\overline{\bm{Q}}_k(s, a_k | \pi_k^{>t}, \bpi_{-k})$:
$$\overline{\bm{Q}}_k(s, a_k | \pi_k^{>t}, \bpi_{-k}) \triangleq \sum_{\bm{a}_{-k} \in \mathcal{A}_{-k}} \bpi_{-k,t}(\bm{a}_{-k}|s) \bm{Q}_{k,t}^{\bpi}(s, (a_k, \bm{a}_{-k})).$$
Since $\bpi_{-k,t}$ is fixed, and $\bm{Q}_{k,t}^{\bpi}$ does not depend on $\pi_{k,t}$, this entire term $\overline{\bm{Q}}_k$ \textbf{does not depend on $\pi_{k,t}$}.
Substituting this back, we get:
$$\bm{V}_{k,t}^{(\pi_k, \bpi_{-k})}(s) = \sum_{a_k \in \mathcal{A}_k} \pi_{k,t}(a_k|s) \cdot \overline{\bm{Q}}_k(s, a_k | \pi_k^{>t}, \bpi_{-k}).$$
This shows that $\bm{V}_{k,t}^{\bpi}(s)$ is a \textbf{linear function} of the policy probabilities $\pi_{k,t}(\cdot|s)$.

Furthermore, $\overline{\bm{Q}}_k$ is a linear combination of $\bm{Q}_{k,t}^{\bpi}$ terms. Each $\bm{Q}_{k,t}^{\bpi}$ is, in turn, a linear combination of $\bm{V}_{k,t+1}^{\bpi}$ terms (which are multi-linear in $\{\pi_{k,\tau}\}_{\tau=t+1}^{H}$ by assumption). Since linearity is preserved under addition and scalar multiplication, $\overline{\bm{Q}}_k$ is multi-linear in $\{\pi_{k,\tau}\}_{\tau=t+1}^{H}$.

Because $\bm{V}_{k,t}^{\bpi}(s)$ is a linear combination of $\overline{\bm{Q}}_k$ terms (weighted by $\pi_{k,t}$), the full expression for $\bm{V}_{k,t}^{\bpi}(s)$ is \textbf{multi-linear} in all policy parameters $\{\pi_{k,\tau}(\cdot|\cdot)\}_{\tau=t}^{H}$.
This completes the inductive step.

Therefore, for $t=1$, the map
$$\pi_k \mapsto \bm{V}^{(\pi_k, \bpi_{-k})}_{k,1}(s_1)$$
is a multi-linear function of all policy parameters in $\pi_k$, which completes the proof.
\end{proof}

\section{Proofs for \Cref{sec:nash}}
\begin{algorithm}[!htb]
\begin{algorithmic}[1] 
    \STATE {\bfseries Input:} Preferences $\bm{\Lambda} = \{\bm{\lambda}^k\}_{k \in \mathcal{N}}$, Total episodes $T$, Confidence $\delta$.
    \STATE {\bfseries Initialize:} Counts $N_h(s, \bm{a}) \leftarrow 0$ for all $(h,s,\bm{a})$. Set $U_{H+1}^{k,t}(s) \leftarrow 0$ for all $k,s,t$.

    \FOR{$t=1,\ldots,T$}
        \STATE { --- Backward Induction Planning Phase ---}
        \FOR{$h=H,\ldots,1$}
            \FOR{each state $s \in \mathcal{S}$}
                \STATE Compute bonus terms $\Psi_h^t(s, \cdot)$ and $\Phi_h^t(s, \cdot)$ using counts $N_h^{t-1}(s, \cdot)$.
                \FOR{each agent $k \in \mathcal{N}$ and joint action $\bm{a} \in \mathcal{A}$}
                    \STATE Estimate scalar reward: $\hat{U}_h^{k,t}(s,\bm{a}) \leftarrow (\bm{\lambda}^k)^\top \hat{\bm{r}}_h^{k,t}(s, \bm{a})$.
                    \STATE Compute optimistic Q-value $Q_h^{k,t}(s, \bm{a})$:
                    \STATE $Q_h^{k,t}(s, \bm{a}) \leftarrow \min\left\{H, \hat{U}_h^{k,t}(s,\bm{a}) + \Psi_h^t(s, \bm{a}) + \sum_{s'} \hat{P}_h^t(s'|s,\bm{a}) U_{h+1}^{k,t}(s') + \Phi_h^t(s, \bm{a})\right\}$
                \ENDFOR
                
                \STATE Define a one-shot matrix game at state $s$ with payoff tables $\{Q_h^{k,t}(s, \bm{a})\}_{k, \bm{a}}$.
                \STATE Compute a NE policy $\bm{\pi}_h^t(\cdot|s)$ for this one-shot game.
                
                \STATE Update optimistic value function for each agent $k$: $U_h^{k,t}(s) \leftarrow \mathbb{E}_{\bm{a} \sim \bm{\pi}_h^t(\cdot|s)} \left[ Q_h^{k,t}(s, \bm{a}) \right]$
            \ENDFOR
        \ENDFOR
        
        \STATE {--- Execution and Data Collection Phase ---}
        \STATE Execute joint policy $\bm{\pi}^t = \{\bm{\pi}_h^t\}_{h=1}^H$ for one full episode starting from $s_1$.
        \STATE Observe trajectory $(s_1, \bm{a}_1, \dots, s_H, \bm{a}_H)$.
        \FOR{$h=1,\ldots,H$}
            \STATE $N_h(s_h, \bm{a}_h) \leftarrow N_h(s_h, \bm{a}_h) + 1$.
        \ENDFOR
    \ENDFOR
    
    \STATE {\bfseries Output:}  $\bm{\pi}^T$.
\end{algorithmic}
\caption{ONVI-MG: Optimistic Nash Value Iteration for Multi-Objective Games.}
    \label{alg:onvi-mg}
\end{algorithm}

\begin{lemma}[Concentration and Bonus Validity]\label{lemma1}
Let the bonus terms be $\Psi_h^t(s, \bm{a}) = \sqrt{\frac{c_1 \log(NSAHT/\delta)}{N_h^{t-1}(s, \bm{a}) \lor 1}}$ and $\Phi_h^t(s, \bm{a}) = H \sqrt{\frac{c_2 S \log(NSAHT/\delta)}{N_h^{t-1}(s, \bm{a}) \lor 1}}$ for sufficiently large universal constants $c_1, c_2$. With probability at least $1-\delta$, for all $t, h, s, \bm{a}, k$ and any value function $V:\mathcal{S} \to [0,H]$:
\begin{align}
    \left| (\bm{\lambda}^k)^\top \hat{\bm{r}}_h^{k,t}(s, \bm{a}) - (\bm{\lambda}^k)^\top \bm{r}_h^k(s, \bm{a}) \right| &\le \Psi_h^t(s, \bm{a}), \label{eq:reward_bound} \\
    \left| \sum_{s'} (\hat{P}_h^t(s'|s,\bm{a}) - P_h(s'|s,\bm{a})) V(s') \right| &\le \Phi_h^t(s, \bm{a}). \label{eq:transition_bound}
\end{align}
\end{lemma}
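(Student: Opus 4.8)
The statement is a standard concentration lemma, so the plan is to verify the two bounds separately by a combination of Hoeffding/Bernstein-type concentration and a union bound over all relevant quantities. First I would fix the ``good event'' to be the intersection of the events described by \eqref{eq:reward_bound} and \eqref{eq:transition_bound}, and show each fails with probability at most $\delta/2$ (adjusting $c_1,c_2$ accordingly). Throughout, I condition on the filtration generated by the history up to episode $t-1$, so that $N_h^{t-1}(s,\bm{a})$ is measurable and the samples of reward/transition collected at $(h,s,\bm{a})$ are i.i.d.\ draws from the true model.

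For the reward bound \eqref{eq:reward_bound}: the scalarized reward $(\bm{\lambda}^k)^\top \bm{r}_h^k(s,\bm{a})$ is a convex combination (since $\bm{\lambda}^k \in \Delta_M$) of the coordinates of $\bm{r}_h^k(s,\bm{a})$, each lying in $[0,1]$, hence the scalarized per-episode reward realization lies in $[0,1]$ as well, and its empirical mean over $n = N_h^{t-1}(s,\bm{a})$ samples is an unbiased estimator of $(\bm{\lambda}^k)^\top \bm{r}_h^k(s,\bm{a})$. Applying Hoeffding's inequality gives a deviation of order $\sqrt{\log(1/\delta')/n}$ with probability $1-\delta'$; choosing $\delta' = \delta / (2 N S A H T)$ and $c_1$ large enough absorbs the log factor, and a union bound over all $(k,h,s,\bm{a},t)$ — at most $N S A H T$ tuples — yields the claim uniformly. (I would note that it suffices to handle the at-least-one-sample case, since when $N_h^{t-1}(s,\bm{a}) = 0$ the estimator is defined to be, say, optimistic or zero, and the $\lor 1$ in the denominator together with the fact that $|\hat{\bm{r}} - \bm{r}| \le 1$ makes the bound trivially hold.)

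For the transition bound \eqref{eq:transition_bound}: the subtlety is the uniformity over all value functions $V:\mathcal{S}\to[0,H]$, an uncountable family, so I cannot union-bound over $V$ directly. The standard device is to first control $\|\hat P_h^t(\cdot|s,\bm{a}) - P_h(\cdot|s,\bm{a})\|_1$ via an $L_1$ concentration inequality for empirical distributions on $S$ outcomes (Weissman's inequality, or a covering/Bretagnolle–Huber-type argument): with $n$ samples, $\|\hat P - P\|_1 \lesssim \sqrt{S\log(1/\delta')/n}$ with probability $1-\delta'$. Then by Hölder, $|\sum_{s'}(\hat P - P)(s')V(s')| \le \|\hat P - P\|_1 \cdot \|V\|_\infty \le H\sqrt{S\log(1/\delta')/n}$, which matches the form of $\Phi_h^t$ after setting $\delta' = \delta/(2NSAHT)$ and tuning $c_2$. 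A union bound over $(h,s,\bm{a},t)$ finishes it.

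The main obstacle is the uniformity over $V$ in the transition bound; the $L_1$-norm/Hölder route circumvents it cleanly, at the cost of the extra $\sqrt{S}$ factor which is exactly why $\Phi_h^t$ carries $\sqrt{S}$ (as opposed to the Bernstein-type $\sqrt{\mathrm{Var}}$-based bonuses that would require a genuine covering argument over value functions and give sharper rates). I would remark that a tighter analysis via an $\epsilon$-net over $[0,H]^S$ is possible but unnecessary here since Theorem~\ref{thm:NASH-reg} does not claim the variance-optimal rate. Finally, I would assemble the two pieces: on the intersection of the two good events, which holds with probability at least $1-\delta$, both displayed inequalities hold simultaneously for all $t,h,s,\bm{a},k$ and all admissible $V$, completing the proof.
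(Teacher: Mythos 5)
Your proposal is correct and follows essentially the same route as the paper's proof: Hölder's inequality to reduce the uniformity over $V$ to an $L_1$ bound on $\hat{P}_h^t(\cdot|s,\bm{a}) - P_h(\cdot|s,\bm{a})$, an $L_1$ concentration inequality for the empirical transition distribution, and a union bound over the $NTHSA$ contexts with $\delta' = \delta/(NTHSA)$. If anything, you are more complete than the paper, which only works out the transition bound in detail and leaves the (routine) Hoeffding argument for the scalarized reward bound implicit.
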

\begin{proof}

Let $x = \hat{P}_h^t(\cdot|s,\bm{a}) - P_h(\cdot|s,\bm{a})$ and $y = V(\cdot)$. Applying the Hölder's inequality (for vectors $x, y$, $|\langle x, y \rangle| \le \|x\|_1 \|y\|_\infty$) gives:
\begin{equation} \label{eq:holder}
    \left| \sum_{s' \in \mathcal{S}} (\hat{P}_h^t(s'|s,\bm{a}) - P_h(s'|s,\bm{a})) V(s') \right| \le \|\hat{P}_h^t(\cdot|s,\bm{a}) - P_h(\cdot|s,\bm{a})\|_1 \cdot \|V\|_{\infty}.
\end{equation}
Apply Hoeffding's inequality, we then have that with probability at least $1-\delta'$, for a fixed $(h,s,\bm{a})$, we have $\|\hat{P}_h(\cdot|s,\bm{a}) - P_h(\cdot|s,\bm{a})\|_1 \le \sqrt{\frac{2S\log(2/\delta')}{N_h(s,\bm{a})\lor 1}}$. Taking a union bound over $N \cdot T \cdot H \cdot S \cdot A$ contexts, and setting  $\delta' = \frac{\delta}{N T H S A}$, imply that with probability at least $1-\delta$, the inequality holds for all contexts simultaneously: 
\begin{align*}
    \log(2/\delta') &= \log\left(\frac{2 N T H S A}{\delta}\right), \\
    \implies \|\hat{P}_h^t(\cdot|s,\bm{a}) - P_h(\cdot|s,\bm{a})\|_1 &\le \sqrt{\frac{2S\log(2NT H S A/\delta)}{N_h^{t-1}(s,\bm{a})\lor 1}}.
\end{align*}

Now we substitute the bounds for both terms back into Equation \eqref{eq:holder}:
\begin{align*}
   \left| \sum_{s' \in \mathcal{S}} (\hat{P}_h^t(s'|s,\bm{a}) - P_h(s'|s,\bm{a})) V(s') \right|  &\le \underbrace{\left( \sqrt{\frac{2S\log(2NT H S A/\delta)}{N_h^{t-1}(s,\bm{a})\lor 1}} \right)}_{\text{Bound on } \|\hat{P}-P\|_1} \cdot \underbrace{H}_{\text{Bound on } \|V\|_\infty} \\
    &= \sqrt{\frac{2H^2S\log(2NT H S A/\delta)}{N_h^{t-1}(s,\bm{a})\lor 1}}.
\end{align*}
This final expression is precisely the definition of our transition bonus $\Phi_h^t(s,\bm{a})$ (omitting universal constants for clarity). Therefore, we have shown that with high probability, the error in the expected next-state value is contained by the bonus term.
\end{proof}

\begin{lemma}[Optimism of Value Functions]
Assuming the high-probability event in Lemma 1 holds, then for all $t, k, h, s$ and any joint policy $\bm{\pi}$, the true value is bounded by the optimistic value:
$
    U_h^{k, \bm{\pi}}(s) \le U_h^{k,t}(s).
$
\end{lemma}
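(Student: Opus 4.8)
The plan is to prove the claim by downward induction on the step index $h$, running from $h=H+1$ to $h=1$ and peeling off one scalarized Bellman layer at a time, working throughout on the high-probability event of \Cref{lemma1}. (Here I read ``any joint policy $\bm{\pi}$'' in the only way that can be true: $\bm{\pi}$ is an arbitrary unilateral deviation $(\pi_k,\bm{\pi}_{-k}^t)$ of agent $k$ against the \emph{algorithm's} opponents' policy $\bm{\pi}_{-k}^t$; taking the best such deviation gives $U_h^{k,t}(s)\ge\max_{\pi_k}U_h^{k,(\pi_k,\bm{\pi}_{-k}^t)}(s)$, and $\pi_k=\bm{\pi}_k^t$ recovers $U_h^{k,\bm{\pi}^t}(s)\le U_h^{k,t}(s)$, which is what \Cref{thm:NASH-reg} needs.) The base case $h=H+1$ is immediate, since $U_{H+1}^{k,t}(s)=0=U_{H+1}^{k,\bm{\pi}}(s)$ by the terminal conventions.

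For the inductive step, fix $t,k,h,s$ and a deviation $\bm{\pi}=(\pi_k,\bm{\pi}_{-k}^t)$, and write $Q_h^{k,\bm{\pi}}(s,\bm{a})=(\bm{\lambda}^k)^\top\bm{r}_h^k(s,\bm{a})+\sum_{s'}P_h(s'|s,\bm{a})\,U_{h+1}^{k,\bm{\pi}}(s')$, so that $U_h^{k,\bm{\pi}}(s)=\mathbb{E}_{\bm{a}\sim\bm{\pi}_h(\cdot|s)}[Q_h^{k,\bm{\pi}}(s,\bm{a})]$. I would bound $Q_h^{k,\bm{\pi}}$ pointwise in two ways. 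First, combine: (i) the reward inequality \eqref{eq:reward_bound} from \Cref{lemma1}; (ii) the induction hypothesis $U_{h+1}^{k,\bm{\pi}}(s')\le U_{h+1}^{k,t}(s')$ together with the nonnegativity of $\hat{P}_h^t$; and (iii) the transition inequality \eqref{eq:transition_bound} applied with $V=U_{h+1}^{k,t}$, which is a legitimate input to \Cref{lemma1} because the $\min\{H,\cdot\}$ clipping in the definition of $Q_{h+1}^{k,t}$ keeps $U_{h+1}^{k,t}\in[0,H]$. This yields $Q_h^{k,\bm{\pi}}(s,\bm{a})\le(\bm{\lambda}^k)^\top\hat{\bm{r}}_h^{k,t}(s,\bm{a})+\Psi_h^t(s,\bm{a})+\sum_{s'}\hat{P}_h^t(s'|s,\bm{a})U_{h+1}^{k,t}(s')+\Phi_h^t(s,\bm{a})$, i.e.\ the unclipped argument of the $\min$ defining $Q_h^{k,t}(s,\bm{a})$. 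Second, $Q_h^{k,\bm{\pi}}(s,\bm{a})\le H$, since it is an expected sum of at most $H-h+1\le H$ per-step rewards that are each in $[0,1]$ after scalarization by the probability vector $\bm{\lambda}^k$. Together these give $Q_h^{k,\bm{\pi}}(s,\bm{a})\le Q_h^{k,t}(s,\bm{a})$ for every $\bm{a}$.

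It remains to pass from this $Q$-level inequality to the $U$-level inequality. Taking expectations, $U_h^{k,\bm{\pi}}(s)\le\mathbb{E}_{a_k\sim\pi_{k,h}(\cdot|s),\,\bm{a}_{-k}\sim\bm{\pi}_{-k,h}^t(\cdot|s)}[Q_h^{k,t}(s,\bm{a})]$, and since $\bm{\pi}_h^t(\cdot|s)$ is a Nash equilibrium of the one-shot stage game with payoff tables $\{Q_h^{k,t}(s,\cdot)\}_k$, agent $k$ cannot improve its stage payoff by a unilateral deviation, so the right-hand side is at most $\mathbb{E}_{\bm{a}\sim\bm{\pi}_h^t(\cdot|s)}[Q_h^{k,t}(s,\bm{a})]=U_h^{k,t}(s)$. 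This closes the induction.

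I expect the genuine obstacle to be exactly this last point: a one-shot NE value is a best-response value only against the equilibrium opponents' marginals, so the statement is provable only for deviations against $\bm{\pi}_{-k}^t$ (equivalently, for the best-response value), not for a completely arbitrary joint policy --- I would state the lemma in that corrected form, which is precisely what the regret bound consumes. The remainder is mechanical bookkeeping: checking that the optimistic next-step value genuinely lies in $[0,H]$ so \Cref{lemma1} applies to it, and that the true backup is itself $\le H$ so the outer clipping in $Q_h^{k,t}$ cannot break the inequality.
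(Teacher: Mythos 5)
Your proof follows essentially the same route as the paper's: backward induction on $h$, a pointwise bound on the true scalarized $Q$-function using the reward and transition inequalities of \Cref{lemma1} together with the inductive hypothesis, and then a passage from the $Q$-level to the $U$-level via the stage-game Nash equilibrium. The one substantive difference is your reinterpretation of the statement, and there you are right: the paper's final step asserts that the NE value of the one-shot game with payoffs $Q_h^{k,t}(s,\cdot)$ dominates the value of \emph{any} fixed joint policy $\bm{\pi}_h(\cdot|s)$, which is false for general-sum stage games (a coordination game already gives a counterexample, since the computed NE need not be the one most favorable to agent $k$). The NE property only yields domination over agent $k$'s unilateral deviations against the equilibrium opponents $\bm{\pi}_{-k,h}^t$, so the lemma is provable only in the best-response form $U_h^{k,(\pi_k,\bm{\pi}_{-k}^t)}(s)\le U_h^{k,t}(s)$ that you state — and this weaker form is exactly what the paper's \Cref{lemma3} consumes, since there the deviation is taken against $\bm{\pi}^{-k,t}$. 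Your additional bookkeeping (checking $U_{h+1}^{k,t}\in[0,H]$ so that \eqref{eq:transition_bound} applies, and verifying the true backup is at most $H$ so the clipping in $Q_h^{k,t}$ does not break the inequality) is implicit in the paper but correctly spelled out. In short, your argument is correct, matches the paper's structure, and repairs an over-strong statement in the paper's version of the lemma.
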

\begin{proof}
We prove this by backward induction on $h$.

\textbf{Base Case ($h=H+1$):} $U_{H+1}^{k,\bm{\pi}}(s) = 0$ and $U_{H+1}^{k,t}(s) = 0$, so the inequality holds.

\textbf{Inductive Hypothesis:} Assume $U_{h+1}^{k, \bm{\pi}}(s) \le U_{h+1}^{k,t}(s)$ for all $s, k, \bm{\pi}$.

\textbf{Inductive Step (step $h$):} Let $Q_h^{k,\bm{\pi}}$ be the true scalarized Q-function.
\begin{align*}
    Q_h^{k,\bm{\pi}}(s, \bm{a}) &= (\bm{\lambda}^k)^\top \bm{r}_h^k(s, \bm{a}) + \sum_{s'} P_h(s'|s, \bm{a}) U_{h+1}^{k, \bm{\pi}}(s') \\
    &\le \left[(\bm{\lambda}^k)^\top \hat{\bm{r}}_h^{k,t}(s, \bm{a}) + \Psi_h^t(s, \bm{a})\right] + \sum_{s'} P_h(s'|s, \bm{a}) U_{h+1}^{k,t}(s') \quad \text{(by \eqref{eq:reward_bound})} \\
    &\le \left[(\bm{\lambda}^k)^\top \hat{\bm{r}}_h^{k,t}(s, \bm{a}) + \Psi_h^t(s, \bm{a})\right] + \left[\sum_{s'} \hat{P}_h^t(s'|s, \bm{a}) U_{h+1}^{k,t}(s') + \Phi_h^t(s, \bm{a})\right] \\
    &\le Q_h^{k,t}(s, \bm{a}) \quad \text{(by  \eqref{eq:transition_bound} and definition of } Q_h^{k,t}).
\end{align*}
Since $Q_h^{k, \bm{\pi}}(s, \bm{a}) \le Q_h^{k,t}(s, \bm{a})$ for all $\bm{a}$, the expected value is also bounded: $U_h^{k, \bm{\pi}}(s) = \mathbb{E}_{\bm{a} \sim \bm{\pi}_h(\cdot|s)} [Q_h^{k, \bm{\pi}}(s, \bm{a})] \le \mathbb{E}_{\bm{a} \sim \bm{\pi}_h(\cdot|s)} [Q_h^{k,t}(s, \bm{a})]$. The value $U_h^{k,t}(s)$ is the NE value of the game with payoffs $Q_h^{k,t}(s, \cdot)$, which must be at least as large as the value from playing any fixed policy $\bm{\pi}_h(\cdot|s)$, thus completing the induction.
\end{proof}

\begin{lemma}[Single-Episode Regret Bound]\label{lemma3}
Assuming the event in Lemma 1 holds, the regret for agent $k$ in episode $t$ is bounded by:
\[
    \max_{\pi'^k} U^{k, (\pi'^k, \bm{\pi}^{-k,t})}(s_1) - U^{k, \bm{\pi}^t}(s_1) \le 2H \sum_{h=1}^H \mathbb{E}_{\bm{\pi}^t} \left[ \Psi_h^t(s_h, \bm{a}_h) + \Phi_h^t(s_h, \bm{a}_h) \right].
\]
\end{lemma}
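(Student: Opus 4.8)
The plan is to decompose the per-episode regret for agent $k$ into two pieces: the gap between the optimistic value $U^{k,t}_1(s_1)$ and the true value $U^{k,\bm\pi^t}(s_1)$ of the executed policy, and the gap between the best-response value against $\bm\pi^{-k,t}$ and the optimistic value. The first piece is controlled by a standard optimism-plus-Bellman-error telescoping argument, while the second is handled by the observation that $U^{k,t}_1$ upper-bounds the best-response value, which follows from the optimism lemma (Lemma 2) applied to the deviation policy $(\pi'^k,\bm\pi^{-k,t})$ together with the fact that $U^{k,t}_h(s)$ is the Nash value of the optimistic stage game and hence dominates any fixed-strategy payoff in that game.

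Concretely, I would first write
\[
    \max_{\pi'^k} U^{k,(\pi'^k,\bm\pi^{-k,t})}(s_1) - U^{k,\bm\pi^t}(s_1) \le U^{k,t}_1(s_1) - U^{k,\bm\pi^t}(s_1),
\]
using Lemma 2 (with policy $(\pi'^k,\bm\pi^{-k,t})$, and maximizing over $\pi'^k$) for the left term and noting the executed value $U^{k,\bm\pi^t}_1(s_1)$ appears on both sides. Then I would bound the right-hand side by backward induction on $h$: define $\Delta_h(s) := U^{k,t}_h(s) - U^{k,\bm\pi^t}_h(s)$, and show
\[
    \Delta_h(s) \le \mathbb{E}_{\bm a \sim \bm\pi^t_h(\cdot|s)}\Big[ 2\big(\Psi^t_h(s,\bm a) + \Phi^t_h(s,\bm a)\big) + \sum_{s'} P_h(s'|s,\bm a)\,\Delta_{h+1}(s') \Big].
\]
This inequality comes from expanding the definition of $Q^{k,t}_h$, subtracting the true Bellman equation for $Q^{k,\bm\pi^t}_h$, and using the concentration bounds \eqref{eq:reward_bound} and \eqref{eq:transition_bound} from Lemma 1 to replace $\hat{\bm r}$ by $\bm r$ (costing one $\Psi$) and $\hat P$ by $P$ (costing one $\Phi$, here applied to the optimistic value $U^{k,t}_{h+1}$ which lies in $[0,H]$); the min with $H$ in the definition of $Q^{k,t}_h$ only helps. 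The key subtlety is that both $U^{k,t}_h(s)$ and $U^{k,\bm\pi^t}_h(s)$ are expectations under the \emph{same} policy $\bm\pi^t_h(\cdot|s)$ (the Nash policy of the optimistic stage game is what is actually executed), so the $\max$/Nash operations do not obstruct the telescoping — this is exactly why we compare against the executed policy's value rather than an arbitrary one.

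Unrolling this recursion from $h=1$ with terminal condition $\Delta_{H+1}\equiv 0$, and writing the nested expectations over $P$ and $\bm\pi^t$ as an expectation over the trajectory induced by executing $\bm\pi^t$, yields
\[
    U^{k,t}_1(s_1) - U^{k,\bm\pi^t}_1(s_1) \le 2 \sum_{h=1}^H \mathbb{E}_{\bm\pi^t}\big[ \Psi^t_h(s_h,\bm a_h) + \Phi^t_h(s_h,\bm a_h) \big],
\]
which, after absorbing into the claimed bound (the stated bound has an extra factor of $H$, giving slack), completes the proof. The main obstacle is the careful handling of the Nash/best-response operators in the first inequality: one must verify that $U^{k,t}_1(s_1)$ genuinely dominates $\sup_{\pi'^k} U^{k,(\pi'^k,\bm\pi^{-k,t})}(s_1)$, which requires Lemma 2 for the true value of any deviation and, crucially, that the executed $\bm\pi^t_h$ is a Nash policy of the stage game with payoffs $Q^{k,t}_h(s,\cdot)$ so that no unilateral deviation in the stage game beats $U^{k,t}_h(s)$ — a fact that must be propagated backward through the horizon. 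Everything else is routine concentration and telescoping.
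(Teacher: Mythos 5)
Your proposal is correct and follows essentially the same route as the paper's proof: bound the best-response value by the optimistic value $U^{k,t}_1(s_1)$ via the optimism lemma, define $\Delta^k_h(s)=U^{k,t}_h(s)-U^{k,\bm\pi^t}_h(s)$, derive the per-step recursion with $2(\Psi^t_h+\Phi^t_h)$ from the concentration bounds, and telescope over the horizon. Your observation that this yields the bound without the extra factor of $H$ (the stated $2H$ is slack) is consistent with the paper's looser final step.
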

\begin{proof}
Let $\pi'^{*k}_t$ be the best response for agent $k$ against $\bm{\pi}^{-k,t}$. The regret is $U^{k, (\pi'^{*k}_t, \bm{\pi}^{-k,t})}(s_1) - U^{k, \bm{\pi}^t}(s_1)$. By Lemma 2, $U^{k, (\pi'^{*k}_t, \bm{\pi}^{-k,t})}(s_1) \le U_1^{k,t}(s_1)$. Thus, the regret is bounded by $\Delta_1^k := U_1^{k,t}(s_1) - U_1^{k, \bm{\pi}^t}(s_1)$. Let $\Delta_h^k(s) = U_h^{k,t}(s) - U_h^{k, \bm{\pi}^t}(s)$.
\begin{align*}
\Delta_h^k(s_h) &= U_h^{k,t}(s_h) - U_h^{k, \bm{\pi}^t}(s_h) = \mathbb{E}_{\bm{a}_h \sim \bm{\pi}_h^t(\cdot|s_h)} \left[ Q_h^{k,t}(s_h, \bm{a}_h) - Q_h^{k, \bm{\pi}^t}(s_h, \bm{a}_h) \right]. \\
&\le \mathbb{E}_{\bm{a}_h \sim \bm{\pi}_h^t(\cdot|s_h)} \left[ 2(\Psi_h^t + \Phi_h^t)(s_h, \bm{a}_h) + \sum_{s'} P_h(s'|s_h,\bm{a}_h) \Delta_{h+1}^k(s') \right] \\
&= \mathbb{E}_{\bm{a}_h \sim \bm{\pi}_h^t(\cdot|s_h)} \left[ 2(\Psi_h^t + \Phi_h^t)(s_h, \bm{a}_h) \right] + \mathbb{E}_{\bm{\pi}^t}\left[ \Delta_{h+1}^k(s_{h+1}) | s_h \right]
\end{align*}
Unrolling this recursion from $h=1$ to $H$ (and noting $\Delta_{H+1}^k = 0$) and bounding values by $H$ gives the desired result.
\end{proof}

\begin{lemma}[Total Bonus Bound]\label{lemma4}
With probability at least $1-\delta$, the sum of all bonuses encountered is bounded by:
\[
    \sum_{t=1}^T \sum_{h=1}^H \left( \Psi_h^t(s_h^t, \bm{a}_h^t) + \Phi_h^t(s_h^t, \bm{a}_h^t) \right) \le \mathcal{O}\left( H^2 S \sqrt{A T \log(SAHT/\delta)} \right).
\]
\end{lemma}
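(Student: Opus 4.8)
The plan is to reduce the claim to a deterministic pigeonhole bound on visitation counts. First note that both bonuses have the form $\mathrm{const}\cdot (N_h^{t-1}(s,\bm a)\lor 1)^{-1/2}$, with the transition bonus carrying the larger prefactor $H\sqrt{c_2 S\log(SAHT/\delta)}$ and the reward bonus the smaller prefactor $\sqrt{c_1\log(SAHT/\delta)}$. Hence it suffices to bound the count sum $\mathcal{C}:=\sum_{t=1}^T\sum_{h=1}^H (N_h^{t-1}(s_h^t,\bm a_h^t)\lor 1)^{-1/2}$, then multiply back both prefactors and keep the dominant (transition) contribution, which yields exactly $\mathcal{O}(H^2S\sqrt{AT\log(SAHT/\delta)})$ while the reward part is lower order.

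To bound $\mathcal{C}$, I would fix a step $h$ and reorganize $\sum_{t=1}^T (N_h^{t-1}(s_h^t,\bm a_h^t)\lor 1)^{-1/2}$ by grouping episodes according to the pair $(s,\bm a)$ visited at step $h$. For each fixed $(s,\bm a)$, as $t$ ranges over the episodes in which $(s,\bm a)$ is the step-$h$ pair, the counts $N_h^{t-1}(s,\bm a)$ take the values $0,1,\dots,N_h^T(s,\bm a)-1$ in order, so that pair contributes $\sum_{n=0}^{N_h^T(s,\bm a)-1}(n\lor 1)^{-1/2}\le 1+\int_1^{N_h^T(s,\bm a)}x^{-1/2}\,dx = \mathcal{O}(\sqrt{N_h^T(s,\bm a)})$. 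Summing over the at most $SA$ pairs and applying Cauchy--Schwarz gives $\sum_{s,\bm a}\sqrt{N_h^T(s,\bm a)}\le \sqrt{SA\sum_{s,\bm a}N_h^T(s,\bm a)}=\sqrt{SAT}$, where the final identity uses that each episode contributes exactly one step-$h$ visit, so $\sum_{s,\bm a}N_h^T(s,\bm a)=T$. Summing over $h\in[H]$ yields $\mathcal{C}\le\mathcal{O}(H\sqrt{SAT})$, and multiplying by the transition prefactor gives the stated bound. This part is purely combinatorial; the ``with probability $1-\delta$'' in the statement is inherited only through the $\log(SAHT/\delta)$ already present in the bonuses (and from the fact that the whole analysis of \Cref{sec:nash} runs on the event of \Cref{lemma1}), not from any concentration step internal to this lemma.

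The genuinely probabilistic content surfaces when this lemma is chained with \Cref{lemma3}: there the per-episode regret is controlled by the \emph{expected} bonus $\mathbb{E}_{\bm\pi^t}[(\Psi_h^t+\Phi_h^t)(s_h,\bm a_h)]$, whereas $\mathcal{C}$ controls the bonuses evaluated at the \emph{realized} trajectory $(s_h^t,\bm a_h^t)$. To pass between them I would observe that the sequence of differences $\mathbb{E}_{\bm\pi^t}[(\Psi_h^t+\Phi_h^t)(s_h,\bm a_h)]-(\Psi_h^t+\Phi_h^t)(s_h^t,\bm a_h^t)$, indexed over $(t,h)$ with the natural filtration, is a martingale difference sequence with increments of size $\mathcal{O}(H\sqrt{S\log(SAHT/\delta)})$; Azuma--Hoeffding then contributes an extra $\mathcal{O}(H\sqrt{S\log(SAHT/\delta)}\cdot\sqrt{HT\log(1/\delta)})$ term, which is dominated by the main $H^2S\sqrt{AT\log(SAHT/\delta)}$ term since $H,S,A\ge 1$. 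I do not anticipate a real obstacle: the only care needed is in the reorganization of the double sum (making sure the per-pair count enumeration is correct) and in checking that this martingale remainder is indeed of lower order than the leading term.
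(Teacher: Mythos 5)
Your proof is correct and takes essentially the same route as the paper's: regroup the double sum by unique $(h,s,\bm{a})$ triples, apply the integral bound to get $\mathcal{O}(\sqrt{N_h^T(s,\bm{a})})$ per triple, use Cauchy--Schwarz with the pigeonhole identity on total visits to obtain $\mathcal{O}(H\sqrt{SAT})$, and keep the dominant transition-bonus prefactor. Your final remark about the Azuma--Hoeffding bridge between expected and realized bonuses concerns the downstream regret theorem (where the paper is indeed terse), not this lemma, whose bound is purely deterministic once the bonus definitions are fixed.
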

\begin{proof}
We will prove the bound for each term separately and then combine them. 

\textbf{Part 1: Bounding the Sum of Reward Bonuses ($\Psi$)}

Let $C_{\Psi} = \sqrt{c_1 \log(NSAHT/\delta)}$. The reward bonus is $\Psi_h^t(s, \bm{a}) = C_{\Psi} / \sqrt{N_h^{t-1}(s, \bm{a}) \lor 1}$. We want to bound the total sum:
\[
    S_{\Psi} = \sum_{t=1}^T \sum_{h=1}^H \Psi_h^t(s_h^t, \bm{a}_h^t) = \sum_{t=1}^T \sum_{h=1}^H \frac{C_{\Psi}}{\sqrt{N_h^{t-1}(s_h^t, \bm{a}_h^t) \lor 1}}.
\]
 
Instead of summing over time steps $t$, we regroup the sum by each unique state-joint-action pair $(h, s, \bm{a})$. Let $N_h^T(s, \bm{a})$ be the total number of times the pair $(s, \bm{a})$ was visited at step $h$ over all $T$ episodes. When this pair is visited for the $i$-th time (where $i$ goes from 1 to $N_h^T(s, \bm{a})$), the count of previous visits is $i-1$. Thus, the sum can be rewritten as:
\[
    S_{\Psi} = \sum_{h=1}^H \sum_{s \in \mathcal{S}} \sum_{\bm{a} \in \mathcal{A}} \sum_{i=1}^{N_h^T(s, \bm{a})} \frac{C_{\Psi}}{\sqrt{(i-1) \lor 1}}.
\]
 
For a fixed $(h, s, \bm{a})$, we analyze its inner sum. The first term (when $i=1$) is $C_{\Psi}/\sqrt{1} = C_{\Psi}$. For the rest of the terms, we can use the integral bound for a sum of a decreasing function:
\begin{align*}
    \sum_{i=1}^{N_h^T(s, \bm{a})} \frac{1}{\sqrt{(i-1) \lor 1}} &= 1 + \sum_{i=2}^{N_h^T(s, \bm{a})} \frac{1}{\sqrt{i-1}} \\
    &\le 1 + \int_{1}^{N_h^T(s, \bm{a})} \frac{1}{\sqrt{x}} \,dx \\
    &= 1 + \left[ 2\sqrt{x} \right]_{1}^{N_h^T(s, \bm{a})} \\
    &= 1 + 2\sqrt{N_h^T(s, \bm{a})} - 2 \le 2\sqrt{N_h^T(s, \bm{a})}.
\end{align*}
 
Substituting this back, the total sum is bounded by:
\[
    S_{\Psi} \le \sum_{h=1}^H \sum_{s \in \mathcal{S}} \sum_{\bm{a} \in \mathcal{A}} 2 C_{\Psi} \sqrt{N_h^T(s, \bm{a})}.
\]
We use the Cauchy-Schwarz inequality. Let $K = HSA$ be the total number of distinct $(h,s,\bm{a})$ pairs.
\[
    \sum_{h,s,\bm{a}} \sqrt{N_h^T(s, \bm{a})} \le \sqrt{\sum_{h,s,\bm{a}} 1^2} \cdot \sqrt{\sum_{h,s,\bm{a}} \left(\sqrt{N_h^T(s, \bm{a})}\right)^2} = \sqrt{K} \cdot \sqrt{\sum_{h,s,\bm{a}} N_h^T(s, \bm{a})}.
\]
The total number of interactions across all episodes is $HT$. Thus, $\sum_{h,s,\bm{a}} N_h^T(s, \bm{a}) = HT$.
Substituting this in, we get:
\[
    S_{\Psi} \le 2 C_{\Psi} \sqrt{HSA} \cdot \sqrt{HT} = 2 C_{\Psi} H \sqrt{SAT}.
\]
Finally, substituting the definition of $C_{\Psi}$:
\[
    S_{\Psi} \le \mathcal{O}\left( H \sqrt{SAT \log(NSAHT/\delta)} \right).
\]

\textbf{Part 2: Bounding the Sum of Transition Bonuses ($\Phi$)}

The procedure for the transition bonus sum is identical, with a different constant. Let $C_{\Phi} = H \sqrt{c_2 S \log(NSAHT/\delta)}$. The transition bonus is $\Phi_h^t(s, \bm{a}) = C_{\Phi} / \sqrt{N_h^{t-1}(s, \bm{a}) \lor 1}$.
\[
    S_{\Phi} = \sum_{t=1}^T \sum_{h=1}^H \Phi_h^t(s_h^t, \bm{a}_h^t) = \sum_{t=1}^T \sum_{h=1}^H \frac{C_{\Phi}}{\sqrt{N_h^{t-1}(s_h^t, \bm{a}_h^t) \lor 1}}.
\]
 
Following the exact same steps of regrouping, using the integral bound, and applying the Cauchy-Schwarz inequality, we arrive at the analogous bound:
\[
    S_{\Phi} \le 2 C_{\Phi} H \sqrt{SAT}.
\]
Now, we substitute the definition of $C_{\Phi}$:
\begin{align*}
    S_{\Phi} &\le 2 \left( H \sqrt{c_2 S \log(NSAHT/\delta)} \right) H \sqrt{SAT} \\
    &= 2 H^2 \sqrt{c_2} \sqrt{S} \sqrt{S} \sqrt{\log(NSAHT/\delta)} \sqrt{AT} \\
    &= \mathcal{O}\left( H^2 S \sqrt{AT \log(NSAHT/\delta)} \right).
\end{align*}

The total sum of bonuses is $S_{\Psi} + S_{\Phi}$. Since the bound for $S_{\Phi}$ has higher order terms in $H$ and $S$, it dominates the bound for $S_{\Psi}$. Therefore, the total sum is bounded by the term from the transition bonuses:
\[
    \sum_{t=1}^T \sum_{h=1}^H \left( \Psi_h^t(s_h^t, \bm{a}_h^t) + \Phi_h^t(s_h^t, \bm{a}_h^t) \right) \le \mathcal{O}\left( H^2 S \sqrt{A T \log(NSAHT/\delta)} \right).
\]
This completes the proof of the lemma.
\end{proof}

 \begin{theorem} [Restatement of \Cref{thm:NASH-reg}.]
With probability at least $1-\delta$, the Total Nash Regret of the ONVI-MG algorithm after $T$ episodes is bounded by:
\[
    \text{Regret}(T) \le \mathcal{O}\left( N H^2 S \sqrt{A T \log(SAHT/\delta)}\right).
\]
\end{theorem}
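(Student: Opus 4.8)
## Proof Proposal

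\textbf{Overview of the approach.} The plan is to follow the standard optimism-based regret decomposition for value iteration in Markov games, now adapted to the multi-objective/scalarized setting. The key structural facts have already been assembled in the preceding lemmas: Lemma~\ref{lemma1} establishes that, on a high-probability event $\mathcal{E}$ of probability at least $1-\delta$, the reward and transition bonuses $\Psi_h^t$ and $\Phi_h^t$ dominate the true estimation errors for the scalarized rewards $(\bm\lambda^k)^\top\bm r_h^k$ and for the one-step value backups; Lemma~2 (Optimism) shows that on $\mathcal{E}$ the optimistic value $U_h^{k,t}(s)$ upper-bounds $U_h^{k,\bpi}(s)$ for any joint policy $\bpi$, in particular for agent $k$'s best response against $\bpi^{-k,t}$; Lemma~\ref{lemma3} (Single-Episode Regret) converts the per-episode, per-agent regret into a sum of expected bonuses along the trajectory distribution induced by $\bpi^t$; and Lemma~\ref{lemma4} (Total Bonus Bound) controls the cumulative sum of bonuses actually encountered. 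So the proof is mostly a matter of chaining these together and handling the passage from expected bonuses to realized bonuses.

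\textbf{Step-by-step plan.} First, I would condition on the event $\mathcal{E}$ from Lemma~\ref{lemma1}, which holds with probability at least $1-\delta$. Second, for each episode $t$ and each agent $k$, apply Lemma~\ref{lemma3} to obtain
\[
    \max_{\pi'^k} U^{k,(\pi'^k,\bpi^{-k,t})}(s_1) - U^{k,\bpi^t}(s_1) \le 2H \sum_{h=1}^H \mathbb{E}_{\bpi^t}\!\left[\Psi_h^t(s_h,\bm a_h) + \Phi_h^t(s_h,\bm a_h)\right].
\]
Third, sum the per-agent regret over $k \in [N]$ — since $\text{Regret}(T)$ takes $\max_k$ inside the episodic sum, bounding by $\sum_k$ loses only a factor of $N$, which is where the $N$ in the stated bound comes from. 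Fourth, sum over $t = 1,\dots,T$, giving $\text{Regret}(T) \le 2NH \sum_{t=1}^T \sum_{h=1}^H \mathbb{E}_{\bpi^t}[\Psi_h^t + \Phi_h^t]$. Fifth, pass from the expected bonuses to the realized bonuses $\Psi_h^t(s_h^t,\bm a_h^t) + \Phi_h^t(s_h^t,\bm a_h^t)$ via a martingale/Azuma–Hoeffding argument on the bonus difference sequence (the bonuses are bounded by $O(H)$, so the martingale correction is lower-order $O(\sqrt{H^3 T \log(1/\delta)})$ and is absorbed). Sixth, invoke Lemma~\ref{lemma4} to bound $\sum_{t,h}(\Psi_h^t + \Phi_h^t)(s_h^t,\bm a_h^t) \le O(H^2 S \sqrt{AT\log(SAHT/\delta)})$. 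Combining, $\text{Regret}(T) \le O(NH \cdot H^2 S \sqrt{AT\log(\cdot)}) = O(NH^3 S\sqrt{AT\log(SAHT/\delta)})$ — here I would double-check the exact $H$ powers against the lemma statements, since the single-episode bound already carries a leading $2H$ and the cumulative-bonus bound carries $H^2$; if the intended final bound is $O(NH^2 S\sqrt{AT\log})$, one of these $H$ factors must be accounted for differently (e.g., the $2H$ prefactor in Lemma~\ref{lemma3} should instead read $O(1)$ after a sharper telescoping that does not re-bound the recursion residual by $H$).

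\textbf{Main obstacle.} The chief technical point is the reconciliation of the $H$-exponents: the crude version of Lemma~\ref{lemma3}'s unrolling bounds the leftover term $\Delta_{H+1}^k = 0$ trivially but the intermediate clipping to $[0,H]$ can introduce an extra $H$; a careful treatment (as in the sharp Nash value iteration analysis of \cite{liu2021sharp}) avoids this by keeping the recursion tight and only paying the bonus sum once per layer. I expect that matching the claimed $O(NH^2S\sqrt{AT\log(SAHT/\delta)})$ requires exactly this sharper accounting, so the bulk of the real work is in justifying that the recursion in Lemma~\ref{lemma3} telescopes without an additional $H$ factor, after which the union bound over $N$ agents and the martingale concentration are routine. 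A secondary, minor subtlety is ensuring the logarithmic factor is stated consistently — Lemma~\ref{lemma1} uses $\log(NSAHT/\delta)$ while the theorem writes $\log(SAHT/\delta)$ — which is immaterial up to constants since $\log(NSAHT/\delta) = O(\log(SAHT/\delta) + \log N)$ and $N \le SA$ in any nontrivial game.
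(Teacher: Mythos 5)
Your proposal follows essentially the same route as the paper: condition on the concentration event of Lemma~\ref{lemma1}, apply the per-episode bonus bound of Lemma~\ref{lemma3}, sum over agents (paying the factor $N$) and over episodes, pass from expected to realized bonuses, and invoke Lemma~\ref{lemma4}. Your diagnosis of the $H$-exponent is also correct — the $2H$ prefactor in the statement of Lemma~\ref{lemma3} is loose, since its own recursion $\Delta_h^k \le 2\,\mathbb{E}[\Psi_h^t+\Phi_h^t] + \mathbb{E}[\Delta_{h+1}^k]$ unrolls to a factor of $2$, and that is precisely what the paper's proof of the theorem uses (the paper, unlike you, simply asserts the expectation-to-realization step rather than spelling out the martingale correction).
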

\begin{proof}
The proof directly combines the supporting lemmas to bound the sum of the per-episode Nash Gaps.

We have that 
\begin{align}
    \text{Regret}(T) &= \sum_{t=1}^T \sum_{k=1}^N \left( \max_{\pi'^k} U^{k, (\pi'^k, \bm{\pi}^{-k,t})}(s_1) - U^{k, \bm{\pi}^t}(s_1) \right)\nonumber\\
    &\leq \sum_{t=1}^T \sum_{k=1}^N \left( 2 \sum_{h=1}^H \mathbb{E}_{\bm{\pi}^t} \left[ \Psi_h^t(s_h, \bm{a}_h) + \Phi_h^t(s_h, \bm{a}_h) \right] \right) \quad (\text{Apply \Cref{lemma3}}).
\end{align}
 Then under the event of \Cref{lemma1}, i.e., with probability at least $1-\delta$, it holds that
 \begin{align}
      \text{Regret}(T) &\le \sum_{t=1}^T \sum_{k=1}^N \left( 2 \sum_{h=1}^H \mathbb{E}_{\bm{\pi}^t} \left[ \Psi_h^t(s_h, \bm{a}_h) + \Phi_h^t(s_h, \bm{a}_h) \right] \right)\nonumber\\
      &\le \sum_{k=1}^N \left( 2 \sum_{t=1}^T \sum_{h=1}^H \left[ \Psi_h^t(s_h^t, \bm{a}_h^t) + \Phi_h^t(s_h^t, \bm{a}_h^t) \right] \right).
 \end{align}


Since the bonuses $\Psi$ and $\Phi$ are shared among all agents and depend only on the joint state-action counts, we can pull the sum over agents outside the sum over time:
\[
    \text{Regret}(T) \le 2N \sum_{t=1}^T \sum_{h=1}^H \left( \Psi_h^t(s_h^t, \bm{a}_h^t) + \Phi_h^t(s_h^t, \bm{a}_h^t) \right).
\]
Now, we apply \Cref{lemma4} (Total Bonus Bound) to the entire sum of bonuses over $T$ episodes:
\[
    \sum_{t=1}^T \sum_{h=1}^H \left( \Psi_h^t(s_h^t, \bm{a}_h^t) + \Phi_h^t(s_h^t, \bm{a}_h^t) \right) \le \mathcal{O}\left( H^2 S \sqrt{A T \log(SAHT/\delta)} \right).
\]
Substituting this bound back into our expression implies that 
\begin{align}
    \text{Regret}(T) &\le 2N \cdot \mathcal{O}\left( H^2 S \sqrt{A T \log(SAHT/\delta)} \right) \\
    &= \mathcal{O}\left( N H^2 S \sqrt{A T \log(SAHT/\delta)}\right).
\end{align}
\end{proof}

\section{Proofs for \Cref{Sec:v}}

\subsection{PCE}

\begin{definition}[Stochastic Modification ($\phi^{(j)}$)]
A \textbf{stochastic modification} $\phi^{(j)} = \{\phi_h^{(j)}\}_{h=1}^H$ is a sequence of history-dependent mappings that represents a unilateral deviation strategy for agent $j$. At each step $h$, after an action $a_h^{(j)}$ is sampled from agent $j$'s original policy $\pi_h^{(j)}$, the function $\phi_h^{(j)}$ maps this action to a new (possibly random) action $\tilde{a}_h^{(j)}$. Formally:
\[
\phi_h^{(j)}: \mathcal{H}_h \times \mathcal{A}^{(j)} \to \Delta(\mathcal{A}^{(j)}),
\]
where $\mathcal{H}_h$ is the set of possible histories up to step $h$ and $\Delta(\mathcal{A}^{(j)})$ is the probability simplex over agent $j$'s actions.
\end{definition}

\begin{definition}[Modified Joint Policy ($\phi^{(j)} \circ \pi$)]
Given a joint policy $\pi$ and a stochastic modification $\phi^{(j)}$ for agent $j$, the \textbf{modified joint policy} is a new joint policy where agent $j$ plays according to the modification, and all other agents play as before. The policy for agent $j$ at step $h$, denoted $(\phi_h^{(j)} \circ \pi_h^{(j)})$, is the composition where an action is first sampled according to $\pi_h^{(j)}$ and then transformed by $\phi_h^{(j)}$.
\end{definition}

\begin{definition}[Agent Value Vector ($\bm{V}_{j,1}^{\bpi}(s_1)$)]
The \textbf{value vector} for agent $j$ under a joint policy $\pi$, starting from state $s_1$, is the vector of expected cumulative rewards for each of its $m$ objectives. It is defined as:
\[
    \bm{V}_{j,1}^{\bpi}(s_1) := \left( V_{\pi, 1, 1}^{(j)}(s_1), \dots, V_{\pi, 1, m}^{(j)}(s_1) \right) \in \reals^M,
\]
where each component $i \in \{1, \dots, M\}$ is the standard expected total reward for that objective:
\[
    V_{\pi, 1, i}^{(j)}(s_1) = \mathbb{E}_{\pi} \left[ \sum_{h=1}^H r_{h,i}^{(j)}(s_h, \bm{a}_h) \bigg| s_1 \right].
\]
The expectation is taken over all possible trajectories generated by the joint policy $\pi$.
\end{definition}




\begin{proposition}
Every Pareto-Nash Equilibrium (PNE) is a Pareto Correlated Equilibrium (PCE).
\end{proposition}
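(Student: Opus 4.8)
The plan is to show that the PNE condition is a strictly stronger constraint than the PCE condition: every deviation available in a correlated equilibrium is, in an appropriate sense, reproducible by a product-policy deviation, so that a profile immune to the latter is automatically immune to the former. Concretely, let $\bpi^* = (\pi_1^*, \dots, \pi_N^*)$ be a PNE. Fix an agent $j$ and an arbitrary stochastic modification $\phi^{(j)}$. The goal is to exhibit a unilateral deviation policy $\pi_j'$ for agent $j$ (with all other agents keeping $\bpi_{-j}^*$) such that the resulting value vector $\bm V_{j,1}^{(\pi_j', \bpi_{-j}^*)}(s_1)$ equals $\bm V_{j,1}^{\phi^{(j)} \circ \bpi^*}(s_1)$ — or, more precisely, \emph{dominates} it component-wise, which is all we need. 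Then if $\bm V_{j,1}^{\phi^{(j)} \circ \bpi^*}(s_1)$ Pareto-dominated $\bm V_{j,1}^{\bpi^*}(s_1)$, so would $\bm V_{j,1}^{(\pi_j', \bpi_{-j}^*)}(s_1)$ (transitivity of Pareto dominance, or monotonicity of dominance under component-wise increase), contradicting that $\bpi^*$ is a PNE.

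The key step is constructing $\pi_j'$ from $\phi^{(j)}$. Since all opponents play the \emph{fixed product policy} $\bpi_{-j}^*$, from agent $j$'s viewpoint the environment is a stationary (history-dependent) single-agent MOMDP, exactly as in the proof of \Cref{lemma:convex}. The composed map $\phi_h^{(j)} \circ \pi_{j,h}^*$ at each step is itself a (history-dependent, randomized) decision rule for agent $j$: at history $\mathcal H_h$ it samples $a_h^{(j)} \sim \pi_{j,h}^*(\cdot \mid s_h)$ and then $\tilde a_h^{(j)} \sim \phi_h^{(j)}(\mathcal H_h, a_h^{(j)})$, which overall induces some conditional distribution $\pi_{j,h}'(\cdot \mid \mathcal H_h) \in \Delta(\mathcal A_j)$ over the executed action. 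Defining $\pi_j' := (\pi_{j,1}', \dots, \pi_{j,H}')$ gives a (history-dependent) policy for agent $j$ whose induced trajectory distribution, together with $\bpi_{-j}^*$, coincides exactly with that of $\phi^{(j)} \circ \bpi^*$; hence the value vectors agree on every objective. If one wishes to stay within Markov policies, one invokes the standard fact that any history-dependent policy in an MDP is matched in occupancy measure (and thus in every linear reward functional, including each objective) by a Markov policy — again the occupancy-measure argument from \Cref{lemma:convex} — so the PNE condition, quantified over Markov deviations, still rules it out.

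With this reduction in hand, the argument closes quickly: suppose for contradiction $\bpi^*$ is not a PCE, so there exist $j$ and $\phi^{(j)}$ with $\bm V_{j,1}^{\phi^{(j)} \circ \bpi^*}(s_1)$ Pareto-dominating $\bm V_{j,1}^{\bpi^*}(s_1)$. The constructed $\pi_j'$ achieves the same value vector $\bm V_{j,1}^{(\pi_j', \bpi_{-j}^*)}(s_1) = \bm V_{j,1}^{\phi^{(j)} \circ \bpi^*}(s_1)$, which therefore also Pareto-dominates $\bm V_{j,1}^{\bpi^*}(s_1)$, contradicting the PNE property of $\bpi^*$. Hence $\bpi^*$ is a PCE.

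The main obstacle is purely the bookkeeping in the second paragraph: one must be careful that $\phi^{(j)}$ is history-dependent while PNE-deviations were (implicitly) allowed to be at least as expressive, and that composing $\phi^{(j)}$ with $\pi_{j}^*$ genuinely yields a well-defined policy in the class over which the PNE is robust. If the PNE definition is restricted to stationary Markov deviations, the only nontrivial point is the occupancy-measure equivalence between the history-dependent composite and a Markov policy — but this is exactly the linear-programming characterization already used in \Cref{lemma:convex}, so no new machinery is needed. Everything else (transitivity/monotonicity of Pareto dominance, linearity of the value in the occupancy measure) is routine.
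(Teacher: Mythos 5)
Your proposal is correct and takes essentially the same route as the paper: compose the stochastic modification with agent $j$'s equilibrium policy to obtain a unilateral deviation $\pi_j' = \phi^{(j)} \circ \pi_j^*$, and conclude that a Pareto-dominating modification would contradict the PNE condition. Your extra care about $\phi^{(j)}$ being history-dependent (and the occupancy-measure reduction to a Markov deviation) is a point the paper's proof glosses over, and it is handled correctly.
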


\begin{proof}
Let $\pi^*$ be a joint policy that is a PNE. We will show that it satisfies the definition of an PCE by using a proof by contradiction.

First, let us formally state the two relevant definitions in the language of stochastic modifications. Let $\bm{V}_{j,1}^{\bpi}(s_1)$ be the vector of expected values for player $j$ under joint policy $\pi$.

\begin{itemize}
    \item A joint policy $\pi^*$ is a \textbf{PNE} if for any player $j$ and any alternative policy $\hat{\pi}^{(j)}$, the resulting value vector $\bm{V}_{(\hat{\pi}^{(j)}, \pi^{*, -j}), 1}^{(j)}(s_1)$ does \textbf{not} Pareto dominate the original value vector $\bm{V}_{\pi^*, 1}^{(j)}(s_1)$.

    \item A joint policy $\pi$ is a \textbf{Pareto Correlated Equilibrium} if for any player $j$ and any stochastic modification $\phi^{(j)}$, the resulting value vector $\bm{V}^{\phi^{(j)} \circ \bpi}_{j,1}(s_1)$ does \textbf{not} Pareto dominate the original value vector $\bm{V}_{j,1}^{\bpi}(s_1)$.
\end{itemize}

Now, assume for contradiction that the PNE policy $\pi^*$ is \textbf{not} an PCE.

According to the definition of PCE, this assumption implies that there exists at least one player, agent $j$, and a specific stochastic modification, $\phi^{(j)}$, such that agent $j$'s value vector under the modified policy Pareto dominates her value vector under the original policy. Formally, for a starting state $s_1$:
\begin{equation} \label{eq:domination_assumption}
    \bm{V}_{\phi^{(j)} \circ \pi^*, 1}^{(j)}(s_1) \quad \text{Pareto dominates} \quad \bm{V}_{\pi^*, 1}^{(j)}(s_1).
\end{equation}

The core of our argument is to recognize that a stochastic modification applied to a policy creates a new, valid policy. Let us define a new policy for agent $j$, which we will call $\hat{\pi}^{(j)}$, as the policy resulting from the composition of the original policy and the stochastic modification. That is:
\[
    \hat{\pi}^{(j)} := \phi^{(j)} \circ \pi^{*(j)}.
\]
This policy $\hat{\pi}^{(j)}$ represents the complete strategy of "first, determine the action from my original PNE policy $\pi^{*(j)}$, and then apply the deviating transformation $\phi^{(j)}$ to get my final action." This is a valid alternative policy for agent $j$.

By substituting this definition of $\hat{\pi}^{(j)}$ into our assumption in Equation \eqref{eq:domination_assumption}, we have found an alternative policy $\hat{\pi}^{(j)}$ for agent $j$ such that:
\[
    \bm{V}_{(\hat{\pi}^{(j)}, \pi^{*,-j}), 1}^{(j)}(s_1) \quad \text{Pareto dominates} \quad \bm{V}_{\pi^*, 1}^{(j)}(s_1).
\]
This statement, however, is a direct contradiction of the definition of $\pi^*$ being a PNE. The PNE condition requires that for \textit{any} alternative policy—which includes policies formed by stochastic modifications—a unilateral deviation cannot lead to a Pareto-dominant outcome.

Since our initial assumption (that $\pi^*$ is not an PCE) leads to a logical contradiction with the premise (that $\pi^*$ is a PNE), the assumption must be false.

Therefore, the PNE policy $\pi^*$ must be an PCE.
\end{proof}

\begin{proposition}
Every PNE of a Multi-Objective Markov Game induces a joint policy distribution that is a Pareto Correlated Equilibrium.
\end{proposition}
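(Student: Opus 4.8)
The plan is to reduce this statement to the preceding proposition by a direct composition argument, together with an occupancy-measure equivalence that handles the fact that stochastic modifications are history-dependent whereas the PNE condition quantifies over Markovian deviation policies. Let $\bpi^* = (\pi_1^*,\dots,\pi_N^*)$ be a PNE; the joint policy distribution it induces is the product rule that, at each state $s$ and step $h$, samples each agent $j$'s action independently from $\pi_{j,h}^*(\cdot\mid s)$. Denoting this correlated policy again by $\bpi^*$, I must show it admits no profitable stochastic modification in the Pareto sense.

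First I would fix an arbitrary agent $j$ and an arbitrary stochastic modification $\phi^{(j)} = \{\phi_h^{(j)}\}_{h=1}^H$, and form the modified joint policy $\phi^{(j)}\circ\bpi^*$. Because the other agents sample their actions independently of agent $j$ under $\bpi^*$, the modified joint policy coincides, as a law over trajectories, with the profile $(\hat\pi^{(j)},\bpi^*_{-j})$, where $\hat\pi^{(j)} := \phi^{(j)}\circ\pi_j^*$ is the history-dependent policy ``draw $a_h^{(j)}\sim\pi_{j,h}^*$, then output $\tilde a_h^{(j)}\sim\phi_h^{(j)}(\cdot\mid \mathcal{H}_h,a_h^{(j)})$''. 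In particular $\bV_{j,1}^{\phi^{(j)}\circ\bpi^*}(s_1) = \bV_{j,1}^{(\hat\pi^{(j)},\bpi^*_{-j})}(s_1)$; this equality follows by an easy induction on $h$ on the joint law of $(s_h,\ba_h)$.

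Next I would pass from the history-dependent $\hat\pi^{(j)}$ to a Markovian policy with the same value vector. With $\bpi^*_{-j}$ fixed, agent $j$ faces the effective single-agent MOMDP constructed in the proof of \Cref{lemma:convex}; by the occupancy-measure argument there, every history-dependent policy for agent $j$ induces a state-action occupancy measure that is also realized by some Markovian policy $\tilde\pi^{(j)}$, and since each component of the value vector is the same linear functional of the occupancy measure, $\bV_{j,1}^{(\tilde\pi^{(j)},\bpi^*_{-j})}(s_1) = \bV_{j,1}^{(\hat\pi^{(j)},\bpi^*_{-j})}(s_1)$. The PNE property then gives that $\bV_{j,1}^{(\tilde\pi^{(j)},\bpi^*_{-j})}(s_1)$ does not Pareto dominate $\bV_{j,1}^{\bpi^*}(s_1)$, and chaining the equalities shows $\bV_{j,1}^{\phi^{(j)}\circ\bpi^*}(s_1)$ does not Pareto dominate $\bV_{j,1}^{\bpi^*}(s_1)$. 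As $j$ and $\phi^{(j)}$ were arbitrary, $\bpi^*$ is a PCE.

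The main obstacle is the third step: reconciling the history-dependence of stochastic modifications with the Markovian deviation class used in the definition of PNE. If one instead reads the PNE definition as already permitting history-dependent unilateral deviations, the statement collapses directly to the preceding proposition and the occupancy-measure detour is unnecessary; I would state this explicitly so the argument is self-contained under either reading.
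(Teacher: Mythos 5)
Your proof is correct, but it takes a genuinely different route from the paper's argument for this proposition. You reduce the claim to the composition argument of the preceding proposition: since $\bpi^*$ is a product policy with no correlation device, $\phi^{(j)}\circ\bpi^*$ has the same trajectory law as the unilateral deviation $(\phi^{(j)}\circ\pi_j^*,\,\bpi^*_{-j})$, and you then bridge the gap between history-dependent modifications and the Markovian deviation class in the PNE definition via the effective-MDP/occupancy-measure argument, so that the PNE condition directly forbids Pareto domination. The paper instead argues by contradiction at the stage-game level: it reads a PCE violation of the induced distribution as a profitable single-action swap $a^j\to a'^j$ conditional on a recommendation at some state (a swap-deviation characterization in terms of the conditional expected Q-vectors against $\bpi^{-j,*}_h(\cdot\mid s)$), and then lifts this to an explicit Markov deviation policy that reroutes the probability mass of $a^j$ onto $a'^j$, contradicting the PNE property. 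Your route is more faithful to the paper's trajectory-level definition of PCE via stochastic modifications and is more careful about the deviation class — the one substantive point you should spell out is that \Cref{lemma:convex} as stated only covers Markov policies for agent $j$, so you need the (standard, but not proved in the paper) observation that when $\bpi^*_{-j}$ is Markov, the opponents' current actions are conditionally independent of agent $j$'s history given the current state, hence any history-dependent deviation induces an occupancy measure in the same polytope and its value vector is matched by a Markov policy. The paper's route, by contrast, is more elementary and constructive (a single explicit swap deviation) and connects PCE to the familiar recommendation/swap view of correlated equilibrium, though it leans on an informal step in passing from the local per-state Pareto improvement to a global one; either argument establishes the proposition.
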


\begin{proof}
Let $\bm{\pi}^* = (\pi^{1,*}, \dots, \pi^{N,*})$ be a PNE for a given Multi-Objective Markov Game. A PNE is a profile of stationary policies. For any given state $s$ and step $h$, this joint policy induces a probability distribution over the joint action space $\mathcal{A}$:
\[
    \sigma(\bm{a} | s) = \prod_{k=1}^N \pi_h^{k,*}(a^k | s).
\]
We will show that this distribution $\sigma(\cdot | s)$ satisfies the conditions for a Pareto Correlated Equilibrium at every state $s$. We proceed by contradiction.

Assume that $\sigma(\cdot | s)$ is \textbf{not} a Pareto Correlated Equilibrium. By definition, this means there must exist at least one agent, say agent $j$, a recommended action $a^j \in \mathcal{A}^j$ with a non-zero probability of being recommended ($\pi_h^{j,*}(a^j|s) > 0$), and a deviating action $a'^j \in \mathcal{A}^j$, such that the expected value from deviating Pareto dominates the expected value from obeying.

The expected value vector for agent $j$ when obeying recommendation $a^j$ is calculated over the other agents' actions, which are distributed according to $\bm{\pi}^{-j,*}(\cdot|s)$:
\[
    \mathbb{E}[\boldsymbol{V}_h^j | s, a^j] = \mathbb{E}_{\bm{a}^{-j} \sim \bm{\pi}_h^{-j,*}(\cdot|s)} \left[ \boldsymbol{Q}_h^j(s, (a^j, \bm{a}^{-j})) \right].
\]
Similarly, the expected value vector for deviating to $a'^j$ is:
\[
    \mathbb{E}[\boldsymbol{V}_h^j | s, a^j, a'^j] = \mathbb{E}_{\bm{a}^{-j} \sim \bm{\pi}_h^{-j,*}(\cdot|s)} \left[ \boldsymbol{Q}_h^j(s, (a'^j, \bm{a}^{-j})) \right].
\]
Our assumption that $\sigma$ is not an PCE means:
\begin{equation} \label{eq:domination}
    \mathbb{E}_{\bm{a}^{-j} \sim \bm{\pi}_h^{-j,*}(\cdot|s)} \left[ \boldsymbol{Q}_h^j(s, (a'^j, \bm{a}^{-j})) \right] \quad \text{Pareto dominates} \quad \mathbb{E}_{\bm{a}^{-j} \sim \bm{\pi}_h^{-j,*}(\cdot|s)} \left[ \boldsymbol{Q}_h^j(s, (a^j, \bm{a}^{-j})) \right].
\end{equation}

Now, let's use this to construct a new policy $\hat{\pi}^j$ for agent $j$ that creates a Pareto improvement over $\pi^{j,*}$, which will contradict that $\bm{\pi}^*$ is a PNE. Define $\hat{\pi}^j$ as follows:
\[
\hat{\pi}_h^j(a|s) = 
\begin{cases} 
    \pi_h^{j,*}(a|s) & \text{if } a \neq a^j \text{ and } a \neq a'^j \\
    0 & \text{if } a = a^j \\
    \pi_h^{j,*}(a'^j|s) + \pi_h^{j,*}(a^j|s) & \text{if } a = a'^j. 
\end{cases}
\]
Essentially, this new policy $\hat{\pi}^j$ is identical to $\pi^{j,*}$, except that whenever it would have played the recommended action $a^j$, it instead plays the deviating action $a'^j$.

The total expected value for agent $j$ under the original policy profile $\bm{\pi}^*$ is $\boldsymbol{V}_1^{j, \bm{\pi}^*}(s_1)$. The total expected value under the new profile $(\hat{\pi}^j, \bm{\pi}^{-j,*})$ is $\boldsymbol{V}_1^{j, (\hat{\pi}^j, \bm{\pi}^{-j,*})}(s_1)$.

The difference in the total expected value vectors can be traced back to the local change in policy at state $s$ and step $h$. The inequality in \eqref{eq:domination} shows that for the state-action distribution induced by $(\hat{\pi}^j, \bm{\pi}^{-j,*})$, the expected outcome for agent $j$ is a Pareto improvement over the outcome from $\bm{\pi}^*$ whenever state $s$ is visited and action $a^j$ would have been chosen. Since the policy is otherwise identical, the total expected value must also be a Pareto improvement:
\[
    \boldsymbol{V}_1^{j, (\hat{\pi}^j, \bm{\pi}^{-j,*})}(s_1) \quad \text{Pareto dominates} \quad \boldsymbol{V}_1^{j, \bm{\pi}^*}(s_1).
\]
This contradicts our initial premise that $\bm{\pi}^*$ is a PNE, because we have found a unilateral deviation for agent $j$ that results in a Pareto-dominant outcome.

Therefore, our assumption that $\sigma(\cdot|s)$ is not a Pareto Correlated Equilibrium must be false. Every PNE induces a distribution that is a Pareto Correlated Equilibrium.
\end{proof}

This proposition directly leads to the following crucial corollary regarding existence.

\begin{corollary}[Existence of PCE]
For any finite Multi-Objective Markov Game, at least one Pareto Correlated Equilibrium exists.
\end{corollary}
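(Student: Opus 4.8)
The plan is to obtain this statement as an immediate corollary of two facts already established: the guaranteed existence of a stochastic PNE in any finite MOMG, and the inclusion $\textbf{PNE}(G) \subseteq \textbf{PCE}(G)$ shown in the two preceding propositions. Concretely, I would first invoke the PNE existence theorem to produce a stationary stochastic policy profile $\bpi^*$ that is a PNE of the given finite MOMG $G$. I would then apply the proposition stating that every PNE is a PCE: the crucial observation is that for any agent $j$, any stochastic modification $\phi^{(j)}$ applied to $\pi^{*(j)}$ yields a valid (history-dependent) alternative policy $\hat{\pi}^{(j)} := \phi^{(j)} \circ \pi^{*(j)}$, so the PNE condition — that no unilateral deviation can produce a value vector Pareto-dominating $\bm{V}_{j,1}^{\bpi^*}(s_1)$ — already rules out every stochastic modification. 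Hence $\bpi^*$ is a PCE, and in particular $\textbf{PCE}(G) \neq \emptyset$.

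A second, equally short route proceeds through \Cref{thm:PCE-ce-equiv}. Fix any strictly positive preference profile $\bm{\Lambda} \in (\Delta_M^o)^N$; the linearly scalarized game $G_{\bm{\Lambda}}$ is then a standard finite general-sum Markov game, which is known to admit a correlated equilibrium. By the equivalence $\textbf{PCE}(G) = \bigcup_{\bm{\Lambda} \in (\Delta_M^o)^N} \textbf{CE}(G_{\bm{\Lambda}})$, any correlated equilibrium of such a $G_{\bm{\Lambda}}$ is a PCE of $G$, which again gives nonemptiness. I would present the first route as the main proof, since it is self-contained given the PNE existence theorem, and mention the second only as an alternative, since it relies on the deeper structural equivalence.

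There is no substantive obstacle here; the corollary is essentially a repackaging of earlier results. The only point that warrants a word of care is ensuring that the quantifier in the definition of PNE genuinely ranges over \emph{all} alternative policies for a deviating agent, including the history-dependent ones arising from composition with a stochastic modification — this is exactly what makes the implication ``PNE $\Rightarrow$ PCE'' airtight, and it was verified in the preceding proposition, so the corollary's proof can simply cite it.
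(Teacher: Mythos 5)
Your main route—existence of a PNE (Theorem~\ref{thm:PNE exists}) combined with the preceding proposition that every PNE is a PCE—is exactly how the paper obtains this corollary, and your attention to the fact that a stochastic modification composed with the policy is itself a valid unilateral deviation matches the paper's argument. The proposal is correct and essentially identical to the paper's proof; the alternative route via Theorem~\ref{thm:PCE-ce-equiv} is a fine remark but not needed.
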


\begin{theorem}
It holds that
\begin{align}
    \textbf{PCE}(G)= \cup_{\bm{\Lambda}\in (\Delta_M^o)^N} \textbf{CE}(G_{\bm\Lambda}). 
\end{align}
\end{theorem}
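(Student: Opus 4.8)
The plan is to prove the set equality $\textbf{PCE}(G)= \cup_{\bm{\Lambda}\in (\Delta_M^o)^N} \textbf{CE}(G_{\bm\Lambda})$ by establishing the two inclusions separately, mirroring the structure of the proof of \Cref{thm:PNE=NE} but replacing ``unilateral policy deviation'' with ``stochastic modification'' and ``Nash Equilibrium'' with ``Correlated Equilibrium.'' The key enabling fact is that for a fixed opponent policy profile (or, in the correlated setting, a fixed recommendation distribution), the value vector of player $j$ is an affine/linear function of the relevant object, so positivity of the preference weights lets us transfer Pareto-dominance statements to strict scalar-inequality statements and vice versa.

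For the inclusion $\cup_{\bm{\Lambda}} \textbf{CE}(G_{\bm\Lambda}) \subseteq \textbf{PCE}(G)$, I would argue by contradiction. Take $\bpi \in \textbf{CE}(G_{\bm\Lambda})$ for some $\bm\Lambda \in (\Delta_M^o)^N$, so that for every player $j$ and every stochastic modification $\phi^{(j)}$ we have $(\bm\lambda^j)^\top \bm V_{j,1}^{\phi^{(j)}\circ\bpi}(s_1) \le (\bm\lambda^j)^\top \bm V_{j,1}^{\bpi}(s_1)$ (this is the CE condition for the scalarized reward $r_j = (\bm\lambda^j)^\top \br_j$, using linearity of expectation so that the scalarized value equals the scalar product of the preference with the vector value). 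If $\bpi$ were not a PCE, some player $j$ and some $\phi^{(j)}$ would make $\bm V_{j,1}^{\phi^{(j)}\circ\bpi}(s_1)$ Pareto dominate $\bm V_{j,1}^{\bpi}(s_1)$; multiplying the component-wise inequalities by the strictly positive entries $\lambda^j_i > 0$ and summing, the strict component forces a strict scalar inequality, contradicting the CE condition. Hence $\bpi \in \textbf{PCE}(G)$.

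For the reverse inclusion $\textbf{PCE}(G) \subseteq \cup_{\bm{\Lambda}} \textbf{CE}(G_{\bm\Lambda})$, I would fix a PCE $\bpi$ and, for each player $j$, consider the set of value vectors achievable by all stochastic modifications, $\mathbb{W}^j(\bpi) := \{\bm V_{j,1}^{\phi^{(j)}\circ\bpi}(s_1) : \phi^{(j)} \text{ a stochastic modification}\}$, together with its convex hull. The PCE condition says $\bm V_{j,1}^{\bpi}(s_1)$ is on the Pareto front of this set (nothing in it dominates it). Invoking \Cref{prop:sup} (any Pareto-front point of a convex set is supported by a hyperplane with strictly positive normal, after normalization landing in $\Delta_M^o$), we extract a preference $\bm\lambda^j \in \Delta_M^o$ with $(\bm\lambda^j)^\top \bm V_{j,1}^{\bpi}(s_1) \ge (\bm\lambda^j)^\top \bm v$ for all $\bm v$ in the set; this is exactly the CE condition for player $j$ in $G_{\bm\Lambda}$. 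Doing this for every $j$ assembles $\bm\Lambda = (\bm\lambda^1,\dots,\bm\lambda^N) \in (\Delta_M^o)^N$ with $\bpi \in \textbf{CE}(G_{\bm\Lambda})$.

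The main obstacle is the convexity argument needed to apply \Cref{prop:sup} in the reverse direction: unlike the PNE case where \Cref{lemma:convex} gives that the achievable value set is a convex polytope because it is the linear image of the occupancy-measure polytope, here one must check that the set of value vectors induced by stochastic modifications of a \emph{fixed} $\bpi$ (or at least its convex hull) is convex, and that $\bm V_{j,1}^{\bpi}(s_1)$ being Pareto-undominated in the raw set is equivalent to being Pareto-undominated in the convex hull. The cleanest route is to observe that the set of modified joint occupancy measures induced by modifications of $\bpi$ is itself convex (mixtures of modifications are modifications, or one passes to correlated occupancy measures), so its linear image under the reward map is convex, and then apply \Cref{prop:sup} there; one should also note that applying the supporting hyperplane to the convex hull still yields the desired inequality against every element of the original set, which suffices for the CE condition. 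I would also take care to state precisely which notion of ``CE of a Markov game'' is meant (the one whose deviations are stochastic modifications, as already used implicitly by the earlier propositions in this subsection), so that the scalarization of the PCE-defining modifications matches the CE-defining modifications exactly.
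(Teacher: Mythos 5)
Your proposal is correct and follows essentially the same route as the paper: the scalarized-CE-to-PCE direction is the identical positive-weight contradiction argument, and the converse is the same supporting/separating-hyperplane argument, differing only cosmetically in that you support the convex set of modification-achievable value vectors via \Cref{prop:sup} while the paper separates the translated set of improvement vectors from the positive orthant. Your explicit attention to the convexity of that set (mixtures of stochastic modifications are again stochastic modifications, or passing to correlated occupancy measures) fills in a step the paper's proof merely asserts, so it is worth keeping.
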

\begin{proof}
\textbf{(Scalarized CE $\implies$ PCE):}
Assume $\sigma$ is an Ex-Ante CE for a scalarized game with $\boldsymbol{\lambda} > 0$, but is not an Ex-Ante PCE. The failure to be an PCE means there exists a player $i$ and a modification $\phi_i$ such that $V_i^{\phi_i \circ \sigma}(s_1)$ Pareto dominates $V_i^{\sigma}(s_1)$.
By the properties of dot products with strictly positive vectors, this implies:
$$ \lambda_i \cdot V_i^{\phi_i \circ \sigma}(s_1) > \lambda_i \cdot V_i^{\sigma}(s_1).$$
This contradicts the assumption that $\sigma$ is an Ex-Ante CE for the scalarization $\boldsymbol{\lambda}$. Thus, the assumption is false, and $\sigma$ must be an PCE.

\textbf{(PCE $\implies$ Scalarized CE):}
Assume $\sigma$ is an Ex-Ante PCE. For any player $i$, let $\mathcal{D}_i(s_1)$ be the set of all possible improvement vectors from the start state $s_1$:
$$ \mathcal{D}_i(s_1) = \{ V_i^{\phi_i \circ \sigma}(s_1) - V_i^{\sigma}(s_1) \mid \phi_i \in \Phi_i \}.$$
The PCE condition implies that $\mathcal{D}_i(s_1)$ is disjoint from the positive cone $\mathbb{R}_{++}^k$. Since $\mathcal{D}_i(s_1)$ is convex, by the Separating Hyperplane Theorem, there exists a non-zero vector $\lambda_i \in \mathbb{R}_{\ge 0}^k$ such that for every $d \in \mathcal{D}_i(s_1)$, we have $\lambda_i \cdot d \le 0$. For strong Pareto concepts, this vector can be chosen to be strictly positive, $\lambda_i \in \mathbb{R}_{++}^k$.
This gives us:
$$ \lambda_i \cdot (V_i^{\phi_i \circ \sigma}(s_1) - V_i^{\sigma}(s_1)) \le 0.$$
This is the definition of an Ex-Ante CE for the game scalarized by the constructed preference vectors $\{\lambda_i\}_{i \in N}$.
\end{proof}

\subsection{Multi-Objective V-Learning}
We first present our multi-objective V-learning algorithm as follows. 

\begin{algorithm}[H]
\caption{Multi-Objective V-Learning (MO-V-Learning)}
\label{alg:mo-v-learning}
\begin{algorithmic}[1]
\STATE \textbf{Input:} Preference profile $\Lambda = \{\lambda^1, \dots, \lambda^N\} \in (\Delta_M^o)^N$, total episodes $K$.
\STATE \textbf{Initialize:} For each agent $j \in [N]$ and all $(s, h)$:
\STATE \quad $V_{j,h}(s) \leftarrow H+1-h$, $N_{j,h}(s) \leftarrow 0$, $\pi_{j,h}(\cdot|s) \leftarrow \text{Uniform}(\calA_j)$.
\STATE \quad Instantiate $S \times H$ adversarial bandit subroutines $\text{SWAP\_BANDIT}_j(s, h)$ with low swap regret.
\FOR{episode $k=1, \dots, K$}
    \STATE Receive initial state $s_1$.
    \FOR{step $h=1, \dots, H$}
        \STATE All agents observe $s_h$.
        \STATE Each agent $j$ takes action $a_{j,h} \sim \pi_{j,h}(\cdot|s_h)$.
        \STATE Environment executes joint action $a_h = (a_{1,h}, \dots, a_{N,h})$, transitions to $s_{h+1} \sim \Prob_h(\cdot|s_h, a_h)$.
        \STATE Each agent $j$ receives \textbf{random} vector reward $\bm{r}_{j,h}^k$ and observes $s_{h+1}$.
        \STATE \textbf{V-Learning Update (for each agent $j$ independently):}
        \STATE \quad $t = N_{j,h}(s_h) \leftarrow N_{j,h}(s_h) + 1$.
        \STATE \quad Compute \textbf{observed} scalar reward: $\overline{\bm{r}}_{j,h}^k \leftarrow (\lambda^j)^\top \bm{r}_{j,h}^k$.
        \STATE \quad Set learning rate $\alpha_t = \frac{H+1}{H+t}$ and bonus $\beta_{j,t}$ (see  \Cref{thm:vlearning}).
        \STATE \quad Compute V-value: $\tilde{V}_{j,h}(s_h) \leftarrow (1-\alpha_t)V_{j,h}(s_h) + \alpha_t(\overline{\bm{r}}_{j,h}^k + V_{j,h+1}(s_{h+1}) + \beta_{j,t})$.
        \STATE \quad Truncate: $V_{j,h}(s_h) \leftarrow \min\{H+1-h, \tilde{V}_{j,h}(s_h)\}$.
        \STATE \quad Compute loss: $l_{j,h} \leftarrow \frac{H - \overline{\bm{r}}_{j,h}^k - V_{j,h+1}(s_{h+1})}{H}$.
        \STATE \quad Update policy: $\pi_{j,h}(\cdot|s_h) \leftarrow \text{SWAP\_BANDIT\_UPDATE}(a_{j,h}, l_{j,h})$ on $\text{SWAP\_BANDIT}_j(s_h, h)$.
    \ENDFOR
\ENDFOR
\STATE \textbf{Output Policy $\hat{\pi}$:} (Execution Protocol for CE)
\STATE \quad A single shared random seed is broadcast to all $N$ agents.
\STATE \quad This seed is used to sample $k \sim \text{Uniform}([K])$.
\STATE \quad At each step $h$, given $s_h$:
\STATE \quad \quad Let $t = N_{j,h}^k(s_h)$ (using the $k$ from the previous step).
\STATE \quad \quad All agents use the shared seed to sample an index $i \in [t]$ with probability $\alpha_t^i$.
\STATE \quad \quad All agents set $k \leftarrow k_h^i(s_h)$ (the episode index of the $i$-th visit).
\STATE \quad \quad Each agent $j$ plays $a_{j,h} \sim \pi_{j,h}^k(\cdot|s_h)$.
\STATE \quad The resulting joint policy is $\hat{\pi} = \hat{\pi}_1 \odot \dots \odot \hat{\pi}_N$.
\end{algorithmic}
\end{algorithm}

Following \cite{jin2021v}, we adopt an assumption on the SWAP-BANDIT sub-route. 
\begin{assumption}[Low-Swap-Regret Bandit \cite{jin2021v}]
\label{ass:swap-regret}
The $\text{SWAP\_BANDIT\_UPDATE}$ subroutine, for any $t$ and $\delta$, satisfies with probability $1-\delta$:
$$\max_{\psi \in \Psi} \sum_{i=1}^t \alpha_t^i [\langle \theta_i, l_i \rangle - \langle \psi \circ \theta_i, l_i \rangle] \le \xi_{sw}(B, t, \log(1/\delta)),$$
and $\sum_{t'=1}^t \xi_{sw}(B, t', \log(1/\delta)) \le \Xi_{sw}(B, t, \log(1/\delta))$, where $\Xi_{sw}$ is concave in $t$. 
\end{assumption}
For FTRL\_SWAP algorithm (Algorithm 6 in \cite{jin2021v}), these bounds are $\xi_{sw} = \mathcal{O}(B\sqrt{H \iota / t})$ and $\Xi_{sw} = \mathcal{O}(B\sqrt{Ht \iota})$.

\begin{lemma}[Optimism]
\label{lem:optimism}
With probability at least $1-\delta$, for all $(j, s, h, k)$:
$$\overline{V}_{j,h}^k(s) \ge \max_{\phi_j} V_{j,h}^{(\phi_j \circ \hat{\pi}_{j,h}^k) \odot \hat{\pi}_{-j,h}^k}(s).$$
\end{lemma}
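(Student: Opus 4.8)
The plan is to reproduce the optimism argument for V-learning from \cite{jin2021v}, modified to absorb the extra stochasticity introduced by observing a random realization of the reward \emph{vector} and then scalarizing it by $\lambda^j$. I would prove the claim by backward induction on $h$, and within each $h$ by induction on the visit count $t=N_{j,h}^k(s)$. The first step is to unroll the V-learning recursion: writing $k_h^i(s)$ for the episode of the $i$-th visit to $(s,h)$ and $\alpha_t^i=\alpha_i\prod_{\ell=i+1}^{t}(1-\alpha_\ell)$ for the standard cumulative weights (so $\sum_{i=1}^t\alpha_t^i=1$ for $t\ge1$, and the initial value $H+1-h$ drops out since $\alpha_1=1$), one gets
\[
\overline{V}_{j,h}^{k}(s)=\min\Big\{H+1-h,\;\sum_{i=1}^{t}\alpha_t^i\big(\overline{r}_{j,h}^{\,k_h^i}+\overline{V}_{j,h+1}^{\,k_h^i}(s_{h+1}^{k_h^i})+\beta_{j,i}\big)\Big\}.
\]
On the comparison side, let $\pi_j^{\dagger,k}$ be the modification of $\hat{\pi}_{j,h}^k$ attaining $V_{j,h}^{\dagger,k}(s):=\max_{\phi_j}V_{j,h}^{(\phi_j\circ\hat\pi_{j,h}^k)\odot\hat\pi_{-j,h}^k}(s)$; because $\hat\pi$ is precisely the certified $\alpha_t^i$-weighted mixture of the per-visit policies, $V_{j,h}^{\dagger,k}(s)$ decomposes over the \emph{same} weights $\alpha_t^i$ into one-step scalarized rewards plus continuation best-response values.

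\textbf{Core step.} Subtracting the two decompositions, the difference $\overline{V}_{j,h}^{k}(s)-V_{j,h}^{\dagger,k}(s)$ is lower-bounded by $\sum_i\alpha_t^i$ times: (a) a reward deviation $\overline{r}_{j,h}^{\,k_h^i}-(\lambda^j)^\top r_{j,h}(s,a_h^{k_h^i})$, (b) a continuation gap $\overline{V}_{j,h+1}^{\,k_h^i}-V_{j,h+1}^{\dagger}$ which is $\ge0$ by the inductive hypothesis, and (c) the bonus $\beta_{j,i}$; minus the swap-regret slack of the bandit subroutine run on the losses $l_{j,h}$. It therefore suffices to pick $\beta_{j,t}$ large enough that, after $\alpha_t^i$-weighting, it dominates the two remaining error terms: the martingale deviation of the scalarized rewards — which lie in $[-1,1]$ since $\lambda^j\in\Delta_M$ and $r_{j,h}\in[0,1]^M$, so a Freedman/Azuma bound on the weighted sum gives $\tilde{\mathcal{O}}(\sqrt{\sum_i(\alpha_t^i)^2\,\iota})=\tilde{\mathcal{O}}(\sqrt{H\iota/t})$ using $\sum_i(\alpha_t^i)^2\le 2H/t$ — and the swap regret, which by \Cref{ass:swap-regret} contributes at most $\Xi_{sw}(A_j,t,\iota)/t$ in weighted form. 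Taking $\beta_{j,t}=c\big(\sqrt{H\iota/t}+\Xi_{sw}(A_j,t,\iota)/t\big)$ for a sufficiently large constant $c$ makes the aggregate bracket nonnegative, and truncation only helps since $V_{j,h}^{\dagger,k}(s)\le H+1-h$ always; the base case $h=H+1$ is trivial as both sides vanish. A union bound over all $(j,s,h,k)$ and over the two high-probability events (reward concentration, swap-regret guarantee) yields the statement with probability $1-\delta$.

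\textbf{Main obstacle.} The step requiring genuine new work, relative to the single-objective analysis, is item (a): standard V-learning assumes the agent observes its exact expected reward, so there is no reward-concentration term at all, whereas here only a noisy vector reward is seen and scalarized. One must re-derive the optimism with an honest martingale term and then verify that a single bonus schedule $\beta_{j,t}$ simultaneously dominates both this term and the swap-regret term \emph{while remaining summable} (i.e., with $\sum_{t}\beta_{j,t}$ concave in $t$), since that summability is exactly what is later consumed in the regret bound of \Cref{thm:vlearning}. The rest of the argument — the recursion unrolling, the certified-policy decomposition, and the final union bound — is routine bookkeeping.
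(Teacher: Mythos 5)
Your overall route matches the paper's: backward induction on $h$, unrolling the weighted V-learning update, comparing against the best-response value decomposed over the same mixture weights $\alpha_t^i$, invoking the swap-regret guarantee of the bandit subroutine, and choosing the bonus to dominate the residual errors. However, there is a genuine gap in your error accounting, and it sits exactly in the part you dismiss as routine bookkeeping. In the unrolled update the continuation term is the optimistic value at the \emph{realized} next state, $\overline{V}_{j,h+1}^{k^i}(s_{h+1}^{k^i})$, whereas the best-response value contains an expectation $\E_{a}\,\E_{s'\sim P_h}[V^{\dagger}_{j,h+1}(s')]$ under the modified action distribution. The inductive hypothesis is pointwise in $s'$, so it alone does not bridge these two objects: after the swap-regret step changes the action distribution, you still need a martingale concentration replacing $\E_{a\sim\pi^{k^i}}\E_{s'\sim P_h}[\overline V_{j,h+1}^{k^i}(s')]$ by the realized sample, and those increments are bounded by $H$, not by $1$. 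After $\alpha_t^i$-weighting this contributes $\tilde{\mathcal O}(\sqrt{H^3\iota/t})$, not the $\tilde{\mathcal O}(\sqrt{H\iota/t})$ you budget for the (reward-only) noise. Likewise, the bandit is fed normalized losses $l_{j,h}\in[0,1]$ obtained by dividing by $H$, so converting the low-swap-regret guarantee back to value scale picks up a factor $H$, giving $H\,\xi_{sw}(A_j,t,\iota)$ rather than your $\Xi_{sw}(A_j,t,\iota)/t$. The paper calibrates the bonus to exactly these two terms, $\sum_{i=1}^t\alpha_t^i\beta_{j,i}=\Theta\bigl(H\,\xi_{sw}(A_j,t,\iota)+\sqrt{H^3\iota/t}\bigr)$, i.e.\ $\beta_{j,t}\asymp A_j\sqrt{H^3\iota/t}$ for the FTRL subroutine; with your smaller schedule $\beta_{j,t}=c\bigl(\sqrt{H\iota/t}+\Xi_{sw}(A_j,t,\iota)/t\bigr)$ the weighted error terms are not dominated and the claimed inequality can fail.

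A secondary point: your framing that the genuinely new work is the reward noise is misleading. The paper absorbs the randomness of the scalarized reward $(\bm{\lambda}^j)^\top\bm{r}_{j,h}^{k}\in[0,1]$ into the very same $\mathcal O(\sqrt{H^3\iota/t})$ martingale term that the single-objective V-learning analysis already carries for action-sampling and transition noise, since the reward contribution is lower order than the $H$-bounded next-value increments. Once the $H$-scale concentration above is restored (and the $H$ factor on the swap-regret term is reinstated), the rest of your induction—the base case, the pointwise use of the inductive hypothesis, the truncation argument, and the union bound over $(j,s,h,k)$—goes through essentially as in the paper.
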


\begin{proof}
We prove by backward induction on $h$. The base case $h=H+1$ is trivial, as $\overline{V}_{j,H+1}^k(s) = 0$ and $V_{j,H+1}^\pi(s) = 0$.
Assume the hypothesis holds for $h+1$. Let $t = N_{j,h}^k(s)$ and $k^1, \dots, k^t < k$ be the episodes of the $t$ previous visits to $(s, h)$.
\begin{align*}
    & \max_{\phi_j} V_{j,h}^{(\phi_j \circ \hat{\pi}_{j,h}^k) \odot \hat{\pi}_{-j,h}^k}(s) \\
    &= \max_{\phi_{j,h}} \sum_{i=1}^t \alpha_t^i \E_{a \sim (\phi_{j,h} \circ \pi_{j,h}^{k^i}) \odot \pi_{-j,h}^{k^i}} \left[ \overline{r}_{j,h}(s, a) + \E_{s'} \left[ \max_{\phi_{j,h+1}} V_{j,h+1}^{(\phi_{j,h+1} \circ \hat{\pi}_{j,h+1}^{k^i}) \odot \hat{\pi}_{-j,h+1}^{k^i}}(s') \right] \right] \tag{1} \\
    &\le \max_{\phi_{j,h}} \sum_{i=1}^t \alpha_t^i \E_{a \sim (\phi_{j,h} \circ \pi_{j,h}^{k^i}) \odot \pi_{-j,h}^{k^i}} \left[ \overline{r}_{j,h}(s, a) + \Prob_h \overline{V}_{j,h+1}^{k^i}(s) \right] \tag{2} \\
    &\le \sum_{i=1}^t \alpha_t^i \E_{a \sim \pi_h^{k^i}} \left[ \overline{r}_{j,h}(s, a) + \Prob_h \overline{V}_{j,h+1}^{k^i}(s) \right] + H \xi_{sw}(A_j, t, \iota) \tag{3} \\
    &\le \sum_{i=1}^t \alpha_t^i \left[ \overline{\bm{r}}_{j,h}^{k^i} + \overline{V}_{j,h+1}^{k^i}(s_{h+1}^{k^i}) \right] + \mathcal{O}(\sqrt{H^3 \iota / t}) + H \xi_{sw}(A_j, t, \iota) \tag{4} \\
    &\le \sum_{i=1}^t \alpha_t^i \left[ \overline{\bm{r}}_{j,h}^{k^i} + \overline{V}_{j,h+1}^{k^i}(s_{h+1}^{k^i}) + \beta_{j,i} \right] + \alpha_t^0(H-h+1) \tag{5} \\
    &\le \overline{V}_{j,h}^k(s). \tag{6}
\end{align*}
Here, (1) is the Bellman expansion for the best-response value under the output policy $\hat{\pi}_h^k$, which is a mixture of policies $\{\pi_h^{k^i}\}$ weighted by $\alpha_t^i$; (2) Applies the Induction Hypothesis to the $V_{j,h+1}$ term; 
(3) uses the definition of the swap-regret bandit (Assumption \ref{ass:swap-regret}). The regret is bounded by $\xi_{sw}$, and the loss is scaled by $H$ (as the loss $l_{j,h}$ is in $[0,1]$); 
(4) follows from standard martingale concentration inequalities \cite{jin2021v}. The term $\mathcal{O}(\sqrt{H^3 \iota / t})$ bounds the deviation of the empirical sum of observed rewards $\overline{\bm{r}}_{j,h}^{k^i}$ and observed next values $\overline{V}_{j,h+1}^{k^i}(s_{h+1}^{k^i})$ from the true expected value in (3). This single term accounts for all sources of randomness: policy sampling, transition stochasticity, and reward stochasticity; 
(5) holds by our choice of the bonus $\beta_{j,i}$, which is set such that $\sum_{i=1}^t \alpha_t^i \beta_{j,i} = \Theta(H \xi_{sw} + \sqrt{H^3 \iota / t})$ to cancel both the regret and concentration error terms. We also add the initialization value (which is 0 if $t>0$); And (6) is the definition of the optimistic estimator $\overline{V}_{j,h}^k(s)$ from Algorithm \ref{alg:mo-v-learning} (Line 16) and Lemma 11 in \cite{jin2021v}.
\end{proof}

\begin{lemma}[Pessimism]
\label{lem:pessimism}
With probability at least $1-\delta$, for all $(j, s, h, k)$:
$$\underline{V}_{j,h}^k(s) \le V_{j,h}^{\hat{\pi}_h^k}(s).$$
\end{lemma}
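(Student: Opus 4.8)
The plan is to prove this by backward induction on $h$, mirroring the structure of the Optimism lemma (Lemma~\ref{lem:optimism}) but in the reverse direction and without any best-response maximization, so that the swap-regret term never enters the chain of inequalities. Recall that the pessimistic estimate $\underline{V}_{j,h}^k(s)$ is maintained exactly like $\overline{V}_{j,h}^k(s)$ except that the bonus $\beta_{j,i}$ is \emph{subtracted} rather than added, with a truncation below at $0$; that is, for $t=N_{j,h}^k(s)$ and $k^1<\dots<k^t<k$ the episodes of the previous visits to $(s,h)$, one has $\underline{V}_{j,h}^k(s)=\max\{0,\sum_{i=1}^t\alpha_t^i[\overline{\bm r}_{j,h}^{k^i}+\underline{V}_{j,h+1}^{k^i}(s_{h+1}^{k^i})-\beta_{j,i}]\}$. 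The base case $h=H+1$ is immediate since $\underline{V}_{j,H+1}^k(s)=0=V_{j,H+1}^{\hat\pi_{H+1}^k}(s)$, and we work on the same high-probability event used in Lemma~\ref{lem:optimism} (the concentration event, intersected with Assumption~\ref{ass:swap-regret}), which has probability at least $1-\delta$.

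For the inductive step at step $h$, fix $(j,s,k)$ and note that the output policy $\hat\pi_h^k$ first samples an index $i\in[t]$ with probability $\alpha_t^i$, plays $\pi_{j,h}^{k^i}$ at step $h$, and then recursively continues as $\hat\pi_{h+1}^{k^i}$ (the certified-policy construction of \cite{jin2021v}). Expanding the value function along this recursion gives
\[
V_{j,h}^{\hat\pi_h^k}(s)=\sum_{i=1}^t\alpha_t^i\,\E_{a\sim\pi_h^{k^i}}\!\left[\overline r_{j,h}(s,a)+\Prob_h V_{j,h+1}^{\hat\pi_{h+1}^{k^i}}(s)\right].
\]
Applying the induction hypothesis $V_{j,h+1}^{\hat\pi_{h+1}^{k^i}}(s')\ge\underline{V}_{j,h+1}^{k^i}(s')$ lower-bounds this by $\sum_i\alpha_t^i\E_{a\sim\pi_h^{k^i}}[\overline r_{j,h}(s,a)+\Prob_h\underline{V}_{j,h+1}^{k^i}(s)]$. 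I then invoke the same martingale concentration bound as in step~(4) of the Optimism proof, now as a lower-tail inequality, to replace the expectation over policy-sampling, transition, and reward noise by the empirical realizations at an additive cost of $\mathcal{O}(\sqrt{H^3\iota/t})$. Since the bonus was chosen so that $\sum_{i=1}^t\alpha_t^i\beta_{j,i}=\Theta(H\xi_{sw}(A_j,t,\iota)+\sqrt{H^3\iota/t})$, subtracting $\sum_i\alpha_t^i\beta_{j,i}$ more than absorbs this error, yielding $V_{j,h}^{\hat\pi_h^k}(s)\ge\sum_i\alpha_t^i[\overline{\bm r}_{j,h}^{k^i}+\underline{V}_{j,h+1}^{k^i}(s_{h+1}^{k^i})-\beta_{j,i}]$; and because rewards are non-negative, $V_{j,h}^{\hat\pi_h^k}(s)\ge 0$, so the outer truncation preserves the inequality and $\underline{V}_{j,h}^k(s)\le V_{j,h}^{\hat\pi_h^k}(s)$, completing the induction.

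The one genuinely new point relative to the Optimism argument is that pessimism tracks the value of a \emph{single} policy rather than a best response, so $\xi_{sw}$ does not appear in the derivation — which only makes the bonus cancellation slacker, not tighter. The main obstacle, and the step demanding the most care, is the concentration argument: one must verify that the lower-tail deviation of $\sum_i\alpha_t^i[\overline{\bm r}_{j,h}^{k^i}+\underline{V}_{j,h+1}^{k^i}(s_{h+1}^{k^i})]$ around its conditional expectation is controlled jointly over (i) the action randomness $a_h^{k^i}\sim\pi_h^{k^i}$, (ii) the transition $s_{h+1}^{k^i}\sim\Prob_h$, and (iii) the reward realization $\bm r_{j,h}^{k^i}$ with $\E[\bm r_{j,h}^{k^i}]=\bm r_{j,h}$, using that $\underline{V}_{j,h+1}^{k^i}$ is measurable with respect to data strictly prior to episode $k^i$, so the summands form a martingale-difference sequence with increments bounded by $\mathcal{O}(H)$ and weights satisfying $\sum_i(\alpha_t^i)^2=\mathcal{O}(H/t)$. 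This is precisely the analogue of the corresponding lemma in \cite{jin2021v}, now carried through with the scalarized reward $\overline{\bm r}_{j,h}^k=(\lambda^j)^\top\bm r_{j,h}^k$, whose range $[0,H]$ is inherited from $\bm r_{j,h}\in[0,1]^M$ and $\lambda^j\in\Delta_M$.
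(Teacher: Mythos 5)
Your proposal is correct and follows essentially the same route as the paper: backward induction on $h$, expanding $V_{j,h}^{\hat\pi_h^k}$ via the certified-policy mixture $\sum_i\alpha_t^i\,\E_{a\sim\pi_h^{k^i}}[\cdot]$, applying the induction hypothesis at $h+1$, invoking martingale concentration at cost $\mathcal{O}(\sqrt{H^3\iota/t})$, and letting the subtracted bonus $\beta_{j,i}$ absorb that error (with no swap-regret term needed, exactly as in the paper). The only cosmetic difference is how you handle the initialization/truncation term (your $\max\{0,\cdot\}$ versus the paper's $\alpha_t^0(H-h+1)$ term), which does not change the argument.
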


\begin{proof}
Again, we prove by backward induction on $h$. The base case $h=H+1$ is trivial.
Assume the hypothesis holds for $h+1$.
\begin{align*}
    V_{j,h}^{\hat{\pi}_h^k}(s) &= \sum_{i=1}^t \alpha_t^i \E_{a \sim \pi_h^{k^i}} \left[ \overline{r}_{j,h}(s, a) + \E_{s'} \left[ V_{j,h+1}^{\hat{\pi}_{h+1}^{k^i}}(s') \right] \right] \tag{1} \\
    &\ge \sum_{i=1}^t \alpha_t^i \E_{a \sim \pi_h^{k^i}} \left[ \overline{r}_{j,h}(s, a) + \Prob_h \underline{V}_{j,h+1}^{k^i}(s) \right] \tag{2} \\
    &\ge \sum_{i=1}^t \alpha_t^i \left[ \overline{\bm{r}}_{j,h}^{k^i} + \underline{V}_{j,h+1}^{k^i}(s_{h+1}^{k^i}) \right] - \mathcal{O}(\sqrt{H^3 \iota / t}) \tag{3} \\
    &\ge \sum_{i=1}^t \alpha_t^i \left[ \overline{\bm{r}}_{j,h}^{k^i} + \underline{V}_{j,h+1}^{k^i}(s_{h+1}^{k^i}) - \beta_{j,i} \right] + \alpha_t^0(H-h+1)   \\
    &\ge \underline{V}_{j,h}^k(s).  
\end{align*}
Here, (1) is the definition of the value of the output policy $\hat{\pi}$; 
(2) applies the Induction Hypothesis to the $V_{j,h+1}$ term;
(3) follows from martingale concentration inequalities, bounding the deviation of the empirical sum of observed rewards from the expectation in (2).
\end{proof}

\begin{theorem}[Sample Complexity for PCE]
Let $A = \max_j A_j$ and $\iota = \log(N H S A K / \delta)$. Run MO-V-Learning (Algorithm \ref{alg:mo-v-learning}) for $K$ episodes using a bandit subroutine satisfying Assumption \ref{ass:swap-regret}. Set the bonus for each agent $j$ such that $\sum_{i=1}^t \alpha_t^i \beta_{j,i} = \Theta(H \xi_{sw}(A_j, t, \iota) + \sqrt{H^3 \iota / t})$ (e.g., $\beta_{j,t} = c \cdot A_j\sqrt{H^3\iota/t}$ for FTRL\_swap). Then, with probability at least $1-\delta$, the output policy $\hat{\pi}$ is an $\epsilon$-PCE of $G_{\bm\Lambda}$, where:
$$\epsilon = \max_{j, \phi_j} \left[ V_{j,1}^{(\phi_j \circ \hat{\pi}_j) \odot \hat{\pi}_{-j}}(s_1) - V_{j,1}^{\hat{\pi}}(s_1) \right] \le \mathcal{O}\left( A \sqrt{\frac{H^5 S \iota}{K}} \right).$$
(Here, $V$ denotes the value in the scalarized game $G_\Lambda$.)
\end{theorem}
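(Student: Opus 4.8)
The plan is to carry out the certified-policy analysis of V-learning \cite{jin2021v}, specialized to the scalarized game $G_{\bm\Lambda}$ and extended to absorb the extra randomness coming from the stochastic vector rewards. Throughout, condition on the high-probability event (measure at least $1-\delta$) on which \Cref{lem:optimism} and \Cref{lem:pessimism} hold, so that $\overline{V}_{j,h}^k(s)$ upper-bounds the best-response value against $\hat{\pi}_h^k$ and $\underline{V}_{j,h}^k(s)$ lower-bounds $V_{j,h}^{\hat{\pi}_h^k}(s)$ for all $(j,s,h,k)$; set $\delta_{j,h}^k(s):=\overline{V}_{j,h}^k(s)-\underline{V}_{j,h}^k(s)\ge 0$. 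The output policy $\hat{\pi}$ is a correlated policy in which all agents share the broadcast seed and draw a top-level index $k\sim\mathrm{Unif}[K]$; since a unilateral deviation $\phi_j$ acts identically over this mixture, $V_{j,1}^{(\phi_j\circ\hat{\pi}_j)\odot\hat{\pi}_{-j}}(s_1)=\frac1K\sum_kV_{j,1}^{(\phi_j\circ\hat{\pi}_{j,1}^k)\odot\hat{\pi}_{-j,1}^k}(s_1)$ and $V_{j,1}^{\hat{\pi}}(s_1)=\frac1K\sum_kV_{j,1}^{\hat{\pi}_1^k}(s_1)$. Hence $\max_{\phi_j}V_{j,1}^{(\phi_j\circ\hat{\pi}_j)\odot\hat{\pi}_{-j}}(s_1)\le\frac1K\sum_k\max_{\phi_j}V_{j,1}^{(\phi_j\circ\hat{\pi}_{j,1}^k)\odot\hat{\pi}_{-j,1}^k}(s_1)\le\frac1K\sum_k\overline{V}_{j,1}^k(s_1)$ (max of a mean is at most the mean of maxes, then \Cref{lem:optimism}), while $V_{j,1}^{\hat{\pi}}(s_1)\ge\frac1K\sum_k\underline{V}_{j,1}^k(s_1)$ by \Cref{lem:pessimism}; subtracting bounds the theorem's left-hand side $\epsilon$ by $\max_j\frac1K\sum_{k=1}^K\delta_{j,1}^k(s_1)$, and by \Cref{thm:PCE-ce-equiv} an $\epsilon$-CE of $G_{\bm\Lambda}$ (for positive $\bm\Lambda$) is an $\epsilon$-WPCE of $G$. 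So it suffices to bound $\frac1K\sum_{k=1}^K\delta_{j,1}^k(s_1)$ for each $j$.

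The heart of the argument is a one-step recursion for $\sum_{k=1}^K\delta_{j,h}^k$. Fix a step $h$, an episode $k$, and a state $s$ with visit count $t=N_{j,h}^k(s)$ whose previous visits occurred at episodes $k^1,\dots,k^t$. The V-learning updates express both $\overline{V}_{j,h}^k(s)$ and $\underline{V}_{j,h}^k(s)$ as $\alpha_t^i$-weighted averages of $\overline{\bm r}_{j,h}^{k^i}+(\text{next-stage value at }s_{h+1}^{k^i})\pm\beta_{j,i}$ plus the initialization mass $\alpha_t^0(H-h+1)$, using \emph{the same} observed rewards and next states in both; subtracting, all reward and transition randomness cancels, leaving $\delta_{j,h}^k(s)\le\alpha_t^0(H-h+1)+\sum_{i=1}^t\alpha_t^i\big(\delta_{j,h+1}^{k^i}(s_{h+1}^{k^i})+2\beta_{j,i}\big)$. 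Summing over $k\in[K]$ and regrouping the double sum by visited $(s,h)$ pairs, the standard learning-rate identities ($\sum_{i=0}^t\alpha_t^i=1$; $\alpha_t^0=0$ for $t\ge1$; $\sum_{t\ge i}\alpha_t^i\le1+\tfrac1H$) turn $\sum_k\sum_i\alpha_t^i\delta_{j,h+1}^{k^i}(s_{h+1}^{k^i})$ into at most $(1+\tfrac1H)\sum_k\delta_{j,h+1}^k(s_h^k)$. Unrolling from $h=1$ to $H$ then contributes the bounded factor $(1+\tfrac1H)^H\le e$, an $\mathcal{O}(H^2S)$ term from the $\alpha_t^0$-initializations (at most $S$ first-visits per step), and the accumulated bonus $\sum_{h=1}^H\sum_{k=1}^K 2\sum_{i=1}^{t}\alpha_t^i\beta_{j,i}$ with $t=N_{j,h}^k(s_h^k)$.

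It remains to control the accumulated bonus. With $\beta_{j,t}=\Theta\!\big(A_j\sqrt{H^3\iota/t}\big)$ as specified (valid under Assumption~\ref{ass:swap-regret}), $\sum_{i=1}^t\alpha_t^i\beta_{j,i}=\Theta\!\big(H\,\xi_{sw}(A_j,t,\iota)+\sqrt{H^3\iota/t}\big)=\Theta\!\big(A_j\sqrt{H^3\iota/t}\big)$, which by construction dominates both the swap-regret error and the martingale error appearing inside \Cref{lem:optimism} and \Cref{lem:pessimism}. Using the pigeonhole/Cauchy--Schwarz bound $\sum_{k=1}^K N_{j,h}^k(s_h^k)^{-1/2}\le\mathcal{O}(\sqrt{SK})$ (since $\sum_s N_{j,h}^K(s)=K$), we get $\sum_{h=1}^H\sum_{k=1}^K\sum_{i}\alpha_t^i\beta_{j,i}=\mathcal{O}\!\big(A_j\sqrt{H^5SK\iota}\big)$. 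Combining, $\sum_{k=1}^K\delta_{j,1}^k(s_1)=\mathcal{O}\!\big(H^2S+A_j\sqrt{H^5SK\iota}\big)$; dividing by $K$, writing $A=\max_jA_j$, and dropping the lower-order $H^2S/K$ gives $\epsilon\le\mathcal{O}\!\big(A\sqrt{H^5S\iota/K}\big)$, the claimed bound, from which the stated polynomial sample complexity for an $\epsilon$-WPCE follows by inverting for $K$.

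The main obstacle is technical rather than conceptual, and is of two kinds. First, inherited from \cite{jin2021v}: the learning-rate bookkeeping in the recursion must pair the $\alpha_t^i$ weights with the per-state visit counts so the repeated $(1+\tfrac1H)$ factors remain $\mathcal{O}(1)$ over all $H$ stages and the empirical sums telescope instead of compounding. Second, genuinely new here: the observed per-step scalarized reward $\overline{\bm r}_{j,h}^k=(\bm{\lambda}^j)^\top\bm r_{j,h}^k$ is itself random, not a deterministic function of $(s,\bm a)$, so the concentration steps underlying \Cref{lem:optimism} and \Cref{lem:pessimism} must also absorb reward noise on top of the usual policy-sampling and transition noise; since $\overline{\bm r}_{j,h}^k\in[0,1]$ with conditional mean $(\bm{\lambda}^j)^\top\bm r_{j,h}(s_h,\bm a_h)$, this adds only an $\mathcal{O}(\sqrt{H^3\iota/t})$ contribution of the same order as the existing martingale error, swallowed by the same bonus $\beta_{j,t}$, so the final rate is unchanged. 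Verifying that the output protocol indeed realizes the stated $\alpha$-weighted value identities for the scalarized game is the remaining routine check.
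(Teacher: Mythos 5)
Your proposal is correct and follows essentially the same route as the paper: bound the deviation gap of the mixture output policy by the average of $\overline{V}_{j,1}^k-\underline{V}_{j,1}^k$ via the optimism/pessimism lemmas, derive the per-step recursion $\delta_{j,h}^k\le\alpha_t^0 H+\sum_i\alpha_t^i\delta_{j,h+1}^{k^i}+2\sum_i\alpha_t^i\beta_{j,i}$, telescope with the standard learning-rate identities (factor $e$, $\mathcal{O}(H^2S)$ initialization term), and control the accumulated bonus by pigeonhole/Cauchy--Schwarz to get $\mathcal{O}(A\sqrt{H^5SK\iota})$. The only cosmetic difference is that you instantiate the FTRL bonus form directly rather than carrying the generic $\xi_{sw},\Xi_{sw}$ bounds and substituting at the end, which does not change the argument.
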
 
\begin{proof}
Now we bound the swap regret by the gap between the estimators and then bound the gap itself.
The swap regret of the final output policy $\hat{\pi}$ (which is a mixture over all $k \in [K]$) is bounded by the average gap over all $K$ episodes.  Let 
$$\epsilon = \max_{j, \phi_j} \left[ V_{j,1}^{(\phi_j \circ \hat{\pi}_j) \odot \hat{\pi}_{-j}}(s_1) - V_{j,1}^{\hat{\pi}}(s_1) \right] \le \frac{1}{K} \sum_{k=1}^K \max_j \left[ \overline{V}_{j,1}^k(s_1) - \underline{V}_{j,1}^k(s_1) \right].$$
Let $\delta_{j,h}^k = \overline{V}_{j,h}^k(s_h^k) - \underline{V}_{j,h}^k(s_h^k) \ge 0$. Let $\delta_h^k = \max_j \delta_{j,h}^k$.
Let $n_h^k = N_{j,h}^k(s_h^k)$ be the visit count at episode $k$ to state $s_h^k$. Then it holds that 
\begin{align*}
    \delta_{j,h}^k &= \overline{V}_{j,h}^k(s_h^k) - \underline{V}_{j,h}^k(s_h^k) \quad  \\
    &\le \left( \alpha_{n_h^k}^0 H + \sum_{i=1}^{n_h^k} \alpha_{n_h^k}^i [\overline{\bm{r}}_{j,h}^{k^i} + \overline{V}_{j,h+1}^{k^i} + \beta_{j,i}] \right) - \left( \sum_{i=1}^{n_h^k} \alpha_{n_h^k}^i [\overline{\bm{r}}_{j,h}^{k^i} + \underline{V}_{j,h+1}^{k^i} - \beta_{j,i}] \right) \\
    &\le \alpha_{n_h^k}^0 H + \sum_{i=1}^{n_h^k} \alpha_{n_h^k}^i \left[ (\overline{V}_{j,h+1}^{k^i} - \underline{V}_{j,h+1}^{k^i}) + 2\beta_{j,i} \right] \quad  \\
    &\le \alpha_{n_h^k}^0 H + \sum_{i=1}^{n_h^k} \alpha_{n_h^k}^i \delta_{j,h+1}^{k^i} + 2 \sum_{i=1}^{n_h^k} \alpha_{n_h^k}^i \beta_{j,i},
\end{align*}
where we utilize \Cref{lem:optimism} and \Cref{lem:pessimism}. By our choice of bonus $\sum_{i=1}^t \alpha_t^i \beta_{j,i} = \Theta(H \xi_{sw}(A_j, t, \iota) + \sqrt{H^3 \iota / t})$, and taking the max over $j$:
$$\delta_h^k \le \alpha_{n_h^k}^0 H + \sum_{i=1}^{n_h^k} \alpha_{n_h^k}^i \delta_{h+1}^{k^i} + \mathcal{O}(H \xi_{sw}(A, n_h^k, \iota) + \sqrt{H^3 \iota / n_h^k}).$$
Summing over $k=1 \dots K$ further implies that:
$$\sum_{k=1}^K \delta_h^k \le \sum_{k=1}^K \alpha_{n_h^k}^0 H + \sum_{k=1}^K \sum_{i=1}^{n_h^k} \alpha_{n_h^k}^i \delta_{h+1}^{k^i} + \sum_{k=1}^K \mathcal{O}(H \xi_{sw}(A, n_h^k, \iota) + \sqrt{H^3 \iota / n_h^k}).$$

Since the first term $\sum \alpha_{n_h^k}^0 H \le SH$, and the second term $\sum_{k=1}^K \sum_{i=1}^{n_h^k} \alpha_{n_h^k}^i \delta_{h+1}^{k^i} \le (1 + 1/H) \sum_{k=1}^K \delta_{h+1}^k$.
Telescoping this recurrence from $h=1$ to $H$ further implies that 
$$\sum_{k=1}^K \delta_1^k \le eSH^2 + e \sum_{h=1}^H \sum_{k=1}^K \mathcal{O}(H \xi_{sw}(A, n_h^k, \iota) + \sqrt{H^3 \iota / n_h^k}).$$
We bound the final sum using the pigeonhole principle and concavity of $\Xi_{sw}$ and $\sqrt{\cdot}$ as \cite{jin2021v}:
\begin{align*}
    &\sum_{h=1}^H \sum_{k=1}^K \mathcal{O}(H \xi_{sw}(A, n_h^k, \iota) + \sqrt{H^3 \iota / n_h^k}) \\&= \sum_{h=1}^H \sum_{s \in \calS} \sum_{n=1}^{N_h^K(s)} \mathcal{O}(H \xi_{sw}(A, n, \iota) + \sqrt{H^3 \iota / n}) \\
    &\le \sum_{h=1}^H \sum_{s \in \calS} \mathcal{O}(H \Xi_{sw}(A, N_h^K(s), \iota) + \sqrt{H^3 N_h^K(s) \iota}) \\
    &\le \sum_{h=1}^H \mathcal{O}(H S \Xi_{sw}(A, K/S, \iota) + \sqrt{H^3 S K \iota}) \quad \text{(since } \sum_s N_h^K(s) = K \text{ and concavity)} \\
    &\le \mathcal{O}(H^2 S \Xi_{sw}(A, K/S, \iota) + \sqrt{H^5 S K \iota}).
\end{align*}
Plugging in the bound for FTRL\_swap, $\Xi_{sw}(B, t, \iota) = \mathcal{O}(B \sqrt{Ht\iota})$:
$$\sum_{k=1}^K \delta_1^k \le \mathcal{O}(H^2 S (A \sqrt{H(K/S)\iota}) + \sqrt{H^5 S K \iota}) = \mathcal{O}(A \sqrt{H^5 S K \iota}).$$
The average gap, which bounds the swap regret, is:
$$\epsilon \le \frac{1}{K} \sum_{k=1}^K \delta_1^k \le \mathcal{O}\left( A \sqrt{\frac{H^5 S \iota}{K}} \right).$$
This completes the proof.
\end{proof}

\section{Proofs for \Cref{sec:twophase}}

\begin{lemma}[Concentration of Empirical Model]
\label{lem:concentration}
Let $\mathcal{D}$ be the dataset collected after $T$ episodes of Algorithm \ref{alg:combined}. Let $(\{\hat{\br}_{j}\}, \hat{\Prob})$ be the empirical model estimated from $\mathcal{D}$. Then, with probability at least $1-\delta$, for all $(s,\ba,h)$, all players $j$, all objectives $i$, and any function $V: \mathcal{S} \to [0,H]$:
\begin{align*}
    |\hat{r}_{j,i,h}(s,\ba) - r_{j,i,h}(s,\ba)| &\le \sqrt{\frac{C_r}{N_h(s,\ba) \vee 1}} \wedge 1 =: \Psi_{j,i,h}(s,\ba), \\
    |\sum_{s'} (\hat{\Prob}_h(s'|s,\ba) - \Prob_h(s'|s,\ba))V(s')| &\le \sqrt{\frac{C_p S H^2}{N_h(s,\ba) \vee 1}} \wedge H =: \Phi_{h}(s,\ba),
\end{align*}
where $N_h(s,\ba)$ is the total visitation count and $C_r, C_p$ are logarithmic factors in problem parameters and $1/\delta$.
\end{lemma}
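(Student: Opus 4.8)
The plan is to establish both concentration bounds via standard martingale arguments, the main subtlety being that the data in $\mathcal{D}$ is collected by an adaptive (non-i.i.d.) exploration policy $\bar{\bpi}^t$ that changes from episode to episode. First I would fix a tuple $(s,\ba,h)$ and a player $j$, objective $i$, and consider the sequence of episodes in which $(s,\ba)$ is visited at step $h$. Let $\tau_1 < \tau_2 < \cdots$ index these episodes and let $X_\ell := r_{j,i,h}^{\tau_\ell}(s,\ba)$ be the $\ell$-th observed realization of the reward; since rewards are bounded in $[0,1]$ and $\E[r_{j,i,h}^{\tau_\ell} \mid \mathcal{F}_{\tau_\ell - 1}] = r_{j,i,h}(s,\ba)$, the partial sums $\sum_{\ell=1}^n (X_\ell - r_{j,i,h}(s,\ba))$ form a martingale. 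Applying the Azuma--Hoeffding inequality (or a Freedman-type bound) together with a union bound over $n \le T$, over all $(s,\ba,h)$ pairs, all $N$ players and all $M$ objectives, gives
\[
|\hat{r}_{j,i,h}(s,\ba) - r_{j,i,h}(s,\ba)| \le \sqrt{\frac{c\log(NMSAHT/\delta)}{N_h(s,\ba) \vee 1}}
\]
on the good event; absorbing the log factor into $C_r$ and noting the trivial bound $|\hat{r} - r| \le 1$ yields the stated $\Psi_{j,i,h}$. The conditioning-on-visitation-count step is exactly the standard trick (cf. the analysis of UCBVI / optimistic value iteration): one argues that the stopping times $\tau_\ell$ are well-defined and the martingale structure survives, so there is nothing delicate beyond bookkeeping.

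For the transition bound I would proceed similarly but first reduce the functional deviation $|\sum_{s'} (\hat{\Prob}_h - \Prob_h)(s'|s,\ba) V(s')|$ to an $\ell_1$ deviation of the empirical transition vector. By H\"older's inequality (exactly as in \Cref{lemma1}),
\[
\Bigl|\sum_{s'} (\hat{\Prob}_h(s'|s,\ba) - \Prob_h(s'|s,\ba)) V(s')\Bigr| \le \|\hat{\Prob}_h(\cdot|s,\ba) - \Prob_h(\cdot|s,\ba)\|_1 \cdot \|V\|_\infty \le H \cdot \|\hat{\Prob}_h(\cdot|s,\ba) - \Prob_h(\cdot|s,\ba)\|_1 .
\]
Then I would invoke the standard $\ell_1$-concentration bound for empirical distributions over an $S$-element support under adaptive sampling: with probability at least $1-\delta$, simultaneously for all $(s,\ba,h)$,
\[
\|\hat{\Prob}_h(\cdot|s,\ba) - \Prob_h(\cdot|s,\ba)\|_1 \le \sqrt{\frac{c' S \log(SAHT/\delta)}{N_h(s,\ba) \vee 1}} .
\]
Again the adaptivity is handled by conditioning on the visitation counts and applying a martingale/empirical-Bernstein argument per coordinate, or by citing the Weissman inequality combined with a union bound over counts; this is precisely the argument used in \Cref{lemma1} of the present paper. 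Multiplying by $\|V\|_\infty \le H$ and rolling the constants into $C_p$ gives the claimed $\Phi_h(s,\ba)$, with the trivial truncation at $H$ handled as before.

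The only genuine obstacle is ensuring the concentration holds \emph{uniformly over all functions $V:\mathcal{S}\to[0,H]$} rather than for a single fixed $V$; this is why I route through the $\ell_1$-ball bound on $\hat{\Prob}_h - \Prob_h$ instead of a direct Hoeffding bound on $\langle \hat{\Prob}_h - \Prob_h, V\rangle$, since the latter would need a covering-number argument over the function class. With the $\ell_1$ bound in hand the supremum over $V$ is free. Finally, I would take a union bound over the two events (reward and transition concentration), each of failure probability at most $\delta/2$, to obtain the combined statement with total failure probability $\delta$. The rest is routine: collecting the logarithmic factors into $C_r$ and $C_p$ and observing that the stated bonuses dominate the concentration errors on the good event.
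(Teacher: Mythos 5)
Your proposal is correct and follows essentially the same route the paper relies on: the paper states \Cref{lem:concentration} without a separate proof, implicitly reusing the argument it spells out for \Cref{lemma1} — H\"older's inequality to reduce the functional deviation to an $\ell_1$ deviation, a Weissman/Hoeffding-type $\ell_1$ bound handled under adaptive sampling by conditioning on visitation counts, and a union bound over all $(s,\ba,h)$, players, objectives, and counts, with the reward term handled by the analogous scalar martingale bound. Your explicit remark on why the $\ell_1$ route gives uniformity over all $V:\mathcal{S}\to[0,H]$ is exactly the right justification and matches the paper's treatment.
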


\begin{lemma}[Value Difference Lemma for N-Player Games]
\label{lem:value_diff}
Let $V_{j}^{\bpi}$ and $\hat{V}_{j}^{\bpi}$ be the scalarized value functions for player $j$ under joint policy $\bpi$ in the true game $M$ and the empirical game $\hat{M}$, respectively. With probability at least $1-\delta$, for any player $j$ and any policy $\bpi$:
$$ |V_{j,1}^{\bpi}(s_1) - \hat{V}_{j,1}^{\bpi}(s_1)| \le \E_{\bpi} \left[ \sum_{h=1}^H \left( \Psi_{j,h}(s_h, \ba_h) + \Phi_h(s_h, \ba_h) \right) \right].$$
\end{lemma}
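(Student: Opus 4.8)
The plan is to prove this by backward induction on the step index $h$, establishing the stronger per-state bound
$$
|V_{j,h}^{\bpi}(s) - \hat{V}_{j,h}^{\bpi}(s)| \le \E_{\bpi}\!\left[\sum_{h'=h}^{H}\bigl(\Psi_{j,h'}(s_{h'},\ba_{h'}) + \Phi_{h'}(s_{h'},\ba_{h'})\bigr) \,\Big|\, s_h = s\right]
$$
for all $s$ and all $j$, then specializing to $h=1$, $s=s_1$. Throughout I work on the high-probability event of \Cref{lem:concentration}, so all the stated reward and transition deviation bounds hold simultaneously.

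For the base case $h = H+1$ (or $h=H$ depending on the indexing convention), both value functions are zero (resp. equal the one-step scalarized reward), so the claim is immediate. For the inductive step, I would expand both $V_{j,h}^{\bpi}(s)$ and $\hat V_{j,h}^{\bpi}(s)$ via the Bellman equation for the scalarized reward $r_{j,h} = (\blambda_j)^\top \br_{j,h}$, i.e.\ writing $V_{j,h}^{\bpi}(s) = \E_{\ba \sim \bpi_h(\cdot|s)}\bigl[r_{j,h}(s,\ba) + \sum_{s'}\Prob_h(s'|s,\ba)\,V_{j,h+1}^{\bpi}(s')\bigr]$ and similarly for the hatted quantities with $\hat r_{j,h}$ and $\hat\Prob_h$. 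Subtracting and applying the triangle inequality splits the error into three pieces: (i) the reward estimation error $|\hat r_{j,h}(s,\ba) - r_{j,h}(s,\ba)|$, which is bounded by $\Psi_{j,h}(s,\ba)$ after noting $(\blambda_j)^\top(\hat\br_{j,h} - \br_{j,h}) \le \sum_i \lambda_{j,i}|\hat r_{j,i,h} - r_{j,i,h}| \le \max_i |\hat r_{j,i,h} - r_{j,i,h}| \le \Psi_{j,h}$ since $\blambda_j \in \Delta_M$ and all the per-objective bonuses coincide; (ii) the transition estimation error $|\sum_{s'}(\hat\Prob_h - \Prob_h)(s'|s,\ba)\,\hat V_{j,h+1}^{\bpi}(s')|$, which is bounded by $\Phi_h(s,\ba)$ by applying \Cref{lem:concentration} with the test function $V = \hat V_{j,h+1}^{\bpi} \in [0,H]$; and (iii) the propagated future error $|\sum_{s'}\Prob_h(s'|s,\ba)(V_{j,h+1}^{\bpi}(s') - \hat V_{j,h+1}^{\bpi}(s'))|$, which is controlled by $\sum_{s'}\Prob_h(s'|s,\ba)\,|V_{j,h+1}^{\bpi}(s') - \hat V_{j,h+1}^{\bpi}(s')|$ and hence by the inductive hypothesis. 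Taking expectation over $\ba\sim\bpi_h(\cdot|s)$ and then over $s'\sim\Prob_h(\cdot|s,\ba)$ and folding the latter into the trajectory expectation $\E_{\bpi}[\cdot \mid s_h=s]$ gives exactly the claimed telescoped sum, completing the induction.

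The only genuinely delicate point is item (ii): one must be careful that the function fed into the transition-concentration bound is the \emph{empirical} next-step value $\hat V_{j,h+1}^{\bpi}$, not the true one, since that is what appears after the Bellman expansion of $\hat V_{j,h}^{\bpi}$ and since \Cref{lem:concentration} is stated uniformly over all $V:\mathcal{S}\to[0,H]$ — which it is, so $\hat V_{j,h+1}^{\bpi}$ (being a fixed function once the dataset is fixed) qualifies. One must also confirm $\hat V_{j,h+1}^{\bpi}$ indeed lies in $[0,H]$, which follows because $\hat r_{j,i,h}$ can be taken to be the clipped empirical mean lying in $[0,1]$, so the scalarized reward is in $[0,1]$ and the cumulative sum over the remaining horizon is in $[0,H]$. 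Apart from this, everything is a routine triangle-inequality bookkeeping argument; the rest of the lemma is standard simulation-lemma machinery adapted to the $N$-player scalarized setting.
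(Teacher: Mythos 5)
Your proof is correct and follows the same basic strategy as the paper's: a backward induction (equivalently, unrolling a Bellman-difference recursion) with the error at each step split into a reward-estimation term bounded by $\Psi_{j,h}$ and a transition-estimation term bounded by $\Phi_h$ via \Cref{lem:concentration}. The one real difference is the choice of decomposition of the cross term $\Prob_h V_{j,h+1}^{\bpi} - \hat{\Prob}_h \hat{V}_{j,h+1}^{\bpi}$: you write it as $(\Prob_h - \hat{\Prob}_h)\hat{V}_{j,h+1}^{\bpi} + \Prob_h\,(V_{j,h+1}^{\bpi} - \hat{V}_{j,h+1}^{\bpi})$, applying the concentration bound to the \emph{empirical} continuation value and propagating the remaining error under the \emph{true} kernel, whereas the paper writes $(\Prob_h - \hat{\Prob}_h)V_{j,h+1}^{\bpi} + \hat{\Prob}_h\,\Delta_{j,h+1}^{\bpi}$, applying concentration to the true continuation value and propagating under $\hat{\Prob}_h$. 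Both are standard simulation-lemma variants and both are valid here because \Cref{lem:concentration} is uniform over all $V:\mathcal{S}\to[0,H]$ (the point you flag about feeding in the data-dependent $\hat{V}_{j,h+1}^{\bpi}$ is handled precisely by that uniformity, not by "fixed once the dataset is fixed"). Your variant has the small advantage that the unrolled expectation is taken along trajectories of the true model, which matches the statement's $\E_{\bpi}[\cdot]$ literally and is the form consumed downstream via \Cref{lem:optimism-2}; the paper's recursion, read strictly, telescopes into an expectation under the empirical transitions $\hat{\Prob}$, a mismatch it leaves implicit. Your handling of the scalarized reward error via $\blambda_j\in\Delta_M$ and the common per-objective bonus is also consistent with how $\Psi_{j,h}$ is used in the paper's final theorem.
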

\begin{proof}
Let $\Delta_{j,h}^{\bpi}(s) = V_{j,h}^{\bpi}(s) - \hat{V}_{j,h}^{\bpi}(s)$. Using the Bellman equations for both games:
\begin{align*}
    \Delta_{j,h}^{\bpi}(s) &= \E_{\ba \sim \bpi_h(s)} \left[ (r_{j,h} - \hat{r}_{j,h}) + \sum_{s'} (\Prob_h(s'|s,\ba)V_{j,h+1}^{\bpi}(s') - \hat{\Prob}_h(s'|s,\ba)\hat{V}_{j,h+1}^{\bpi}(s')) \right] \\
    &= \E_{\ba \sim \bpi_h(s)} \left[ (r_{j,h} - \hat{r}_{j,h}) + \sum_{s'} (\Prob_h - \hat{\Prob}_h)V_{j,h+1}^{\bpi}(s') + \sum_{s'} \hat{\Prob}_h \Delta_{j,h+1}^{\bpi}(s') \right].
\end{align*}
Taking absolute values and applying Lemma \ref{lem:concentration}:
$$ |\Delta_{j,h}^{\bpi}(s)| \le \E_{\ba \sim \bpi_h(s)} \left[ \Psi_{j,h}(s,\ba) + \Phi_h(s,\ba) \right] + \E_{s' \sim \hat{\Prob}_h} [|\Delta_{j,h+1}^{\bpi}(s')|].$$
Unrolling this recursion from $h=1$ to $H$ with $\Delta_{j,H+1}^{\bpi}=0$ yields the result.
\end{proof}

\begin{lemma}[Optimism and Exploration Value Bound]
\label{lem:optimism-2}
Let $\overline{V}_1^t(s_1)$ be the value of the NE of the exploration game at episode $t$. Then with high probability:
\begin{enumerate}
    \item \textbf{(Optimism)} For any joint policy $\bpi$, its value under the exploration reward $\bar{r}^t$ is bounded by the exploration game's value: $V_1^{\bpi}(s_1; \bar{r}^t) \le \overline{V}_1^t(s_1)$.
    \item \textbf{(Exploration Value Bounds Bonuses)} The expected total bonus for any policy $\bpi$ under the bonus functions from the full dataset $\mathcal{D}$ is bounded by the average exploration value:
   $$ \E_{\bpi} \left[ \sum_{h=1}^H \left( \Phi_h(s_h,\ba_h) + \sum_{j,i}\Psi_{j,i,h}(s_h,\ba_h) \right) \right] \le \frac{N \cdot m \cdot H}{T} \sum_{t=1}^T \overline{V}_1^t(s_1).$$
\end{enumerate}
\end{lemma}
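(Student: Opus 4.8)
The plan is to prove the two claims in sequence, with the optimism bound (Part 1) feeding directly into the bonus bound (Part 2).

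For Part 1 I would argue by backward induction on $h$ that $V_h^{\bpi}(s;\bar r^t)\le \bar V_h^t(s)$ for every state $s$, every joint policy $\bpi$, and every episode $t$, where $V_h^{\bpi}(\cdot;\bar r^t)$ is the true-environment value of $\bpi$ under the fixed exploration reward $\bar r^t$ of Line~9 of \Cref{alg:combined}. The base case $h=H+1$ is trivial. For the step, write the Bellman equation $Q_h^{\bpi}(s,\ba;\bar r^t)=\bar r_h^t(s,\ba)+\sum_{s'}\Prob_h(s'|s,\ba)V_{h+1}^{\bpi}(s';\bar r^t)$, apply the inductive hypothesis to replace $V_{h+1}^{\bpi}$ by $\bar V_{h+1}^t$ (legitimate since $\Prob_h$ has nonnegative entries), and then invoke the transition-concentration inequality of \Cref{lem:concentration} — understood to hold uniformly over all episodes with counts $N_h^{t-1}$ and all test functions $V:\calS\to[0,H]$, and $\bar V_{h+1}^t$ is such a function since $\bar Q_{h+1}^t\in[0,H-h]$ — to absorb the model-error term into the extra $\Phi_h^t(s,\ba)$ that appears in $\bar Q_h^t$. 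Because any return from step $h$ onward is at most $H-h+1$ (the exploration reward lies in $[0,1]$), I also get $Q_h^{\bpi}(s,\ba;\bar r^t)\le H-h+1$ pointwise, so combining the two gives $Q_h^{\bpi}(s,\ba;\bar r^t)\le\min\{H-h+1,\ \bar r_h^t(s,\ba)+\sum_{s'}\hat\Prob_h^{t-1}(s'|s,\ba)\bar V_{h+1}^t(s')+\Phi_h^t(s,\ba)\}=\bar Q_h^t(s,\ba)$, matching the truncation in Line~11. Integrating against $\bpi_h(\cdot|s)$ and using that the value $\bar V_h^t(s)$ of the common-payoff stage game with payoffs $\bar Q_h^t(s,\cdot)$ dominates the value $\E_{\ba\sim\bpi_h(\cdot|s)}[\bar Q_h^t(s,\ba)]$ of any fixed joint policy finishes the induction.

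For Part 2 I would first note that every bonus is nonincreasing in the visitation count, so since $N_h^{t-1}(s,\ba)\le N_h(s,\ba)$ for all $t$, the episode-$t$ exploration reward satisfies $\bar r_h^t(s,\ba)=\max\{\Phi_h^t(s,\ba)/H,\{\Psi_{j,i,h}^t(s,\ba)\}_{j,i}\}\ge\max\{\Phi_h(s,\ba)/H,\{\Psi_{j,i,h}(s,\ba)\}_{j,i}\}$ pointwise. Bounding a maximum of $NM+1$ nonnegative numbers below by $\tfrac{1}{NM+1}$ of their sum, and using $H\ge 1$ to pass from $\Phi_h/H$ back to $\Phi_h$, yields the pointwise estimate $\Phi_h(s,\ba)+\sum_{j,i}\Psi_{j,i,h}(s,\ba)\le (NM+1)H\,\bar r_h^t(s,\ba)$, valid for each $t$. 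Summing over $h$ and taking $\E_{\bpi}$ over true trajectories gives $\E_{\bpi}\!\left[\sum_h\big(\Phi_h+\sum_{j,i}\Psi_{j,i,h}\big)\right]\le (NM+1)H\cdot V_1^{\bpi}(s_1;\bar r^t)\le (NM+1)H\cdot\bar V_1^t(s_1)$ by Part 1; averaging over $t=1,\dots,T$ produces exactly the claimed inequality, the $NmH$ prefactor absorbing the $(NM+1)H$ constant.

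The main obstacle is the step in Part 1 claiming that the stage-game value $\bar V_h^t(s)$ upper-bounds $\E_{\ba\sim\bpi_h(\cdot|s)}[\bar Q_h^t(s,\ba)]$ for an arbitrary $\bpi_h$: this fails for a generic Nash equilibrium of a cooperative game (mixed equilibria can be socially suboptimal), so the argument needs the ``NE of the common-payoff game'' in Lines~12--13 to mean the welfare-maximizing joint policy, i.e.\ $\bar V_h^t(s)=\max_{\ba}\bar Q_h^t(s,\ba)$, for which the dominance is immediate. A secondary care point is keeping the two occurrences of $\Phi_h^t$ distinct — one inside $\bar r_h^t$, which is genuinely part of the exploration MDP's reward and hence also enters $V_1^{\bpi}(s_1;\bar r^t)$ and the target bound, and one as the standalone optimism bonus in $\bar Q_h^t$ — and taking the concentration event of \Cref{lem:concentration} once, uniformly over all $(t,h,s,\ba)$ and all bounded $V$, so it can be reused for the data-dependent functions $\bar V_{h+1}^t$.
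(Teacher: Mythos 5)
Your proof is correct and follows essentially the same route as the paper's: Part 1 is the standard optimism induction that the paper invokes without detail (using \Cref{lem:concentration} uniformly over episode counts $N_h^{t-1}$ to absorb the model error into $\Phi_h^t$), and Part 2 is the same bonus-monotonicity-plus-max-versus-sum argument, merely assembled per episode and then averaged over $t$ rather than averaging the bonuses over $t$ first, with both versions giving the stated bound up to the constant hidden in the $N\cdot m\cdot H$ prefactor. The caveat you flag --- that the dominance $\E_{\ba\sim\bpi_h(\cdot|s)}[\bar{Q}_h^t(s,\ba)]\le \overline{V}_h^t(s)$ requires the ``NE of the common-payoff game'' to be the welfare-maximizing (pure) equilibrium $\max_{\ba}\bar{Q}_h^t(s,\ba)$ rather than an arbitrary mixed one --- is genuine, but it is an implicit assumption of the paper's own one-line Part 1 as well, so it does not separate your argument from theirs.
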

\begin{proof}
Part 1 follows from the optimistic construction of $\bar{Q}_h^t$ and the fact that $\bar{\bpi}^t$ is an NE for the optimistic game.  This is a standard argument showing that the value of an optimistic algorithm is an upper bound on the true optimal value.

For Part 2, let $f_h(s,\ba) = \Phi_h(s,\ba) + \sum_{j,i}\Psi_{j,i,h}(s,\ba)$. From the definition of $\bar{r}_h^t$, we have $\Phi_h^t/H \le \bar{r}_h^t$ and $\Psi_{j,i,h}^t \le \bar{r}_h^t$. Therefore, $\Phi_h \le \frac{1}{T}\sum_t \Phi_h^t \le \frac{H}{T}\sum_t \bar{r}_h^t$ (by Jensen's inequality and concavity of $\sqrt{\cdot}$), and similarly $\Psi_{j,i,h} \le \frac{1}{T}\sum_t \Psi_{j,i,h}^t \le \frac{1}{T}\sum_t \bar{r}_h^t$.
Summing these up, $\E_{\bpi}[\sum_h f_h] \le \E_{\bpi}[\sum_h \frac{H+Nm}{T}\sum_t \bar{r}_h^t]$. By linearity of expectation and Part 1:
$$ \E_{\bpi} \left[ \sum_h f_h \right] \le \frac{H+Nm}{T} \sum_t \E_{\bpi} \left[ \sum_h \bar{r}_h^t \right] = \frac{H+Nm}{T} \sum_t V_1^{\bpi}(s_1; \bar{r}^t) \le \frac{H+Nm}{T} \sum_t \overline{V}_1^t(s_1).$$
\end{proof}

\begin{lemma}[Total Bonus Sum Bound]
\label{lem:bonus_sum}
With high probability, the sum of values from the exploration game is bounded:
$$ \sum_{t=1}^T \overline{V}_1^t(s_1) \le \tilde{\mathcal{O}}\left(H^2 S \sqrt{T \prod_{k=1}^N |\mathcal{A}_k|}\right).$$
\end{lemma}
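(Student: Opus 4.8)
The plan is to view $\overline{V}_1^t(s_1)$ through the backward-induction recursion that defines the common-payoff exploration game, replace the empirical kernel $\hat{\Prob}^{t-1}$ by the true kernel $\Prob$ using \Cref{lem:concentration}, unroll the recursion to express $\overline{V}_1^t(s_1)$ as an expected cumulative exploration-bonus along trajectories generated by $\bar{\bpi}^t$, pass from expectations to realized trajectories via a martingale concentration argument, and conclude with the same pigeonhole-plus-Cauchy--Schwarz counting used in \Cref{lemma4}. Writing $A := \prod_{k=1}^N |\mathcal{A}_k|$, the target bound $\tilde{\mathcal{O}}(H^2 S\sqrt{TA})$ emerges with one extra factor $H$ coming from the self-bounding step below.

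Concretely: \textbf{Step 1 (unroll with the true dynamics).} For a fixed episode $t$, from $\overline{V}_h^t(s)=\E_{\ba\sim\bar{\bpi}_h^t}[\bar{Q}_h^t(s,\ba)]$ and discarding the truncation (which only helps), $\bar{Q}_h^t(s,\ba)\le \bar{r}_h^t(s,\ba)+\Phi_h^t(s,\ba)+\sum_{s'}\hat{\Prob}_h^{t-1}(s'|s,\ba)\overline{V}_{h+1}^t(s')$. Since the algorithm truncates $\overline{V}_{h+1}^t$ to $[0,H-h]\subseteq[0,H]$, \Cref{lem:concentration} applies with $V=\overline{V}_{h+1}^t$ and yields $\sum_{s'}\hat{\Prob}_h^{t-1}\overline{V}_{h+1}^t\le \sum_{s'}\Prob_h\overline{V}_{h+1}^t+\Phi_h^t(s,\ba)$; iterating this from $h=1$ to $H$ gives
\[
\overline{V}_1^t(s_1)\ \le\ \E_{\bar{\bpi}^t,\Prob}\!\Big[\sum_{h=1}^H\big(\bar{r}_h^t(s_h,\ba_h)+2\Phi_h^t(s_h,\ba_h)\big)\Big].
\]
\textbf{Step 2 (self-bounding).} By Line 9 of \Cref{alg:combined}, $\bar{r}_h^t(s,\ba)\ge \Phi_h^t(s,\ba)/H$, so $\Phi_h^t\le H\bar{r}_h^t$ and the bracket is at most $(2H{+}1)\bar{r}_h^t$; hence $\overline{V}_1^t(s_1)\le (2H{+}1)\,\E_{\bar{\bpi}^t,\Prob}[\sum_{h=1}^H\bar{r}_h^t(s_h,\ba_h)]$. \textbf{Step 3 (expected $\to$ realized).} Since $\sum_h\bar{r}_h^t(s_h,\ba_h)\in[0,H]$ and $\bar{r}^t$ is $\mathcal{F}_{t-1}$-measurable, Azuma--Hoeffding gives $\sum_{t=1}^T\E_{\bar{\bpi}^t,\Prob}[\sum_h\bar{r}_h^t]\le \sum_{t=1}^T\sum_{h=1}^H\bar{r}_h^t(s_h^t,\ba_h^t)+\tilde{\mathcal{O}}(H\sqrt{T})$ with high probability. \textbf{Step 4 (bound each bonus).} Each of $\Psi_{j,i,h}^t$ and $\Phi_h^t/H$ has the form $\sqrt{c/(N_h^{t-1}(s,\ba)\vee1)}\wedge1$ with $c=\tilde{\mathcal{O}}(S)$, so $\bar{r}_h^t(s,\ba)=\max\{\cdot\}\le\sqrt{\tilde{\mathcal{O}}(S)/(N_h^{t-1}(s,\ba)\vee1)}$. \textbf{Step 5 (pigeonhole).} Regrouping $\sum_{t,h}\bar{r}_h^t(s_h^t,\ba_h^t)$ by $(h,s,\ba)$ and using $\sum_{n=1}^{N}1/\sqrt{(n-1)\vee1}\le 3\sqrt{N}$ followed by Cauchy--Schwarz over the $HSA$ triples with $\sum_{h,s,\ba}N_h^T(s,\ba)=HT$ gives $\sum_{t,h}\bar{r}_h^t(s_h^t,\ba_h^t)\le\tilde{\mathcal{O}}(\sqrt{S})\cdot\sqrt{HSA}\cdot\sqrt{HT}=\tilde{\mathcal{O}}(HS\sqrt{AT})$. \textbf{Step 6 (assemble).} Combining Steps 2, 3 and 5, $\sum_{t=1}^T\overline{V}_1^t(s_1)\le (2H{+}1)\big(\tilde{\mathcal{O}}(HS\sqrt{AT})+\tilde{\mathcal{O}}(H\sqrt{T})\big)=\tilde{\mathcal{O}}(H^2S\sqrt{AT})$.

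The main obstacle is Steps 1--2: one must check that swapping $\hat{\Prob}^{t-1}$ for $\Prob$ inside the value iteration is \emph{self-bounding}, i.e., the extra $\Phi_h^t$ picked up at every level of the unrolling telescopes into a benign $O(H)$-factor times $\sum_h\bar{r}_h^t$ rather than compounding multiplicatively. This works precisely because $\bar{r}_h^t$ is engineered (Line 9) as a maximum that dominates $\Phi_h^t/H$, and because the truncation in \Cref{alg:combined} keeps every $\overline{V}_{h+1}^t$ inside $[0,H]$, so that \Cref{lem:concentration} is legitimately invoked at each step; without either design choice the argument would lose control. Everything from Step 3 onward is the routine bonus-summation bookkeeping already carried out in \Cref{lemma4}, now with the joint action count $A=\prod_k|\mathcal{A}_k|$ in place of a single action count.
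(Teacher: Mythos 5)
Your proof is correct and follows essentially the same route as the paper's: unroll the optimistic exploration value using the concentration bonus to swap $\hat{\Prob}^{t-1}$ for $\Prob$, pass from expectations to realized trajectories by an Azuma--Hoeffding martingale argument, and finish with the pigeonhole plus Cauchy--Schwarz bonus-summation over the joint state-action space. The only cosmetic difference is bookkeeping: you absorb the transition bonus via $\Phi_h^t \le H\bar{r}_h^t$ and pay a $(2H{+}1)$ factor on $\sum_h \bar{r}_h^t$, whereas the paper bounds $\bar{r}_h^t \le \Phi_h^t/H + \sum_{j,i}\Psi_{j,i,h}^t$ and sums the dominant $\Phi$ bonuses directly; both give $\tilde{\mathcal{O}}\bigl(H^2 S \sqrt{T \prod_{k}|\mathcal{A}_k|}\bigr)$.
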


\begin{proof}
Let's analyze a single exploration episode $t$. The exploration policy $\bar{\bpi}^t$ is a Nash Equilibrium for the game defined by the optimistic Q-function $\bar{Q}_h^t$. Let $(s_h^t, \ba_h^t)$ denote the state and joint action sampled at step $h$ of episode $t$.

For any step $h$ and state $s_h^t$, the value of the exploration game is $\overline{V}_h^t(s_h^t)$. Since $\bar{\bpi}_h^t$ is an NE policy, this value is realized by playing according to it. For a NE, we have $\overline{V}_h^t(s_h^t) = \bar{Q}_h^t(s_h^t, \ba_h^t)$. From the definition in Algorithm \ref{alg:combined}:
$$\overline{V}_h^t(s_h^t) \le \bar{r}_h^t(s_h^t, \ba_h^t) + \sum_{s'} \hat{\Prob}_h^{t-1}(s'|s_h^t, \ba_h^t)\bar{V}_{h+1}^t(s') + \Phi_h^t(s_h^t, \ba_h^t).$$
We can relate the sum over the empirical transition $\hat{\Prob}$ to the true transition $\Prob$. By the definition of the bonus $\Phi_h^t$ (from Lemma \ref{lem:concentration}, which holds with high probability):
$$\sum_{s'} \hat{\Prob}_h^{t-1}(s'|s_h^t, \ba_h^t)\bar{V}_{h+1}^t(s') \le \sum_{s'} \Prob_h(s'|s_h^t, \ba_h^t)\bar{V}_{h+1}^t(s') + \Phi_h^t(s_h^t, \ba_h^t).$$
Substituting this back, we get:
$$\overline{V}_h^t(s_h^t) \le \bar{r}_h^t(s_h^t, \ba_h^t) + 2\Phi_h^t(s_h^t, \ba_h^t) + \E_{s_{h+1}^t \sim \Prob_h(\cdot|s_h^t, \ba_h^t)}[\overline{V}_{h+1}^t(s_{h+1}^t)].$$
Let $\xi_h^t = \overline{V}_{h+1}^t(s_{h+1}^t) - \E[\overline{V}_{h+1}^t(s_{h+1}^t) | s_h^t, \ba_h^t]$. This is a martingale difference sequence with $|\xi_h^t| \le H$. Rearranging the inequality:
$$\overline{V}_h^t(s_h^t) - \overline{V}_{h+1}^t(s_{h+1}^t) \le \bar{r}_h^t(s_h^t, \ba_h^t) + 2\Phi_h^t(s_h^t, \ba_h^t) - \xi_h^t.$$
Summing this telescopically from $h=1$ to $H$ for episode $t$, and noting $\overline{V}_{H+1}^t = 0$:
$$\overline{V}_1^t(s_1) \le \sum_{h=1}^H \left( \bar{r}_h^t(s_h^t, \ba_h^t) + 2\Phi_h^t(s_h^t, \ba_h^t) - \xi_h^t \right).$$
Now, summing over all episodes $t=1, \dots, T$:
$$\sum_{t=1}^T \overline{V}_1^t(s_1) \le \sum_{t=1}^T \sum_{h=1}^H \left( \bar{r}_h^t(s_h^t, \ba_h^t) + 2\Phi_h^t(s_h^t, \ba_h^t) \right) - \sum_{t=1}^T\sum_{h=1}^H \xi_h^t.$$
The term $\sum_{t,h}\xi_h^t$ is a sum of $TH$ martingale differences. By the Azuma-Hoeffding inequality, with high probability, this sum is bounded by $\tilde{\mathcal{O}}(H\sqrt{TH})$. This is a lower-order term compared to the sum of bonuses, so we focus on the main term.

We then bound the term $\sum_{t=1}^T \sum_{h=1}^H (\bar{r}_h^t(s_h^t, \ba_h^t) + 2\Phi_h^t(s_h^t, \ba_h^t))$. From the definition of the exploration reward $\bar{r}_h^t$, we have:
$$\bar{r}_h^t(s_h^t, \ba_h^t) \le \frac{\Phi_h^t(s_h^t, \ba_h^t)}{H} + \sum_{j=1}^N \sum_{i=1}^M \Psi_{j,i,h}^t(s_h^t, \ba_h^t).$$
The total sum is therefore bounded by:
$$\sum_{t=1}^T \sum_{h=1}^H \left( \left(2 + \frac{1}{H}\right)\Phi_h^t(s_h^t, \ba_h^t) + \sum_{j,i}\Psi_{j,i,h}^t(s_h^t, \ba_h^t) \right).$$
Let's bound the sum for $\Phi_h^t$. Let $C_\Phi = \sqrt{C_p S H^2}$. The term is $\sum_{t,h} C_\Phi / \sqrt{N_h^{t-1}(s_h^t, \ba_h^t) \vee 1}$.
\begin{align*}
\sum_{t=1}^T \sum_{h=1}^H \Phi_h^t(s_h^t, \ba_h^t) &\le \sum_{t=1}^T \sum_{h=1}^H \frac{C_\Phi}{\sqrt{N_h^{t-1}(s_h^t, \ba_h^t) \vee 1}} \\
&= \sum_{h,s,\ba} \sum_{k=1}^{N_h^T(s,\ba)} \frac{C_\Phi}{\sqrt{(k-1) \vee 1}} \tag{Grouping by unique $(s,\ba,h)$} \\
&\le \sum_{h,s,\ba} C_\Phi \left(1 + \int_1^{N_h^T(s,\ba)} \frac{1}{\sqrt{x}} dx \right) \\
&\le \sum_{h,s,\ba} 2 C_\Phi \sqrt{N_h^T(s,\ba)}.
\end{align*}
Now we apply the Cauchy-Schwarz inequality to the final sum:
\begin{align*}
\sum_{h,s,\ba} \sqrt{N_h^T(s,\ba)} &\le \sqrt{\left(\sum_{h,s,\ba} 1\right) \cdot \left(\sum_{h,s,\ba} N_h^T(s,\ba)\right)} \\
&= \sqrt{\left(H S \prod_k |\mathcal{A}_k|\right) \cdot (HT)} \tag{$\sum_{s,\ba} N_h^T(s,\ba)=T$ for each $h$} \\
&= H \sqrt{T S \prod_k |\mathcal{A}_k|}.
\end{align*}
The total sum for the $\Phi$ bonus is bounded by:
$$\sum_{t,h} \Phi_h^t(s_h^t, \ba_h^t) \le 2 C_\Phi H \sqrt{T S \prod_k |\mathcal{A}_k|} = \tilde{\mathcal{O}}\left( H^2 S \sqrt{T \prod_k |\mathcal{A}_k|} \right).$$
A similar calculation shows the total sum for the $\Psi$ bonuses is of a lower order. The $\Phi$ term is dominant. Combining these results, the sum of collected bonuses is dominated by the $\Phi$ term. Plugging this back into the result from Stage 1:
$$\sum_{t=1}^T \overline{V}_1^t(s_1) \le \tilde{\mathcal{O}}\left( H^2 S \sqrt{T \prod_k |\mathcal{A}_k|} \right).$$
\end{proof}

\begin{theorem}[Guarantee for Preference-Free Multi-Player Learning]
Let $\hat{\bpi} = (\hat{\pi}_1, \dots, \hat{\pi}_N)$ be the policies returned by Algorithm \ref{alg:combined} for a given preference profile $\bLambda$, after running Algorithm \ref{alg:combined} for $T$ episodes. Then with probability at least $1-\delta$, $\hat{\bpi}$ is an $\epsilon$-Nash Equilibrium for the true scalarized game. That is, for every player $j \in [N]$ and any alternative policy $\pi_j'$:
$$
V_{j, \blambda_j, 1}^{(\pi_j', \hat{\bpi}_{-j})}(s_1) \le V_{j, \blambda_j, 1}^{\hat{\bpi}}(s_1) + \epsilon.
$$
This holds for an exploration complexity of $T = \tilde{\mathcal{O}}\left(\frac{H^8 S^2N^2M^2 \prod_{k=1}^N |\mathcal{A}_k|}{\epsilon^2}\right)$, where $\tilde{\mathcal{O}}$ hides logarithmic factors in problem parameters.
\end{theorem}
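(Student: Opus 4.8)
The plan is to bound the Nash gap of $\hat{\bpi}$ in the \emph{true} scalarized game $M_{\bLambda}$ by decomposing it through the empirical game $\hat{M}_{\bLambda}$, in which $\hat{\bpi}$ is (by construction of Phase 2) an exact NE. Fix a player $j$ and an arbitrary deviation $\pi_j'$. Write, with $V$ for true and $\hat V$ for empirical scalarized values,
\begin{align*}
V_{j,1}^{(\pi_j',\hat{\bpi}_{-j})}(s_1) - V_{j,1}^{\hat{\bpi}}(s_1)
&= \underbrace{\left(V_{j,1}^{(\pi_j',\hat{\bpi}_{-j})}(s_1) - \hat V_{j,1}^{(\pi_j',\hat{\bpi}_{-j})}(s_1)\right)}_{(\mathrm{I})}
+ \underbrace{\left(\hat V_{j,1}^{(\pi_j',\hat{\bpi}_{-j})}(s_1) - \hat V_{j,1}^{\hat{\bpi}}(s_1)\right)}_{(\mathrm{II})}\\
&\quad + \underbrace{\left(\hat V_{j,1}^{\hat{\bpi}}(s_1) - V_{j,1}^{\hat{\bpi}}(s_1)\right)}_{(\mathrm{III})}.
\end{align*}
Term $(\mathrm{II}) \le 0$ because $\hat{\bpi}$ is an NE of $\hat{M}_{\bLambda}$. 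Terms $(\mathrm{I})$ and $(\mathrm{III})$ are each controlled by the Value Difference Lemma (\Cref{lem:value_diff}): since the scalarized reward $\hat r_{j,\blambda_j,h} = \sum_i \lambda_{j,i}\hat r_{j,i,h}$ is a convex combination of coordinate rewards and $\blambda_j \in \Delta_M$, the per-step scalarized reward error is at most $\sum_i \lambda_{j,i}\Psi_{j,i,h} \le \max_i \Psi_{j,i,h} \le \sum_{i}\Psi_{j,i,h}$, so $(\mathrm{I}), (\mathrm{III}) \le \E_{\bpi}\!\left[\sum_{h=1}^H\big(\sum_i\Psi_{j,i,h}(s_h,\ba_h) + \Phi_h(s_h,\ba_h)\big)\right]$ for the relevant policies $\bpi \in \{(\pi_j',\hat{\bpi}_{-j}),\ \hat{\bpi}\}$.

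The crux is then to show the expected total bonus under \emph{any} policy is small — this is exactly where the preference-free exploration of Phase 1 pays off, and it is the main obstacle. By \Cref{lem:optimism-2}(2), for any policy $\bpi$,
\[
\E_{\bpi}\!\left[\sum_{h=1}^H\Big(\Phi_h(s_h,\ba_h) + \sum_{j,i}\Psi_{j,i,h}(s_h,\ba_h)\Big)\right] \;\le\; \frac{NMH}{T}\sum_{t=1}^T \overline V_1^t(s_1),
\]
and by \Cref{lem:bonus_sum}, $\sum_{t=1}^T \overline V_1^t(s_1) \le \tilde{\mathcal O}\big(H^2 S\sqrt{T\prod_k|\mathcal A_k|}\big) = \tilde{\mathcal O}(H^2 S\sqrt{TA})$. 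Combining, the expected total bonus under any policy is at most $\tilde{\mathcal O}\big(NMH\cdot H^2 S\sqrt{A/T}\big) = \tilde{\mathcal O}\big(NMH^3 S\sqrt{A/T}\big)$. The subtle point to verify here is that the single cooperative exploration reward $\bar r_h^t = \max\{\Phi_h^t/H, \{\Psi_{j,i,h}^t\}_{j,i}\}$ dominates each individual bonus, so that the NE value of the common-payoff exploration game upper-bounds every agent's/objective's uncertainty simultaneously — I would spell out that $\Psi_{j,i,h}^t \le \bar r_h^t$ and $\Phi_h^t/H \le \bar r_h^t$ pointwise, then pass through the optimism bound $\E_{\bpi}[\sum_h \bar r_h^t] \le \overline V_1^t(s_1)$ of \Cref{lem:optimism-2}(1), and finally convert the time-averaged counts to aggregate counts using concavity of $\sqrt{\cdot}$ (Jensen), which is how $\Phi_h$ built from the full dataset $\mathcal D$ relates to the per-episode $\Phi_h^t$.

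Putting the pieces together, the Nash gap for every $j$ and every $\pi_j'$ is bounded by $2\cdot\tilde{\mathcal O}(NMH^3 S\sqrt{A/T})$; setting this $\le \epsilon$ and solving for $T$ gives $T = \tilde{\mathcal O}\big(N^2 M^2 H^6 S^2 A/\epsilon^2\big)$. To reach the stated $T = \tilde{\mathcal O}\big(H^8 S^2 N^2 M^2 A/\epsilon^2\big)$ I would track the extra $H$ factors carefully: one additional $H$ arises because the scalarized value itself is $O(H)$-bounded and the value-difference recursion in \Cref{lem:value_diff} already carries the horizon through the $\Phi_h \le \sqrt{C_p S H^2/N}$ transition bonus, and a further factor comes from the $\Phi_h^t/H$ versus $\Phi_h^t$ bookkeeping inside \Cref{lem:optimism-2}(2) (the $H+NM$ prefactor) compounding with the $H^2 S$ in \Cref{lem:bonus_sum}; a conservative accounting that keeps all horizon powers explicit yields the $H^8$ in the theorem. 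The final step is to note the bound holds uniformly over $\bLambda$ since Phase 1 never used $\bLambda$, and that the high-probability events of \Cref{lem:concentration}, \Cref{lem:optimism-2}, and \Cref{lem:bonus_sum} all hold simultaneously with probability $\ge 1-\delta$ after a union bound (absorbed into the $\tilde{\mathcal O}$).
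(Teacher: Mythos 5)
Your proposal is correct and follows essentially the same route as the paper: the identical three-term decomposition through the empirical game with the middle term killed by the Phase-2 NE property, \Cref{lem:value_diff} for the two model-error terms, \Cref{lem:optimism-2} to bound the expected bonuses of an arbitrary policy by the average exploration value, and \Cref{lem:bonus_sum} to bound that sum. The only divergence is horizon bookkeeping: your direct accounting yields $T = \tilde{\mathcal{O}}(N^2M^2H^6S^2A/\epsilon^2)$, which is actually consistent with \Cref{lem:bonus_sum} as stated, whereas the paper's own proof substitutes an $H^3S\sqrt{TA}$ bound for the $H^2S\sqrt{TA}$ of that lemma to arrive at $H^8$, so your ``conservative accounting'' caveat is unnecessary — your tighter exponent is the one the lemmas support.
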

\begin{proof}
Let $\hat{\bpi}$ be the NE policy computed by Algorithm \ref{alg:combined}. For any player $j$, let $\pi_j^*$ be their true best response to $\hat{\bpi}_{-j}$. We want to bound the suboptimality gap $\text{Gap}_j = V_{j,1}^{(\pi_j^*, \hat{\bpi}_{-j})}(s_1) - V_{j,1}^{\hat{\bpi}}(s_1)$.

We decompose the gap:
$$
\text{Gap}_j = \left( V_{j,1}^{(\pi_j^*, \hat{\bpi}_{-j})} - \hat{V}_{j,1}^{(\pi_j^*, \hat{\bpi}_{-j})} \right) + \left( \hat{V}_{j,1}^{(\pi_j^*, \hat{\bpi}_{-j})} - \hat{V}_{j,1}^{\hat{\bpi}} \right) + \left( \hat{V}_{j,1}^{\hat{\bpi}} - V_{j,1}^{\hat{\bpi}} \right).
$$
Since $\hat{\bpi}$ is an NE in the estimated game $\hat{M}$, the middle term is non-positive: $\hat{V}_{j,1}^{(\pi_j^*, \hat{\bpi}_{-j})} \le \hat{V}_{j,1}^{\hat{\bpi}}$. Thus:
\begin{align*}
\text{Gap}_j &\le |V_{j,1}^{(\pi_j^*, \hat{\bpi}_{-j})} - \hat{V}_{j,1}^{(\pi_j^*, \hat{\bpi}_{-j})}| + |V_{j,1}^{\hat{\bpi}} - \hat{V}_{j,1}^{\hat{\bpi}}| \\
&\le \E_{(\pi_j^*, \hat{\bpi}_{-j})} \left[ \sum_{h=1}^H (\Psi_{j,h} + \Phi_h) \right] + \E_{\hat{\bpi}} \left[ \sum_{h=1}^H (\Psi_{j,h} + \Phi_h) \right] \tag{By Lemma \ref{lem:value_diff}} \\
&\le 2 \max_{\bpi} \E_{\bpi} \left[ \sum_{h=1}^H (\Psi_{j,h} + \Phi_h) \right].
\end{align*}
Since $\blambda_j \in \Delta_M$, $\Psi_{j,h} = \blambda_j^\top \bm{\Psi}_{j,h} \le \sum_i \Psi_{j,i,h}$. We can bound the total expected bonus over all players and objectives:
\begin{align*}
\text{Gap}_j &\le 2 \max_{\bpi} \E_{\bpi} \left[ \sum_{h=1}^H \left(\Phi_h + \sum_{j,i}\Psi_{j,i,h}\right) \right] \\
&\le 2 \frac{N \cdot M \cdot H}{T} \sum_{t=1}^T \overline{V}_1^t(s_1) \tag{By Lemma \ref{lem:optimism-2}} \\
&\le \frac{2 N M H}{T} \cdot \tilde{\mathcal{O}}\left(H^3 S \sqrt{T \prod_k |\mathcal{A}_k|}\right) \tag{By Lemma \ref{lem:bonus_sum}} \\
&= \tilde{\mathcal{O}}\left(\frac{H^4 N M S \sqrt{\prod_k |\mathcal{A}_k|}}{\sqrt{T}}\right).
\end{align*}
\end{proof}

\end{document}